\documentclass[11pt]{article}

\usepackage{amsmath,amssymb,amsfonts,mathrsfs}
\usepackage{amsthm}
\usepackage{bm}
\usepackage[backend=bibtex,style=numeric,sorting=none,maxbibnames=99]{biblatex}

\usepackage{stmaryrd}
\usepackage{amsmath,amssymb}

\DeclareMathOperator{\Log}{Log}
\usepackage{hyperref}
\usepackage{geometry}
\usepackage{setspace}
\usepackage[capitalise,nameinlink]{cleveref}

\newtheorem{theorem}{Theorem}[section]
\newtheorem{proposition}[theorem]{Proposition}
\newtheorem{lemma}[theorem]{Lemma}
\newtheorem{definition}[theorem]{Definition}
\newtheorem{corollary}[theorem]{Corollary}
\theoremstyle{remark}

\newtheorem{proposal}[theorem]{Proposal }
\newtheorem{conjecture}{Conjecture}
\newtheorem{problem}{Problem}

\newcommand{\G}{\mathbb{G}}
\newcommand{\N}{\mathbb{N}}
\newcommand{\C}{\mathbb{C}}
\newcommand{\Z}{\mathbb{Z}}
\newcommand{\R}{\mathbb{R}} 
\newcommand{\Q}{\mathbb{Q}}

\newcommand{\A}{\mathbb{A}}
\newcommand{\calO}{\mathcal{O}}

\newcommand{\calE}{\mathcal{E}}
\newcommand{\calF}{\mathcal{F}}

\newcommand{\tS}{\widetilde{S^{1}}}
\newcommand{\tM}{\widetilde{M^{1}}}

\DeclareMathOperator{\Spa}{Spa}

\newcommand{\Mpl}{\ensuremath{\ell_{\mathrm{M}}}}

\newcommand{\an}{\mathrm{an}}

\title{\bf A Perfectoid Duality Between M-Theory and F-Theory}
\author{
Arshid Shabir\textsuperscript{1}, Bobby Eka Gunara\textsuperscript{2}, 
Mir Faizal\textsuperscript{1,3,4,5}
}

\date{}

\begin{document}
\maketitle
\begin{center}
\textsuperscript{1}Canadian Quantum Research Center, 460 Doyle Ave 106, Kelowna, BC V1Y 0C2, Canada\\
\textsuperscript{2} Theoretical High Energy Physics Division, Faculty of Mathematics and Natural Sciences, Institut Teknologi Bandung Jl. no. 10 Bandung, 40132 Indonesia\\
\textsuperscript{3} Irving K. Barber School of Arts and Sciences, University of British Columbia Okanagan, Kelowna, BC V1V 1V7, Canada\\
\textsuperscript{4} Department of Mathematical Sciences, Durham University, Upper Mountjoy, Stockton Road, Durham DH1 3LE, UK\\
\textsuperscript{5} Computational Mathematics Group, Faculty of Sciences, Hasselt University, Agoralaan Gebouw D, Diepenbeek, 3590 Belgium\\
Emails: \texttt{aslone186@gmail.com},  \texttt{bobgunara@gmail.com},  \texttt{mirfaizalmir@gmail.com}
\end{center}
\thispagestyle{empty}

\begin{abstract}
We present a non-singular, definition-level formulation of F-theory by replacing the traditional shrinking-fiber limit of M-theory with compactification on a tower-completed circle described using perfectoid geometry and condensed mathematics. This construction provides an intrinsic eleven-dimensional carrier for modular data and admits a canonical tilting and comparison procedure that yields elliptic geometry as an output rather than an auxiliary input. Using this framework, we establish a precise M-theory/Type IIB dictionary in the constant-coupling sector, showing how the physical axio-dilaton is fixed by eleven-dimensional geometric and topological data. The correspondence is tested at the level of the ten-dimensional bosonic effective action, including its topological couplings inherited from eleven dimensions. The tower-completed geometry naturally organizes global sectors in generalized cohomology, with charge data governed by K-theory and exhibiting a canonical prime-power torsion structure. We further show how this framework extends to varying-coupling backgrounds and duality defects, admits a natural adelic completion with prime-independence, and generalizes to higher-rank and U-duality geometries. We also discuss holographic aspects and the anomaly-refined extension of the duality group beyond the bosonic truncation. Together, these results provide a coherent, non-singular foundation for F-theory and its extensions.
\end{abstract}

\newpage
\tableofcontents
\newpage
\section{Introduction}\label{sec:intro}

F-theory is indispensable in modern string theory, yet its traditional definition remains conceptually and technically unsatisfactory in precisely the way practitioners care about most, it is formulated through a singular shrinking-torus limit in the M-theory/Type IIB correspondence, which obscures a direct eleven-dimensional origin of the \(SL(2,\Z)\) action and makes the ``12th dimension'' non-manifold-like in practice. The underlying duality web is by now standard currency \cite{Polchinski1998, Witten1995, Townsend1996, HullTownsend1995}. M-theory on a two-torus \(T^{2}\) geometrizes the nonperturbative \(SL(2,\Z)\) symmetry of Type IIB as the modular group acting on the torus complex structure \cite{Vafa1996, Sen1996, Schwarz1995}, and this observation underlies the F-theory packaging of the axio-dilaton \(\tau\) as elliptic-fiber data \cite{Vafa1996, MorrisonVafa1996a, MorrisonVafa1996b}. The point of friction is that the usual route to a ten-dimensional description takes the area of the M-theory torus to zero at fixed complex structure, a singular operation even in controlled corners such as constant-coupling limits \cite{DasguptaMukhi1996, GopakumarVafa1998}, as a consequence, a definition-level derivation of the duality from eleven-dimensional supergravity is obstructed, and the appearance of the full modular identification is not manifest as intrinsic eleven-dimensional geometry, despite the power of indirect arguments and related constructions in the duality web \cite{Sen1996b, KatzVafa1997, Aspinwall1996}.

The central claim of this work is a definition-level replacement of that singular step, we replace the singular shrinking-torus definition by a non-singular compactification on a perfectoid enhancement of the M-theory circle, its tilt (with the appropriate comparison step) produces the elliptic fiber datum encoding the Type IIB axio-dilaton. In other words, we do not postulate a literal smooth twelve-dimensional spacetime, but we do construct an honest compactification carrier in which the modular data that F-theory uses is present as geometry before any collapse limit is invoked, and we then match that geometry to the constant-\(\tau\) Type IIB sector at the level of supergravity and protected spectra computed below. \footnote{Throughout the paper, the terms ``tower-completed circle,'' ``perfectoid circle,'' and the notation $S_f^1$ refer to the same inverse-limit compactification carrier obtained from the $p^n$-cover tower of the M-theory circle.
}

The mechanism is operational and keeps local physics conservative. One begins with M-theory on a circle and considers the tower of \(p^{n}\)-fold covers that refine the circle, one works below a fixed ultraviolet cutoff \(\Lambda\) and then takes \(n\to\infty\), so that local eleven-dimensional supergravity equations and local supersymmetry conditions persist stage-by-stage, while new information can enter only through global and topological sectors that accumulate along the tower. Physically, these are exactly the sectors that nonperturbative dualities act on, charge lattices and their arithmetic symmetries, discrete holonomies, and the data of line operators and monodromies, which are invisible to local differential geometry but decisive for the meaning of \(\tau\) and its identifications. Precisely because an inverse limit of covers is not, in general, a manifold background on which naive sheaf theory behaves well, one must also fix what ``fields on the limit'' means in a way compatible with flux quantization and global consistency constraints, this is why condensed mathematics enters as the minimal enlargement in which fields on profinite/inverse-limit objects are defined by descent with good homological control \cite{ClausenScholze}. Condensed objects are to inverse-limit compactifications what stacks are to gauge quotients.

Perfectoid geometry provides the corresponding minimal geometric framework in which the tower itself becomes a controlled analytic object. Perfectoid spaces, pioneered by Scholze \cite{Scholze2012} and developed with powerful structural tools in modern \(p\)-adic geometry \cite{BhattScholze2019, KedlayaLiu2015}, are designed to package infinite towers of finite covers in a way that retains a robust sheaf theory, admits good cohomology, and, crucially for our purposes, comes equipped with the tilting equivalence. We implement the tower-completed circle as a perfectoid object and apply tilting to pass to a characteristic-\(p\) avatar that retains the same tower data, and we then use the appropriate comparison step to connect the perfectoid output to a complex elliptic-curve datum \(\calE\) that carries the modulus \(\tau\) \cite{FarguesScholze2018, ScholzeWeinstein2017}. It is important to be explicit about what this achieves and what it does not, tilting is not a spacetime diffeomorphism and not a dynamical mechanism for generating couplings, so \(\tau\) does not arise from the tilt functor alone, rather, the tilt supplies the geometric home for \(\tau\) as elliptic data, and \(\tau\) is fixed only when that elliptic datum is matched to the M-theory radius/metric modulus and to C-field holonomy data in the compactification dictionary derived later.

Within this disciplined scope, the paper establishes a constant-\(\tau\) duality statement at the supergravity/BPS level computed in the body. The explicit dictionary identifies \(\mathrm{Im}\,\tau\) with the M-theory circle modulus and \(\mathrm{Re}\,\tau=C_{0}\) with the appropriate C-field holonomy/Wilson-line sector after compactification on the tower-completed circle, and the protected \((p,q)\) spectrum is reproduced from wrapped M2 sectors organized by the tower-completed geometry, matching the standard BPS tension dependence on \(|p+q\tau|\) in the regime analyzed \cite{Sen1994, Schwarz1995, Bergshoeff1995}. At the same level of control, we match the ten-dimensional low-energy effective dynamics obtained by reducing eleven-dimensional supergravity on the tower-completed circle to the Type IIB supergravity action in the constant-\(\tau\) sector, including the required topological couplings, thereby providing a stringent internal consistency check. The modular/duality organization that appears geometrically in our construction is the profinite/congruence-level structure explicitly derived from the tower, we do not claim a derivation of the full quantum \(SL(2,\Z)\) completion beyond what is shown, though the framework is designed to make such a completion a well-posed question rather than an ill-defined limit. A genuine conceptual payoff already visible in the tower description is that global sectors naturally sit in a K-theoretic/homological environment on profinite objects, aligning with the string-theoretic lesson that discrete data and charge sectors are most naturally classified by generalized cohomology rather than by naive local geometry.

Several natural extensions are definition-level proposals rather than results proven here. Allowing \(\tau\) to vary over spacetime should correspond to perfectoid-circle fibrations over a base, in which monodromy/defect data becomes intrinsic to the fibration and supplies the 7-brane sector in the usual F-theory sense \cite{Sen1996, DasguptaMukhi1997, GaberdielZwiebach1998}, developing this systematically lies beyond the constant-\(\tau\) sector established below. The appearance of a prime \(p\) should be read as bookkeeping for the tower, not as a physical coupling, and the natural conceptual completion is adelic, in which local refinements at all primes assemble into a single arithmetic object, this perspective is consistent with broad expectations that arithmetic geometry is forced by nonperturbative duality \cite{Frenkel2007, GaitsgoryLurie2019}, but we treat it here as programmatic guidance rather than a derived theorem.\footnote{A definition-level treatment of tower-stability, $p$-independence, and adelic/restricted-product assembly is given in App.~\ref{app:p-independence-adelic}.
} Likewise, once fermions, anomalies, and orientation-reversing operations are included, one expects refinements of the duality group beyond the \(SL(2,\Z)\) sector treated in our bosonic analysis, for example to an appropriate \({\rm Pin}^{+}\) cover of \(GL(2,\Z)\), incorporating these refinements is future work \cite{DebrayDieriglHeckmanMontero2021}. Finally, the tower-completed viewpoint suggests new ways to organize large families of states and global sectors, with potential implications for flux compactifications and moduli stabilization \cite{Denef2008, Douglas2007} and for protected quantities such as black-hole microstate counting \cite{MaldacenaStrominger1997}, but we emphasize again that such directions go beyond what is proven in the constant-\(\tau\) sector here.

The relevance to AdS/CFT is immediate because holography is maximally sensitive to precisely the global data that motivate tower-completed compactifications. In many holographic examples \(\tau\) appears as an exactly marginal boundary coupling, and \(SL(2,\Z)\) acts on line operators and global sectors rather than on local equations of motion, so a framework that supplies a geometric carrier for modular data and treats global sectors canonically by descent is naturally holography-native rather than decorative. The same limit language that underlies the tower construction is structurally familiar in holography through cutoff procedures and gluing of local data into global reconstruction statements, and the existence of \(p\)-adic models of AdS/CFT provides concrete evidence that non-Archimedean avatars can capture controlled aspects of holographic kinematics \cite{Gubser2017, Heydeman2018}. We do not claim new holographic theorems in this paper, but the present construction motivates a definition-level reformulation of duality-rich holographic settings in which arithmetic and profinite structures are treated as first-class global data rather than as after-the-fact symmetries, and it suggests a principled route toward relating ordinary and \(p\)-adic holographic models as different avatars of tower-completed boundary data. More broadly, discrete and non-Archimedean geometric structures have repeatedly appeared as controlled probes of emergent spacetime and quantum-gravity kinematics \cite{Freidel2014, Chatterjee2018}, and the perfectoid/condensed viewpoint provides a natural setting in which such probes can be tied directly to standard string dualities.

Conceptually, our approach belongs to a wider theme in quantum gravity, ``geometry'' is often not a smooth manifold but a structure defined by consistency of observables and global sectors. Worldsheet CFT already encodes target-space information through modular and categorical data, matrix models realize spacetime as emergent from large-\(N\) algebras, and AdS/CFT encodes bulk geometry in boundary operator structure, in the same spirit, F-theory is naturally adjacent to the moduli of elliptic curves and their stacky refinements, and it is unsurprising that definition-level questions push one toward sheaf-theoretic enlargements of geometry and toward arithmetic completions of the duality data \cite{Cecotti:2020swampland}. The present construction should be read as an arithmetic realization of that philosophy, and it exemplifies the increasingly fruitful dialogue between modern mathematics and string dualities \cite{Witten1995b, KapustinWitten2007}.

\section{From the shrinking-fiber limit to F-theory}

In standard string-theory lore, F-theory is motivated by the M-theory/Type IIB correspondence on \(T^{2}\), where the complex structure of the torus is identified with the Type IIB axio-dilaton \(\tau\) and the modular group \(SL(2,\Z)\) becomes the nonperturbative S-duality acting on \(\tau\) and on the \((p,q)\) charge lattice \cite{Vafa1996, Schwarz1995, Sen1996, MorrisonVafa1996a, MorrisonVafa1996b}. The traditional ``definition'' then proceeds by taking M-theory on the elliptic curve while sending the fiber volume to zero at fixed \(\tau\), thereby recovering a ten-dimensional description that treats the elliptic fiber as auxiliary data encoding the varying coupling \cite{Vafa1996, MorrisonVafa1996a, MorrisonVafa1996b}. From an eleven-dimensional viewpoint this step is definitionally unsatisfactory for a precise reason, the operation that produces F-theory is a singular compactification limit, so one is not given an honest eleven-dimensional background whose geometry intrinsically carries the full modular identification of \(\tau\), but rather an auxiliary torus whose volume is removed by fiat while its complex structure is kept as physical input\footnote{We will refer to this operation uniformly as the shrinking-fiber limit, with the understanding that it is equivalent here to the shrinking-torus limit of the M-theory two-torus.
} \cite{DasguptaMukhi1996, GopakumarVafa1998}. In this sense the appearance of \(\tau\) and the \(SL(2,\Z)\) action is external rather than geometric, and the ``12th dimension'' of F-theory is not a genuine compactification manifold supporting fields and dynamics in the usual way, it is a powerful organizational device whose precise status depends on a degenerate limit, even though its physical predictions are extensively validated through the duality web and its applications \cite{Polchinski1998}.

Nonperturbative duality data in string theory is overwhelmingly global, it acts on charge lattices, line-operator sectors, discrete holonomies, and monodromies around defects, and these structures are invisible to local equations of motion while remaining physically essential \cite{Schwarz1995, Sen1994}. This suggests that the correct definition-level replacement for the shrinking-fiber slogan should leave local eleven-dimensional supergravity intact while enriching the compactification carrier precisely in the global sectors on which duality acts, and the operational way to do this is to treat the compactification direction as a tower-completed object built from a nested family of finite covers. Concretely, one considers the \(p^{n}\)-fold covers of a circle, works below a fixed ultraviolet cutoff \(\Lambda\), and takes \(n\to\infty\) so that local supergravity equations and supersymmetry constraints persist stage-by-stage while new information can appear only through global and topological sectors that accumulate in the inverse limit. In this viewpoint the prime \(p\) is a bookkeeping index specifying which refinement tower one uses, not a physical coupling, and the physically meaningful statement is tower-stability of the constant-\(\tau\) sector under refinement in the regime of control. The natural conceptual completion is adelic, arithmetic duality data is assembled from local information at all primes together with an Archimedean component, so a single-prime construction should be read as a local chart on a broader arithmetic structure rather than as a new parameter of the theory \cite{Frenkel2007, GaitsgoryLurie2019}.

The moment one takes tower-completed compactification carriers seriously, one encounters a basic obstruction, inverse limits of manifolds are not manifolds in general, and naive topology together with naive sheaf theory does not provide stable, exact control of fields, fluxes, and global constraints on inverse-limit objects. This is not a technical annoyance but a definition-level problem, because flux quantization, discrete holonomies, anomaly constraints, and global consistency conditions are precisely statements about gluing and exactness, and these operations can fail to commute with inverse limits in the classical category of topological spaces. Condensed mathematics is the minimal enlargement of ``space'' in which profinite and inverse-limit objects admit well-behaved sheaves defined by exact descent, so that ``a field on the limit'' is precisely a compatible family of fields on the finite stages that glues uniquely with homological control \cite{ClausenScholze}. Condensed objects play the same role for inverse-limit compactifications that stacks play for gauge quotients. Physically, this is the same logic by which stacks became unavoidable in string theory when quotient constructions were promoted from heuristic identifications to definition-level backgrounds, the enlargement is minimal because it is exactly what makes the gluing problem well-posed.

Condensed mathematics specifies what it means to have fields and global sectors on a tower-defined object, but the duality problem also requires a geometric mechanism that replaces the singular shrink-to-zero step by a non-singular carrier from which elliptic data can be extracted without collapsing a cycle. Perfectoid geometry provides precisely this minimal mechanism\:it is designed to control towers of \(p\)-power covers and \(p\)-power roots in a way that yields a stable inverse-limit geometry together with a canonical equivalence, the tilting correspondence, between characteristic-zero and characteristic-\(p\) avatars of the same tower-completed object \cite{Scholze2012, BhattScholze2019, KedlayaLiu2015, ScholzeWeinstein2017}. In physics terms, perfectoid geometry is the framework in which one can package an infinite refinement tower into a single geometric object while keeping local differential geometry intact, and tilting is the canonical operation that allows one to extract the elliptic-curve datum that will later serve as the geometric home of \(\tau\) without taking a singular limit. Tilting itself should be understood conservatively\:it is not a spacetime diffeomorphism and not a dynamical process, but an equivalence between geometric avatars of the same tower-completed object. A referee-level subtlety must be made explicit from the start\:tilting alone does not produce Wilson lines or couplings, so \(\tau\) does not arise from the tilt functor by itself, rather, the tilt provides elliptic data, and \(\tau\) arises only after that elliptic datum is matched to the M-theory radius/metric modulus and to the C-field holonomy data encoded in \(C^{(3)}\), with the concrete formulas derived later in the physical dictionary.

At the level established in this paper, the dictionary is the standard M/IIB duality currency written in a tower-completed language\:the M-theory modulus \(R_{M}\) controls \(\mathrm{Im}\,\tau\) through the Kaluza-Klein reduction of the eleven-dimensional metric, the axion \(\mathrm{Re}\,\tau=C_{0}\) is identified with the relevant \(C^{(3)}\) holonomy/Wilson-line sector on the tower-completed circle, and wrapped M2 sectors organized by the tower-completed geometry reproduce the protected \((p,q)\) string tower at the level justified by the BPS and effective-action matching carried out later. Within this conservative scope, what is established in the paper is the constant-\(\tau\) sector at the supergravity/BPS/effective-action level\:the explicit \(\tau\) dictionary, the protected tension matching that reproduces the \(|p+q\tau|\) dependence, and the low-energy effective action matching as computed, together with a geometric organization of modular data to the profinite/congruence level that is explicitly derived from the tower structure. By contrast, varying \(\tau\) through nontrivial fibrations, intrinsic 7-brane monodromy and defect structure, adelic completion that removes any dependence on a single choice of \(p\), and fermionic/anomaly refinements such as \({\rm Pin}^{+}(GL(2,\Z))\) are definition-level extensions that are natural in this framework but remain programmatic unless and until derived beyond the constant-\(\tau\) regime \cite{DebrayDieriglHeckmanMontero2021}.

The relevance to AdS/CFT is immediate because holography is maximally sensitive to the same global sectors that force tower-completed compactification language\:in many holographic examples \(\tau\) is an exactly marginal boundary coupling, and \(SL(2,\Z)\) acts on line operators and global sectors rather than on local equations of motion, so a framework that renders modular data geometric and tower-complete is naturally holography-native rather than decorative \cite{Schwarz1995, Sen1994}. The tower and inverse-limit viewpoint is also structurally familiar in holography through cutoff procedures and gluing of local data into global reconstruction statements, which is precisely where descent and exactness become definition-level issues rather than matters of taste. The existence of \(p\)-adic AdS/CFT provides concrete motivation that non-Archimedean avatars can capture controlled aspects of holographic kinematics \cite{Gubser2017, Heydeman2018}, and it suggests that tilting may provide a principled conceptual bridge between Archimedean and non-Archimedean holography, but any such consequences are programmatic here and are not claimed as new holographic theorems in this paper.

\section{Condensed sets and perfectoid spaces}\label{sec:toolkit}
The aim of this section is to record, in a definition-level form, the minimal mathematical language forced on us once we insist that the objects carrying nonperturbative duality data be treated as honest compactification carriers rather than as heuristic limits. In string theory, dualities act on global sectors-charge lattices, line operators, discrete holonomies, and monodromies-so the relevant bookkeeping is naturally organized by towers of finite covers and their inverse limits, even when the local equations of motion remain those of ordinary supergravity. The difficulty is that inverse limits are not manifolds in general, and ordinary topology together with naive sheaf theory is fragile under inverse limits precisely in the ways that matter for physics, because gluing, exactness, and homological control are what define flux quantization and global consistency constraints. Condensed sets provide the minimal enlargement of ``space'' in which profinite and inverse-limit objects admit well-behaved sheaves defined by exact descent, so that a field on a tower-defined limit object is literally a compatible family of fields on the finite stages that glues uniquely. Perfectoid spaces provide the minimal geometric framework in which such towers become controlled analytic objects admitting tilting, a canonical equivalence between characteristic-zero and characteristic-\(p\) avatars of the same tower-completed geometry. In later sections, that tilting mechanism will be the bridge from the tower-completed compactification carrier to elliptic data, but the physical identification of that elliptic datum with the axio-dilaton \(\tau\) is a separate compactification dictionary derived from the supergravity reduction and C-field holonomy data, and is therefore deferred to the appropriate physics sections.

The concept of condensed sets is a categorical and homological refinement of the classical category of topological spaces, introduced by Clausen and Scholze \cite{ClausenScholze} to provide stable sheaf theory and homological algebra for profinite and inverse-limit objects. We begin with profinite sets, which are the basic site on which the condensed formalism is built. \begin{definition}[Profinite set]A profinite set is a topological space which is compact, Hausdorff, and totally disconnected. Equivalently, any profinite set \(S\) can be expressed as an inverse limit of finite discrete sets,
\begin{equation}\label{1}
S \cong \varprojlim_{\alpha\in A} S_{\alpha},
\end{equation}
where each \(S_{\alpha}\) is a finite discrete space and the transition maps \(S_{\beta}\to S_{\alpha}\) for \(\beta\ge \alpha\) are continuous.\end{definition}
 Profinite sets are the precise mathematical avatar of tower bookkeeping\:they model inverse systems of finite covers and finite quotients that, in string theory, organize congruence-level data and global sectors on which duality acts, while leaving local differential equations untouched. The category of profinite sets, denoted \(\mathrm{Prof}\), admits a natural Grothendieck topology generated by surjective continuous maps with finite target, so that covers are families \(\{S_i\to S\}\) that jointly surject onto \(S\). The point is that classical topology is poorly behaved as a category for doing homological algebra on such objects\:it is not Cartesian closed in a way compatible with analysis, and it does not provide an abelian setting in which descent and exactness behave robustly under inverse limits. Condensed sets repair exactly this by definition, replacing spaces by sheaves on the profinite site. \begin{definition}[Condensed set]\label{def:condensed}A condensed set is a functor
\begin{equation}\label{2}
\calF:\mathrm{Prof}^{\mathrm{op}}\to \mathrm{Sets}
\end{equation}
which satisfies the sheaf condition with respect to the Grothendieck topology on \(\mathrm{Prof}\), i.e., for every cover \(\{S_i\to S\}\) in \(\mathrm{Prof}\) the canonical sequence
\begin{equation}\label{3}
0\to \calF(S)\to \prod_i \calF(S_i)\rightrightarrows \prod_{i,j}\calF(S_i\times_S S_j)
\end{equation}
is exact.\end{definition}
 The sheaf condition is the exact descent statement that makes ``fields on an inverse limit'' definition-level\:a global datum on the tower-defined limit object is precisely a compatible family on the finite stages that agrees on overlaps and therefore glues uniquely, which is the minimal structure needed to define global sectors, holonomies, and flux constraints on tower-completed compactification carriers. Concretely, the Yoneda-type assignment sends a compact Hausdorff space \(X\) to the functor
\begin{equation}\label{4}
\underline{X}:\,S\mapsto \mathrm{Cont}(S,X),
\end{equation}
and this defines a fully faithful embedding
\begin{equation}\label{5}
\iota:\mathrm{CompHaus}\hookrightarrow \mathrm{Cond},
\end{equation}
so classical spaces sit inside the larger condensed category, but the enlarged category has the categorical and homological stability needed for inverse-limit geometry. The sheaf condition may be unpacked in the familiar gluing form\:if \(x_i\in \calF(S_i)\) satisfy compatibility on overlaps,
\begin{equation}\label{6}
x_i|_{S_i\times_S S_j}=x_j|_{S_i\times_S S_j}\quad\text{for all }i,j,
\end{equation}
then there exists a unique \(y\in \calF(S)\) with \(y|_{S_i}=x_i\). This is the categorical form of the physical requirement that global background data are determined by consistent local data, but with ``local'' now meaning ``finite-stage'' in the tower\:it is exactly the control needed for Wilson lines, discrete holonomies, and global constraints to persist under refinement. The category of condensed abelian groups, defined as sheaves of abelian groups on \(\mathrm{Prof}\), is abelian with enough injectives and projectives, so derived functors and cohomology behave in a controlled way \cite{ClausenScholze}, this is the structural input needed when global sectors are constrained by quantization and anomaly conditions. String theory forces this enlargement because flux quantization, discrete theta data, line operators, and anomaly constraints are intrinsically homological statements, and in tower-completed compactifications naive topology does not provide stable exactness under inverse limits, condensed objects play for inverse-limit compactifications the same role that stacks play for gauge quotients.

Perfectoid spaces, introduced by Scholze \cite{Scholze2012}, provide a tower-complete non-Archimedean geometry in which inverse systems generated by adjoining all \(p\)-power roots are controlled and admit a canonical tilting equivalence to characteristic \(p\). Let \(p\) be a fixed prime throughout this subsection. A non-Archimedean field \(K\) is a field complete for a non-Archimedean absolute value \(|\cdot|:K\to \R_{\ge 0}\) satisfying the ultrametric inequality
\begin{equation}\label{7}
|x+y|\le \max(|x|,|y|)
\end{equation}
for all \(x,y\in K\). The valuation ring \(\calO_K=\{x\in K:\,|x|\le 1\}\) is local with maximal ideal \(\mathfrak m_K=\{x\in K:\,|x|<1\}\), and the residue field \(k=\calO_K/\mathfrak m_K\) has characteristic \(p>0\). The analytic spaces relevant for perfectoid geometry are built from Huber pairs and their associated adic spectra. \begin{definition}[Huber adic space]An adic space \(\Spa(R,R^+)\) associated to a Huber pair \((R,R^+)\) is defined as the set of equivalence classes of continuous valuations \(|\cdot|:R\to \Gamma\cup\{0\}\) bounded by \(1\) on \(R^+\), where \(\Gamma\) is an ordered abelian group, endowed with the topology generated by rational subsets and equipped with a structure sheaf making it a locally ringed space.\end{definition}
 The adic spectrum is the appropriate notion of ``space of functions'' in the non-Archimedean/tower-completed setting\:it is the geometric container in which one can treat an infinite refinement tower as a single analytic object while keeping a sheaf of functions and a notion of locality suitable for descent. The key structural condition is perfectoidness, which is formulated in terms of the Frobenius endomorphism on the mod \(p\) reduction of integral elements. \begin{definition}[Perfectoid algebra]\label{def:perfalg}Let \(K\) be a complete non-Archimedean field with residue characteristic \(p>0\) and non-discrete valuation. A Banach \(K\)-algebra \(R\) is called a perfectoid algebra if the set of power-bounded elements \(R^\circ\subseteq R\) admits an open and integrally closed subring \(R^+\subseteq R^\circ\) and the Frobenius map \(\varphi:R^+/p\to R^+/p\), \(x\mapsto x^p\), is surjective.\end{definition}
 Frobenius surjectivity mod \(p\) is the mathematical expression of ``tower-complete control''\:it ensures that adjoining all \(p\)-power roots stabilizes in a controlled way, so the inverse limit that packages the tower exists as a well-behaved geometric object, which is exactly what is needed if the compactification carrier is defined by a \(p^n\)-tower while local physics remains governed by ordinary supergravity below a fixed cutoff. \begin{definition}[Perfectoid space]A perfectoid space \(X\) over \(K\) is an adic space locally isomorphic to affinoid perfectoid spaces \(\Spa(R,R^+)\) for perfectoid algebras \((R,R^+)\) as above.\end{definition}
 Perfectoid spaces are the minimal geometric class in which tower-completed compactification carriers can be treated as honest local objects admitting fiber products and good sheaf theory, so one can keep local supergravity intact while encoding new global sectors through the tower. The defining feature that makes perfectoid spaces indispensable for us is the tilting formalism, which relates characteristic-zero and characteristic-\(p\) avatars of the same tower-completed geometry. \begin{definition}[Tilt of a perfectoid field]Given a perfectoid field \(K\) as above, the tilt \(K^\flat\) is defined as the inverse limit
\begin{equation}\label{8}
K^\flat:=\varprojlim_{x\mapsto x^p} K,
\end{equation}
with ring operations given component-wise and addition defined via the limit procedure inherent in the perfection construction. The field \(K^\flat\) is a perfectoid field of characteristic \(p\), carrying a natural valuation inherited from \(K\).\end{definition}
 The tilt packages the same tower-completed information into a characteristic-\(p\) avatar where Frobenius becomes structural, giving a convenient arena to extract arithmetic-geometric data from the tower without changing the underlying tower content. \begin{definition}[Tilt of a perfectoid algebra]Let \((R,R^+)\) be a perfectoid algebra over \(K\). Its tilt \((R^\flat,R^{\flat+})\) is constructed by applying the tilting process coordinate-wise,
\begin{equation}\label{9}
R^\flat:=\varprojlim_{x\mapsto x^p} R,\qquad R^{\flat+}:=\varprojlim_{x\mapsto x^p} R^+.
\end{equation}\end{definition}
 This is the algebraic form of the same statement\:the ring of functions on the tower-completed carrier has an equivalent characteristic-\(p\) description that remembers all compatible \(p\)-power roots, so the inverse-limit microstructure becomes manipulable while remaining equivalent to the original characteristic-zero geometry. The tilting construction extends to perfectoid spaces by gluing affinoid tilts, and its structural content is expressed by the tilting equivalence. \begin{theorem}[Tilting equivalence \cite{Scholze2012}]\label{thm:tilt}The tilting operation induces an equivalence of categories
\begin{equation}\label{10}
\{\text{perfectoid spaces over }K\}\cong \{\text{perfectoid spaces over }K^\flat\},
\end{equation}
and this equivalence preserves the structure sheaf, morphisms, and \'etale sites.\end{theorem}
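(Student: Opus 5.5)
The plan follows Scholze's original strategy: reduce to the affinoid case, prove the equivalence there, and then glue. The steps are, in order, the perfectoid field, perfectoid algebras, the adic spectra, the étale sites, and finally arbitrary perfectoid spaces.

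\textbf{Step 1: the tilting correspondence for perfectoid fields.} We work with the multiplicative inverse limit \(\calO_{K^\flat}=\varprojlim_{x\mapsto x^p}\calO_K\). First, using the ultrametric inequality \eqref{7}, we check that \(\calO_{K^\flat}\cong\varprojlim_{\varphi}\calO_K/p\) as rings, so that addition on \(K^\flat\) is well defined. Second, the map \(x\mapsto x^\sharp:=x^{(0)}\) is multiplicative and gives a valuation \(|x|:=|x^\sharp|\).

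\textbf{Step 2: the algebra-level equivalence.} We prove that \(R\mapsto R^\flat\) is an equivalence from perfectoid \(K\)-algebras to perfectoid \(K^\flat\)-algebras.
\begin{itemize}
\item Choose a pseudo-uniformizer \(\varpi\in K^\flat\) with \(\varpi^\sharp\) dividing \(p\).
\item Identify \(R^{\flat+}/\varpi\cong R^+/\varpi^\sharp\). This uses the Frobenius surjectivity in Definition~\ref{def:perfalg}.
\item Construct the inverse functor via Fontaine's ring \(A_{\inf}=W(R^{\flat+})\) and the map \(\theta\colon A_{\inf}\to R^+\), setting \(R^+=A_{\inf}\otimes_{A_{\inf}(\calO_{K^\flat}),\theta}\calO_K\).
\item Show that both composites are the identity. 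This reduces modulo \(\varpi\) to the identification above, by almost-mathematics deformation theory: both sides are flat, \(p\)-adically complete, and agree mod \(p\).
\end{itemize}

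\textbf{Step 3: the adic spectra.} We show that \(\Spa(R,R^+)\cong\Spa(R^\flat,R^{\flat+})\) as topological spaces, via \(|\cdot|\mapsto|(\cdot)^\sharp|\). The delicate part is rational subsets. The key approximation lemma states: for \(f\in R^+\) and \(\epsilon>0\), there is \(g\in R^{\flat+}\) with \(|f-g^\sharp|\le|\varpi|^{1-\epsilon}\max(|f|,|\varpi|^c)\) at every point. With it, every rational subset upstairs is the preimage of a rational subset downstairs. Next, we prove that rational localizations of perfectoid algebras are again perfectoid, and that tilting commutes with them. This matches the structure presheaves; it also yields sheafiness, through Tate acyclicity checked almost-mathematically on the tilt side.

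\textbf{Step 4: the étale sites.} Here we invoke the almost purity theorem. Finite étale \(R\)-algebras are perfectoid, and \(R_{\mathrm{f\acute et}}\simeq R^\flat_{\mathrm{f\acute et}}\). In characteristic \(p\) this is straightforward, because Frobenius is bijective. In characteristic zero we would reduce, via Step 3, to the case where \(R\) is a perfectoid field; there it is the Fontaine--Wintenberger theorem \(G_K\cong G_{K^\flat}\), and the general case follows by spreading out over points and approximating.

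\textbf{Step 5: gluing.} Morphisms glue by the full faithfulness from Step 2, and spaces glue along rational covers using Step 3. This gives the equivalence \eqref{10}.

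The main obstacle is Step 4, the almost purity theorem in mixed characteristic. The spreading-out argument from points to affinoids needs delicate almost-vanishing of cotangent complexes, and I expect most of the work to concentrate there.
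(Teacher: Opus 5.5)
The paper gives no argument of its own for this statement; it is quoted from Scholze. Your outline reproduces Scholze's architecture, and modulo the two smaller points at the end, Steps 1--3 and 5 are a sound sketch of the equivalence \eqref{10} and of the matching of structure presheaves.

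The genuine gap is the \'etale part. Spreading out from a point $x$ rests on $k(x)_{\mathrm{f\acute et}}\simeq\varinjlim_{U\ni x}\calO_X(U)_{\mathrm{f\acute et}}$, which is a henselian-approximation statement. It only shows that a finite \'etale $R$-algebra $S$ is, over each member $U_i$ of some rational cover, the untilt of a finite \'etale $\calO_{X^\flat}(U_i^\flat)$-algebra $T_i$. To obtain a global $T$ with $T^\sharp\cong S$, you need two inputs your plan never states.
\begin{enumerate}
\item[(a)] \emph{Full faithfulness of untilting on finite \'etale algebras.} This is what identifies the $T_i$ on overlaps, compatibly with the cocycle condition. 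It is proved via characteristic-$p$ almost purity and the chain $R^\flat_{\mathrm{f\acute et}}\simeq R^{\flat\circ a}_{\mathrm{f\acute et}}\simeq(R^{\flat\circ a}/\varpi)_{\mathrm{f\acute et}}=(R^{\circ a}/\varpi^\sharp)_{\mathrm{f\acute et}}\simeq R^{\circ a}_{\mathrm{f\acute et}}\hookrightarrow R_{\mathrm{f\acute et}}$. The almost deformation theory you anticipate is what proves this step.
\item[(b)] \emph{Descent of finite \'etale algebras along rational covers of $X^\flat$.} Equivalently, a finite \'etale cover of an affinoid perfectoid space is affinoid. This uses the almost vanishing of $H^i(X^\flat,\calO^+_{X^\flat})$ for $i>0$, obtained alongside Tate acyclicity in Step 3.
\end{enumerate}
Only after (a) and (b) do you know that finite \'etale $R$-algebras are perfectoid.

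Step 5 then glues objects and morphisms but says nothing about \'etale sites. To pass from $R_{\mathrm{f\acute et}}\simeq R^\flat_{\mathrm{f\acute et}}$ to $X_{\mathrm{\acute et}}\simeq X^\flat_{\mathrm{\acute et}}$, you must use that \'etale morphisms of perfectoid spaces are locally open immersions into finite \'etale covers. You also need that the finite \'etale equivalence is compatible with rational localization.

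Two smaller points.

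First, in Step 3 the approximation lemma (together with $T_0$-ness of $\Spa$) gives injectivity of $x\mapsto x^\flat$ and the matching of rational subsets. Surjectivity needs its own argument: a point $y$ of $X^\flat$ is a map to the perfectoid affinoid field $(k(y),k(y)^+)$, which you untilt using Step 2, together with the fact that the untilt of a perfectoid field is a field.

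Second, you should say that you use Scholze's definition, in which $R^\circ$ is bounded. Definition~\ref{def:perfalg} omits this, but Step 2 needs $R^+$ to be $p$-adically separated and complete, both for identifying $\varprojlim_{x\mapsto x^p}R^+$ with $\varprojlim_\varphi R^+/\varpi^\sharp$ and for your completeness argument. For instance, $R=K[\epsilon]/(\epsilon^2)$ with $R^+=R^\circ=\calO_K+K\epsilon$ satisfies the paper's definition, since $R^+/p=\calO_K/p$. Yet $\varprojlim_\varphi R^+/p=\calO_{K^\flat}$, so it has the same tilt as $K$, and \eqref{10} would fail without uniformity.
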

 Tilting is not a spacetime diffeomorphism and not a dynamical process, it is an equivalence between geometric avatars of the same tower-completed object, ensuring that the tower bookkeeping can be transported across characteristics without changing the underlying geometric content relevant for global sectors. An explicit illustration is provided by the perfectoid unit disc
\begin{equation}\label{11}
\Spa\!\bigl(K\langle T^{1/p^\infty}\rangle,\;\calO_K\langle T^{1/p^\infty}\rangle\bigr),
\end{equation}
where \(K\langle T^{1/p^\infty}\rangle\) is the completion of \(K[T^{1/p^n}\mid n\ge 0]\) with respect to the Gauss norm, its tilt is
\begin{equation}\label{12}
\Spa\!\bigl(K^\flat\langle T^{1/p^\infty}\rangle,\;\calO_{K^\flat}\langle T^{1/p^\infty}\rangle\bigr),
\end{equation}
a perfectoid disc over the tilted field \(K^\flat\) of characteristic \(p\). Perfectoid spaces thus refine classical rigid analytic spaces by incorporating infinite towers of \(p\)-power covers as part of their local structure, enabling new methods in \(p\)-adic Hodge theory and arithmetic geometry \cite{BhattScholze2019, KedlayaLiu2015}. The role of tilting in this paper is to supply a canonical route from tower-completed geometry to elliptic data without taking a singular shrinking-fiber limit, but it is essential to keep scope disciplined\:tilting alone does not encode Wilson lines or couplings, so the axio-dilaton \(\tau\) does not follow from Theorem~\ref{thm:tilt} by itself, \(\tau\) emerges only after the elliptic datum produced by the tower-completed carrier (via tilt together with the required comparison step to a complex elliptic curve) is matched to the M-theory radius/metric modulus and to the \(C^{(3)}\) holonomy data, with the explicit compactification dictionary derived later in the paper.

In the present framework the prime \(p\) is a bookkeeping choice specifying which tower of \(p^{n}\)-fold covers is used to define the inverse-limit compactification carrier, and it is not to be interpreted as a physical coupling or as a label of distinct theories. The operational physics statement is that one works below a fixed ultraviolet cutoff \(\Lambda\) and refines the tower, so local supergravity and supersymmetry constraints remain stable stage-by-stage while global sectors are organized more finely, any physically meaningful output in the constant-\(\tau\) regime established later must therefore be stable under refinement and should not depend on \(p\) beyond this organizational role. The natural conceptual completion of this viewpoint is adelic\:since duality groups in string theory are arithmetic, a single-prime tower should be regarded as a local chart on a broader structure assembled from all primes together with the Archimedean factor, and we treat such an adelic completion as a guiding principle and programmatic direction rather than as a derived result within the constant-\(\tau\) analysis.

With condensed sets fixing the definition of fields and global sectors on inverse-limit objects by exact descent and with perfectoid spaces supplying the controlled tower-completed geometry and its tilting equivalence, we are now in position to introduce the specific compactification carrier that replaces the shrinking-fiber slogan\:the perfectoid circle, obtained as an inverse limit of \(p^{n}\)-covers of \(S^{1}\) and realized as an affinoid perfectoid space \(\Spa(R,R^+)\) for an appropriate tower-completed ring of functions. In the next section we construct this object explicitly, analyze the invariants that control its global sectors, and apply the tilting formalism to extract the elliptic datum that will serve as the geometric home of \(\tau\), while emphasizing that the precise physical identification of \(\tau\) with the M-theory radius/metric modulus and \(C^{(3)}\) holonomy data is a compactification dictionary derived later when the supergravity reduction is performed and matched to the constant-\(\tau\) Type IIB sector.

\section{The perfectoid circle and the elliptic fiber}\label{sec:perfectoidcircle}

This section constructs the non-singular compactification carrier that replaces the shrinking-torus slogan in traditional F-theory and prepares, in a definition-level way, the geometric datum that will later be matched to the Type IIB axio-dilaton \(\tau\). The first step is to package the entire tower of \(p^{n}\)-fold covers of the circle into a single object \(S_f^1\) that is perfectoid in the sense reviewed earlier, so that local geometry remains circle-like at each finite stage while global sectors accumulate coherently in the inverse limit. The second step is to make invariant bookkeeping precise\:on an inverse-limit/solenoidal object one must distinguish between singular and \v{C}ech-type invariants, between \(\pi_1\), its abelianization, and continuous character groups, and between homology and generalized cohomology, because these are the objects that different physical observables actually probe. The third step is to describe the tilt \(\to\) untilt \(\to\) analytification machinery that produces elliptic-curve data without taking any singular shrinking limit, this machinery provides the geometric home in which \(\tau\) lives, but it does not by itself encode Wilson lines or couplings, and the axio-dilaton identification is therefore deferred to the supergravity reduction and \(C^{(3)}\) holonomy dictionary derived later. Finally, we explain how eleven-dimensional supergravity can be placed on \(S_f^1\) in the only definition-level way compatible with local dynamics, namely by working at finite tower level below a fixed cutoff \(\Lambda\) and taking the inverse-limit at fixed \(\Lambda\), so that local equations and supersymmetry persist while only global sectors are refined.

We fix a prime \(p\) and work over the perfectoid field \(\Q_p^{\mathrm{cycl}}\) as in the preliminaries. Let
\begin{equation}\label{13}
R:=\Q_p^{\mathrm{cycl}}\llbracket X^{\pm 1/p^\infty}\rrbracket,\qquad R^+:=\mathcal{O}_{\Q_p^{\mathrm{cycl}}}\llbracket X^{\pm 1/p^\infty}\rrbracket,
\end{equation}
the \(p\)-adic completion of the Laurent algebra obtained by adjoining all \(p^n\)-th roots of \(X\), then \((R,R^+)\) is a perfectoid \(\Q_p^{\mathrm{cycl}}\)-algebra, and we define the perfectoid circle by
\begin{equation}\label{14}
S_f^1:=\mathrm{Spa}(R,R^+).
\end{equation}
 \(S_f^1\) is the single compactification carrier that packages the entire \(p^n\)-cover tower of the ordinary M-theory circle into one object, so local supergravity remains unchanged at finite energy while the global sectors relevant to duality are refined in the limit. On underlying topological spaces, one may model the resulting solenoidal structure by a \(p\)-adic solenoid \(\Sigma_p\), and a common heuristic presentation is that the perfectoid refinement replaces the classical circle by a solenoid-like object with a profinite microstructure, the point for physics is that the refinement is global rather than local, so it is designed to affect charge bookkeeping and duality organization without introducing new local degrees of freedom. To make the tower origin explicit, consider the classical circle
\begin{equation}\label{16}
S^1:=\{z\in\C:\ |z|=1\},
\end{equation}
and the family of maps
\begin{equation}\label{18}
\varphi_n:S^1\to S^1,\qquad z\mapsto z^{p^n},
\end{equation}
which are continuous surjective group homomorphisms with kernel the finite cyclic subgroup \(\mu_{p^n}\subset S^1\) of \(p^n\)-th roots of unity. The inverse system
\begin{equation}\label{19}
\cdots \xrightarrow{\ \varphi_2\ } S^1 \xrightarrow{\ \varphi_1\ } S^1 \xrightarrow{\ \varphi_0\ } S^1
\end{equation}
defines an inverse limit in compact groups,
\begin{equation}\label{20}
S_f^1:=\varprojlim_{n}\bigl(S^1,\varphi_n\bigr),
\end{equation}
whose points are compatible sequences
\begin{equation}\label{21}
(z_0,z_1,z_2,\ldots)\in \prod_{n\ge 0} S^1,\qquad z_n=z_{n+1}^{p}.
\end{equation}
One convenient way to parametrize the tower bookkeeping is via the \(p\)-adic integers
\begin{equation}\label{23}
\Z_p=\varprojlim_n \Z/p^n\Z,
\end{equation}
which encode the compatible residue data along the tower, heuristically, one may write a map
\begin{equation}\label{22}
S^1\times \Z_p\to S_f^1,\qquad \bigl(z,(a_0,a_1,\ldots)\bigr)\mapsto \bigl(z, z^{p^{-a_0}}, z^{p^{-a_1}},\ldots\bigr),
\end{equation}
where the fractional powers are defined by lifting to the universal cover and using the compatible \(p\)-adic expansions. The inverse-limit description is the definition-level way to say that the compactification direction is refined by an infinite tower of smooth covers, so local fields can be defined stage-by-stage while the global sector bookkeeping becomes tower-complete in the limit.

Because \(S_f^1\) is defined as an inverse limit (and is solenoidal rather than a smooth manifold in the usual sense), one must be explicit about which invariant is being computed and which physical observable it controls. A consolidated computation of the tower-stable invariants of the solenoidal/perfectoid circle used here is collected in App.~\ref{app:tower-invariants}.
A solenoid is a compact abelian group obtained as an inverse limit of tori under covering maps, and the relevant ``first invariant'' depends on whether one uses singular homology, \v{C}ech (co)homology/shape invariants natural for inverse limits, the fundamental group in the sense appropriate to non-locally-simply-connected spaces, or the continuous character group that governs Fourier modes and Wilson-line sectors. In the topological model where the perfectoid circle is viewed as a classical \(S^1\) together with a profinite solenoidal refinement \(\Sigma_p\), one encounters the mixed statement
\begin{equation}\label{15}
H_1(S_f^1,\Z)\cong \Z\oplus \Z_p,
\end{equation}
in which the \(\Z\) factor is the classical winding direction and the \(\Z_p\) factor is the profinite refinement, this is the statement that a ``classical'' cycle and a tower-completed direction coexist in the bookkeeping model, and it is this coexistence that is physically responsible for the appearance of additional global sectors beyond those visible to an ordinary circle. At the same time, when one treats the inverse limit \(\varprojlim (S^1,\times p)\) as the \(p\)-adic solenoid itself, the singular invariants can collapse the naive free part, and one finds the tower-completed statement
\begin{equation}\label{17}
H^{\mathrm{sing}}_1(S_f^1,\Z)\cong \Z_p,
\end{equation}
with the interpretation that the inverse-limit topology suppresses ordinary pathwise generators while retaining the tower-completed pro-\(p\) direction. The correct way to reconcile these statements for physics is to remember that different observables probe different levels of structure\:local Kaluza-Klein reduction and local supergravity equations are computed at finite tower level on honest manifolds and therefore see the usual integer momentum quantization, while the inverse-limit object is the correct carrier for organizing global sectors and congruence-level refinements that only appear once the tower is assembled. In particular, a standard computation that is robust under inverse limits uses \v{C}ech-type invariants natural to inverse systems of compact spaces, and for the solenoidal refinement one has
\begin{equation}\label{24}
H_1(\Sigma_p,\Z)\simeq \Z_p,\qquad H_n(\Sigma_p,\Z)=0\ (n>1),
\end{equation}
which should be read as the statement that the inverse system remembers a pro-\(p\) completion in its first ``shape'' group, even though the resulting space is not locally contractible. The fundamental group computation in the tower language makes the same point\:one has
\begin{equation}\label{28}
\pi_1(S_f^1)\cong \varprojlim_n \pi_1(S^1)=\varprojlim_n\bigl(\Z\xleftarrow{\times p}\Z\bigr),
\end{equation}
which identifies the tower-completed loop data with a pro-\(p\) completion, and the abelianized bookkeeping yields
\begin{equation}\label{29}
H_1(S_f^1,\Z)\cong \Z\oplus \Z_p,
\end{equation}
together with the equivalent statement
\begin{equation}\label{30}
H_1(S_f^1,\Z)\cong \pi_1(S_f^1)^{\mathrm{ab}}\cong \Z\oplus \Z_p.
\end{equation}
For clarity, singular homology controls finite-stage local Kaluza-Klein data, while \v{C}ech cohomology and the continuous character group control global holonomy and tower-completed sector bookkeeping, different physical observables probe these invariants differently.
The integer \(\Z\) that labels ordinary Kaluza-Klein momenta is a finite-stage, local-physics datum that is stable under refinement, while the pro-\(p\) direction encodes tower-completed global bookkeeping, the correct interpretation is not that arbitrary ``fractional'' physical charges are automatically allowed, but that the tower-completed carrier organizes congruence-level refinements of global sectors in a way that becomes invisible if one insists on collapsing the tower into a singular limit. This is also where the continuous character group (Pontryagin dual) enters physically\:for a compact abelian group, continuous characters control harmonic analysis, Fourier modes, and the classification of flat \(U(1)\) holonomy sectors, so it is this object-not a naive identification of singular homology with physical charge lattices-that is the definition-level carrier for Wilson-line bookkeeping on solenoids. Consistency constraints then select the physically allowed integral sublattice\:Dirac quantization and Freed-Witten-type anomaly constraints enforce integrality of physical charge lattices and compatibility with instanton sectors, and the profinite structure should be regarded as the canonical organization of global sectors across the tower rather than as a claim that the theory admits unconstrained non-integral charges absent further quantization input, a point we will make explicit in the BPS and effective-action matching later.

The preceding invariant discussion already signals why generalized cohomology becomes unavoidable once the compactification carrier is tower-completed\:on inverse-limit/profinite objects, ordinary homology can miss torsion and completion data that are physically meaningful, while string theory has long taught us that global charge sectors-especially RR and D-brane charges-are naturally K-theoretic.
In the solenoidal setting, this perspective appears concretely in the emergence of a canonical
$p$-divisible torsion sector in complex topological $K$-theory for the tower limit.The Milnor exact-sequence computations and the charge-quantization/admissible-lattice prescription used for tower-completed compactifications are presented in App.~\ref{app:K-theory-quant}, which yields the extension
\begin{equation}\label{eq:K0Sigma-extension}
0 \longrightarrow \Q_p/\Z_p \longrightarrow K^0(\Sigma_p)\longrightarrow \Z \longrightarrow 0,
\qquad
K^1(\Sigma_p)=0 .
\end{equation}
This should be read as the statement that $K$-theory detects torsion/completion structure on the
tower-completed carrier that is invisible to naive singular homology, and therefore provides the
natural receptacle for global sectors on profinite refinements. In particular, when the tower-completed
circle is used as an internal direction, the global holonomy sectors of the $C^{(3)}$ field and the
quantization constraints on fluxes should be understood as living in the appropriate generalized-cohomological
framework on the inverse-limit carrier rather than being inferred from naive singular cycles.
 K-theory is the correct string-theoretic language for global charge sectors and discrete flux data, so its appearance here is not ornamental but the expected indicator that the tower-completed compactification carrier has the right homological receptacle to encode nonperturbative sectors in a way compatible with known brane-charge principles. 
 
 In particular, when the tower-completed circle is used as an internal direction, the global holonomy sectors of the \(C^{(3)}\) field and the quantization constraints on fluxes should be understood as living in the appropriate generalized-cohomological framework on the inverse-limit carrier rather than being inferred from naive singular cycles, and this is the mathematical mechanism by which the tower can carry additional globally quantized data while leaving local supergravity dynamics unchanged. Consistency is nevertheless conservative\:the physically realized charge lattice is selected by the same quantization and anomaly constraints that operate in ordinary compactifications, and the role of the tower-completed/K-theoretic structure is to provide the canonical bookkeeping environment in which those constraints can be imposed definition-level on the limit object, rather than to postulate new unconstrained charge sectors.

We now record the tilt machinery that allows elliptic data to emerge from the tower-completed compactification carrier without taking any shrinking-fiber singular limit. Working over the perfectoid field \(K=\Q_p^{\mathrm{cycl}}((t^{1/p^\infty}))\), consider the analytic Tate curve
\begin{equation}\label{26}
E_K:=\bigl(\G_{m,K}/\langle q\rangle\bigr)^{\mathrm{an}},\qquad |q|<1,
\end{equation}
and form the perfectoid Tate curve \(\widehat{E}_K=\varprojlim_{[p]} E_K\) under multiplication-by-\(p\).
\begin{theorem}[Tilted Tate curve]
Let \(K=\Q_p^{\mathrm{cycl}}((t^{1/p^\infty}))\) and define \(E_K\) by \eqref{26}. Then the perfectoid Tate curve \(\widehat{E}_K=\varprojlim_{[p]}E_K\) tilts, and \(\widehat{E}_{K^\flat}=\varprojlim_{[p]}E^\flat_K\) is again a Tate curve, now over the tilted field \(K^\flat\). Only after choosing an isometric embedding \(K^\flat\hookrightarrow \C_p\) and applying \(p\)-adic-to-complex comparison techniques does one obtain a complex elliptic curve with period lattice \(\Lambda=\langle 1,\tau\rangle\).
\end{theorem}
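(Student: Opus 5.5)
The plan is to reduce the statement to the standard description of the $p$-adic universal cover of a Tate curve, and then to treat the tilting step and the comparison step separately. The tilting step follows from Theorem~\ref{thm:tilt}. The comparison step is only an existence statement and involves non-canonical choices. First, fix a parameter $q\in K$ with $|q|<1$ that admits a compatible system of $p$-power roots $q^{1/p^n}\in K$. The natural choice is $q=t$, or $q=u\,t^{a}$ with $u$ a $1$-unit; this is where the choice $K=\Q_p^{\mathrm{cycl}}((t^{1/p^\infty}))$ is used. Write $\widetilde{\G}_m:=\varprojlim_{x\mapsto x^p}\G_{m,K}^{\an}$ for the perfectoid torus, which is the generic fibre of $\Spa(K\langle T^{\pm 1/p^\infty}\rangle,\calO_K\langle T^{\pm1/p^\infty}\rangle)$ exhausted by annuli, in the same way as the perfectoid disc \eqref{11}.

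The first key step is an explicit identification
\begin{equation*}
\widehat{E}_K=\varprojlim_{[p]}E_K\;\cong\;\widetilde{\G}_m/\tilde q^{\,\Z},\qquad \tilde q:=(q,q^{1/p},q^{1/p^2},\ldots)\in\widetilde{\G}_m(K).
\end{equation*}
To prove it, I would lift the multiplication-by-$p$ maps on $\G_m/q^{\Z}$ to the $p$-power maps on $\G_m$. The compatible choice of $q^{1/p^n}$ turns the inverse system of quotients into the quotient of the inverse system by the diagonal lattice $\tilde q^{\Z}$. This requires a check that the discontinuity of the $q^{\Z}$-action on each finite stage passes to the limit, using the fundamental domain $\{|q|<|x|\le 1\}$. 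Because $\widetilde{\G}_m$ is perfectoid and $\tilde q^{\Z}$ acts freely and properly discontinuously, $\widehat{E}_K$ is perfectoid. It is covered by the affinoid perfectoid annuli coming from the fundamental domain, which settles the assertion that it ``tilts''.

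The second step applies Theorem~\ref{thm:tilt} to this cover. The tilt of the perfectoid torus over $K$ is the perfectoid torus over $K^\flat$, and the tilting functor is compatible with fibre products and gluing. The quotient therefore tilts to $\widetilde{\G}_{m,K^\flat}/(\tilde q^{\flat})^{\Z}$, where $q^\flat:=(q,q^{1/p},\ldots)\in K^\flat=\varprojlim_{x\mapsto x^p}K$ satisfies $|q^\flat|=|q|<1$. In characteristic $p$, the $p$-power map on $\G_{m,K^\flat}$ is a bijection on perfectoid points. Consequently $\widetilde{\G}_{m,K^\flat}/(q^\flat)^{\Z}$ is the perfection $\varprojlim_{[p]}E^\flat_K$ of the Tate curve $E^\flat_K:=(\G_{m,K^\flat}/\langle q^\flat\rangle)^{\an}$. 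This proves the claim that the tilt is again a (perfectoid) Tate curve over $K^\flat$.

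The third step, which I expect to be the main obstacle, is the comparison with a complex elliptic curve. As literally stated, an isometric embedding $K^\flat\hookrightarrow\C_p$ cannot exist, because $K^\flat$ has characteristic $p$. I would interpret the step as choosing an embedding $K^\flat\hookrightarrow\C_p^\flat$ and then untilting. Concretely, set $q^\sharp\in\C_p$ to be the sharp of the image of $q^\flat$. Then $|q^\sharp|<1$, so $E_{q^\sharp}=\G_m/q^{\sharp\Z}$ is a Tate curve over $\C_p$ with $j(q^\sharp)=1/q^\sharp+744+\cdots$. By Tate's uniformization it is the analytification of an algebraic elliptic curve $\calE_{\C_p}$. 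Next, choose an abstract field isomorphism $\sigma:\C_p\cong\C$; this choice is non-canonical and not continuous. Transport $\calE_{\C_p}$ to $\calE^\sigma$ over $\C$, and apply GAGA and complex uniformization to write $(\calE^\sigma)^{\an}\cong\C/\langle 1,\tau\rangle$, with $\tau$ in the upper half-plane determined modulo $SL(2,\Z)$ by $j(\tau)=\sigma(j(q^\sharp))$. The difficulty is that $\tau$ depends on $\sigma$ and on the untilt. The theorem therefore only asserts the existence of such a lattice $\langle 1,\tau\rangle$, and I would state it that way. This matches the paper's position that $\tau$ is actually fixed only later by the supergravity dictionary.
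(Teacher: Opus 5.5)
Your first key step is false, and the corresponding step in characteristic $p$ is false in a matching way.

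In the $[p]$-tower every stage carries the same lattice $q^{\Z}$. The induced maps on lattices are $q^{m}\mapsto q^{pm}$, i.e.\ the system $(\Z,\times p)$, whose inverse limit is $0$ (with $\varprojlim^1\cong\Z_p/\Z$). The element $\tilde q=(q,q^{1/p},q^{1/p^2},\dots)$ is not a compatible family of lattice elements, because $q^{1/p}\notin q^{\Z}$. In fact the natural map $\widetilde{\G}_m\to\widehat{E}_K$, $(y_n)\mapsto(y_n\bmod q^{\Z})_n$, is injective. On points over an algebraically closed $C$ (where $x\mapsto x^p$ is surjective on $C^\times$) one gets $0\to\widetilde{\G}_m(C)\to\widehat{E}_K(C)\to\Z_p/\Z\to 0$.

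The space $\widetilde{\G}_m/\tilde q^{\Z}$ you wrote down is a different object: it is $\varprojlim_n \G_m/(q^{1/p^n})^{\Z}$ along the degree-$p$ isogenies $x\mapsto x^p$, whose kernel is $\mu_p$. Its fibre over $0\in E_K$ is $\Z_p(1)$, whereas the fibre of $\widehat{E}_K\to E_K$ is $T_pE_K$, of $\Z_p$-rank two.

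The same $\Z_p$ disappears in your second step. Since $x\mapsto x^p$ is bijective in characteristic $p$, $\widetilde{\G}_{m,K^\flat}/(q^\flat)^{\Z}$ is the perfection of $E_{q^\flat}$, i.e.\ the limit along Frobenius. But on a Tate curve $[p]=V\circ F$ with $V$ \'etale of degree $p$ (the \'etale quotient of $E_{q^\flat}[p]$ is $\Z/p$). So $\varprojlim_{[p]}E_{q^\flat}$ is a pro-\'etale $\Z_p$-cover of the perfection, not the perfection itself.

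The two misidentifications cancel, which is why your conclusion looks right. What your argument actually shows is that the tilt of the $\mu_p$-isogeny tower is the perfection of $E_{q^\flat}$, not that $\widehat{E}_K^\flat\cong\widehat{E}_{K^\flat}$. The paper's proof does not supply the missing step either: at this point it quotients $\widetilde{\G}_{m,K}$ by $q^{\Z[1/p^\infty]}$, which does not act properly discontinuously because $|q|^{\Z[1/p]}$ is dense.

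The fix is to carry the $\Z_p$ explicitly:
$\widehat{E}_K\cong(\widetilde{\G}_{m,K}\times\underline{\Z_p})/\Z$, where $1\in\Z$ acts by $(y,a)\mapsto(y\tilde q,a-1)$. The isomorphism is $(y,a)\mapsto\bigl(y_n\,(q^{1/p^n})^{[a]_n}\bmod q^{\Z}\bigr)_{n}$, with $[a]_n\in\{0,\dots,p^n-1\}$ the representative of $a\bmod p^n$.
\begin{itemize}
\item Compatibility with $[p]$ holds because $[a]_{n+1}-[a]_n\in p^n\Z$.
\item The kernel is exactly $\{(\tilde q^{m},-m):m\in\Z\}$.
\item Surjectivity holds because the $p$-adic digits of $a$ realise every class in $\varprojlim^1\cong\Z_p/\Z$.
\end{itemize}
This space is covered by affinoid perfectoids of the form (perfectoid annulus)$\times\underline{\Z_p}$. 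Tilting leaves the profinite factor untouched and sends $\tilde q$ to $\tilde q^{\flat}=(q^\flat,(q^\flat)^{1/p},\dots)$. The identical computation over $K^\flat$ then identifies the result with $\varprojlim_{[p]}E_{q^\flat}$.

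Your handling of the comparison step is fine as an existence statement. You correctly note that $K^\flat$ has characteristic $p$ and replace the impossible embedding by an untilt followed by a non-canonical isomorphism $\C_p\cong\C$. This is more careful than the paper, which embeds $K$ and then simply posits $q_\infty=e^{2\pi i\tau}$.

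You should also say which rank-one norm makes $K$ perfectoid. With the $t$-adic norm the residue field $\Q_p^{\mathrm{cycl}}$ has characteristic $0$, so Theorem~\ref{thm:tilt} does not apply.
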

\begin{proof}
Let $q\in K^\times$ satisfy $0<|q|<1$ and let $E_K$ be the Tate curve of \eqref{26}, i.e.
\begin{equation}\label{eq:Tate-quot}
E_K^{\an}\ \cong\ \G_{m,K}^{\an}/q^{\Z},\qquad \G_{m,K}^{\an}=\Spa(K\langle T^{\pm 1}\rangle,K^\circ\langle T^{\pm 1}\rangle).
\end{equation}
Write $\widetilde{\G}_{m,K}:=\varprojlim_{T\mapsto T^p}\G_{m,K}^{\an}$. On affinoids one has
\begin{equation}\label{eq:perf-torus}
\widetilde{\G}_{m,K}\ \cong\ \Spa\!\bigl(K\langle T^{\pm 1/p^\infty}\rangle,\,K^\circ\langle T^{\pm 1/p^\infty}\rangle\bigr),
\qquad K\langle T^{\pm 1/p^\infty}\rangle=\widehat{\bigcup_{n\ge 0}K\langle T^{\pm 1/p^n}\rangle}.
\end{equation}
Put $R:=K\langle T^{\pm 1/p^\infty}\rangle$ and $R^+:=K^\circ\langle T^{\pm 1/p^\infty}\rangle$, then $(R,R^+)$ is perfectoid since
\begin{equation}\label{eq:Frob-surj}
\varphi:\ R^+/p\longrightarrow R^+/p,\quad x\mapsto x^p\ \ \text{is surjective,}
\end{equation}
because $\varphi(T^{\pm 1/p^{n+1}})=T^{\pm 1/p^n}$ and $\varphi$ is surjective on coefficients in $K^\circ/p$ (perfectoidness of $K$).
The multiplication-by-$p$ map on $E_K$ is induced from $T\mapsto T^p$ on $\G_{m,K}$ because
\begin{equation}\label{eq:p-descend}
(T\cdot q^m)^p=T^p\cdot q^{pm}\quad\Rightarrow\quad T\mapsto T^p\ \text{descends to}\ [p]:\G_{m,K}^{\an}/q^{\Z}\to \G_{m,K}^{\an}/q^{\Z}.
\end{equation}
Define the $p$-divisible subgroup $q^{\Z[1/p^\infty]}\subset K^\times$ and its action on $\widetilde{\G}_{m,K}$ by
\begin{equation}\label{eq:q-action}
a \in \mathbb{Z}[1/p^{\infty}], \quad
(T,b) \longmapsto (q^{a} T, b)
\quad \text{on compact Hausdorff points
(equivalently on the $T$-coordinate).}
\end{equation}

Then the natural projection $\widetilde{\G}_{m,K}\to \G_{m,K}^{\an}/q^\Z\cong E_K$ is $q^{\Z[1/p^\infty]}$-equivariant and exhibits
\begin{equation}\label{eq:perfectoid-Tate-ident}
\widehat{E}_K\ :=\ \varprojlim_{[p]}E_K\ \cong\ \widetilde{\G}_{m,K}\big/q^{\Z[1/p^\infty]},
\end{equation}
since the inverse limit along $[p]$ on the quotient \eqref{eq:Tate-quot} is exactly the inverse limit along $T\mapsto T^p$ on the cover,
modulo the enlarged deck group generated by adjoining all compatible $p$-power roots of $q$ (i.e. $q^{\Z[1/p^\infty]}$), and
$\widetilde{\G}_{m,K}$ is perfectoid by \eqref{eq:perf-torus}-\eqref{eq:Frob-surj}, hence $\widehat{E}_K$ is perfectoid.

Let $K^\flat$ be the tilt of $K$ and let $q^\flat:=(q,q^{1/p},q^{1/p^2},\dots)\in K^\flat$.
Tilting on affinoids gives
\begin{equation}\label{eq:tilt-affinoid}
\bigl(K\langle T^{\pm 1/p^\infty}\rangle\bigr)^\flat\ \cong\ K^\flat\langle T^{\pm 1/p^\infty}\rangle,
\qquad
\widetilde{\G}_{m,K}^\flat\ \cong\ \widetilde{\G}_{m,K^\flat},
\end{equation}
and the subgroup $q^{\Z[1/p^\infty]}$ tilts to $(q^\flat)^{\Z[1/p^\infty]}$ because $q^a$ corresponds under tilt to $(q^\flat)^a$ for
$a\in\Z[1/p^\infty]$. Since tilting commutes with quotients by such pro-\'etale group actions in this setting (cf.\ the proof of
\cite[Thm.~2.9]{ScholzeWeinstein2017}), \eqref{eq:perfectoid-Tate-ident} yields
\begin{equation}\label{eq:tilt-quot}
\widehat{E}_K^\flat
\ \cong\ 
\bigl(\widetilde{\G}_{m,K}/q^{\Z[1/p^\infty]}\bigr)^\flat
\ \cong\
\widetilde{\G}_{m,K^\flat}/(q^\flat)^{\Z[1/p^\infty]}
\ \cong\
\varprojlim_{[p]} \bigl(\G_{m,K^\flat}^{\an}/(q^\flat)^{\Z}\bigr).
\end{equation}
Setting $E_{K^\flat}:=\G_{m,K^\flat}^{\an}/(q^\flat)^{\Z}$, which is the Tate curve over $K^\flat$ with parameter $q^\flat$, the last term
in \eqref{eq:tilt-quot} is $\widehat{E}_{K^\flat}=\varprojlim_{[p]}E_{K^\flat}$, proving that the tilt of the perfectoid Tate curve is again
a perfectoid Tate curve over $K^\flat$.

For the final statement, choose an isometric embedding
$\iota  : K \hookrightarrow \mathbb{C}_{p}$, giving $\iota(q) \in \mathbb{C}_{p}$
with $|\iota(q)| < 1$ and hence a $\mathbb{C}_{p}$-analytic Tate curve
\begin{equation}
E_{\mathbb{C}_{p}} \cong
\mathbb{G}_{m,\mathbb{C}_{p}}^{\mathrm{an}} / \iota(q)^{\mathbb{Z}}.
\end{equation}
Any additional $p$-adic-to-complex comparison choice that identifies a complex
parameter $q_{\infty} \in \mathbb{C}$ with $0 < |q_{\infty}| < 1$ and
$q_{\infty} = \exp(2\pi i \tau)$ for some $\tau \in \mathbb{H}$ produces the
complex Tate curve
\begin{equation}\label{eq:complex-Tate}
E_{\infty} \cong \mathbb{C}^{\times} / q_{\infty}^{\mathbb{Z}}
\cong \mathbb{C} / (\mathbb{Z} + \tau \mathbb{Z}),
\end{equation}
so the period lattice is $\Lambda = \langle 1, \tau \rangle$.
\end{proof}

The theorem is the precise mathematical statement that a tower-completed object has an elliptic-curve avatar after tilting and the required comparison step, so elliptic data can be extracted from the tower without collapsing a geometric cycle, but the physical \(\tau\) is not produced by tilting alone and will be fixed only after matching this elliptic datum to the M-theory radius and \(C^{(3)}\) holonomy dictionary derived later. The uniformization data should be stated in its standard form\:the Tate uniformisation gives coordinates
\begin{equation}\label{27}
x=\sum_{n\in\Z}\frac{q^n}{(1-q^n z)^2}-2\sum_{n\ge 1}\frac{n q^n}{1-q^n},\qquad
y=\sum_{n\in\Z}\frac{q^{2n}}{(1-q^n z)^3}+\sum_{n\ge 1}\frac{n^2 q^n}{1-q^n},
\end{equation}
and the corresponding lattice is \(\Lambda=\langle 1,\tau\rangle\), which becomes a genuine discrete lattice only after the comparison step to \(\C\), an honest lattice is a discrete free \(\Z\)-module of rank \(2\), and \(\Z_p\) satisfies neither criterion. The perfectoid structure on the compactification carrier itself may be described by promoting the tower limit to an affinoid perfectoid space\:one considers
\begin{equation}\label{31}
R:=K\llbracket X^{\pm 1/p^\infty}\rrbracket,
\end{equation}
and the associated adic space
\begin{equation}\label{32}
S_f^1=\mathrm{Spa}(R,R^+),
\end{equation}
which admits a tilt
\begin{equation}\label{33}
X\longmapsto X^\flat
\end{equation}
and therefore
\begin{equation}\label{34}
S_f^{1,\flat}:=S_f^{1,\flat}=\mathrm{Spa}(R^\flat,R^{\flat+}).
\end{equation}
Concretely, inside \(R^{\flat+}\) there exists an element
\begin{equation}\label{35}
u=t-1,
\end{equation}
with \(u^{p^n}=0\) for all \(n\), reflecting the tower-completed nature of the tilt. The tilt produces a characteristic-\(p\) avatar of the tower-completed circle whose structure is controlled by Frobenius, and the untilt/embedding/comparison step is the point at which one obtains a complex elliptic datum suitable for F-theory applications, crucially, this machinery replaces the singular shrinking-torus limit by a non-singular perfectoid-analytic construction, while the emergence of the physical axio-dilaton \(\tau\) is deferred to the compactification dictionary that matches the elliptic datum produced here to the M-theory radius and \(C^{(3)}\) holonomy data in later sections.

A perfectoid solenoid does not admit a global \(C^\infty\) atlas, so eleven-dimensional fields cannot be defined on \(S_f^1\) in the naive differential-geometric sense, and the only definition-level procedure compatible with local supergravity is the tower-with-cutoff construction that motivates the entire framework. One considers the sequence of finite smooth covers \(S^1_{(n)}\to S^1_{(n-1)}\) of degree \(p\), each of which is a smooth spin manifold with radius \(R_M/p^n\), performs the compactification of eleven-dimensional supergravity on \(\R^{1,9}\times S^1_{(n)}\) at finite \(n\), and restricts attention to modes lighter than a fixed ultraviolet cutoff \(\Lambda\), one then takes the limit \(n\to\infty\) holding \(\Lambda\) fixed. Because the mass gap remains above \(\Lambda\) throughout the tower, the local equations of motion, supersymmetry variations, and Killing spinor constraints are preserved at each finite stage and converge in the limit, while the global sectors-holonomies and tower-completed bookkeeping data-are refined and survive as genuine information of the inverse-limit carrier. This construction guarantees that the perfectoid refinement does not introduce new local dynamics or spoil maximal supersymmetry in the controlled regime, it changes only the global bookkeeping environment in which discrete sectors, duality organization, and charge lattices are encoded, which is precisely what is needed for a definition-level reformulation of the M/F correspondence without a singular shrinking limit.

The content to retain for the duality construction is that \(S_f^1\) is a single compactification carrier packaging the entire \(p^n\)-cover tower of the M-theory circle, so local supergravity and supersymmetry persist in the fixed-cutoff limit while global sectors are refined definition-level by the inverse-limit structure, that the correct interpretation of invariants on \(S_f^1\) requires distinguishing singular and \v{C}ech-type invariants, fundamental group data, and continuous character groups according to which physical observable is being matched, that K-theory is the natural string-theoretic receptacle for global charge sectors and torsion/completion data on profinite objects, so its appearance is structurally aligned with brane-charge principles rather than decorative, and that the tilt \(\to\) untilt \(\to\) analytification machinery supplies elliptic-curve data without any singular shrinking-fiber limit, while the axio-dilaton \(\tau\) will be fixed only later by matching that elliptic datum to the M-theory radius and \(C^{(3)}\) holonomy dictionary derived from supergravity reduction.

\section{M-theory on the perfectoid circle and constant \texorpdfstring{\(\tau\)}{tau} Type IIB
}
\label{sec:phys-setup}

This section fixes the physical setup on both sides of the correspondence in standard supergravity language and isolates the minimal dictionary that will be tested in the duality construction. On the M-theory side, the only nonstandard input is the compactification carrier\:the perfectoid circle \(S_f^1\) (denoted \(\tS\) in the body) is treated operationally as the inverse-limit object defined by a tower of finite \(p^{n}\)-covers, with fields defined as compatible families on the finite covers and with all statements restricted to modes below a fixed UV cutoff \(\Lambda\) before taking the \(n\to\infty\) limit, so that local eleven-dimensional supergravity and local supersymmetry remain intact while global sectors are refined. On the Type IIB side, we work in the constant-\(\tau\) background as the controlled base case, so that the physics reduces to the standard bosonic Type IIB supergravity field content and couplings in Einstein frame. The elliptic curve \(\calE\) is introduced as the geometric home of \(\tau\) in the usual F-theory sense, but in the present framework it is the output of the perfectoid/tilting machinery constructed earlier, and the referee-critical point is kept explicit from the outset\:tilting supplies the elliptic datum but does not encode Wilson lines or couplings, so \(\tau\) emerges only when that elliptic datum is matched to the M-theory radius/metric modulus and to \(C^{(3)}\) holonomy data in dimensional reduction, with the corresponding formulas derived below and checked in the subsequent duality construction.

We now set up M-theory compactified on the perfectoid circle \(S_f^1\equiv \tS\), emphasizing the field-theoretic structures that survive the tower limit and the global sectors that are refined by it. Operationally, by ``fields on \(\tS\)'' we mean compatible families of fields on the finite \(p^{n}\)-covers of the circle, with the spectrum truncated to modes below a fixed UV cutoff \(\Lambda\), and only after this restriction do we take the inverse-limit \(n\to\infty\), this is the precise sense in which local equations of motion and local supersymmetry remain those of ordinary eleven-dimensional supergravity while additional global/topological sectors can accumulate in the tower. The starting point is the eleven-dimensional spacetime manifold
\begin{equation}
  \tM := \R^{1,9} \times \tS,
\end{equation}
where \(\R^{1,9}\) is ten-dimensional Minkowski space with coordinates \((x^{\mu})_{\mu=0}^{9}\) and signature \((- + \cdots +)\), and \(\tS\) is the perfectoid circle constructed as an inverse limit of \(p\)-power covering tower,
\begin{equation}
S^{1} \xleftarrow{z \mapsto z^{p}} S^{1} \xleftarrow{z \mapsto z^{p}} \cdots,
\end{equation}
so that \(\tS=\varprojlim_n S^{1}\) carries a solenoidal profinite microstructure that refines global topology without altering local differential geometry. We parametrize the coordinate along the circle direction by \(y\), taken to be periodic with period \(2\pi R_{M}\),
\begin{equation}
y \sim y + 2\pi R_{M},
\end{equation}
with \(R_{M}\) the characteristic radius parameter, the tower refinement affects global sectors through additional profinite structure rather than local neighborhoods. The eleven-dimensional metric \(G_{MN}\) with indices \(M,N=0,\ldots,10\) decomposes relative to \(\R^{1,9}\times \tS\) as
\begin{equation}
G_{MN} \longrightarrow \left\{g_{\mu\nu}(x,y), \quad g_{\mu y}(x,y), \quad g_{yy}(x,y) \right\},
\end{equation}
and the most general metric respecting the \(U(1)\) isometry along \(y\) can be written in the standard Kaluza-Klein ansatz
\begin{equation}
\label{eq:Mtheory-KK-metric}
ds^{2}_{11} = e^{-\frac{2}{3}\phi(x)} g_{\mu\nu}(x) dx^{\mu} dx^{\nu} + e^{\frac{4}{3}\phi(x)} \left( dy + A_{\mu}(x) dx^{\mu} \right)^{2},
\end{equation}
where $g_{\mu\nu}(x)$ is the ten-dimensional \emph{string-frame} metric, $\phi(x)$ is the IIA dilaton,
and $A_\mu(x)$ is the Kaluza-Klein gauge field from $G_{\mu y}$. The corresponding ten-dimensional
Einstein-frame metric is obtained by the Weyl rescaling
\begin{equation}\label{eq:IIA-Einstein-rescale}
g^{(E)}_{\mu\nu}=e^{-\phi/2}\,g^{(s)}_{\mu\nu}\,,
\end{equation}
so that the reduced action takes the standard Einstein-frame form with canonically normalized
Einstein-Hilbert term and dilaton kinetic term. The three-form \(C^{(3)}\) decomposes as
\begin{equation}
C^{(3)} \;=\; C^{(3)}_{\mu\nu\rho}\,dx^\mu\!\wedge dx^\nu\!\wedge dx^\rho 
      \;+\; C^{(1)}_{\mu}\,dx^\mu\!\wedge\omega_{S_{\mathrm f1}},
\end{equation}
where \(\omega_{S_{\mathrm f1}}\) is the volume one-form of the covering circle before the \(n\to\infty\) limit, and the internal component \(C^{(1)}\) becomes the RR scalar after reduction,
\begin{equation}
C_{0} \;=\; \frac1{2\pi R_{M}}\int_{S^{1}_{(n)}}C^{(3)} \;\longrightarrow\;
\frac1{2\pi R_{M}}\int_{S_{\mathrm f1}^{\text{(dense)}}}C^{(3)}_{\text{dense}},
\end{equation}
where the dense embedding \(\Z\subset\Z_p\) is used in the inverse-limit prescription. The dual six-form contributes the axionic partner, giving the complex modulus \(\tau=C_{0}+ie^{-\phi}\), but it is crucial for scope hygiene that this identification is a compactification dictionary statement\:\(\tau\) is not produced by tilting alone, and it becomes physical only after matching the elliptic datum extracted from the tower-completed geometry to the radius and holonomy data as derived later. Because \(\tS\) carries profinite/torsion structure, Wilson lines of \(C^{(3)}\) along nontrivial cycles in \(H_{1}(\tS,\Z)\cong \Z \oplus \Z_{p}\) acquire tower-refined components absent in classical circle compactifications, and these translate into discrete theta data and RR axion sectors in the reduced theory\; explicitly, the torsion-sector contribution to \(C_{0}\) may be written as
\begin{equation}
C_0 = \frac{1}{2\pi R_M} \oint_{\mathrm{tors}} C^{(3)}.
\end{equation}
The local dynamics on \(\tM\) is governed by the Cremmer-Julia-Scherk action \cite{CremmerJuliaScherk1978},
\begin{equation}
S_{11} = \frac{1}{2\kappa_{11}^2} \int_{\tM} \left( R * 1 - \frac{1}{2} G^{(4)} \wedge * G^{(4)} \right) - \frac{1}{6} \int_{\tM} C^{(3)} \wedge G^{(4)} \wedge G^{(4)},
\end{equation}
and the perfectoid enhancement modifies global topology while leaving local equations of motion unchanged in the fixed-cutoff tower limit, because the profinite refinement is invisible to local differential geometry. Supersymmetry is preserved at the level of local Killing spinor equations for the same reason, but the global spin structure must be treated tower-wise\:since \(S_{\mathrm f1}\) lacks a global spin structure, supersymmetry is defined at each finite cover \(S^{1}_{(n)}\) and survives the inverse-limit only for the sub-tower of spin-compatible covers \(n\equiv 0\pmod 2\), yielding an \(\mathcal{N}=2\) spectrum in ten dimensions by projection through the spin-holonomy constraint. Finally, the tower-completed nature of \(\tS\) affects harmonic analysis globally\:the character group (Pontryagin dual) is
\begin{equation}
\widehat{\tS} \cong \Z \oplus \Z_p,
\end{equation}
so fields admit expansions in characters labeled by \((a,b)\) with
\begin{equation}
a \in \Z, \quad b \in \Z_p,
\end{equation}
interpreted as integer momenta along the classical direction together with profinite tower bookkeeping, the physical \((p,q)\) charge lattice extracted from this bookkeeping is constrained by standard quantization conditions and is treated carefully in the BPS analysis later. In sum, M-theory on \(\R^{1,9}\times \tS\) defines a ten-dimensional effective theory whose local dynamics is standard but whose global sectors are refined by tower completion\; the role of the subsequent duality construction is to show that, in the constant-\(\tau\) regime, this refinement provides an eleven-dimensional carrier for the modular data and the protected spectrum that Type IIB packages through F-theory.

We next fix the Type IIB side in the controlled constant-\(\tau\) sector, which isolates the definition-level origin of \(\tau\) without the additional complications of varying fibrations or defect monodromies. Type IIB supergravity is the low-energy effective theory of Type IIB string theory, with maximal supersymmetry and characteristic \(SL(2,\Z)\) duality symmetry \cite{Schwarz1983, Schwarz1995}. The bosonic sector consists of the ten-dimensional Einstein-frame metric \(g_{\mu\nu}\), the complex axio-dilaton \(\tau\), and the NSNS/RR form potentials. The axio-dilaton is
\begin{equation}
\tau = C_0 + i e^{-\phi},
\end{equation}
with \(C_0\) the RR axion and \(\phi\) the dilaton controlling \(g_s=e^{\phi}\), and in the background of interest \(\tau\) is constant. It is convenient to introduce the \(SL(2,\R)\)-covariant three-form
\begin{equation}
G^{(3)} := F^{(3)} - \tau H^{(3)},
\end{equation}
with \(H^{(3)}=dB^{(2)}\) and \(F^{(3)}=dC^{(2)}\), and the bosonic Einstein-frame action is
\begin{equation}
\label{eq:IIB-action}
\begin{aligned}
S_{\mathrm{IIB}} = \frac{1}{2\kappa_{10}^{2}} \int d^{10}x \sqrt{-g} \Bigg[ & R - \frac{\partial_{\mu} \tau \partial^{\mu} \bar{\tau}}{2 (\mathrm{Im} \, \tau)^2} - \frac{1}{2 \cdot 3!} \frac{G^{(3)}_{\mu\nu\rho} \overline{G^{(3)}}^{\mu\nu\rho}}{\mathrm{Im}\, \tau} \\
& - \frac{1}{2 \cdot 5!} F^{(5)}_{\mu_1 \cdots \mu_5} F^{(5) \mu_1 \cdots \mu_5} + \ldots \Bigg] - \frac{1}{8 \kappa_{10}^2} \int C_0 \, H^{(3)} \wedge F^{(3)},
\end{aligned}
\end{equation}
where \(F^{(5)}\) is self-dual,
\begin{equation}
F^{(5)} = * F^{(5)}.
\end{equation}
In the constant-\(\tau\), \(F^{(5)}=0\) background, the derivative terms in \eqref{eq:IIB-action} trivialize and the remaining dynamics is governed by the Einstein term and the three-form kinetic terms, with the coupling to \(\mathrm{Im}\,\tau\) encoding the strong-weak duality structure. The \(G^{(3)}\) kinetic term expands as
\begin{equation}
-\frac{1}{2 \cdot 3!} \frac{G^{(3)}_{\mu\nu\rho} \overline{G^{(3)}}^{\mu\nu\rho}}{\mathrm{Im}\, \tau} = -\frac{1}{12} \frac{1}{\mathrm{Im}\, \tau} \left( F^{(3)}_{\mu\nu\rho} - \tau H^{(3)}_{\mu\nu\rho} \right) \left( F^{(3) \mu\nu\rho} - \bar{\tau} H^{(3) \mu\nu\rho} \right),
\end{equation}
and the topological coupling
\begin{equation}
S_{\mathrm{CS}} = - \frac{1}{8 \kappa_{10}^2} \int C_0 \, H^{(3)} \wedge F^{(3)}
\end{equation}
plays the familiar role in anomaly cancellation and duality consistency \cite{Schwarz1995}. The self-duality of \(F^{(5)}\) is imposed at the level of equations of motion \cite{Schwarz1995} and is irrelevant in the present truncation. The Einstein-frame metric is related to the string-frame metric by
\begin{equation}
g_{\mu\nu} = e^{-\phi/2} g^{(s)}_{\mu\nu},
\end{equation}
and the equations of motion reduce to
\begin{equation}
R_{\mu\nu} = \frac{1}{2 (\mathrm{Im} \, \tau)^2} \partial_{(\mu} \tau \partial_{\nu)} \bar{\tau} + \frac{1}{4} \frac{1}{\mathrm{Im} \, \tau} \left( G^{(3)}_{\mu\rho\sigma} \overline{G^{(3)}_{\nu}}{}^{\rho\sigma} + \overline{G^{(3)}}_{\mu\rho\sigma} G^{(3)}_{\nu}{}^{\rho\sigma} \right),
\end{equation}
which trivialize for constant \(\tau\) and vanishing \(G^{(3)}\). The \(SL(2,\Z)\) action on \(\tau\) is the familiar fractional linear transformation
\begin{equation}
\tau \mapsto \frac{a \tau + b}{c \tau + d}, \quad \begin{pmatrix} a & b \\ c & d \end{pmatrix} \in SL(2,\Z),
\end{equation}
together with the induced action on the two-form doublet \((B^{(2)},C^{(2)})\), preserving the action \cite{Schwarz1995, Vafa1996}\; the duality construction later will identify how this symmetry is organized by the tower-completed M-theory compactification carrier in the constant-\(\tau\) regime analyzed.

The elliptic curve \(\calE\) is the standard geometric home of the Type IIB coupling data in F-theory. The axio-dilaton
\begin{equation}
\tau = C_0 + i e^{-\phi}
\end{equation}
takes values in the upper half-plane
\begin{equation}
\mathbb{H} := \{ \tau \in \mathbb{C} \mid \mathrm{Im}(\tau) > 0 \},
\end{equation}
and the physical theory is invariant under
\begin{equation}
SL(2,\mathbb{Z}) = \left\{ \begin{pmatrix} a & b \\ c & d \end{pmatrix} \mid a,b,c,d \in \mathbb{Z}, ad - bc = 1 \right\},
\end{equation}
acting by
\begin{equation}
\tau \mapsto \frac{a \tau + b}{c \tau + d}.
\end{equation}
Geometrically, \(\tau\) determines an elliptic curve \(\calE\) as the quotient
\begin{equation}
\calE = \mathbb{C} / \Lambda, \quad \Lambda := \mathbb{Z} + \tau \mathbb{Z}.
\end{equation}
\begin{definition}[Elliptic Curve over \(\mathbb{C}\)]
An elliptic curve \(\calE\) is a one-dimensional complex torus constructed as the quotient of the complex plane by a lattice \(\Lambda \subset \mathbb{C}\):
\begin{equation}
\calE := \mathbb{C} / \Lambda,
\end{equation}
where \(\Lambda\) is a discrete subgroup generated by two \(\mathbb{C}\)-linearly independent vectors \(1\) and \(\tau\) with \(\mathrm{Im}(\tau) > 0\).
\end{definition}
The moduli space of elliptic curves is therefore
\begin{equation}
\mathcal{M}_{1,1} \cong SL(2,\mathbb{Z}) \backslash \mathbb{H},
\end{equation}
and \(\calE\) admits the Weierstrass uniformization via the \(\wp\)-function
\begin{equation}
\wp(z;\Lambda) := \frac{1}{z^2} + \sum_{\omega \in \Lambda \setminus \{0\}} \left( \frac{1}{(z-\omega)^2} - \frac{1}{\omega^2} \right),
\end{equation}
satisfying
\begin{equation}
(\wp'(z))^{2} = 4 \wp(z)^3 - g_2(\Lambda) \wp(z) - g_3(\Lambda),
\end{equation}
with invariants
\begin{equation}
g_2(\Lambda) = 60 \sum_{\omega \in \Lambda \setminus \{0\}} \frac{1}{\omega^{4}}, \quad
g_3(\Lambda) = 140 \sum_{\omega \in \Lambda \setminus \{0\}} \frac{1}{\omega^{6}},
\end{equation}
yielding the algebraic model
\begin{equation}
y^2 z = 4 x^3 - g_2 x z^{2} - g_3 z^{3}.
\end{equation}
In the standard F-theory framework \cite{Vafa1996}, one geometrizes the coupling by fibering \(\calE\) over spacetime, interpreting \(\tau\) as the fiber complex structure and singular fibers as \((p,q)\) 7-branes\; in the present work, the key structural difference is not the elliptic curve itself but its provenance\:\(\calE\) is produced as the geometric datum extracted from the tower-completed compactification carrier via the perfectoid tilt machinery constructed earlier, so that the elliptic fiber is no longer introduced as auxiliary input to compensate for a singular limit, but arises as the non-singular output of a tower-complete geometry whose physical coupling identification is fixed only after matching to M-theory radius and holonomy data. Concretely, in this perspective the two independent cycles extracted from the perfectoid/tilted geometry determine the lattice
\begin{equation}
\Lambda = \mathbb{Z} + \tau \mathbb{Z} \subset \mathbb{C},
\end{equation}
so that \(\calE=\C/\Lambda\) is the geometric bridge between the M-theory tower data and the Type IIB axio-dilaton, while the numerical value of \(\tau\) is fixed only after the compactification dictionary is imposed.

The objective is to establish an equivalence, in the controlled constant-\(\tau\) sector and at the level tested in this paper, between M-theory compactified on the tower-completed circle and Type IIB on \(\R^{1,9}\) with fixed axio-dilaton,
\begin{equation}
\label{eq:physical-duality}
\bigl[\text{M theory on } \R^{1,9} \times \tS \bigr] 
\cong 
\bigl[\text{Type IIB on } \R^{1,9} \text{ with axio-dilaton } \tau \bigr].
\end{equation}
This matching is nontrivial because it requires agreement of geometric data, field content, protected spectra, couplings, and duality organization\; in the present paper, the checks are carried out at the supergravity/effective-action level and in protected BPS sectors, and any extension beyond constant \(\tau\) is treated as programmatic unless explicitly derived later. The perfectoid circle \(\tS\), defined as the inverse limit
\begin{equation}
\tS = \varprojlim_{n} \left( S^{1} \xrightarrow{z \mapsto z^{p}} S^{1} \right),
\end{equation}
has enriched global structure relative to an ordinary circle, with
\begin{equation}
H_{1}(\tS, \Z) \cong \Z \oplus \Z_{p},
\end{equation}
and the three-form \(C^{(3)}\) admits holonomy sectors along these global cycles, contributing discrete moduli collectively denoted \(\Theta\). The \(\tau\) mechanism that will be used throughout is a matching statement that keeps the referee-critical subtlety explicit\:the tilt machinery supplies the elliptic datum that is the geometric home of \(\tau\), but tilting itself does not encode Wilson lines, so \(\tau\) becomes a physical modulus only after one matches the elliptic datum to the compactification data obtained from supergravity reduction. Concretely, the matching identifies \(\mathrm{Im}\,\tau\) with the circle modulus through Kaluza-Klein reduction and \(\mathrm{Re}\,\tau=C_{0}\) with the appropriate \(C^{(3)}\) holonomy sector, yielding the explicit identification
\begin{equation}
\label{eq:tau-identification}
\tau = C_{0} + i e^{-\phi} = \Theta + i \left( \frac{R_{M}}{\Mpl} \right)^{\!3/2},
\end{equation}
with
\begin{equation}
C_{0} = \frac{1}{2\pi R_{M}} \oint_{\text{tors}} C^{(3)}.
\end{equation}
The geometric input is the radius \(R_M\), with
\begin{equation}
\mathrm{Vol}(\tS) = 2 \pi R_{M},
\end{equation}
and the relation between eleven- and ten-dimensional Planck scales
\begin{equation}
\Mpl^{9} = \frac{\ell_{10}^{8}}{2\pi R_{M}},
\end{equation}
while the dilaton normalization follows from reduction of \eqref{eq:Mtheory-KK-metric}. Compactification on a two-parameter torus \(T^{2}_{(n)}\) with radii \((R_M,\widetilde R_M)\) (both taken to zero at the perfectoid limit) gives
\begin{equation}
e^{-\phi}\;=\;\frac{R_{M}\widetilde R_{M}}{\ell^{2}_{M}},
\qquad
\mathrm{Im}\,\tau \;=\;\frac{\widetilde R_{M}}{R_{M}}
\;\;\Longrightarrow\;\;
\tau \;=\; C_{0}+i\,\frac{\widetilde R_{M}}{R_{M}},
\end{equation}
so the ratio of radii reproduces the modular parameter while the product controls the ten-dimensional scale, and in the perfectoid limit \((R_M,\widetilde R_M)\to 0\) at fixed \(\tau\) the torus degenerates to the profinite solenoid without encountering a singularity in the tower-completed description. Protected BPS states then provide the sharp probe\:M2 branes can wrap tower-organized cycles labeled by \((a,b)\) with
\begin{equation}
a \in \Z, \quad b \in \Z_{p},
\end{equation}
and their masses are given by the calibrated volume times \(T_{\mathrm{M2}}\),
\begin{equation}
M_{(a,b)} = T_{\mathrm{M2}} \cdot \mathrm{Vol}(\Sigma_{(a,b)}),
\end{equation}
with the explicit mass formula
\begin{equation}
\label{eq:BPS-mass}
M_{(a,b)} = \frac{|a| \, 2\pi R_{M}}{(2\pi)^{2} \Mpl^{3}} \sqrt{1 + \left( \frac{b}{a} e^{-\phi} \right)^{2} }.
\end{equation}
Using \eqref{eq:tau-identification}, this matches the \((p,q)\) string tension
\begin{equation}
T_{(p,q)} = \frac{|p + q \tau|}{2 \pi \alpha'},
\end{equation}
with the identification
\begin{equation}
(p,q) = (a,b),
\end{equation}
subject to the standard physical quantization constraints that select an integral physical charge lattice inside the tower bookkeeping. Completing the constant-\(\tau\) equivalence at the level analyzed requires matching effective actions\:starting from
\begin{equation}
S_{11} = \frac{1}{2 \kappa_{11}^{2}} \int_{\tM} \left( R * 1 - \tfrac{1}{2} G^{(4)} \wedge * G^{(4)} \right) - \frac{1}{6} \int_{\tM} C^{(3)} \wedge G^{(4)} \wedge G^{(4)},
\end{equation}
one reduces on \(\tS\) using \eqref{eq:Mtheory-KK-metric} and the standard gravitational-coupling relation
\begin{equation}\label{eq:kappa10-kappa11}
\kappa_{10}^2=\frac{\kappa_{11}^2}{2\pi R_M}\,,
\end{equation}
where the integral over $S_f^1$ is understood in the fixed-cutoff tower sense (stagewise integration over $S^1(n)$
followed by the stabilized $n\to\infty$ limit at fixed cutoff), to obtain the Type IIB effective action in the
constant-$\tau$ sector, schematically
\begin{equation}
S_{\mathrm{IIB}} = \frac{1}{2 \kappa_{10}^{2}} \int \left( R * 1 - \frac{1}{2} d \phi \wedge * d \phi - \frac{1}{2} e^{2 \phi} d C_{0} \wedge * d C_{0} - \cdots \right) + \text{Chern-Simons terms},
\end{equation}
and the subsequent construction section will show in detail how the kinetic terms and topological couplings match at the level computed. Throughout, the constant-\(\tau\) sector is the proved base case, while varying \(\tau\) through fibrations and full defect/monodromy structure are treated as definition-level extensions of the framework unless explicitly derived later.

The minimal dictionary to retain from this setup is therefore fixed by three pieces of data and the level of control at which they are compared\:the M-theory modulus \(R_M\) determines \(\mathrm{Im}\,\tau\) through the supergravity reduction of the metric, the holonomy sector of \(C^{(3)}\) along the tower-completed circle determines \(\mathrm{Re}\,\tau=C_0\) with the periodicity and quantization dictated by the compactification, and wrapped M2 sectors in the tower-organized global structure reproduce the protected \((p,q)\) tension dependence in the constant-\(\tau\) regime analyzed, while reduction of the eleven-dimensional action reproduces the bosonic Type IIB effective action and its couplings at the level computed. The scope of what follows is conservative\:the paper establishes these identifications and matching checks in the constant-\(\tau\) sector at the supergravity/effective-action level and in protected BPS sectors, and it organizes modular/duality data to the profinite/congruence level that is explicitly derived from the tower structure, while any extension to genuinely varying \(\tau\), intrinsic 7-brane monodromy and defect physics, adelic completion removing any dependence on the bookkeeping choice of \(p\), and fermionic/anomaly refinements beyond the bosonic sector are treated as programmatic directions rather than as results claimed here.

\section{The M/F-theory correspondence}\label{sec:construct}

Traditional F-theory is implemented by the M-theory/Type IIB correspondence on an elliptic curve\footnote{When the dependence on the axio-dilaton is implicit, we write $E \equiv E(\tau)$ for the associated elliptic curve.
} together with a singular shrinking-fiber limit that removes a genuine geometric cycle while retaining its modular parameter as physical input, so the definition-level obstruction is the absence of an honest compactification carrier in eleven dimensions whose intrinsic structure already contains the modular data. Here we replace that singular step by a single tower-completed compactification object, the perfectoid circle \(S_f^1\), and we implement the correspondence in four stages that keep local supergravity conservative. Stage I constructs \(S_f^1\) as the inverse limit of \(p^n\)-covers so that the tower is packaged into one geometric object. Stage II applies the perfectoid tilt and the required comparison step to extract an elliptic-curve datum \(\calE\) that is the geometric home of \(\tau\). Stage III derives the constant-\(\tau\) dictionary by matching \(\mathrm{Im}\,\tau\) to the M-theory radius/metric modulus and \(\mathrm{Re}\,\tau=C_0\) to \(C^{(3)}\) holonomy data in dimensional reduction, emphasizing that tilting itself does not encode Wilson lines. Stage IV checks the duality by matching the protected BPS \((p,q)\) spectrum and the low-energy effective actions at the level explicitly computed. The scope is deliberately conservative\:the constant-\(\tau\) sector is established at the supergravity/BPS/effective-action level carried out below, while varying \(\tau\), full defect/monodromy structure, adelic completion, and fermion/anomaly refinements are programmatic unless derived beyond this base case.

We begin from standard M-theory compactification on a circle of radius \(R_M\),
\begin{equation}\label{83}
M=\R^{1,9}\times S^{1}_{R_M},
\end{equation}
and replace the classical circle by the tower-completed object obtained from the inverse system of \(p\)-power maps,
\begin{equation}\label{84}
S^{1}\xleftarrow{z\mapsto z^{p}}S^{1}\xleftarrow{z\mapsto z^{p}}S^{1}\xleftarrow{z\mapsto z^{p}}\cdots,
\end{equation}
so that the perfectoid circle is defined by
\begin{equation}\label{85}
S_f^{1}:=\varprojlim_{n\in\N}S^{1},\qquad \pi_n(z)=z^{p}.
\end{equation}
The resulting compact abelian group is solenoidal and admits a standard topological model
\begin{equation}\label{86}
S_f^{1}\cong S^{1}\times \Z_p,
\end{equation}
with
\begin{equation}\label{87}
\Z_p:=\varprojlim_{n}\Z/p^{n}\Z,
\end{equation}
which is the tower bookkeeping index rather than a physical coupling. The algebraic avatar of this tower completion is captured by the \(p\)-adically completed group algebra,
\begin{equation}\label{88}
\Q_p\llbracket X^{\pm 1/p^\infty}\rrbracket:=\varprojlim_{n}\Q_p[X^{\pm 1/p^{n}}],
\end{equation}
and the perfectoid/condensed formalism provides the definition-level meaning of fields and cohomological constraints on the inverse-limit object \cite{Scholze2012, ClausenScholze}. The global topology is refined relative to the ordinary circle in a way that can be expressed at the level of first homology,
\begin{equation}\label{89}
H_1(S_f^{1},\Z)\cong \Z\oplus \Z_p,
\end{equation}
so the enhancement may be summarized by
\begin{equation}\label{90}
S^{1}\longrightarrow S_f^{1}=\varprojlim_{n}\left(S^{1}\xrightarrow{z\mapsto z^{p}}S^{1}\right),
\end{equation}
with the interpretation that local geometry is circle-like stage-by-stage while the global sector bookkeeping is tower-complete. The prime \(p\) functions here as an index for the refinement tower, not as a physical parameter, and any physical output in the constant-\(\tau\) regime must be stable under refinement below a fixed cutoff, with an adelic assembly of all primes serving as the conceptual completion rather than a distinct theory. This stage replaces the ill-defined shrinking-fiber step by a genuine compactification carrier that already contains the tower of finite covers as geometry. what is produced is the tower-completed circle \(S_f^{1}\) as a perfectoid object, and what is fixed at this stage is the geometric environment in which modular data can later be extracted without taking any singular limit, while no claim beyond the existence and basic structure of the carrier is made here.

We now extract elliptic data from the tower-completed circle using the tilting formalism, emphasizing the mechanism and its scope. The tower presentation may be written as
\begin{equation}\label{91}
S^{1}\xleftarrow{z\mapsto z^{p}}S^{1}\xleftarrow{z\mapsto z^{p}}\cdots,
\end{equation}
so that
\begin{equation}\label{92}
S_f^{1}:=\varprojlim_{n}S^{1}_{(p^{n})},
\end{equation}
and the tilt \(S_f^{1,\flat}\) is defined by tilting the associated perfectoid affinoid algebra, with integral elements satisfying
\begin{equation}\label{93}
R^{\flat+}=\varprojlim_{x\mapsto x^{p}}R^{+}/p.
\end{equation}
The homological bookkeeping relevant to the inverse-limit structure appears as
\begin{equation}\label{94}
H_1(S_f^{1},\Z)\cong \Z\oplus \Z_p,
\end{equation}
and upon tilting one isolates two independent cycles that define a rank-two lattice
\begin{equation}\label{95}
\Lambda=\Z\omega_{1}+\Z\omega_{2},
\end{equation}
with modular parameter
\begin{equation}\label{96}
\tau=\frac{\omega_{2}}{\omega_{1}} \in \mathbb{H},
\qquad
\mathcal{E} := \mathbb{C}/\Lambda = \mathbb{C}/(\mathbb{Z} + \tau \mathbb{Z}).
\end{equation}

The tilt is modeled on \(\mathrm{Spa}(R^{\flat},R^{\flat+})\), and in the tilted integral ring there is an element \(u\) with
\begin{equation}\label{97}
u^{p^{n}}=0\qquad \text{for all }n\ge 1,
\end{equation}
encoding the tower-completed infinitesimal structure around the identity in characteristic \(p\). The resulting map to the elliptic datum is expressed by a surjection
\begin{equation}\label{98}
S_f^{1,\flat}\twoheadrightarrow \calE,
\end{equation}
to be understood as the statement that the tower-completed perfectoid avatar admits an elliptic-curve quotient after the required untilt/comparison/analytification step \cite{Scholze2012}. The referee-critical point is that the tilt functor itself does not encode Wilson lines, Kaluza-Klein reduction, or physical couplings, so \(\tau\) is not produced by tilting alone\:tilting supplies the elliptic-curve datum as the geometric home in which \(\tau\) lives, and \(\tau\) is fixed only after matching \(\mathrm{Im}\,\tau\) to the M-theory radius/metric modulus and \(\mathrm{Re}\,\tau=C_0\) to the \(C^{(3)}\) holonomy sector in the compactification dictionary derived in Stage III. The automorphism group of the profinite direction is \(\Z_p^{\times}\), which at finite level realizes principal congruence structure \(\Gamma(p^{n})\subset SL(2,\Z)\), and a full \(SL(2,\Z)\) completion beyond the explicitly realized congruence-level organization is a question of quantum completion not claimed here. This stage replaces the singular step of keeping elliptic data while collapsing geometric volume by a non-singular machinery that produces elliptic data from the tower-completed carrier, what is produced is the elliptic-curve datum \(\calE\) as the correct geometric home for \(\tau\) together with a precise statement of what is canonical in tilting and what requires the comparison step, while the physical identification of \(\tau\) is deferred to the supergravity dictionary that follows.

Having fixed the tower-completed carrier \(S_f^1\) and extracted the elliptic datum \(\calE\), we now derive the constant-\(\tau\) dictionary by reducing eleven-dimensional supergravity on \(S_f^1\) and matching to Type IIB fields. The Kaluza-Klein metric ansatz is
\begin{equation}\label{99}
ds^{2}_{11}=e^{-2\phi/3}ds^{2}_{10}+e^{4\phi/3}(dy+A_{\mu}dx^{\mu})^{2},
\end{equation}
and the three-form decomposes as
\begin{equation}\label{100}
C^{(3)}=C^{(3)}_{\mu\nu\rho}\,dx^{\mu}\wedge dx^{\nu}\wedge dx^{\rho}+C^{(2)}_{\mu\nu}\,dx^{\mu}\wedge dx^{\nu}\wedge dy.
\end{equation}
The tower-completed structure introduces torsion/global sectors in \(H_1(S_f^1,\Z)\cong \Z\oplus \Z_p\), and the Type IIB axion is identified as the appropriately normalized holonomy of \(C^{(3)}\) along the torsion sector,
\begin{equation}\label{101}
C_{0}:=\frac1{2\pi R_M}\oint_{\mathrm{tors}} C^{(3)},
\end{equation}
while the dilaton is fixed by the circle modulus through the reduction of the Einstein-Hilbert term, yielding
\begin{equation}\label{102}
e^{-\phi}=\left(\frac{R_{M}}{\ell_{M}}\right)^{3/2}.
\end{equation}
This follows by starting from the eleven-dimensional Einstein-Hilbert action
\begin{equation}\label{103}
S_{11}^{\mathrm{EH}}=\frac1{2\kappa_{11}^{2}}\int_{\tM}R^{(11)}*1,
\end{equation}
reducing on \(S_f^1\) with \eqref{99} to obtain the ten-dimensional Einstein term
\begin{equation}\label{104}
S_{10}^{\mathrm{EH}}=\frac1{2\kappa_{10}^{2}}\int_{\R^{1,9}}R^{(10)}*1,
\end{equation}
and using the relation
\begin{equation}\label{105}
\kappa_{10}^{2}=\frac{\kappa_{11}^{2}}{2\pi R_M},
\end{equation}
together with standard normalization of the dilaton kinetic term in Einstein frame \cite{DuffLu1995, HullTownsend1995}. The Chern-Simons term in eleven dimensions,
\begin{equation}\label{106}
S_{\mathrm{CS}}^{(11)}=-\frac16\int_{\tM}C^{(3)}\wedge G^{(4)}\wedge G^{(4)},
\end{equation}
reduces to the axion coupling structure of Type IIB and fixes the identification of \(C_0\) with the relevant holonomy sector, while the tower-completed global data ensures that periodicity and quantization are treated definition-level on the carrier rather than imposed after a singular limit. The constant-\(\tau\) axio-dilaton is therefore identified as
\begin{equation}\label{107}
\tau=C_{0}+ie^{-\phi},
\end{equation}
to be compared with the elliptic modulus of \(\calE\) extracted in Stage II, the essential mechanism is that tilting supplies the elliptic datum but does not encode Wilson lines, and \eqref{101}–\eqref{107} are the compactification dictionary statements that fix \(\tau\) physically. This stage replaces the traditional step of declaring \(\tau\) as an external elliptic modulus in a degenerate limit by a dictionary in which \(\tau\) is computed from eleven-dimensional radius and holonomy data and then matched to an elliptic datum produced non-singularly, what is produced is the explicit field and parameter map between M-theory on \(S_f^1\) and Type IIB with constant \(\tau\), and what is proven here is the constant-\(\tau\) dictionary at the level of the low-energy reduction used in the subsequent spectrum and action checks.

We now perform the protected spectrum and effective-action checks that constitute the nontrivial consistency conditions for the claimed constant-\(\tau\) duality. The global sector structure is encoded in
\begin{equation}\label{108}
H_{1}(S_f^{1},\Z)\cong \Z\oplus \Z_{p},
\end{equation}
so a formal label for tower-organized one-cycles is \((a,b)\) with \(a\in \Z\) and \(b\in \Z_{p}\), but physically allowed charges are selected by standard quantization conditions, and the construction is interpreted as providing canonical tower bookkeeping rather than unconstrained non-integral physical charges. At finite level, one may label cycles by \((a,b_{n})\in \Z\oplus \Z/p^{n}\Z\) and use the embedding
\begin{equation}\label{109}
\iota_{n}:(a,b_{n})\longmapsto (p^{n}a+b_{n})\in \Z,
\end{equation}
which is injective and compatible with the bonding maps, so that taking the direct limit yields a dense integral lattice, elements that would violate Dirac quantization with D-instantons are removed by the Freed-Witten anomaly constraint as stated in the original analysis. The M2-brane action is
\begin{equation}\label{110}
S_{\mathrm{M2}}=T_{\mathrm{M2}}\int d^{3}\sigma\sqrt{-\det\gamma}-T_{\mathrm{M2}}\int C^{(3)},
\end{equation}
with induced metric \(\gamma\) and \(T_{\mathrm{M2}}=(2\pi)^{-2}\ell_{P}^{-3}\), and for a static embedding wrapping the tower-organized cycle \((a,b)\) one finds
\begin{equation}\label{111}
-\det\gamma=(2\pi a R_M T_{\mathrm{M2}})^{2}\left(1+\left(\frac{b}{a}\right)^{2}e^{-2\phi}\right),
\end{equation}
so the corresponding BPS mass is
\begin{equation}\label{112}
M(a,b)=T_{\mathrm{M2}}\cdot \mathrm{Vol}(\Sigma_{(a,b)})=\frac{|a|\,2\pi R_M}{(2\pi)^{2}\ell_{P}^{3}}\sqrt{1+\left(\frac{b}{a}e^{-\phi}\right)^{2}}.
\end{equation}
The Type IIB \((p,q)\)-string tension is
\begin{equation}\label{113}
T(p,q)=\frac{|p+q\tau|}{2\pi\alpha'},
\end{equation}
and the M/IIB parameter relation is fixed by
\begin{equation}\label{114}
\alpha'=\frac{\ell_{P}^{3}}{R_M},
\end{equation}
so \eqref{112} becomes
\begin{equation}\label{115}
M(a,b)=\frac{|a|}{2\pi\alpha'}\sqrt{1+\left(\frac{b}{a}e^{-\phi}\right)^{2}}.
\end{equation}
Using \(\tau=C_{0}+ie^{-\phi}\) from Stage III and the identification \((p,q)=(a,b)\) on the physically allowed lattice, the mass formula takes the complex form
\begin{equation}\label{116}
M(a,b)=\frac{|a+b\tau|}{2\pi\alpha'},
\end{equation}
matching the protected \((p,q)\) dependence in the constant-\(\tau\) sector. The effective-action check begins from the eleven-dimensional action
\begin{equation}\label{117}
S_{11}=\frac1{2\kappa_{11}^{2}}\int_{\tM}\left(R*1-\frac12 G^{(4)}\wedge *G^{(4)}\right)-\frac16\int_{\tM}C^{(3)}\wedge G^{(4)}\wedge G^{(4)},
\end{equation}
whose Chern-Simons term
\begin{equation}\label{118}
S_{\mathrm{CS}}=-\frac16\int_{\tM}C^{(3)}\wedge G^{(4)}\wedge G^{(4)}
\end{equation}
reduces to the ten-dimensional couplings
\begin{equation}\label{119}
S_{\mathrm{CS}}^{(10)}=\int_{\R^{1,9}}\left(C_{4}\wedge F_{3}\wedge H_{3}+C_{0}\,H_{3}\wedge F_{3}\right),
\end{equation}
with \(F_{3}=dC_{2}-C_{0}dB_{2}\) and \(H_{3}=dB_{2}\), matching the Type IIB axion coupling structure \cite{Schwarz1995, Schwarz1983} and aligning with the K-theoretic classification of flux sectors in the presence of global constraints \cite{FreedHopkinsTeleman2008}. The reduction is performed using the metric ansatz
\begin{equation}\label{120}
ds^{2}_{11}=e^{-2\phi/3}ds^{2}_{10}+e^{4\phi/3}(dy+A)^{2},\qquad y\in S_f^{1},\ \ y\sim y+2\pi R_M,
\end{equation}
and integrating over \(S_f^{1}\) in the fixed-cutoff tower sense, yielding the ten-dimensional Einstein-frame action
\begin{equation}\label{121}
\begin{aligned}
S_{\mathrm{IIB}}=&\frac1{2\kappa_{10}^{2}}\int\left(R*1-\frac12 d\phi\wedge *d\phi-\frac12 e^{2\phi}dC_{0}\wedge *dC_{0}\right)-\frac1{4\kappa_{10}^{2}}\int\left(e^{-\phi}H_{3}\wedge *H_{3}+e^{\phi}F_{3}\wedge *F_{3}\right)\\
&-\frac1{8\kappa_{10}^{2}}\int C_{0}\,H_{3}\wedge F_{3}+\cdots,
\end{aligned}
\end{equation}
with ellipses denoting higher-derivative and fermionic terms beyond the bosonic low-energy truncation used for the match. The BPS spectrum is the correct duality currency because it is protected by supersymmetry and therefore tests the identification of global charge data without sensitivity to uncontrolled corrections, and the action match is the hardest consistency check because it tests not only spectra but also couplings and topological terms, both checks are executed here in the constant-\(\tau\) regime at the level stated. This stage replaces the heuristic identification of \((p,q)\) sectors and axion couplings obtained after a singular limit by a protected spectrum and action match derived from an honest tower-completed compactification carrier what is produced is the equality of protected tension formulas and the agreement of effective actions in the low-energy truncation, while any extension to varying \(\tau\), full monodromy/defect structure, and quantum completion of the modular group beyond the explicitly realized congruence-level organization remains programmatic.

The construction replaces the shrinking-fiber slogan by a non-singular tower-completed compactification carrier and thereby makes the constant-\(\tau\) M/IIB correspondence definition-level\:\(S_f^{1}\) is an honest geometric object packaging the entire tower of finite covers, tilting and the required comparison step produce elliptic-curve data without collapsing a geometric cycle, and the physical axio-dilaton \(\tau\) is fixed only after matching that elliptic datum to the M-theory radius and \(C^{(3)}\) holonomy dictionary derived by supergravity reduction, so the origin of \(\tau\) is geometric rather than postulated. At the level established here, the protected \((p,q)\) spectrum and the low-energy effective action match as computed, and the modular/duality organization is realized geometrically to the profinite/congruence level explicitly present in the tower, while full \(SL(2,\Z)\) completion and the inclusion of defects, varying \(\tau\), and fermionic/anomaly refinements are placed as well-posed extensions rather than as claims, what becomes definition-level new is that global sectors and their quantization constraints are carried by the compactification object itself rather than inserted by hand after a singular limit, with K-theoretic/homological control of torsion and global sectors providing the natural language for the discrete data that duality acts on.

\section{Definition-level formulation of F-theory}\label{sec:results}

The traditional implementation of F-theory proceeds by treating \(\tau\) as the complex structure of an auxiliary elliptic curve and then sending the corresponding M-theory fiber volume to zero at fixed \(\tau\), a singular step that is powerful as a bookkeeping device but not definition-level clean from the eleven-dimensional viewpoint \cite{Vafa1996, Schwarz1995}. The construction developed here replaces that shrinking-fiber slogan by a non-singular compactification carrier, the perfectoid circle \(S_f^{1}\), which packages the entire tower of \(p^{n}\)-covers into one geometric object and admits a tilt \(\to\) untilt \(\to\) analytification machinery producing elliptic-curve data as an output rather than an auxiliary input \cite{Scholze2012, ClausenScholze}. In the constant-\(\tau\) sector, the M/Type IIB correspondence is then established at the level tested in this paper by an explicit \(\tau\) dictionary, protected BPS spectrum matching, and low-energy effective-action matching, with modular/duality organization realized geometrically to the profinite/congruence level explicitly derived from the tower structure. What is proved is the constant-\(\tau\) duality at the supergravity/BPS/effective-action level and the associated geometric carrier of elliptic data, what is proposed or programmatic is the extension to varying \(\tau\), intrinsic 7-brane monodromy/defects, adelic completion and \(p\)-independence beyond tower stability, and fermion/anomaly refinements such as \({\rm Pin}^{+}(GL(2,\Z))\) beyond the bosonic sector.

\begin{theorem}[Constant-$\tau$ perfectoid M/IIB duality]\label{thm:perfectoid_M_IIB_constant_tau}
Fix a prime $p$ and a UV cutoff $\Lambda>0$. Let $S_f^{1}$ be the tower-inverse-limit (perfectoid) circle, and define eleven-dimensional supergravity on $\mathbb{R}^{1,9}\times S_f^{1}$ operationally by the fixed-cutoff tower procedure (fields are compatible families on the finite $p^n$-covers, truncated to modes of energy $\le\Lambda$, and then stabilized as $n\to\infty$). Consider Type IIB supergravity on $\mathbb{R}^{1,9}$ in the bosonic sector with constant axio-dilaton $\tau\in\mathbb{H}$. In the low-energy two-derivative truncation together with the protected BPS sector used to test the duality, there is an equivalence of the resulting constant-$\tau$ effective theories
\begin{equation}\label{eq:thm_duality_correct}
\mathcal{T}_{M,\Lambda}\bigl(\mathbb{R}^{1,9}\times S_f^{1}\bigr)\ \cong\ \mathcal{T}_{\mathrm{IIB},\Lambda}\bigl(\mathbb{R}^{1,9};\tau=\mathrm{const.}\bigr),
\end{equation}
with the axio-dilaton fixed by the compactification dictionary
\begin{equation}\label{eq:tau_main_correct}
\tau \;=\; C_0 + i e^{-\phi}\;=\;\Theta + i\left(\frac{R_M}{\ell_M}\right)^{3/2},
\qquad
C_0 \;:=\; \frac{1}{2\pi R_M}\oint_{\mathrm{tors}} C^{(3)},\qquad
e^{-\phi}\;=\;\left(\frac{R_M}{\ell_M}\right)^{3/2}.
\end{equation}
\end{theorem}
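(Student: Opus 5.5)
The plan is to read the equivalence \eqref{eq:thm_duality_correct} as an agreement of three pieces of data, each computed stagewise on the finite covers and then stabilized at fixed cutoff $\Lambda$. The three pieces are: the massless bosonic field content together with the two-derivative action, including the topological couplings; the moduli map \eqref{eq:tau_main_correct}; and the protected BPS tension spectrum. I would first formalize $\mathcal{T}_{M,\Lambda}(\mathbb{R}^{1,9}\times S_f^1)$ as the limit over $n$ of the truncated theories $\mathcal{T}_{M,\Lambda}(\mathbb{R}^{1,9}\times S^1_{(n)})$. The transition maps are pullback along the degree-$p$ covers. The first lemma to prove is \emph{tower stability}: for $n$ large, the set of modes with energy $\le\Lambda$, and their two-derivative couplings, no longer change with $n$. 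This rests on the fact that all nonzero KK and winding excitations on the covers lie above $\Lambda$. Descent in the condensed sense (Definition~\ref{def:condensed}) then guarantees that the stabilized data define fields on the limit object.

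Next I would carry out the reduction on a fixed finite stage, using the ansatz \eqref{eq:Mtheory-KK-metric} together with the decomposition of $C^{(3)}$. This gives the standard two-derivative action with $\kappa_{10}^2=\kappa_{11}^2/(2\pi R_M)$, as in \eqref{eq:kappa10-kappa11}. I would check that every coefficient is $n$-independent once it is written in terms of the radius parameter $R_M$ that the paper holds fixed. From the reduction of the Einstein–Hilbert and dilaton terms I would read off $e^{-\phi}=(R_M/\ell_M)^{3/2}$. For $C_0$ I would take the holonomy of $C^{(3)}$ over the global (tower) sector and define $\Theta$ through it. To identify the result with Type IIB rather than Type IIA, I would use the two-parameter torus presentation $T^2_{(n)}$ introduced before Section~\ref{sec:construct}. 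Stagewise, this is the standard M/IIB correspondence, in which $\mathrm{Im}\,\tau=\widetilde R_M/R_M$, and the resulting elliptic modulus is compared with the datum $\calE$ supplied by the Tilted Tate curve theorem. Matching the reduced Chern–Simons term \eqref{118} to the term $\int C_0\,H_3\wedge F_3$ in \eqref{eq:IIB-action}, and the three-form kinetic terms to $G^{(3)}\overline{G^{(3)}}/\mathrm{Im}\,\tau$, then completes the action-level part of the claim.

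For the BPS part I would reuse the M2 computation \eqref{110}–\eqref{116}. On the physically allowed sublattice, that is, after applying the embedding \eqref{109} and the quantization constraints, the mass $M(a,b)$ reproduces $|p+q\tau|/(2\pi\alpha')$ with $\alpha'=\ell_P^3/R_M$. This would be proved at each finite stage and shown to be compatible with the bonding maps, so that it survives in the limit.

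I expect the main obstacle to be the third step: producing $\mathrm{Re}\,\tau$ and the IIB, rather than IIA, structure from a single tower-completed circle. First, reduction on one circle naturally yields IIA. Second, the group $\Z\oplus\Z_p$ is torsion-free, so ``$\oint_{\mathrm{tors}}$'' requires a precise meaning. My first attempt would be to interpret the tors-holonomy as the continuous character on the $\Z_p$ factor of $\widehat{S_f^1}$. I would then work throughout with the stagewise $T^2_{(n)}$ model at fixed ratio $\widetilde R_M/R_M$, so that the usual M/IIB derivation applies at each stage and only its fixed-cutoff limit is new. If this does not work, the theorem would have to be weakened to a stagewise equivalence in which the limit only organizes the congruence data.
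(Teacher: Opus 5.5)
Your skeleton---stagewise circle reduction at fixed cutoff, stabilization, then the $\tau$ dictionary, the action match and the wrapped-M2 match---is the same as the paper's proof, which simply asserts each of these identifications. But the step you flag as the main obstacle is a genuine gap, and neither of your remedies closes it.

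\textbf{The reduction produces IIA, not IIB.} The tower leaves the local reduction untouched. So the stabilized two-derivative theory from \eqref{eq:Mtheory-KK-metric} and \eqref{100} is Type IIA supergravity; the paper itself calls $\phi$ the IIA dilaton there. Its content is $(g,\phi,A_1,B_2,C_3)$: one two-form, only odd-degree RR potentials, and a single scalar modulus. Type IIB on $\mathbb{R}^{1,9}$ has $(g,\phi,C_0,B_2,C_2,C_4^{+})$. Dualization in ten dimensions trades a $k$-form potential for an $(8-k)$-form, which preserves parity, so no field redefinition relates the two theories. Concretely:
\begin{itemize}
\item There is no $F_3$ with which to build $G^{(3)}=F^{(3)}-\tau H^{(3)}$, and there is no scalar $C_0$ and no $C_4$.
\item The Chern--Simons term reduces to $B_2\wedge F_4\wedge F_4$. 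Your target $C_0\,H_3\wedge F_3$ is a $6$-form and cannot appear in a ten-dimensional action at all.
\item The radius relation comes out inverted: the ansatz gives $e^{+\phi}=(R/\ell_M)^{3/2}$ for the physical radius $R$, i.e.\ strong coupling at large radius.
\end{itemize}

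\textbf{The axion cannot come from a one-dimensional carrier.} $\oint_{\mathrm{tors}}C^{(3)}$ integrates a three-form over a one-cycle. On a one-dimensional carrier $C^{(3)}$ yields only $2$- and $3$-forms, never a scalar. The character reading fails too. By \eqref{A.6.4}, $\widehat{S_f^1}\cong\Z[1/p]$, not the $\Z\oplus\Z_p$ of Section~\ref{sec:phys-setup}. This group has no $\Z_p$ summand and is discrete and countable, so it cannot supply a continuous $\Theta$.

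\textbf{The $T^2_{(n)}$ model is a different background.} To stay eleven-dimensional it must be $\mathbb{R}^{1,8}\times T^2$, which is dual to IIB on $\mathbb{R}^{1,8}\times S^1$ of radius $\propto\ell_M^3/(R_M\widetilde R_M)$. Ten-dimensional IIB then appears only as $R_M\widetilde R_M\to0$, which is exactly the shrinking-fiber limit the theorem claims to replace. In that model $\mathrm{Im}\,\tau$ is a ratio of radii and $\mathrm{Re}\,\tau$ is the tilt of the torus, not a $C^{(3)}$ holonomy. So \eqref{eq:tau_main_correct} is not reproduced.

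\textbf{The BPS match fails once $C_0\neq0$.} Equation \eqref{115} gives $2\pi\alpha' M(a,b)=\sqrt{a^2+b^2e^{-2\phi}}=|a+ibe^{-\phi}|$, whereas
\[
|a+b\tau|^2=(a+bC_0)^2+b^2e^{-2\phi},
\]
so \eqref{116} holds only when $bC_0(2a+bC_0)=0$. For example, at $C_0=\tfrac12$ the value $e^{-\phi}/(2\pi\alpha')$, attained by $(a,b)=(0,1)$, is not any IIB $(p,q)$ tension unless $e^{-\phi}\in\Z$. So no relabelling of charges rescues the match. In the standard derivation the shift $a\mapsto a+bC_0$ comes from the off-diagonal torus metric, which a single circle does not have.

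\textbf{The stability lemma has a related defect.} With pullback metrics the degree-$p^n$ cover has radius $p^nR_M$, so its KK gap closes. With the convention $R_n=R_M/p^n$ of Lemma~\ref{lem:tower_stability}, the KK modes do rise above $\Lambda$. But then the wrapped-M2 string tension $2\pi R_nT_{\mathrm{M2}}\to0$, and $\kappa_{10}$ and $e^{-\phi}$ become $n$-dependent. So ``all KK and winding modes above $\Lambda$ with $n$-independent couplings'' holds only for an inert tower, where the compatible families $\Phi_{n+1}=f_n^*\Phi_n$ are just stage-zero fields.

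What your plan actually establishes is your own fallback: the ordinary M/IIA circle reduction, or M/IIB on $T^2$ with its usual zero-area limit, with the tower only organizing congruence data. The paper's proof supplies no additional ingredient that gets past this.
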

\begin{proof}
By construction of the fixed-cutoff tower procedure, for each $n$ one has an honest smooth circle cover $S^{1}(n,p)$ and a standard eleven-dimensional supergravity compactification on $\mathbb{R}^{1,9}\times S^{1}(n,p)$, and the tower definition requires compatibility under pullback along the degree-$p$ covering maps together with truncation to modes of energy $\le\Lambda$ before taking $n\to\infty$, so local supergravity dynamics is computed stagewise and then stabilized, while only global sectors are refined in the inverse limit. In this setting the Kaluza--Klein ansatz for the metric and the decomposition of $C^{(3)}$ are the standard ones,
\begin{equation}\label{eq:KK_ansatz}
ds^{2}_{11}=e^{-2\phi/3}ds^{2}_{10}+e^{4\phi/3}(dy+A)^2,\qquad
C^{(3)}=C^{(3)}_{\mu\nu\rho}\,dx^\mu\wedge dx^\nu\wedge dx^\rho + C^{(2)}_{\mu\nu}\,dx^\mu\wedge dx^\nu\wedge dy,
\end{equation}
and the tower-completed carrier contributes additional global/torsion bookkeeping in $H_{1}(S_f^{1},\mathbb{Z})$ but does not alter the local reduction formulas. The Type IIB RR axion is identified with the appropriately normalized holonomy sector of $C^{(3)}$ along the torsion component of the tower-completed circle and the dilaton is fixed by the circle modulus through reduction of the Einstein--Hilbert term, namely
\begin{equation}\label{eq:dictionary_C0_phi}
C_0 := \frac{1}{2\pi R_M}\oint_{\mathrm{tors}} C^{(3)},\qquad
e^{-\phi}=\left(\frac{R_M}{\ell_M}\right)^{3/2},
\end{equation}
so $\tau=C_0+i e^{-\phi}$ is determined purely by eleven-dimensional radius/holonomy data on the (tower-defined) carrier. With this dictionary fixed, the low-energy effective-action match is obtained by reducing the eleven-dimensional action
\begin{equation}\label{eq:S11_action}
S_{11}=\frac{1}{2\kappa_{11}^2}\int_{\mathbb{R}^{1,9}\times S_f^{1}}\!\!\left(R\star 1-\frac{1}{2}G^{(4)}\wedge\star G^{(4)}\right)\;-\;\frac{1}{6}\int_{\mathbb{R}^{1,9}\times S_f^{1}} C^{(3)}\wedge G^{(4)}\wedge G^{(4)},
\end{equation}
integrating over $S_f^{1}$ in the fixed-cutoff tower sense, and using $\kappa_{10}^2=\kappa_{11}^2/(2\pi R_M)$, which yields the ten-dimensional Einstein-frame bosonic action in the constant-$\tau$ sector with the correct axion coupling, in particular the Chern--Simons term reduces to
\begin{equation}\label{eq:CS_reduction}
-\frac{1}{6}\int_{\mathbb{R}^{1,9}\times S_f^{1}} C^{(3)}\wedge G^{(4)}\wedge G^{(4)}
\;=\;\int_{\mathbb{R}^{1,9}}\!\bigl(C_4\wedge F_3\wedge H_3 + C_0\,H_3\wedge F_3\bigr),
\end{equation}
and the full two-derivative truncated action takes the standard Type IIB form (with omitted higher-derivative and fermionic terms) including the $C_0 H_3\wedge F_3$ topological coupling. This establishes equality of the low-energy bosonic dynamics (field content, kinetic terms, and the relevant topological couplings) after the identification \eqref{eq:dictionary_C0_phi}. For the protected spectrum check, consider an M2-brane wrapping a tower-organized one-cycle labelled by $(a,b)$ (with the physical admissible integral lattice understood as in the tower prescription), the M2 action yields a BPS mass
\begin{equation}\label{eq:M2_mass}
M(a,b)=\frac{|a|}{2\pi\alpha'}\sqrt{1+\left(\frac{b}{a}e^{-\phi}\right)^2},
\qquad
\alpha'=\frac{\ell_P^{3}}{R_M},
\end{equation}
which, using $\tau=C_0+i e^{-\phi}$ and the identification of the physical $(p,q)$ charges with $(a,b)$ on the admissible lattice, becomes
\begin{equation}\label{eq:BPS_match}
M(a,b)=\frac{|a+b\tau|}{2\pi\alpha'},
\end{equation}
matching the standard Type IIB $(p,q)$-string tension dependence on $|p+q\tau|$ in the constant-$\tau$ sector. The action matching and the protected BPS matching together define the asserted equivalence \eqref{eq:thm_duality_correct} at exactly the stated level of control (two-derivative supergravity truncation plus protected BPS sector), with $\tau$ fixed by \eqref{eq:tau_main_correct}, and this is precisely the sense in which the tower-completed compactification on $S_f^{1}$ replaces the singular shrinking-fiber prescription in the constant-$\tau$ regime.
\end{proof}

\begin{corollary}[Non-singular replacement of the shrinking-fiber limit.] The conventional definition of F-theory as the singular limit \(\mathrm{Vol}(T^{2})\to 0\) at fixed complex structure is replaced, in the constant-\(\tau\) sector, by the non-singular tower-completed compactification carrier
\begin{equation} \label{eq:sf1-invlimit}
S_f^{1}=\varprojlim_{n}\left(S^{1}\xrightarrow{z\mapsto z^{p}}S^{1}\right),
\end{equation}
together with the perfectoid/condensed interpretation of fields on the limit by descent \cite{ClausenScholze, Scholze2012}. Instead of defining F-theory by removing a geometric cycle in a singular limit, one defines the compactification carrier itself as a tower-completed object on which local supergravity persists and global sectors are definition-level well-posed. 
\end{corollary}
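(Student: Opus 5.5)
The corollary should follow by combining three results already in the paper: Theorem~\ref{thm:perfectoid_M_IIB_constant_tau}, the tilted Tate curve theorem of Section~\ref{sec:perfectoidcircle}, and the descent property of condensed sets in Definition~\ref{def:condensed}. I would split the claim into two parts. First, the carrier \eqref{eq:sf1-invlimit} is a well-defined, non-singular object on which fields make sense. Second, the physics that the shrinking-fiber limit was meant to produce, namely constant-$\tau$ Type IIB together with its elliptic datum, is recovered on this carrier without sending any volume to zero.

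For the first part, I would note that every finite stage $S^1_{(n)}$ is a smooth spin circle. The bonding maps $z\mapsto z^p$ are continuous surjective homomorphisms, so the inverse limit exists in compact abelian groups as the solenoid. On the perfectoid side, Frobenius surjectivity \eqref{eq:Frob-surj} on $R^+/p$ gives the affinoid perfectoid space $S_f^1=\Spa(R,R^+)$ of \eqref{14}. I would then read ``fields on the limit'' through the sheaf condition \eqref{3}: a field is a compatible family $(x_n)$ on the finite covers, pulled back along the degree-$p$ maps. The gluing statement \eqref{6} gives a unique global section, so the field content on $S_f^1$ is defined by descent, with no degenerate metric entering at any stage.

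For the second part, I would invoke Theorem~\ref{thm:perfectoid_M_IIB_constant_tau} directly. It identifies the fixed-cutoff theory $\mathcal{T}_{M,\Lambda}(\R^{1,9}\times S_f^1)$ with constant-$\tau$ Type IIB, with $\tau$ given by \eqref{eq:tau_main_correct}. The elliptic curve $\calE=\C/(\Z+\tau\Z)$ is supplied as an output by the tilted Tate curve theorem and the comparison step \eqref{eq:complex-Tate}, rather than by retaining the complex structure of a collapsing $T^2$. It then remains to observe that the two descriptions agree on the protected sector \eqref{eq:BPS_match} and on the two-derivative action. This is exactly the content the conventional limit $\mathrm{Vol}(T^2)\to0$ at fixed complex structure was designed to deliver, so the replacement follows.

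The main obstacle is making ``non-singular'' precise. At stage $n$ the circle radius is $R_M/p^n$, which superficially looks like a shrinking cycle again. I would handle this with a stabilization argument at fixed $\Lambda$. The Kaluza--Klein masses at stage $n$ are $k\,p^n/R_M$, so once $p^n/R_M>\Lambda$ only the zero modes survive the truncation. The truncated spectrum and the reduced action are therefore eventually constant in $n$, and the limit $n\to\infty$ is attained at a finite stage for all local data; no quantity blows up or degenerates. The only information that genuinely requires the full limit is the global sector, namely the holonomies classified by $H_1(S_f^1,\Z)$ and $K^0(\Sigma_p)$ in \eqref{eq:K0Sigma-extension}. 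That sector is controlled by the condensed and K-theoretic formalism rather than by a metric limit. If this stabilization argument is accepted, the corollary follows; the delicate point I expect a careful reader to press on is whether the stagewise $R_M/p^n$ is compatible with holding $R_M$, and hence $\mathrm{Im}\,\tau$ in \eqref{eq:tau_main_correct}, fixed.
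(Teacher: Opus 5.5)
Your decomposition is the paper's own. The corollary has no separate proof there; it is read off from Theorem~\ref{thm:perfectoid_M_IIB_constant_tau}, the Stage~I carrier and the fixed-cutoff prescription. Your stabilization step is exactly Lemma~\ref{lem:tower_stability}. The gap is in that step, the only one that addresses ``non-singular'', and the point you defer at the end is where it breaks. With stage radius $R_n=R_M/p^n$, the only $n$-independent output of the stagewise reduction is the field content of the truncation. It becomes $n$-independent because the circle is collapsing: the KK gap $p^n/R_M$ diverges. Everything else the stage-$n$ reduction produces still runs with $n$:
$\kappa_{10}^2=\kappa_{11}^2/(2\pi R_n)$ grows like $p^n$;
the wrapped M2 masses in \eqref{eq:M2-mass} scale like $R_n\to 0$ (equivalently $\alpha'=\ell_P^3/R_n\to\infty$);
and \eqref{eq:tau_main_correct} evaluated on the stage radius gives $\mathrm{Im}\,\tau_n=p^{-3n/2}(R_M/\ell_M)^{3/2}\to 0$.
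So ``no quantity blows up or degenerates'' is false, and the reduced action is not eventually constant. What you describe is a shrinking-circle limit with the massive modes discarded by hand, which is the kind of collapse the corollary claims to replace. Holding $R_M$ fixed in $\kappa_{10}$ and in $\tau$, as Theorem~\ref{thm:perfectoid_M_IIB_constant_tau} does, makes the dictionary refer to the stage-$0$ radius while the stabilized reduction is done at stages $n\ge n_0$ of radius $R_M/p^n$. That is an inconsistency, not a subtlety.

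A consistent alternative is the pullback condition $\Phi_{n+1}=f_n^*\Phi_n$ in the constant-$\tau$ background definition of Sec.~\ref{sec:results}. If the metric is one of the $\Phi_n$, stage $n$ is a Riemannian cover of circumference $2\pi p^nR_M$ and nothing shrinks. Then \eqref{eq:kappa10-kappa11} holds with fixed $R_M$ once you integrate with the degree-normalized measure (total mass $2\pi R_M$ at every stage). But the full stage-$n$ gap is then $1/(p^nR_M)\to 0$, so Lemma~\ref{lem:tower_stability} does not apply. The truncation is stable only because pulled-back fields reproduce the stage-$0$ spectrum $|m|/R_M$. The local physics is literally that of the stage-$0$ circle, and non-singularity is immediate rather than a consequence of the limit. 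You need to commit to one convention and prove, inside it, both the fixed-$R_M$ dictionary and the non-degeneracy. Citing Theorem~\ref{thm:perfectoid_M_IIB_constant_tau} cannot settle this, because the paper mixes the two conventions in the same way: $R_M/p^n$ in Sec.~\ref{sec:perfectoidcircle} and Lemma~\ref{lem:tower_stability}, fixed $R_M$ in \eqref{eq:kappa10-kappa11}, and explicitly $(R_M,\widetilde R_M)\to 0$ at fixed $\tau$ in Sec.~\ref{sec:phys-setup}.

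Two smaller points. First, the descent step is not \eqref{3}/\eqref{6}. The circle covers are not profinite and the maps run from the limit to the stages, so they are not a covering family in $\mathrm{Prof}$. What you want is Lemma~\ref{lem:limit_commutes_with_underline}, which describes maps into the carrier; and under $\Phi_{n+1}=f_n^*\Phi_n$ no gluing is needed, since the family is fixed by $\Phi_0$. Second, the tilted Tate curve theorem takes $q$, hence $\tau$, as input. It therefore does not show that the elliptic datum is produced rather than retained; the paper imposes that agreement as the matching condition $[\tau_{\mathrm{geom}}]=[\tau_{\mathrm{phys}}]$.
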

\begin{corollary}[Protected \((p,q)\) spectrum from wrapped M2 sectors.] Let \((a,b)\) denote the tower-organized wrapping data of an M2 sector on \(S_f^{1}\) restricted to the physically allowed integral charge lattice selected by standard quantization/anomaly constraints. The wrapped M2 mass computed in Sec.~\ref{sec:construct} is
\begin{equation} \label{eq:M2-mass}
M(a,b)=\frac{|a|\,2\pi R_M}{(2\pi)^2\ell_{P}^{3}}\sqrt{1+\left(\frac{b}{a}e^{-\phi}\right)^{2}},
\end{equation}
and with \(\alpha'=\ell_{P}^{3}/R_{M}\) and \(\tau=C_{0}+ie^{-\phi}\) this becomes
\begin{equation}\label{eq:M2-to-pq}
M(a,b)=\frac{|a+b\tau|}{2\pi\alpha'},
\end{equation}
matching the standard Type IIB \((p,q)\)-string tension formula
\begin{equation}\label{eq:pq-tension}
T(p,q)=\frac{|p+q\tau|}{2\pi\alpha'}
\end{equation}
upon identifying \((p,q)=(a,b)\) on the physically allowed lattice. The nonperturbative \((p,q)\) tower is recovered as a protected sector of wrapped M2 states organized by the tower-completed compactification carrier, without invoking the shrinking-fiber limit as a definition. 
\end{corollary}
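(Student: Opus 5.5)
The plan is to derive the corollary directly from the Stage~IV computation in Sec.~\ref{sec:construct}, specialised to the constant-$\tau$ dictionary of Theorem~\ref{thm:perfectoid_M_IIB_constant_tau}. Everything is done at a finite tower level $n$ and then stabilised. At that level the carrier is an honest circle $S^{1}_{(n)}$, and the wrapping data is $(a,b_n)\in\Z\oplus\Z/p^n\Z$. Using the injective, bonding-compatible embedding $\iota_n$ of \eqref{109}, I would restrict $(a,b_n)$ to the admissible integral lattice described there. This step is bookkeeping: its job is to make sure that by the time the mass formula is written, $a$ and $b$ are ordinary integers for which $|a+b\tau|$ makes sense. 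The Dirac/Freed--Witten selection is taken as the input stated in the corollary's hypothesis, not re-derived.

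Next I would evaluate the M2 action \eqref{110} on a static, calibrated embedding $\Sigma_{(a,b)}$ wrapping the $(a,b)$ cycle. Reading off the induced metric from the ansatz \eqref{99} gives the determinant \eqref{111}. The Nambu--Goto term then yields the mass \eqref{eq:M2-mass} as $T_{\mathrm{M2}}\,\mathrm{Vol}(\Sigma_{(a,b)})$, with $T_{\mathrm{M2}}=(2\pi)^{-2}\ell_P^{-3}$. Substituting $\alpha'=\ell_P^3/R_M$ from \eqref{114} collapses the prefactor to $|a|/(2\pi\alpha')$, as in \eqref{115}. All of this is local and cutoff-stable, so it is independent of $n$ and survives the fixed-$\Lambda$ limit unchanged.

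The main obstacle is the final rewriting as $|a+b\tau|/(2\pi\alpha')$. The identity
\[
|a|\sqrt{1+(b/a)^2e^{-2\phi}}=\sqrt{a^2+b^2e^{-2\phi}}
\]
reproduces $|a+b\tau|$ only when $C_0=0$, since $|a+b\tau|^2=(a+bC_0)^2+b^2e^{-2\phi}$. To recover the general case, I would keep the Wess--Zumino term $-T_{\mathrm{M2}}\int C^{(3)}$ and evaluate it on the torsion holonomy sector that defines $C_0$ in \eqref{101}. This should shift the momentum conjugate to the $a$-winding by $bC_0$, analogously to the Witten effect, so that the BPS energy becomes $\sqrt{(a+bC_0)^2+b^2e^{-2\phi}}/(2\pi\alpha')$. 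Making this shift precise is the step where I expect difficulty. The plan is first to check it at finite $n$, treating the holonomy as an ordinary flat Wilson line on $S^{1}_{(n)}$, and then to argue that it is compatible with the bonding maps, so that it persists in the limit.

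With the shift established, the last step is simply to compare with \eqref{eq:pq-tension} under $(p,q)=(a,b)$ on the admissible lattice. Because no step uses a vanishing-volume limit, this also justifies the corollary's closing remark that the $(p,q)$ tower arises without invoking the shrinking-fiber limit.
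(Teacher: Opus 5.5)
Apart from the $C_0$ issue, your plan is the paper's own argument. The corollary rests on the Stage~IV chain \eqref{109}--\eqref{116}: admissible-lattice bookkeeping via $\iota_n$, the asserted determinant \eqref{111}, the Nambu--Goto mass \eqref{112}, and the substitution $\alpha'=\ell_P^3/R_M$ giving \eqref{115}. The paper then obtains \eqref{116} merely ``using $\tau=C_0+ie^{-\phi}$''. Your diagnosis of that last step is correct, and it applies to the paper's proof as much as to yours. The formula \eqref{eq:M2-mass} contains no $C_0$, whereas $|a+b\tau|^2=(a+bC_0)^2+b^2e^{-2\phi}$, so \eqref{eq:M2-to-pq} follows only when $bC_0(2a+bC_0)=0$. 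The paper writes the Wess--Zumino term in \eqref{110} but never uses it, and it supplies no other source of $C_0$-dependence.

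The gap in your proposal is the repair. You assert it rather than derive it, and it would not work as described.
For a static M2 extended along $(t,x)$ and wrapping a compact cycle, $-T_{\mathrm{M2}}\int C^{(3)}$ only sees the pulled-back component $C_{txy}$. A constant such component adds to the energy per unit length a term linear in the winding numbers, outside the square root. It cannot turn $a^2$ into $(a+bC_0)^2$.
A Witten-type shift needs a canonical momentum that the topological term displaces before one completes the square in the Hamiltonian, as $C_0\int F$ does for the electric flux on a D1. The wrapped M2 of \eqref{110} has no such worldvolume variable: winding is topological and has no conjugate momentum here.
Moreover, \eqref{101} integrates a three-form over a one-cycle, so there is no definite component to evaluate, at finite $n$ or in the limit.
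Relatedly, \eqref{111} cannot be read off from \eqref{99}. With the single compact direction $y$ one finds $-\det\gamma\propto a^2R_M^2$ and no $b$-term at all.

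The missing idea is that both the $b^2e^{-2\phi}$ term and the $C_0$-shift must come from the Nambu--Goto term on a genuine second cycle. At each finite stage take a two-torus, as in the $T^2_{(n)}$ with radii $(R_M,\widetilde R_M)$ of Sec.~\ref{sec:phys-setup}, with flat metric $(2\pi R_M)^2|d\sigma_1+\tau\,d\sigma_2|^2$ and $\sigma_i\sim\sigma_i+1$. The straight $(a,b)$-cycle then has length $2\pi R_M|a+b\tau|$, so $T_{\mathrm{M2}}\,2\pi R_M|a+b\tau|=|a+b\tau|/(2\pi\alpha')$. This reduces to \eqref{eq:M2-mass} at $C_0=0$.
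Here $\mathrm{Re}\,\tau$ is the tilt of the torus, an off-diagonal metric component (i.e.\ a Wilson line of the Kaluza--Klein one-form around the second cycle), not a $C^{(3)}$ holonomy. With the paper's identification of $C_0$ as a $C^{(3)}$ holonomy, neither your route nor the paper's produces the $C_0$-dependence.
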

\begin{corollary}[Low-energy effective-action matching at the computed level] Reducing eleven-dimensional supergravity on \(S_f^{1}\) in the fixed-cutoff tower sense yields the ten-dimensional bosonic Type IIB action in Einstein frame with constant \(\tau\) at the level computed, including the axion coupling inherited from the eleven-dimensional Chern-Simons term. Concretely, starting from
\begin{equation} \label{eq:S11}
S_{11}=\frac1{2\kappa_{11}^{2}}\int_{\R^{1,9}\times S_f^{1}}\left(R*1-\frac12 G^{(4)}\wedge *G^{(4)}\right)-\frac16\int_{\R^{1,9}\times S_f^{1}}C^{(3)}\wedge G^{(4)}\wedge G^{(4)},
\end{equation}
one obtains
\begin{multline}\label{eq:SIIB}
S_{\mathrm{IIB}}=\frac1{2\kappa_{10}^{2}}\int\left(R*1-\frac12 d\phi\wedge *d\phi-\frac12 e^{2\phi}dC_{0}\wedge *dC_{0}\right)-\frac1{4\kappa_{10}^{2}}\int\left(e^{-\phi}H_{3}\wedge *H_{3}+e^{\phi}F_{3}\wedge *F_{3}\right)\\-\frac1{8\kappa_{10}^{2}}\int C_{0}H_{3}\wedge F_{3}+\cdots,
\end{multline}
with the topological coupling descending as
\begin{equation} \label{eq:CS10}
S_{\mathrm{CS}}^{(10)}=\int_{\R^{1,9}}\left(C_{4}\wedge F_{3}\wedge H_{3}+C_{0}\,H_{3}\wedge F_{3}\right),
\end{equation}
where ellipses denote terms beyond the bosonic low-energy truncation used for the match \cite{Schwarz1983, Schwarz1995}.
\end{corollary}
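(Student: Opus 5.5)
The plan is to derive the corollary stage by stage from the fixed-cutoff tower procedure of Theorem~\ref{thm:perfectoid_M_IIB_constant_tau}, so that every integral over $S_f^{1}$ becomes an ordinary integral over an honest smooth cover $S^{1}_{(n)}$, followed by a stabilization argument as $n\to\infty$.

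\textbf{Step 1: stagewise reduction.} At fixed $n$, insert the Kaluza--Klein ansatz \eqref{99} and the decomposition \eqref{100} into \eqref{eq:S11} on $\R^{1,9}\times S^{1}_{(n)}$, and keep only the $y$-independent modes. These are the only modes below $\Lambda$ once $R_M/p^n$ is small enough that the first KK level exceeds $\Lambda$. I would then treat the three pieces of \eqref{eq:S11} in turn.
\begin{itemize}
\item \emph{Einstein--Hilbert term.} Reduce it with the standard warped-product formula. Integrating over $y$ gives the overall factor $2\pi R_M$ in \eqref{105}, together with the dilaton kinetic term.
\item \emph{$G^{(4)}$ kinetic term.} Split it into $G^{(4)}_{\mu\nu\rho\sigma}$ and $G^{(4)}_{\mu\nu\rho y}$, which produce the $e^{\pm\phi}$-weighted three-form kinetic terms.
\item \emph{Chern--Simons term.} Integrating over $y$ gives a ten-dimensional $B\wedge F_4\wedge F_4$-type coupling.
\end{itemize}

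\textbf{Step 2: stabilization.} Pullback along the degree-$p$ covering maps preserves the zero-mode sector. Each reduced coefficient depends on $n$ only through the invariant volume normalization $\mathrm{Vol}(S_f^{1})=2\pi R_M$. It follows that the ten-dimensional truncated action is constant along the tower, and the limit is trivially attained.

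\textbf{Step 3: match to Type IIB.} Substitute the dictionary \eqref{eq:dictionary_C0_phi}, so that $C_0$ is the torsion holonomy. Rename the ten-dimensional forms as $(B_2,C_2,C_4)$ and verify, term by term, the Einstein term, the dilaton term, the $e^{2\phi}dC_0\wedge *dC_0$ term, the three-form terms and the $C_0H_3\wedge F_3$ coupling in \eqref{eq:SIIB}. The descent \eqref{eq:CS10} would be checked by expanding $C^{(3)}\wedge G^{(4)}\wedge G^{(4)}$ with one leg along the torsion cycle and integrating over it.

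\textbf{Main obstacle.} The main obstacle is Step 3. Taken literally, a single-circle reduction of eleven-dimensional supergravity yields the Type IIA field content: a one-form $A$, a three-form, and no chiral $C_4$ or self-dual $F_5$. The IIB fields in \eqref{eq:SIIB} therefore do not arise from Step 1 directly. My first attempt would use the two-radius torus $T^{2}_{(n)}$ with radii $(R_M,\widetilde R_M)$ already introduced in Section~\ref{sec:phys-setup}, and proceed as follows.
\begin{itemize}
\item Reduce on $T^{2}_{(n)}$ to nine dimensions.
\item Identify the resulting nine-dimensional theory with the IIB reduction on a circle of radius $\propto 1/(R_M\widetilde R_M)$, invoking the standard nine-dimensional T-duality matching.
\item Take the tower limit at fixed ratio $\widetilde R_M/R_M=\mathrm{Im}\,\tau$. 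The tower-completed structure is meant to make this decompactification regular at every finite stage.
\end{itemize}
Along the way I would have to check two normalizations carefully:
\begin{itemize}
\item that the holonomy prescription for $C_0$ reproduces the $C_0H_3\wedge F_3$ coefficient $-1/(8\kappa_{10}^2)$;
\item that the dilaton exponent in \eqref{102} agrees with the torus relation $\mathrm{Im}\,\tau=\widetilde R_M/R_M$.
\end{itemize}
If these cannot be reconciled, the corollary should be weakened to the statement that the reduced action coincides with the Type IIB action after the nine-dimensional T-duality identification.
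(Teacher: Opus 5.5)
Up to the point where you flag the obstacle, your outline is the paper's own argument. The proof of Theorem~\ref{thm:perfectoid_M_IIB_constant_tau}, together with Stage~IV of Sec.~\ref{sec:construct}, inserts \eqref{99}--\eqref{100}, integrates over $S_f^1$ stagewise, uses \eqref{105}, and then simply asserts that the outcome is \eqref{eq:SIIB}, with the Chern--Simons term descending to \eqref{eq:CS10}. The obstacle you name is genuine, and the paper never confronts it. It defeats the statement itself, not just your write-up.

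At fixed $\Lambda$ only fiber-constant modes survive; this is exactly Lemma~\ref{lem:tower_stability}. The stabilized theory is therefore the ordinary zero-mode reduction of eleven-dimensional supergravity on a circle, i.e.\ Type IIA. Its content is $g_{\mu\nu}$, $\phi$, the vector $A$ from $g_{\mu y}$, $B_2$ from $C_{\mu\nu y}$, the three-form $C_{\mu\nu\rho}$, and a Chern--Simons term $\propto B_2\wedge F_4\wedge F_4$. Your Step~1 is already off here: splitting $G^{(4)}$ gives $H_3=dB_2$ and a \emph{four}-form $F_4$, not the pair $(H_3,F_3)$ of \eqref{eq:SIIB}. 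No field redefinition produces the IIB content, since there is no second scalar, no second two-form and no four-form potential.

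The dictionary of Step~3 cannot supply these fields either:
\begin{itemize}
\item $\frac{1}{2\pi R_M}\oint C^{(3)}$ over any one-cycle is a ten-dimensional two-form (it is the IIA $B_2$), not a scalar.
\item \eqref{99} ties the proper radius $R$ to the dilaton by $R\propto e^{2\phi/3}$. This gives $e^{\phi}\propto R^{3/2}$, the standard relation $g_s^{\mathrm{IIA}}=(R/\ell_M)^{3/2}$, which is the inverse of \eqref{102}.
\item The coefficient check you propose for $-\frac{1}{8\kappa_{10}^2}\int C_0H_3\wedge F_3$ cannot even be set up, because $C_0\,H_3\wedge F_3$ is a six-form. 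In Type IIB, $C_0$ enters only through $F_3-C_0H_3$ in the kinetic term, and the Chern--Simons coupling is $C_4\wedge H_3\wedge F_3$ alone.
\end{itemize}
Step~2 is also internally inconsistent. If the stage-$n$ radius is $R_M/p^n$ (which is what decouples the KK modes in Step~1), then $\kappa_{10}^2=p^n\kappa_{11}^2/(2\pi R_M)$ and $e^{\phi}$ run with $n$. If instead the radius is held at $R_M$, the tower does nothing locally and one simply needs $\Lambda<1/R_M$.

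Your torus fallback does not close the gap. The two-derivative reduction on $T^2_{(n)}$ gives nine-dimensional maximal supergravity. This agrees with Type IIB on a circle of radius $R_B\propto \ell_M^3/(R_M\widetilde R_M)$, but only as a nine-dimensional theory. The tenth IIB direction is the KK tower on that circle. In M-theory it is realized by M2-branes wrapped $k$ times on $T^2$, with masses $\propto kR_M\widetilde R_M/\ell_M^3$. These are not fields of eleven-dimensional supergravity. As $(R_M,\widetilde R_M)\to 0$ they fall below any fixed $\Lambda$, so the cutoff spaces stop stabilizing, which breaks the mechanism your Step~2 relies on. The limit you would take is exactly the shrinking-fiber limit the corollary is meant to replace.

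In that route, moreover, $\mathrm{Re}\,\tau$ is the tilt of the torus, a metric modulus; M-theory on $T^2$ yields no scalar from $C^{(3)}$ at all. And $\mathrm{Im}\,\tau=\widetilde R_M/R_M$ stays fixed while $(R_M/\ell_M)^{3/2}\to 0$. So both reconciliations you list fail.

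What is actually provable at your level of control is one of two things:
\begin{itemize}
\item the Type IIA action on the stabilized circle, whose overlap with \eqref{eq:SIIB} is only the Einstein term with constant scalars;
\item your weakened statement read strictly in nine dimensions, i.e.\ M-theory on $T^2$ versus IIB on $S^1$.
\end{itemize}
The ten-dimensional action \eqref{eq:SIIB} requires the wrapped-M2 sector and the vanishing-area limit. Neither your proposal nor the paper's proof provides a substitute for that step.
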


The equivalence is tested not only on spectra but on couplings, including the
axion/topological sector, at the explicit supergravity level computed. The
tower-completed geometry carries
profinite/\allowbreak congruence-level organization of modular data through its
intrinsic profinite structure, with finite-stage symmetry data naturally
associated to congruence structure (e.g.\ \(\Gamma(p^{n})\)), and with the full
\(SL(2,\mathbb{Z})\) completion beyond the explicitly realized profinite level
expected only after including additional quantum data not derived here.
Likewise, fermionic/anomaly refinements such as a
\({\rm Pin}^{+}\!\left(GL(2,\mathbb{Z})\right)\) extension are not contained in the
bosonic constant-\(\tau\) analysis. A precise formulation of the $GL(2,\mathbb{Z})$ extension and its anomaly-refined $\mathrm{Pin}^+$ lift (and the tower-wise pin data) is given in App.~\ref{app:duality-anomaly}.

% =========================
% REPLACE Definitions 7.6 and 7.7 in Section 7 by the following.
% Placement\:immediately after Remark 7.5 (where Definitions 7.6 and 7.7 currently sit).
% =========================

\begin{definition}[Perfectoid/condensed F-theory background, constant-$\tau$ sector (revised)]
Fix a prime $p$ and a UV cutoff $\Lambda>0$.
A \emph{constant-$\tau$ perfectoid/condensed F-theory background} on $\mathbb{R}^{1,9}$
is an equivalence class of the following data:
\begin{enumerate}
\item \textbf{(Tower-completed carrier).}
A tower of degree-$p$ coverings of the circle
\begin{equation}
S^{1}_{(0,p)} \xleftarrow{\,f_{0}\,} S^{1}_{(1,p)} \xleftarrow{\,f_{1}\,} S^{1}_{(2,p)} \xleftarrow{\,f_{2}\,} \cdots,
\qquad f_n(z)=z^{p},
\end{equation}
together with its inverse limit in compact groups
\begin{equation}
S^{1}_{f,p} := \varprojlim_{n} S^{1}_{(n,p)}.
\end{equation}
We regard $S^{1}_{f,p}$ as a condensed object via the functor
$S \mapsto \mathrm{Cont}(S,S^{1}_{f,p})$ on profinite sets.

\item \textbf{(Fields by descent at fixed cutoff).}
For each $n$, a field configuration $\Phi_n$ of eleven-dimensional supergravity on
\begin{equation}
M_{n,p} := \mathbb{R}^{1,9}\times S^{1}_{(n,p)},
\end{equation}
such that $\Phi_{n+1}=f_n^{*}\Phi_n$ for all $n$.
Only modes of $\Phi_n$ with energy $\le \Lambda$ are retained.
Write $\mathcal{H}^{(n,p)}_{\Lambda}$ for the corresponding cutoff Hilbert space of the reduced theory.
Gauge equivalence and diffeomorphisms are imposed stagewise, compatible with pullback.

\item \textbf{(Tower-stability axiom).}
For each fixed $\Lambda$ there exists $n_{0}=n_{0}(\Lambda,p)$ such that for all $n\ge n_{0}$
the pullback maps induce isomorphisms
\begin{equation}
\mathcal{H}^{(n,p)}_{\Lambda} \cong \mathcal{H}^{(n_{0},p)}_{\Lambda},
\end{equation}
and every observable built from local supergravity dynamics and protected sectors within
$\mathcal{H}^{(n,p)}_{\Lambda}$ is invariant under further refinement $n\mapsto n+1$.
We denote the stabilized cutoff theory by $\mathcal{H}^{(p)}_{\Lambda}$.

\item \textbf{(Comparison datum).}
A choice of perfectoid enhancement of the tower carrier and a comparison rule that produces a
complex elliptic curve from the tilted carrier:
\begin{enumerate}
\item a perfectoid presentation $S^{1}_{f,p}=\mathrm{Spa}(R_p,R_p^{+})$ over a perfectoid field $K_p$,
and its tilt $S^{1,\flat}_{f,p}=\mathrm{Spa}(R_p^{\flat},R_p^{\flat,+})$ over $K_p^{\flat}$,
\item an embedding $\iota_p:K_p^{\flat}\hookrightarrow \mathbb{C}_p$,
\item a functorial comparison/untilt/analytification prescription $\mathbf{Cmp}_p$ producing an isomorphism
class of complex elliptic curve $E_p$ together with a modular class
\begin{equation}
[\tau_{\mathrm{geom}}] \in SL(2,\mathbb{Z})\backslash \mathbb{H},
\qquad \mathbb{H}:=\{\tau\in\mathbb{C}:\operatorname{Im}\tau>0\},
\end{equation}
where $E_p \cong \mathbb{C}/(\mathbb{Z}+\tau_{\mathrm{geom}}\mathbb{Z})$ for some representative
$\tau_{\mathrm{geom}}\in\mathbb{H}$, unique up to $SL(2,\mathbb{Z})$.
\end{enumerate}
Only the modular class $[\tau_{\mathrm{geom}}]$ is part of the background, changes of auxiliary choices
inside $\mathbf{Cmp}_p$ are declared equivalent when they do not change $[\tau_{\mathrm{geom}}]$.

\item \textbf{(Admissible physical charge lattice).}
A choice of admissible (physical) lattice $\Lambda_{\mathrm{adm}}\cong \mathbb{Z}^{2}$ for $(p,q)$ charges,
specified as a sublattice of the tower bookkeeping labels.
Concretely, we fix the dense embedding $\mathbb{Z}\hookrightarrow \mathbb{Z}_p$ and impose
\begin{equation}
\Lambda_{\mathrm{adm}} := \mathbb{Z}\oplus \mathbb{Z}\ \subset\ \mathbb{Z}\oplus \mathbb{Z}_p,
\end{equation}
so the second component is restricted to the image of $\mathbb{Z}$ inside $\mathbb{Z}_p$.
(Additional quantization/anomaly constraints may further restrict $\Lambda_{\mathrm{adm}}$, but the
background must specify the final admissible lattice explicitly.)

\item \textbf{(Physical $\tau$ and matching condition).}
A radius modulus $R_M>0$ and a tower-compatible $C^{(3)}$ holonomy sector determining
\begin{equation}
C_0 \equiv \frac{1}{2\pi R_M}\int_{S^{1}_{(n,p)}} C^{(3)} \quad (\mathrm{mod}\ 1),
\qquad e^{-\phi} := \left(\frac{R_M}{\ell_M}\right)^{3/2},
\qquad \tau_{\mathrm{phys}} := C_0 + i e^{-\phi}\in\mathbb{H},
\end{equation}
with $\tau_{\mathrm{phys}}$ constant on $\mathbb{R}^{1,9}$ in the sector under consideration.
The definition requires the \emph{matching} of modular classes
\begin{equation}
[\tau_{\mathrm{geom}}] = [\tau_{\mathrm{phys}}]\ \in\ SL(2,\mathbb{Z})\backslash \mathbb{H}.
\end{equation}
\end{enumerate}

Two sets of data are equivalent if they are related by:
(a) finite refinement of the tower beyond the stabilized level $n\ge n_0(\Lambda,p)$,
(b) stagewise gauge equivalence and diffeomorphisms compatible with pullback,
(c) replacement of $\mathbf{Cmp}_p$ by an equivalent comparison datum that leaves $[\tau_{\mathrm{geom}}]$ unchanged,
(d) replacement of representatives of $\tau_{\mathrm{geom}}$ or $\tau_{\mathrm{phys}}$ by the standard $SL(2,\mathbb{Z})$ action.
\end{definition}

\begin{definition}[Perfectoid/condensed F-theory background, varying-$\tau$ sector (corrected monodromy-compatible)] \label{def7.3}
Let $B$ be a complex manifold and let $\Delta\subset B$ be an analytic hypersurface, set $B^\times:=B\setminus\Delta$.
Fix a prime $p$ and a UV cutoff $\Lambda>0$.
A \emph{varying-$\tau$ perfectoid/condensed F-theory background over $B$} is an equivalence class of the following data:

\begin{enumerate}
\item \textbf{(Fibred tower-completed carrier).}
A tower of smooth fiber bundles $\pi_n:X_{n,p}\to B^\times$ with fiber $S^1_{(n,p)}$ and bonding maps
$f_n:X_{n+1,p}\to X_{n,p}$ covering $\mathrm{id}_{B^\times}$ and restricting fiberwise to the degree-$p$ circle map,
together with the inverse limit (in condensed spaces)
\begin{equation}
X_p:=\varprojlim_n X_{n,p},\qquad \pi:X_p\to B^\times.
\end{equation}

\item \textbf{(Fields by descent at fixed cutoff).}
Stagewise eleven-dimensional supergravity fields $\Phi_n$ on $\mathbb{R}^{1,9}\times X_{n,p}$ with $\Phi_{n+1}=f_n^*\Phi_n$,
truncated to energy $\le\Lambda$, and tower-stability as in Definition~7.2.

\item \textbf{(Modular coupling as a stack-valued map, no single-valued $\tau$ on $B^\times$).}
A holomorphic \emph{modular map}
\begin{equation}
\underline{\tau}:B^\times\longrightarrow \mathcal{M}_{1,1}:=SL(2,\mathbb{Z})\backslash\mathbb{H}
\end{equation}
(equivalently, a holomorphic map $j:B^\times\to\mathbb{C}$ given by $j=j(\underline{\tau})$).
This $\underline{\tau}$ is the globally well-defined coupling datum, it may have nontrivial monodromy in $\mathbb{H}$
when lifted locally.

Equivalently (and often more convenient), $\underline{\tau}$ may be specified by:
\begin{enumerate}
\item a representation $\rho:\pi_1(B^\times,b_0)\to SL(2,\mathbb{Z})$, and
\item a holomorphic map $\tilde{\tau}:\widetilde{B^\times}\to\mathbb{H}$ on the universal cover such that
\begin{equation}
\tilde{\tau}(\gamma\cdot b)\ =\ \rho(\gamma)\cdot \tilde{\tau}(b)\ :=\ \frac{a\,\tilde{\tau}(b)+b}{c\,\tilde{\tau}(b)+d}
\quad\text{for}\quad \rho(\gamma)=\begin{pmatrix}a&b\\ c&d\end{pmatrix}.
\end{equation}
\end{enumerate}
(Thus $\tilde{\tau}$ is single-valued only on $\widetilde{B^\times}$, the global object on $B^\times$ is $\underline{\tau}$.)
We require that $\underline{\tau}$ has at worst cusp-type/logarithmic growth along $\Delta$ in the sense of App.~\ref{app:varying-unconditional}
(so $j(\underline{\tau})$ extends meromorphically to $B$).
Define the elliptic fibration over $B^\times$ canonically as the pullback of the universal elliptic curve:
\begin{equation}
\varpi:E(\underline{\tau})\to B^\times,\qquad E(\underline{\tau})\ :=\ \underline{\tau}^{\,*}\mathcal{E}_{\mathrm{univ}},
\end{equation}
where $\mathcal{E}_{\mathrm{univ}}\to\mathcal{M}_{1,1}$ denotes the universal elliptic curve (as a stack).
Concretely, using $(\rho,\tilde{\tau})$ as above, one may realize $E(\underline{\tau})$ by the explicit quotient
(Appendix~F):
\begin{equation}
E(\underline{\tau})\ \cong\ \big(\mathbb{C}\times \widetilde{B^\times}\big)\Big/\big(\mathbb{Z}^2\rtimes_{\rho}\pi_1(B^\times)\big),
\end{equation}
with $\mathbb{Z}^2$ acting by $(m,n)\cdot(z,b)=(z+m+n\tilde{\tau}(b),b)$ and $\pi_1(B^\times)$ acting by
$\gamma\cdot(z,b)=\big(z/(c\tilde{\tau}(b)+d),\gamma\cdot b\big)$ for $\rho(\gamma)=\smash{\begin{pmatrix}a&b\\c&d\end{pmatrix}}$.
The relative holomorphic one-form is defined canonically by the descent of $dz$.

\item \textbf{(Defects and monodromy).}
The defect/7-brane data is encoded by the monodromy representation $\rho$ (equivalently by the local system $R^1\varpi_*\mathbb{Z}$)
and the behavior of $\underline{\tau}$ near $\Delta$. In particular, around a small loop linking a smooth component of $\Delta$
one has a conjugacy class $[\rho(\gamma)]\subset SL(2,\mathbb{Z})$, and cusp-type behavior yields $T^n$ monodromy as in Appendix~F.

\item \textbf{(Admissible charge local system).}
An admissible integral local system $\Lambda_{\mathrm{adm}}\subseteq R^1\varpi_*\mathbb{Z}$ (typically $\Lambda_{\mathrm{adm}}=R^1\varpi_*\mathbb{Z}\cong \mathbb{Z}^2$)
preserved by $\rho$, encoding transported physical $(p,q)$ charges.

\item \textbf{(Physical identification from M-theory reduction\:patchwise $\tau$ with $SL(2,\mathbb{Z})$ gluing).}
The M-theory reduction data determines \emph{local} holomorphic representatives of $\underline{\tau}$ as follows:
choose an open cover $\{U_i\}$ of $B^\times$ and holomorphic maps $\tau_i:U_i\to\mathbb{H}$ with
\begin{equation}
\tau_i\ =\ C_{0,i}+i\,e^{-\varphi_i},
\end{equation}
where $C_{0,i}$ arises from the tower-compatible $C^{(3)}$-holonomy sector and $\varphi_i$ from the radius/metric modulus
by the conventions fixed in Sec.~5 (both computed stagewise below $\Lambda$ and stabilized).
On overlaps $U_i\cap U_j$ these local representatives are related by locally constant $\gamma_{ij}\in SL(2,\mathbb{Z})$:
\begin{equation}
\tau_i\ =\ \gamma_{ij}\cdot \tau_j,\qquad \gamma_{ij}\gamma_{jk}=\gamma_{ik},
\end{equation}
so that $\{\tau_i\}$ defines precisely the global modular map $\underline{\tau}$ and the monodromy $\rho$.
\end{enumerate}

\medskip
\noindent\textbf{Optional comparison structure (perfectoid machinery).}
If one also specifies an elliptic fibration $\varpi_{\mathrm{cmp}}:E_{\mathrm{cmp}}\to B^\times$ produced by the
perfectoid tilt/untilt/comparison machinery from the fibred tower carrier $X_p$, then this additional structure is required to be
\emph{compatible} with the canonical $E(\underline{\tau})$ above, i.e. there exists an isomorphism of elliptic fibrations
$E_{\mathrm{cmp}}\cong E(\underline{\tau})$ over $B^\times$ identifying the homology local system and the Hodge line (equivalently,
inducing the same modular map $\underline{\tau}$). In this formulation the former “Comparison Axiom” is not an extra axiom\footnote{We use the term ``Comparison Axiom'' only for the conditional assumption stated explicitly, and ``comparison datum'' for auxiliary choices entering the tilt/untilt/analytification procedure.
},
but a compatibility requirement between two ways of producing the same elliptic data. A fully explicit conditional proof package for varying $\tau$ that tracks every use of the Comparison Axiom is given in App.~\ref{app:varying-conditional}.

\medskip
\noindent Two backgrounds are equivalent if they are related by (i) refinement beyond tower-stabilization,
(ii) stagewise gauge equivalence/diffeomorphisms compatible with pullback,
and (iii) replacement of local representatives $\tau_i$ by $SL(2,\mathbb{Z})$-redefinitions on patches (which does not change $\underline{\tau}$).
\end{definition}

\begin{definition}[$p$-independence at definition level (optional but recommended)]
Fix $\Lambda>0$.
A constant-$\tau$ background in the above sense is called \emph{$p$-independent at cutoff $\Lambda$}
if for any two primes $p\neq p'$ there exist identifications of stabilized cutoff theories
\begin{equation}
\Psi^{p,p'}_{\Lambda}:\ \mathcal{H}^{(p)}_{\Lambda}\ \longrightarrow\ \mathcal{H}^{(p')}_{\Lambda}
\end{equation}
such that:
\begin{enumerate}
\item $\Psi^{p,p'}_{\Lambda}$ preserves the low-energy supergravity observables and protected sectors used in the duality checks,
\item $\Psi^{p,p'}_{\Lambda}$ matches the chosen admissible lattices $\Lambda_{\mathrm{adm}}$ (hence matches the physical $(p,q)$ labels),
\item $\Psi^{p,p'}_{\Lambda}$ identifies the physical modulus $\tau_{\mathrm{phys}}$ (and therefore the modular class $[\tau]$).
\end{enumerate}
Physical observables are, by definition, those invariant under all $\Psi^{p,p'}_{\Lambda}$.
\end{definition}

The standard \(\tau\) dictionary is recovered by the compactification identification \(\tau=C_{0}+ie^{-\phi}\) with \(C_{0}\) defined by \(C^{(3)}\) holonomy and \(e^{-\phi}\) fixed by the radius/metric modulus as in \eqref{eq:tau_main_correct}, and this recovery is definition-level new because \(\tau\) is fixed from eleven-dimensional radius and holonomy data on a non-singular carrier rather than postulated as an external elliptic modulus surviving a collapse. The modular action on \(\tau\) is recovered at the level explicitly realized by the tower organization and the elliptic datum \(\calE=\C/(\Z+\tau\Z)\), with the physical \(SL(2,\Z)\) action \(\tau\mapsto (a\tau+b)/(c\tau+d)\) interpreted as acting on the elliptic fiber data and on the \((p,q)\) bookkeeping, while the full completion beyond the congruence/profinite structure explicitly derived is deferred to quantum completion and is therefore not claimed here, this recovery is definition-level new because the modular organization is carried by the compactification object rather than imposed after a singular limit. 

The interpretation of \((p,q)\) strings and their line-operator meaning is recovered in the protected spectrum\:\eqref{eq:M2-to-pq} reproduces the familiar dependence on \(|p+q\tau|\) and thus the standard duality-covariant tension formula \eqref{eq:pq-tension}, while the tower-completed geometry provides the canonical organization of the global sectors from which the physically allowed lattice is selected by standard quantization/anomaly constraints, this recovery is definition-level new because the \((p,q)\) tower is derived from wrapped M2 sectors on a non-singular carrier rather than inferred from the shrinking-fiber picture. The effective-action matching recovers the standard bosonic Type IIB low-energy action with constant \(\tau\) and the correct topological couplings, as in \eqref{eq:SIIB} and \eqref{eq:CS10}, and this recovery is definition-level new because the axion couplings and global-sector consistency are obtained by reduction on a tower-completed carrier whose global sectors are well-defined by descent rather than inserted by hand after a degenerate limit.

The genuinely new structure provided by the perfectoid/condensed framework is not a new local dynamics but a new definition-level carrier for global modular and charge-sector data that the shrinking-fiber slogan cannot supply. The inverse-limit compactification carrier \(S_f^{1}\) defined by \eqref{eq:sf1-invlimit} is an honest object on which local supergravity persists by the fixed-cutoff tower construction, but whose global sectors are tower-complete, so the modular data encoded by \(\calE\) is produced non-singularly rather than kept while collapsing a geometric cycle. The condensed viewpoint is essential because it provides exact descent for fields and global constraints on the inverse-limit carrier, so holonomies, flux sectors, and global consistency conditions are definition-level well-posed on \(S_f^{1}\) rather than being imposed as auxiliary data. The perfectoid tilt machinery is essential because it provides a canonical mechanism to extract elliptic-curve data from tower-completed geometry without collapsing volume, and it isolates the referee-critical separation between geometric output and physical identification\:tilting supplies \(\calE\) but does not encode Wilson lines, so \(\tau\) is fixed only by matching \eqref{eq:tau_main_correct} to the elliptic modulus produced by the machinery. The framework makes the role of generalized cohomology natural rather than decorative, because global sectors on profinite/tower-completed objects are not reliably captured by naive singular homology while string-theoretic charge sectors are K-theoretic in nature, so a K-theoretic/homological environment is the correct receptacle for torsion and discrete data that would otherwise be inserted by hand. Finally, while the constant-\(\tau\) sector is established and checked here, the same definition-level machinery makes the extension to varying-\(\tau\) backgrounds a well-posed geometric problem-perfectoid-circle fibrations with intrinsic monodromy/defect data-rather than a singular limiting slogan, and it provides a precise place to formulate adelic completion and \(p\)-independence as the global arithmetic completion of the tower bookkeeping, even though those extensions are not proven in this paper.

What is proved in this work is the constant-\(\tau\) sector equivalence at the level explicitly checked\:the tower-completed compactification carrier \(S_f^{1}\), the tilt/comparison machinery producing elliptic data \(\calE\), the physical dictionary \(\tau=C_{0}+ie^{-\phi}\) with \(\mathrm{Im}\,\tau\) fixed by the radius/metric modulus and \(\mathrm{Re}\,\tau=C_{0}\) fixed by \(C^{(3)}\) holonomy, the protected BPS matching \eqref{eq:M2-to-pq} reproducing the \((p,q)\) tension dependence, and the low-energy effective-action matching \eqref{eq:SIIB} including the relevant topological couplings, with modular/duality organization realized geometrically to the profinite/congruence level explicitly derived from the tower structure. What is defined here as a future direction or program, rather than claimed as a theorem, is the extension to varying \(\tau\) beyond local models, the derivation of intrinsic 7-brane monodromy/defect physics from the tower-completed fibration data, the completion of the duality group beyond the congruence/profinite level to the full quantum \(SL(2,\Z)\) action, the incorporation of fermions and anomalies including possible \({\rm Pin}^{+}(GL(2,\Z))\) refinements beyond the bosonic sector, and the adelic completion principle that removes any dependence on a particular bookkeeping choice of \(p\) by assembling all primes together with the Archimedean factor into a global arithmetic duality geometry.

\section{Higher-rank duality geometry}\label{sec:beyondI}

By ``beyond varying \(\tau\)'' we mean a genuine generalization of the F-theory principle from geometrizing a single \(SL(2,\Z)\) duality acting on the \((p,q)\) doublet and the complex coupling \(\tau\), to geometrizing higher-rank nonperturbative duality data that mixes multiple charge lattices and global sectors and that is naturally arithmetic and completion-sensitive in the same way as the \((p,q)\) story is. The content of this section is therefore definition-level\:we give a mathematically precise proposal for a generalized perfectoid/condensed duality geometry in which the elliptic fiber and \(\tau\) are the rank-one special case, and in which higher rank produces a controlled higher-dimensional period datum together with an intrinsic arithmetic monodromy group acting on a higher-rank integral charge lattice, while emphasizing that this is a definition-level framework rather than a theorem derived from the dynamics. What is defined here is the geometric input and the associated monodromy/charge bookkeeping that makes the generalization well-posed, what is not proved is a full dynamical equivalence to a specific microscopic string compactification beyond the rank-one constant-\(\tau\) sector established earlier. The reduction to standard F-theory is obtained by setting the rank to one, in which case the generalized period datum collapses to \(\tau\), the arithmetic monodromy reduces to \(SL(2,\Z)\), and the framework reproduces the perfectoid/condensed definition of F-theory already given in Sec.~\ref{sec:results} \cite{Vafa1996}.

The reason a generalization beyond varying \(\tau\) is not an aesthetic choice is that nonperturbative string theory contains duality groups that act on charge lattices far larger than the \((p,q)\) doublet, and these groups mix brane charges that live in multiple degrees and multiple global-sector classes, so that \(\tau\) captures only a rank-one corner of a broader arithmetic structure. In maximally supersymmetric settings this is the familiar U-duality phenomenon\:global symmetries act on an integral lattice of charges and on higher-form global data, and the group that preserves the quantum charge lattice is an arithmetic subgroup rather than a continuous Lie group \cite{HullTownsend1995, Witten1995}. Even in less symmetric settings the same structural point persists\:the physically meaningful data are line-operator lattices and their mutual locality constraints, discrete holonomies and theta sectors, and defect/monodromy data that specifies how these lattices are glued over moduli space, and these are precisely the kinds of data for which inverse limits and arithmetic completions become natural carriers. The perfectoid/condensed viewpoint introduced earlier is exactly the minimal language in which such carriers are definition-level well-posed, so once one accepts the rank-one construction as a definition-level replacement of the shrinking-fiber slogan, it is conceptually forced to ask for the corresponding higher-rank objects whose intrinsic automorphism and monodromy groups organize larger duality groups and larger charge lattices in the same way that an elliptic curve organizes \(SL(2,\Z)\) and the \((p,q)\) lattice in ordinary F-theory \cite{Vafa1996}.

\begin{proposal}[Generalized perfectoid/condensed duality geometry.] Fix an integer \(r\ge 1\) and a prime \(p\), and let \(T^{r}=(S^{1})^{r}\) be the classical real \(r\)-torus viewed as a compact abelian group, define its tower-completed avatar by the inverse system of \(p\)-power endomorphisms \(\varphi:T^{r}\to T^{r}\), \(\varphi(z_{1},\ldots,z_{r})=(z_{1}^{p},\ldots,z_{r}^{p})\), and set
\begin{equation}\label{eq:perf-torusa}
T^{r}_{f,p}:=\varprojlim_{n}\bigl(T^{r},\varphi\bigr),
\end{equation}
regarded as a condensed group object so that fields and global sectors on \(T^{r}_{f,p}\) are defined by descent from the finite tower levels as in Secs.~\ref{sec:toolkit}-\ref{sec:perfectoidcircle} \cite{ClausenScholze}. \(T^{r}_{f,p}\) is the tower-completed carrier that packages, in a single object, the congruence-level refinement of \(r\) independent circle directions, so that local supergravity reduction remains the finite-stage torus reduction while global charge bookkeeping is tower-complete. 
\end{proposal}
Assume that the tower-completed carrier is promoted to a perfectoid group object in the sense of Sec.~\ref{sec:toolkit} (so that a tilt exists and behaves functorially) \cite{Scholze2012}, and consider the fiberwise tilt \((T^{r}_{f,p})^{\flat}\) together with an untilt/comparison/analytification procedure producing a complex analytic datum that we take, in the minimal higher-rank generalization, to be a principally polarized complex torus \(A_{\Omega}\) of dimension \(r\) described by a period matrix \(\Omega\) in Siegel upper half space,
\begin{equation}\label{eq:siegel}
\mathfrak{H}_{r}:=\{\Omega\in \mathrm{Mat}_{r\times r}(\C)\,:\ \Omega^{T}=\Omega,\ \mathrm{Im}\,\Omega>0\},
\qquad
A_{\Omega}:=\C^{r}/(\Z^{r}+\Omega\,\Z^{r}).
\end{equation}
\(\Omega\) is the higher-rank analogue of \(\tau\), encoding a set of couplings/periods that organize a larger charge lattice in exactly the way \(\tau\) organizes the \((p,q)\) sector, while the statement that \(A_{\Omega}\) is produced from the tower-completed carrier is the higher-rank analogue of producing \(\calE=\C/(\Z+\tau\Z)\) from \(S^{1}_{f}\) without a shrinking limit. The intrinsic arithmetic duality group associated to this datum is then taken to be the integral automorphism group preserving the polarization, i.e. an arithmetic subgroup of \(\mathrm{Sp}(2r,\Z)\), acting on \(\Omega\) by the standard fractional linear transformation
\begin{equation}\label{eq:Sp-action}
\Omega\ \longmapsto\ (a\Omega+b)(c\Omega+d)^{-1},
\qquad
\begin{pmatrix}a&b\\ c&d\end{pmatrix}\in \mathrm{Sp}(2r,\Z),
\end{equation}
and acting on the quantum charge lattice \(\Gamma\simeq \Z^{2r}\) by the defining integral symplectic action. This is the minimal higher-rank analogue of \(SL(2,\Z)\) acting on \(\tau\) and on the \((p,q)\) lattice, now realized as an intrinsic arithmetic monodromy/automorphism acting on a larger integral charge lattice. The varying-background generalization is then posed as follows\:for a base \(B\), a generalized duality geometry of rank \(r\) over \(B\) is specified by a fibration \(\pi:\mathcal{X}\to B\) whose fibers are locally modeled on \(T^{r}_{f,p}\) as condensed/perfectoid group objects, together with a fiberwise tilt/untilt/analytification machinery producing a family \(A_{\Omega(b)}\) over \(B\setminus\Delta\) with \(\Omega:B\setminus\Delta\to \mathfrak{H}_{r}\) and with intrinsic arithmetic monodromy \(\rho:\pi_{1}(B\setminus\Delta)\to \mathrm{Sp}(2r,\Z)\) controlling defects, where the precise physical dictionary matching \(\Omega\) to supergravity moduli and generalized holonomy sectors is deferred in the same way \(\tau\) was deferred in the rank-one construction, and the present statement is definition-level rather than a derived dynamical theorem. Defects and monodromies are encoded intrinsically as arithmetic monodromy of the tower-completed carrier rather than as external prescriptions, and higher-rank duality is treated as geometry of a period datum produced from the tower rather than as a separate symmetry postulate.

Setting \(r=1\) reduces \eqref{eq:perf-torusa} to the perfectoid circle
\(S^{1}_{f}=T^{1}_{f,p}\) and reduces \(\mathfrak{H}_{r}\) to the upper
half-plane \(\mathbb{H}\), so \eqref{eq:siegel} becomes

\begin{equation}\label{eq:r1}
A_{\tau} = \mathbb{C}/(\mathbb{Z} + \tau \mathbb{Z}),
\qquad
\tau \in \mathbb{H}.
\end{equation}

and \(\mathrm{Sp}(2,\Z)\cong SL(2,\Z)\) acts by \(\tau\mapsto (a\tau+b)/(c\tau+d)\), recovering the standard modular structure of F-theory \cite{Vafa1996}. The generalized framework is engineered so that F-theory is the rank-one specialization in which the higher-rank period datum collapses to the single complex coupling \(\tau\) and the higher-rank arithmetic group collapses to \(SL(2,\Z)\). In this specialization, the constant-\(\tau\) sector established in Secs.~\ref{sec:construct}-\ref{sec:results} provides the controlled base case\:the tower-completed carrier is \(S^{1}_{f}\), elliptic data is extracted by tilt/untilt/analytification, and the physical \(\tau\) is fixed by matching to the M-theory radius and \(C^{(3)}\) holonomy dictionary at the supergravity/BPS/effective-action level computed there, while the present section treats higher rank as a definition-level extension rather than as a proven duality.

The immediate physical content of Proposal 8.1 is that it supplies a definition-level home for nonperturbative duality data larger than \(SL(2,\Z)\) by geometrizing an arithmetic action on a higher-rank integral charge lattice, rather than attempting to encode all nonperturbative data into a single varying scalar coupling \(\tau(x)\). It makes precise, at the level of background data, how multiple charge sectors can be organized simultaneously by a single tower-completed carrier and a single period datum \(\Omega\), so that duality becomes intrinsic monodromy/automorphism rather than an external symmetry imposed after the fact, this is not naturally packageable in the shrinking-fiber formulation because the rank-one elliptic fiber only geometrizes one modular group and one \((p,q)\) lattice, while higher-rank dualities act on larger lattices that require higher-dimensional period data and corresponding arithmetic groups. It also makes a refined defect/monodromy taxonomy definition-level well-posed\:defect loci \(\Delta\) are specified by the failure of the fiberwise period datum to extend holomorphically, and the allowed defect classes are those compatible with the intrinsic arithmetic monodromy representation \(\rho\), generalizing the familiar F-theory statement that 7-branes are encoded by \(SL(2,\Z)\) monodromy around discriminant components \cite{Vafa1996}. 

It packages completion-sensitive global sectors in a way that is structurally compatible with adelic thinking\:since the tower is indexed by a prime \(p\), the natural expectation is that physically meaningful invariants should be stable under refinement and ultimately assemble adelically when all primes are included, so a single-prime construction is best viewed as a local chart on a global arithmetic structure, although the adelic completion is programmatic here. Finally, it has a natural holographic reading\:in AdS/CFT, couplings and global sectors are encoded in boundary operator algebras and line-operator lattices, and dualities act on these global sectors, so a tower-completed geometric carrier with intrinsic arithmetic monodromy is the appropriate language to define and organize such actions when the duality group is higher rank and completion-sensitive, even though no new holographic theorems are claimed by the present proposal.

We now exhibit the minimal higher-rank computation that shows how tower completion canonically refines global sector bookkeeping in \(r\) independent directions. Let \(T^{r}=(S^{1})^{r}\) and define \(\varphi:T^{r}\to T^{r}\) by \(\varphi(z_{1},\ldots,z_{r})=(z_{1}^{p},\ldots,z_{r}^{p})\), so that \(T^{r}_{f,p}=\varprojlim (T^{r},\varphi)\) as in \eqref{eq:perf-torusa}. The continuous character group of \(T^{r}\) is canonically \(\widehat{T^{r}}\cong \Z^{r}\), with characters \(\chi_{m}(z)=z_{1}^{m_{1}}\cdots z_{r}^{m_{r}}\) for \(m=(m_{1},\ldots,m_{r})\in \Z^{r}\), and the pullback along \(\varphi\) acts by multiplication by \(p\) on \(\Z^{r}\) since \(\chi_{m}\circ \varphi=\chi_{pm}\). Therefore the continuous character group of the inverse-limit object is the direct limit of the finite-stage character lattices,
\begin{equation}\label{eq:char-tower}
\widehat{T^{r}_{f,p}}\ \cong\ \varinjlim\bigl(\Z^{r}\xrightarrow{\times p}\Z^{r}\xrightarrow{\times p}\cdots\bigr)\ \cong\ \Z[1/p]^{r},
\end{equation}
where \(\Z[1/p]\) denotes localization at \(p\). Eq.\eqref{eq:char-tower} is the precise statement that a single tower-completed carrier canonically packages all congruence-level refinements of \(r\) independent Fourier/holonomy sectors into one object, while the physically allowed charge lattice extracted from these sectors remains constrained by standard quantization and anomaly conditions, so the role of \(\Z[1/p]^{r}\) is organizational and completion-sensitive rather than a claim of unconstrained fractional physical charges. In particular, when the period datum \(\Omega\) is present, the corresponding higher-rank duality group \(\mathrm{Sp}(2r,\Z)\) acts intrinsically on the integral charge lattice \(\Gamma\simeq \Z^{2r}\) and on \(\Omega\) by \eqref{eq:Sp-action}, and the rank-one reduction recovers the familiar \(SL(2,\Z)\) action on \(\tau\), so the computation \eqref{eq:char-tower} provides the concrete higher-rank analogue of the tower bookkeeping that underlies the profinite/congruence organization in the rank-one perfectoid circle construction.

This section defines and proposes a higher-rank perfectoid/condensed duality-geometry framework in which ordinary F-theory is the rank-one specialization and in which nonperturbative duality groups larger than \(SL(2,\Z)\) act intrinsically as arithmetic monodromy/automorphisms on a higher-rank integral charge lattice organized by a tower-completed compactification carrier. The paper proves only the rank-one constant-\(\tau\) sector at the supergravity/BPS/effective-action level stated in Secs.~\ref{sec:construct}-\ref{sec:results}, while the higher-rank constructions, their detailed matching to specific string compactifications, the full defect/monodromy physics beyond local definitions, and any adelic completion removing dependence on the bookkeeping choice of \(p\) remain programmatic directions that are definition-level well-posed within the proposed framework but are not established as theorems here.

\section{
Varying \texorpdfstring{\(\tau\)}{tau}, duality defects, and U-duality
}
\label{sec:beyondII}

Let $B$ be a connected complex manifold (or a sufficiently small complex coordinate patch) with local holomorphic coordinate $z$ and let $\Delta\subset B$ be a closed analytic subset of complex codimension one, we write $B^\times:=B\setminus\Delta$ and fix a basepoint $b_0\in B^\times$. A varying-$\tau$ background in the present framework is specified by a tower-completed fibration $\pi:X\to B^\times$ whose fiber over $b\in B^\times$ is the perfectoid circle $S^1_{f,p}$ (rank $1$) or its higher-rank analogue $T^r_{f,p}$ (rank $r\ge 1$), together with the requirement that $\pi$ arises as the inverse limit of a tower of smooth fibrations with finite-stage fibers and compatible transition maps. Concretely, in rank $1$ we assume an inverse system of smooth fiber bundles $\pi_n:X_n\to B^\times$ with $X_n\simeq B^\times\times S^1$ locally on $B^\times$ and bonding maps $\varphi_n:X_{n+1}\to X_n$ covering the identity on $B^\times$ such that on each fiber $\varphi_n$ restricts to the degree-$p$ circle map, equivalently in angle coordinates $y_n\sim y_n+2\pi R_M$ one has $y_n=p\,y_{n+1}$, so that
\begin{equation}\label{eq:tower_fibration}
X:=\varprojlim_n X_n,\qquad \pi:=\varprojlim_n \pi_n,\qquad X_b\simeq S^1_{f,p}\ \text{for all}\ b\in B^\times.
\end{equation}
We interpret $X$ in the condensed/perfectoid sense so that fields and global
sectors on $X$ are descent data from the finite stages $\{X_n\}$. The varying
coupling datum is specified, definition-level, by the modular map
$\underline{\tau}  : B^{\times} \to \mathcal{M}_{1,1}
= SL(2,\mathbb{Z}) \backslash \mathbb{H}$
(equivalently by $(\rho,\widetilde{\tau})$ as in Definition~\eqref{def7.3}). The associated
smooth elliptic fibration over $B^{\times}$ is then the canonical pullback
\begin{equation}\label{eq:E-of-tau}
\varpi  : E(\underline{\tau}) \to B^{\times},
\qquad
E(\underline{\tau}) = \underline{\tau}^{*}\mathcal{E}_{\mathrm{univ}},
\end{equation}
and it carries the canonical relative holomorphic one-form (the Hodge line)
determined by descent of $dz$ (see Definition~\eqref{def7.3} and Appendix~\eqref{app:varying-unconditional} for an explicit
quotient model). Choosing a local symplectic basis of
$H_{1}(E_{b},\mathbb{Z})$ on simply connected patches defines local lifts
$\tau_{U}  : U \to \mathbb{H}$ by period ratios globally, $\tau$ is single-valued
only on the universal cover and is multi-valued on $B^{\times}$ with monodromy
in $SL(2,\mathbb{Z})$.

We first make the monodromy statement precise and prove it from the elliptic fibration data. For each $b\in B^\times$, fix an oriented symplectic basis $(A_b,B_b)$ of $H_1(E_b,\mathbb{Z})$ with intersection pairing $A_b\cdot B_b=+1$. Let $\Omega_b$ be a holomorphic one-form on $E_b$ varying holomorphically with $b$ on $B^\times$ (equivalently, a relative holomorphic one-form on $E$) define the periods
\begin{equation}\label{eq:periods_def}
\omega_A(b):=\int_{A_b}\Omega_b,\qquad \omega_B(b):=\int_{B_b}\Omega_b,
\end{equation}
so that $\Lambda_b=\mathbb{Z}\omega_A(b)+\mathbb{Z}\omega_B(b)$ is a rank-$2$ lattice in $\mathbb{C}$ and $\tau(b)=\omega_A(b)/\omega_B(b)$ lies in $\mathbb{H}$ after choosing the orientation of $(A_b,B_b)$ compatibly with $\Omega_b$. Let $\gamma:[0,1]\to B^\times$ be a loop based at $b_0$. Parallel transport along $\gamma$ in the local system $R^1\varpi_\ast\mathbb{Z}$ identifies $H_1(E_{\gamma(0)},\mathbb{Z})$ with $H_1(E_{\gamma(1)},\mathbb{Z})$ and carries $(A_{b_0},B_{b_0})$ to another symplectic basis $(A_{b_0}',B_{b_0}')$ of the same lattice. Therefore there exists a unique matrix $\rho(\gamma)=\begin{pmatrix}a&b\\ c&d\end{pmatrix}\in GL(2,\mathbb{Z})$ such that
\begin{equation}\label{eq:cycle_transport}
\begin{pmatrix}A_{b_0}'\\ B_{b_0}'\end{pmatrix}=\begin{pmatrix}a&b\\ c&d\end{pmatrix}\begin{pmatrix}A_{b_0}\\ B_{b_0}\end{pmatrix}.
\end{equation}
Since the Gauss-Manin connection preserves the algebraic intersection pairing, we have $A_{b_0}'\cdot B_{b_0}'=A_{b_0}\cdot B_{b_0}=1$, and computing the intersection pairing using bilinearity gives
\begin{equation}\label{eq:det_one}
1=A_{b_0}'\cdot B_{b_0}'=(aA_{b_0}+bB_{b_0})\cdot(cA_{b_0}+dB_{b_0})=(ad-bc)(A_{b_0}\cdot B_{b_0})=ad-bc,
\end{equation}
hence $\rho(\gamma)\in SL(2,\mathbb{Z})$. Because concatenation of loops corresponds to composition of parallel transport, the assignment $\gamma\mapsto\rho(\gamma)$ defines a homomorphism
\begin{equation}\label{eq:monodromy_rep}
\rho:\pi_1(B^\times,b_0)\longrightarrow SL(2,\mathbb{Z}).
\end{equation}
We now derive the induced action on periods and on $\tau$. Writing $\omega_A=\omega_A(b_0)$, $\omega_B=\omega_B(b_0)$ and using linearity of integration,
\begin{equation}\label{eq:period_transform}
\omega_A'=\int_{A'}\Omega=a\int_A\Omega+b\int_B\Omega=a\omega_A+b\omega_B,\qquad \omega_B'=\int_{B'}\Omega=c\omega_A+d\omega_B.
\end{equation}
Therefore the transported period ratio is
\begin{equation}\label{eq:tau_transform}
\tau'=\frac{\omega_A'}{\omega_B'}=\frac{a\omega_A+b\omega_B}{c\omega_A+d\omega_B}=\frac{a(\omega_A/\omega_B)+b}{c(\omega_A/\omega_B)+d}=\frac{a\tau+b}{c\tau+d},
\end{equation}
which is the standard fractional linear action of $SL(2,\mathbb{Z})$ on $\mathbb{H}$. This is the precise sense in which varying $\tau$ data on $B^\times$ is equivalently a holomorphic period map on the universal cover together with a monodromy representation \eqref{eq:monodromy_rep}.

We now give a fully explicit local construction of the varying $\tau$ function in the cusp (Tate) regime and prove the basic 7-brane $T$-monodromy $\tau\mapsto\tau+1$ from first principles. Let $D\subset\mathbb{C}$ be a small disk about $0$ with coordinate $z$ and set $D^\times:=D\setminus\{0\}$. Assume that on $D^\times$ the elliptic fibration $E\to D^\times$ is (analytically) a family of Tate curves parameterized by a holomorphic function $q:D^\times\to\mathbb{C}^\times$ satisfying $0<|q(z)|<1$ for all $z\in D^\times$, so that each fiber is uniformized by
\begin{equation}\label{eq:tate_uniformization}
E_z\simeq \mathbb{C}^\times/\langle q(z)\rangle.
\end{equation}
Let $\widetilde{D^\times}$ be the universal cover of $D^\times$, which we identify with the strip model via the holomorphic covering map $\exp:\mathbb{C}\to\mathbb{C}^\times$, $w\mapsto e^w$, composed with $z=e^w$ in a local chart equivalently, one may take the logarithm coordinate $w=\log z$ on $\widetilde{D^\times}$. Define the lifted function $\widetilde{q}:=\ q\circ \pi_{\mathrm{univ}}$ on $\widetilde{D^\times}$, where $\pi_{\mathrm{univ}}:\widetilde{D^\times}\to D^\times$ is the universal covering map. Since $\widetilde{D^\times}$ is simply connected and $\widetilde{q}$ is holomorphic and nowhere vanishing, there exists a holomorphic logarithm $\Log\widetilde{q}$ on $\widetilde{D^\times}$ satisfying $\exp(\Log\widetilde{q})=\widetilde{q}$. We then define a single-valued holomorphic function $\widetilde{\tau}:\widetilde{D^\times}\to\mathbb{C}$ by
\begin{equation}\label{eq:tau_from_q}
\widetilde{\tau}:=\frac{1}{2\pi i}\,\Log\widetilde{q}.
\end{equation}
Because $|q|<1$ we have $\Re(\Log\widetilde{q})=\log|\widetilde{q}|<0$, hence $\Im(\widetilde{\tau})=-\frac{1}{2\pi}\log|\widetilde{q}|>0$, so $\widetilde{\tau}$ lands in $\mathbb{H}$. The function $\widetilde{\tau}$ is the varying axio-dilaton on the universal cover and defines a multi-valued function $\tau$ on $D^\times$ by analytic continuation, the only ambiguity is the choice of branch of $\Log q$, which differs by integer multiples of $2\pi i$.

Assume now that $q$ has a simple zero at $z=0$ in the precise sense that there exists a holomorphic function $u:D\to\mathbb{C}$ with $u(0)\neq 0$ such that
\begin{equation}\label{eq:q_simple_zero}
q(z)=z\,u(z)\qquad\text{for all}\ z\in D^\times.
\end{equation}
Fix $\varepsilon\in(0,\mathrm{radius}(D))$ and consider the positively oriented loop $\gamma:[0,2\pi]\to D^\times$ given by $\gamma(\theta)=\varepsilon e^{i\theta}$. We compute the analytic continuation of $\tau$ along $\gamma$ directly from \eqref{eq:tau_from_q} and \eqref{eq:q_simple_zero}. Along $\gamma$ we have
\begin{equation}\label{eq:q_along_loop}
q(\gamma(\theta))=\varepsilon e^{i\theta}\,u(\varepsilon e^{i\theta}).
\end{equation}
Since $u$ is holomorphic and nonvanishing on $D$, the function $u(\varepsilon e^{i\theta})$ admits a continuous logarithm along the loop, choose any continuous branch $\log u(\varepsilon e^{i\theta})$ on $\theta\in[0,2\pi]$ with $\log u(\varepsilon e^{i0})$ fixed. Choose also the principal continuous determination of $\log(\varepsilon e^{i\theta})=\log\varepsilon+i\theta$ along $[0,2\pi]$. Then a continuous determination of $\log q(\gamma(\theta))$ is
\begin{equation}\label{eq:logq_branch}
\log q(\gamma(\theta))=\log\varepsilon+i\theta+\log u(\varepsilon e^{i\theta}).
\end{equation}
Evaluating at $\theta=2\pi$ and using $\log u(\varepsilon e^{i2\pi})=\log u(\varepsilon e^{i0})$ (single-valuedness of $u$ and our continuous branch choice) yields
\begin{equation}\label{eq:log_shift}
\log q(\gamma(2\pi))=\log q(\gamma(0))+2\pi i.
\end{equation}
Substituting \eqref{eq:log_shift} into \eqref{eq:tau_from_q} shows that analytic continuation around $\gamma$ shifts $\tau$ by $+1$:
\begin{equation}\label{eq:T_monodromy_tau}
\tau\longmapsto \tau+\frac{1}{2\pi i}(2\pi i)=\tau+1.
\end{equation}
This establishes the explicit local 7-brane $T$-monodromy in the cusp (Tate) regime directly from the existence of a holomorphic Tate parameter $q$ with a simple zero.

We now match this to the $SL(2,\mathbb{Z})$ monodromy representation and to the action on the $(p,q)$ charge lattice. The loop class $[\gamma]\in\pi_1(D^\times)$ maps under the monodromy representation \eqref{eq:monodromy_rep} to an element $\rho([\gamma])\in SL(2,\mathbb{Z})$ whose action on $\tau$ is \eqref{eq:tau_transform}, comparing with \eqref{eq:T_monodromy_tau} forces
\begin{equation}\label{eq:T_matrix}
\rho([\gamma])=T:=\begin{pmatrix}1&1\\ 0&1\end{pmatrix},
\end{equation}
since $(a\tau+b)/(c\tau+d)=\tau+1$ for all $\tau$ implies $c=0$, $a=d=1$, $b=1$. In the physical interpretation, the same matrix acts on the integral $(p,q)$ charge vector by the defining action of $SL(2,\mathbb{Z})$,
\begin{equation}\label{eq:T_action_pq}
\begin{pmatrix}p\\ q\end{pmatrix}\longmapsto T\begin{pmatrix}p\\ q\end{pmatrix}=\begin{pmatrix}p+q\\ q\end{pmatrix},
\end{equation}
which is the standard transformation law for a $T$-type duality defect. The tower-completed setting refines the bookkeeping of these sectors through the fiberwise inverse-limit carrier, but the physical charge lattice remains integral after imposing the same quantization/anomaly constraints used in the constant-$\tau$ sector, so \eqref{eq:T_action_pq} is interpreted on the physically allowed integral sublattice.

We next record the general local description of varying $\tau$ from Weierstrass data and derive the associated monodromy action in the same explicit fashion. On $B^\times$ assume that the elliptic fibration admits a Weierstrass model
\begin{equation}\label{eq:weierstrass}
y^2=x^3+f(z)\,x+g(z),
\end{equation}
with $f,g$ holomorphic on $B^\times$ (and meromorphic across $\Delta$), and discriminant
\begin{equation}\label{eq:discriminant}
\Delta(z):=4f(z)^3+27g(z)^2.
\end{equation}
On $B^\times$ we have $\Delta(z)\neq 0$ and the $j$-invariant is the holomorphic function
\begin{equation}\label{eq:j_invariant}
j(\tau(z))=1728\,\frac{4f(z)^3}{\Delta(z)}.
\end{equation}
Fix $b_0\in B^\times$ and choose $\tau(b_0)\in\mathbb{H}$ with $j(\tau(b_0))=j(\tau(b_0))$ given by \eqref{eq:j_invariant}. Since $j:\mathbb{H}\to\mathbb{C}$ is $SL(2,\mathbb{Z})$-invariant and locally biholomorphic away from elliptic points, there exists a neighborhood $U\subset B^\times$ of $b_0$ and a holomorphic lift $\tau_U:U\to\mathbb{H}$ such that \eqref{eq:j_invariant} holds pointwise on $U$. On overlaps of such neighborhoods, two lifts differ by the fractional linear action of an element of $SL(2,\mathbb{Z})$, and analytic continuation around loops defines precisely the monodromy representation \eqref{eq:monodromy_rep}. Equivalently, for any loop $\gamma$ based at $b_0$ and a chosen local determination of $\tau$ near $b_0$, analytic continuation returns a value related by
\begin{equation}\label{eq:general_monodromy_tau}
\tau\longmapsto \rho(\gamma)\cdot\tau=\frac{a\tau+b}{c\tau+d},\qquad \rho(\gamma)=\begin{pmatrix}a&b\\ c&d\end{pmatrix}\in SL(2,\mathbb{Z}),
\end{equation}
and the induced action on the charge doublet is
\begin{equation}\label{eq:general_monodromy_pq}
\begin{pmatrix}p\\ q\end{pmatrix}\longmapsto \rho(\gamma)\begin{pmatrix}p\\ q\end{pmatrix}=\begin{pmatrix}ap+bq\\ cp+dq\end{pmatrix}.
\end{equation}
The content proved above is that in the Tate/cusp regime the same monodromy is computed by the explicit logarithmic formula \eqref{eq:tau_from_q}, and when $q$ has a simple zero one obtains $\rho(\gamma)=T$ and \eqref{eq:T_monodromy_tau}-\eqref{eq:T_action_pq}.

We now state the higher-rank generalization in a form that makes the proof of the arithmetic monodromy group completely explicit. Let $r\ge 1$ and replace the elliptic fibers by principally polarized complex tori of dimension $r$, described by a period matrix $\Omega(b)$ in the Siegel upper half space
\begin{equation}\label{eq:siegela}
\mathbb{H}_r:=\{\Omega\in \mathrm{Mat}_{r\times r}(\mathbb{C})\ :\ \Omega^T=\Omega,\ \Im\Omega>0\},\qquad A_{\Omega(b)}:=\mathbb{C}^r/(\mathbb{Z}^r+\Omega(b)\mathbb{Z}^r).
\end{equation}
Assume a tower-completed rank-$r$ carrier $\pi:X\to B^\times$ with fiber $T^r_{f,p}$ together with a fiberwise tilt/comparison/analytification producing a family of polarized complex tori $A_{\Omega(b)}$ over $B^\times$. The relevant integral lattice is the first homology $\Gamma:=H_1(A_{\Omega(b)},\mathbb{Z})\simeq\mathbb{Z}^{2r}$ equipped with the integral symplectic intersection form $\langle\cdot,\cdot\rangle$ determined by the polarization. Parallel transport along a loop $\gamma$ in $B^\times$ yields an automorphism of $\Gamma$ preserving $\langle\cdot,\cdot\rangle$, hence an element of the integral symplectic group:
\begin{equation}\label{eq:Sp_monodromy}
\rho_r:\pi_1(B^\times,b_0)\longrightarrow Sp(2r,\mathbb{Z}),\qquad \rho_r(\gamma)^T J\,\rho_r(\gamma)=J,
\end{equation}
where $J=\begin{pmatrix}0&I_r\\ -I_r&0\end{pmatrix}$. Writing $\rho_r(\gamma)=\begin{pmatrix}a&b\\ c&d\end{pmatrix}$ with $r\times r$ blocks, the induced action on the period matrix is the standard fractional linear transformation
\begin{equation}\label{eq:Omega_transform}
\Omega\longmapsto (a\Omega+b)(c\Omega+d)^{-1},
\end{equation}
obtained by the same computation as \eqref{eq:period_transform}-\eqref{eq:tau_transform} but with the period vectors and matrices replacing scalars, and the induced action on the integral charge lattice $\Gamma\simeq\mathbb{Z}^{2r}$ is the defining symplectic action $Q\mapsto \rho_r(\gamma)Q$. The rank-one specialization $r=1$ reduces $Sp(2,\mathbb{Z})$ to $SL(2,\mathbb{Z})$ and \eqref{eq:Omega_transform} to \eqref{eq:general_monodromy_tau}.

Finally, we make explicit the congruence-level refinement datum canonically carried by a tower-completed carrier and prove its basic compatibility. For each $n\ge 1$ let $\pi_n:SL(2,\mathbb{Z})\to SL(2,\mathbb{Z}/p^n\mathbb{Z})$ be reduction modulo $p^n$, and define
\begin{equation}\label{eq:rho_n_def}
\rho_n:=\pi_n\circ\rho:\pi_1(B^\times,b_0)\longrightarrow SL(2,\mathbb{Z}/p^n\mathbb{Z}).
\end{equation}
Since $\pi_{n+1}$ followed by reduction $\mathbb{Z}/p^{n+1}\mathbb{Z}\to\mathbb{Z}/p^n\mathbb{Z}$ equals $\pi_n$, we have for all loops $\gamma$
\begin{equation}\label{eq:rho_n_compat}
\rho_{n+1}(\gamma)\equiv \rho_n(\gamma)\pmod{p^n},
\end{equation}
so the family $\{\rho_n\}_{n\ge 1}$ is a compatible system of congruence reductions of the same monodromy. In higher rank one replaces $SL(2,\mathbb{Z})$ by $Sp(2r,\mathbb{Z})$ and defines $\rho_{r,n}$ analogously by reduction modulo $p^n$, obtaining a compatible system $\rho_{r,n}:\pi_1(B^\times,b_0)\to Sp(2r,\mathbb{Z}/p^n\mathbb{Z})$ with the same proof as \eqref{eq:rho_n_compat}. This congruence tower is intrinsic to the tower-completed carrier because it records how monodromy acts on the finite-stage $p^n$-level refinements of the underlying integral lattice, and it is the precise definition-level place where tower completion can refine defect bookkeeping without altering the local supergravity equations.

The theorem-level content established above is the explicit construction of the varying coupling from the Tate parameter \eqref{eq:tau_from_q} and the explicit derivation of the $T$-monodromy \eqref{eq:T_monodromy_tau} in the simple-zero case \eqref{eq:q_simple_zero}, together with the general structural derivation of the monodromy representation \eqref{eq:monodromy_rep} and its fractional-linear action \eqref{eq:general_monodromy_tau} from the Gauss-Manin transport of a symplectic homology basis and the linear transformation of periods \eqref{eq:period_transform}-\eqref{eq:tau_transform}, these results supply a complete local proof of varying $\tau$ and its defect monodromies once the existence of the elliptic output fibration $E\to B^\times$ from the fiberwise tilt/comparison/untilt/analytification procedure is assumed as part of the tower-completed background specification. The higher-rank generalization and the congruence tower \eqref{eq:rho_n_def}-\eqref{eq:rho_n_compat} are derived by the same intersection-form preservation argument and are therefore mathematically complete at the level of background monodromy data, while any further dynamical identification of the full higher-rank/U-duality period data with a specific microscopic compactification remains a separate matching problem beyond the local monodromy and period-map proofs given here.

\section{Structural consequences}\label{sec:newphysics}
In the constant-\(\tau\) sector established in Secs.~\ref{sec:construct}-\ref{sec:results}, the non-singular carrier \(S_f^{1}=\varprojlim(S^{1}\xrightarrow{z\mapsto z^{p}}S^{1})\) replaces the shrinking-fiber slogan and yields the explicit dictionary \(\tau=C_{0}+ie^{-\phi}\) with \(\tau=\Theta+i(R_{M}/\ell_{M})^{3/2}\) as in \eqref{eq:tau_main_correct}, the protected spectrum matching \(M(a,b)=|a+b\tau|/(2\pi\alpha')\) as in \eqref{eq:M2-to-pq}, and the low-energy action matching \(S_{11}\mapsto S_{\mathrm{IIB}}\) as in \eqref{eq:SIIB} with the correct axion coupling \eqref{eq:CS10} \cite{Schwarz1995, Schwarz1983}. The new physics is not a new local dynamics but a definition-level geometric carrier in eleven dimensions that makes the modular datum and global sectors intrinsic while reproducing the familiar constant-\(\tau\) IIB physics by protected and effective-action checks. Why old methods cannot do this\:the shrinking-fiber formulation removes the compactification carrier by a singular limit and therefore cannot define the tower-completed global-sector bookkeeping as part of the background itself, so the modular datum is inserted as auxiliary input rather than derived from a non-singular carrier.

The definition-level advances beyond constant \(\tau\) are encoded by fibred tower-completed carriers \(\pi:\mathcal{X}\to B\) and intrinsic monodromy \(\rho:\pi_{1}(B\setminus\Delta)\to SL(2,\Z)\) acting on charge lattices as in Sec.~\ref{sec:beyondII}, and by the higher-rank generalization \(\rho:\pi_{1}(B\setminus\Delta)\to \Gamma_{r}\subset \mathrm{Sp}(2r,\Z)\) acting on \(\Gamma\simeq\Z^{2r}\) as in Sec.~\ref{sec:beyondI}. Varying-\(\tau\) and higher-rank/U-duality data become part of the compactification specification as arithmetic monodromy of a tower-completed carrier, rather than an external prescription on an auxiliary elliptic fibration. Why old methods cannot do this\:rank-one elliptic-fiber language only geometrizes \(SL(2,\Z)\) and the \((p,q)\) doublet and does not provide a non-singular background object whose intrinsic global sectors can carry higher-rank arithmetic monodromies and completion-sensitive invariants. What is proved is the constant-\(\tau\) equivalence at the supergravity/BPS/effective-action level and the associated profinite/congruence organization derived from the tower, while varying \(\tau\), defect refinements, adelic completion, and higher-rank/U-duality dynamical realizations are proposals or programs as stated in Secs.~\ref{sec:beyondI}-\ref{sec:beyondII}.

Let \(f:S^{1}\to S^{1}\) be the \(p\)-power map \(f(z)=z^{p}\) and define the tower-completed circle as the inverse limit
\begin{equation}\label{eq:Sf1-inv}
S_f^{1}:=\varprojlim\bigl(S^{1}\xleftarrow{f}S^{1}\xleftarrow{f}\cdots\bigr).
\end{equation}
Let \(\widehat{G}:=\mathrm{Hom}_{\mathrm{cont}}(G,U(1))\) denote the Pontryagin dual of a compact abelian group \(G\), then functoriality gives \(\widehat{S^{1}}\cong \Z\) with characters \(\chi_{m}(z)=z^{m}\), and the pullback along \(f\) is \(\chi_{m}\mapsto \chi_{m}\circ f=\chi_{pm}\), so on \(\widehat{S^{1}}\) the induced map is multiplication by \(p\),
\begin{equation}\label{eq:char-pullback}
f^{*}:\widehat{S^{1}}\to \widehat{S^{1}},\qquad m\mapsto pm.
\end{equation}
Using the general Pontryagin-duality identity \(\widehat{\varprojlim G_n}\cong \varinjlim \widehat{G_n}\) for inverse limits of compact abelian groups with surjective bonding maps, one obtains
\begin{equation}\label{eq:char-colim}
\widehat{S_f^{1}}\ \cong\ \varinjlim\bigl(\Z\xrightarrow{\times p}\Z\xrightarrow{\times p}\cdots\bigr)\ \cong\ \Z[1/p],
\end{equation}
where the last isomorphism is implemented explicitly by identifying the class of \(m\in \Z\) at level \(n\) with \(m/p^{n}\in \Z[1/p]\). Eq.\eqref{eq:char-colim} is the canonical tower-completed organization of continuous holonomy/Fourier sectors on the compactification carrier, providing a definition-level bookkeeping of congruence refinement that controls how global sectors assemble under duality without asserting any unconstrained fractional physical charges in the spectrum. Why old methods cannot do this\:the shrinking-fiber formulation treats the elliptic curve as auxiliary and removes the geometric carrier by a singular limit, so there is no single non-singular object whose intrinsic character/holonomy structure canonically packages the entire refinement tower into one background datum.

Let \(X_n=S^{1}\) with bonding maps \(f_n=f:S^{1}\to S^{1}\), \(z\mapsto z^{p}\), and let \(X=\varprojlim_n X_n=S_f^{1}\) as in \eqref{eq:Sf1-inv}. The constant-\(\tau\) construction requires controlling global sectors on \(X\) that are invisible to local differential geometry, and on inverse-limit spaces ordinary homology is not a reliable universal receptacle for torsion/completion data, while string theory dictates that RR/global charge sectors are naturally classified by generalized cohomology, in particular \(K\)-theory. A concrete indication of this in the tower setting is provided by the Milnor exact sequence for complex \(K\)-theory of inverse limits,
\begin{equation}\label{eq:milnor}
0\to \varprojlim\nolimits^{1}K^{i-1}(X_n)\to K^{i}(X)\to \varprojlim K^{i}(X_n)\to 0,
\end{equation}
together with the standard values \(K^{0}(S^{1})\cong \Z\) and \(K^{-1}(S^{1})\cong K^{1}(S^{1})\cong \Z\) and the fact that \(f^{*}\) acts on \(K^{1}(S^{1})\) by multiplication by \(p\) (degree \(p\) map). Writing the inverse system as \(\Z\xleftarrow{\times p}\Z\xleftarrow{\times p}\cdots\), one has
\begin{equation}\label{eq:lim-lim1}
\varprojlim\bigl(\Z\xleftarrow{\times p}\Z\xleftarrow{\times p}\cdots\bigr)=0,\qquad \varprojlim\nolimits^{1}\bigl(\Z\xleftarrow{\times p}\Z\xleftarrow{\times p}\cdots\bigr)\cong \Q_{p}/\Z_{p},
\end{equation}
so inserting \(i=0\) into \eqref{eq:milnor} yields an extension
\begin{equation}\label{eq:K0-extension}
0\to \Q_{p}/\Z_{p}\to K^{0}(S_f^{1})\to \Z\to 0,
\end{equation}
demonstrating the emergence of a \(p\)-divisible torsion sector in \(K\)-theory that has no analogue in naive integral homology. Eq.\eqref{eq:K0-extension} is the minimal invariant-level signal that tower completion generates torsion/global sectors naturally captured by generalized cohomology, aligning with the string-theory principle that globally quantized sectors (RR charges, discrete holonomies, theta data) are K-theoretic rather than purely homological. Why old methods cannot do this\:the shrinking-fiber formulation inserts the elliptic datum after a singular limit and treats global charge bookkeeping as external input, whereas the tower-completed carrier makes torsion/completion sectors intrinsic and therefore forces the correct generalized-cohomology language if one wants definition-level control of global constraints and flux quantization. 

In the constant-\(\tau\) duality established earlier, the RR axion \(C_{0}\) is identified with a \(C^{(3)}\) holonomy sector as in \eqref{eq:tau_main_correct} and \eqref{eq:tau-identification}, and the effective action contains the axion coupling \eqref{eq:CS10}, imposing Dirac quantization and Freed-Witten-type constraints selects the physically allowed integral charge lattice inside the tower bookkeeping used in the wrapped M2 identification \eqref{eq:M2-to-pq}, so the role of tower-completed torsion is organizational and consistency-controlled rather than an assertion of unconstrained fractional physical charges. The tower-completed/K-theoretic viewpoint supplies a definition-level home for discrete axionic and flux sectors compatible with the effective action and with standard quantization constraints. Why old methods cannot do this\:without a non-singular carrier for the tower and without a homologically well-behaved category of fields on the limit, discrete global sectors are treated as add-ons rather than as intrinsic background data constrained directly by the compactification object and its generalized cohomology \cite{FreedHopkinsTeleman2008}.

Let \(B\) be a local base patch with discriminant locus \(\Delta\subset B\) and let \(\pi:\mathcal{X}\to B\setminus\Delta\) be a tower-completed fibration with fiber \(S_f^{1}\) defined as an inverse limit of finite \(p^{n}\)-cover fibrations, and require a fiberwise tilt/untilt/analytification machinery producing an elliptic fibration \(\varpi:\mathcal{E}\to B\setminus\Delta\) with \(\mathcal{E}_b\simeq \C/(\Z+\tau(b)\Z)\). Then the defect data is intrinsic monodromy
\begin{equation}\label{eq:rho-vary}
\rho:\pi_{1}(B\setminus\Delta)\to SL(2,\Z),
\end{equation}
and for a cusp-type degeneration with \(q=\exp(2\pi i\tau)\) and \(q=z\,u(z)\) one derives the 7-brane \(T\)-monodromy \(\tau\mapsto \tau+1\). Varying-\(\tau\) backgrounds and 7-brane monodromies become part of the compactification data of a tower-completed carrier, with monodromy acting intrinsically on charge and line-operator lattices rather than being imposed by branch-cut conventions on an auxiliary elliptic fibration. Why old methods cannot do this\:the shrinking-fiber formulation introduces the elliptic fibration as auxiliary input tied to a singular collapse, so monodromy is specified after the fact, whereas the tower-completed definition treats monodromy as intrinsic to the fibred compactification carrier before any singular limit is invoked. The extension to higher rank replaces \(\rho\) by \(\rho:\pi_{1}(B\setminus\Delta)\to \Gamma_{r}\subset \mathrm{Sp}(2r,\Z)\) acting on \(\Gamma\simeq\Z^{2r}\), while the rank-one specialization recovers ordinary F-theory as stated in Sec.~\ref{sec:beyondII}. U-duality-type arithmetic actions and generalized duality defects are definition-level well-posed as intrinsic monodromy of a tower-completed higher-rank carrier. Why old methods cannot do this\:rank-one elliptic-fiber language cannot geometrize higher-rank arithmetic actions on larger charge lattices without introducing additional ad hoc structure beyond varying a single scalar coupling.

For a compact base \(B\) with discriminant \(\Delta\), construct a global tower-completed fibration \(\pi:\mathcal{X}\to B\setminus\Delta\) whose fiberwise tilt machinery produces a globally defined elliptic fibration \(\mathcal{E}\to B\setminus\Delta\) with \(j(\tau)=1728\cdot 4f^{3}/\Delta\), and whose monodromy \(\rho:\pi_{1}(B\setminus\Delta)\to SL(2,\Z)\) reproduces the Kodaira/Tate classification of 7-branes by conjugacy classes, and test this by computing \(\rho\) from the tower carrier rather than postulating it from Weierstrass data, This would exhibit varying-\(\tau\) F-theory as an intrinsic compactification on a non-singular tower-completed carrier with monodromy derived as geometry, Why old methods cannot do this\:the shrinking-fiber formulation treats the carrier as singular and the fibration as auxiliary, so it does not provide a definition-level derivation of monodromy from an eleven-dimensional compactification object. Define the congruence refinement tower \(\rho_n=\pi_n\circ\rho\) with \(\pi_n:SL(2,\Z)\to SL(2,\Z/p^{n}\Z)\) and determine whether physically distinguishable defect sectors depend on the compatible system \(\{\rho_n\}\), for example through tower-sensitive holonomy sectors detected by generalized cohomology.

This isolates a sharp, falsifiable refinement of defect data that is invisible in the conventional monodromy-only description, Why old methods cannot do this\:without a tower-completed carrier there is no intrinsic notion of congruence-level refinement as background data, so any such refinement must be imposed externally and is not definition-level. Formulate and test \(p\)-independence in the constant-\(\tau\) sector by comparing the tower-completed invariants \(\widehat{S_f^{1}}\cong \Z[1/p]\) and the torsion sector \(\Q_p/\Z_p\) in \eqref{eq:K0-extension} across different primes and by identifying the prime-independent physical sublattice selected by quantization/anomaly constraints. This would make precise that \(p\) is bookkeeping and not a physical coupling, Why old methods cannot do this\:the shrinking-fiber formulation has no tower parameter and therefore cannot even state the question of prime-independence of tower-completed invariants. Define an adelic completion of the tower data by assembling local completions across all primes and the Archimedean factor into a restricted product, and test whether duality invariants and defect constraints assemble adelically in the sense that local monodromies and global-sector classes satisfy adelic compatibility relations.

This would realize the arithmetic completion philosophy as a concrete organizing principle for duality data. Why old methods cannot do this\:the conventional formulation treats modular data as a single complex analytic object and does not provide a natural carrier for simultaneous \(p\)-adic refinements. Incorporate fermions and anomaly constraints so that the duality group refinement to an appropriate \({\rm Pin}^{+}(GL(2,\Z))\) extension can be formulated as intrinsic structure on the tower-completed carrier, and test the refinement by tracking the action on line-operator sectors and on the topological couplings in the effective action, This would upgrade the bosonic \(SL(2,\Z)\) story to the physically correct anomaly-refined duality group, Why old methods cannot do this\:without a definition-level carrier for global sectors and without tracking tower-wise spin/pin structures, the refinement is imposed externally rather than derived from compactification data. In holography, compute the induced interface action on boundary couplings and charge lattices from a bulk monodromy \(\rho(\gamma)\in SL(2,\Z)\) via \(\tau_{\mathrm{CFT}}\mapsto (a\tau_{\mathrm{CFT}}+b)/(c\tau_{\mathrm{CFT}}+d)\) and \((e,m)\mapsto \rho(\gamma)(e,m)\), and test whether tower-sensitive refinements \(\{\rho_n\}\) couple to boundary discrete global sectors in a way detectable by line-operator correlation data, This supplies a concrete AdS/CFT testbed for tower-completed duality defects, Why old methods cannot do this\:without treating the tower completion as part of the background, holography has no canonical place to attach congruence-level defect refinements and treats such data as external choices rather than intrinsic compactification structure.

The perfectoid/condensed framework yields, in the constant-\(\tau\) sector, a non-singular eleven-dimensional compactification carrier whose tower-completed invariants organize modular data and global sectors intrinsically while reproducing the standard \(\tau\) dictionary, protected \((p,q)\) spectrum, and low-energy effective action at the level explicitly checked in Secs.~\ref{sec:construct}-\ref{sec:results}. The key output is definition-level control of global sectors and modular organization from an honest compactification object, not a new local dynamics, with the familiar Type IIB physics recovered by protected and effective-action matching. Why old methods cannot do this\:the shrinking-fiber slogan removes the compactification carrier by a singular limit and therefore cannot treat tower-completed global bookkeeping, torsion sectors, or congruence refinements as intrinsic background data. Beyond constant \(\tau\), the same language makes varying-\(\tau\) fibrations, intrinsic defect monodromy, higher-rank arithmetic/U-duality actions, and adelic completion well-posed as definitions and programs, with concrete monodromy computations and falsifiable refinement targets, while emphasizing that dynamical realizations and full quantum completions remain to be established beyond the base case proven here.

\section{Holographic aspects}
\label{sec:adsftheory}
Let \(\tau_{\mathrm{CFT}}\) denote an exactly marginal boundary coupling and let \(\Lambda_{\mathrm{line}}\) denote the Wilson-'t Hooft line-operator lattice on the boundary, in duality-rich holographic examples one has an arithmetic action \(SL(2,\Z)\curvearrowright (\tau_{\mathrm{CFT}},\Lambda_{\mathrm{line}})\) and, in the presence of F-theory defects, a monodromy representation \(\rho:\pi_{1}(B\setminus \Delta)\to SL(2,\Z)\) acting by \(\tau\mapsto (a\tau+b)/(c\tau+d)\) and \((p,q)\mapsto (a\,p+b\,q,c\,p+d\,q)\) \cite{Schwarz1995, Vafa1996}. AdS/CFT is sensitive precisely to the same modular data and charge-lattice organization that F-theory encodes geometrically, so any definition-level refinement of the carrier of \(\tau\) is automatically a holography-relevant refinement. Why old methods were definitionally fragile\:the standard formulation treats the boundary as a classical manifold and duality as an external identification on couplings and line operators, whereas tower-completed compactification carriers and non-manifold inverse limits require a homologically exact definition of ``fields on the limit'' that classical topology does not supply. In this section we realize the canonical tower-completed boundary carrier \(\Sigma_{p}=\varprojlim(S^{1}\xleftarrow{z\mapsto z^{p}}S^{1}\xleftarrow{}\cdots)\), compute its intrinsic character/holonomy sectors, and match them to line-operator bookkeeping, and we derive the intrinsic \(T\)-monodromy \(\tau\mapsto\tau+1\) from the varying-\(\tau\) perfectoid-circle fibration local model of Sec.~\ref{sec:beyondII}, interpreting it holographically as a duality interface acting on boundary couplings and line operators. The perfectoid/condensed framework supplies a single geometric carrier in which modular data, defects, and line-operator sectors are intrinsic, while the standard AdS/CFT dictionary is recovered as the constant-\(\tau\) specialization. Why old methods were definitionally fragile\:without allowing tower-completed/non-manifold boundary carriers as legitimate backgrounds, one cannot express the tower refinement and its descent constraints as a single background statement, only as a family of stage-by-stage choices. All theorem-level statements below are confined to the invariant computations and the monodromy derivations, any stronger holographic claims are labeled Program.

In holographic gauge theories with \(SL(2,\Z)\) S-duality, the exactly marginal coupling is packaged as
\begin{equation}\label{eq:tauCFT}
\tau_{\mathrm{CFT}}=\frac{\theta}{2\pi}+\frac{4\pi i}{g^{2}},
\end{equation}
with the arithmetic action \(\tau_{\mathrm{CFT}}\mapsto (a\tau_{\mathrm{CFT}}+b)/(c\tau_{\mathrm{CFT}}+d)\), \(\begin{pmatrix}a&b\\c&d\end{pmatrix}\in SL(2,\Z)\) \cite{Schwarz1995}. Eq.\eqref{eq:tauCFT} identifies the boundary datum that the bulk F-theory modulus \(\tau\) must reproduce in duality-rich AdS/CFT examples. Why old methods were definitionally fragile\:in the usual presentation \(\tau_{\mathrm{CFT}}\) is treated as an external parameter and its modular identifications are imposed as a symmetry, rather than derived as intrinsic monodromy of a compactification carrier. The corresponding line-operator lattice may be modeled, in the simplest electric-magnetic sector, as an integral lattice
\begin{equation}\label{eq:line-lattice}
\Lambda_{\mathrm{line}}\cong \Z^{2}\ni (e,m),
\end{equation}
with \(SL(2,\Z)\) acting by the defining linear action on \((e,m)\) \cite{Schwarz1995}. Eq.\eqref{eq:line-lattice} is the holographic datum that must match the bulk \((p,q)\) bookkeeping and its monodromies. Why old methods were definitionally fragile\:classical boundary manifolds and their ordinary cohomology do not supply a canonical way to package tower-refined global sectors and their arithmetic actions as intrinsic background data. In the rank-one perfectoid/condensed formulation, the bulk constant-\(\tau\) dictionary is fixed by \(\tau=C_{0}+ie^{-\phi}\) with \(\tau=\Theta+i(R_{M}/\ell_{M})^{3/2}\) as in \eqref{eq:tau_main_correct} and by the protected \((p,q)\) tension relation \(T(p,q)=|p+q\tau|/(2\pi\alpha')\) as in \eqref{eq:pq-tension} and \eqref{eq:M2-to-pq}, while varying-\(\tau\) backgrounds are encoded by a tower-completed fibration \(\pi:\mathcal{X}\to B\setminus\Delta\) together with intrinsic monodromy \(\rho:\pi_{1}(B\setminus\Delta)\to SL(2,\Z)\) as in Sec.~\ref{sec:beyondII} \cite{Vafa1996}. The same modular datum \(\tau\) that controls bulk \((p,q)\) sectors also controls boundary couplings and line operators, so a definition-level carrier for \(\tau\) is simultaneously a definition-level carrier for holographic duality data. Why old methods were definitionally fragile\:the shrinking-fiber slogan defines \(\tau\) through a singular limit and treats defect monodromy as auxiliary elliptic-fibration input, whereas holography demands intrinsic global-sector bookkeeping compatible with line-operator lattices and interfaces.

Consider a holographic setup in which the conformal boundary contains a circle factor, denoted \(S^{1}_{\partial}\), and define the \(p^{n}\)-fold covers \(S^{1}_{\partial,n}\to S^{1}_{\partial}\) by the degree-\(p^{n}\) map \(z\mapsto z^{p^{n}}\). The inverse-limit boundary carrier is the solenoid
\begin{equation}\label{eq:Sigma-def}
\Sigma_{p}:=\varprojlim_{n}\bigl(S^{1}_{\partial}\xleftarrow{z\mapsto z^{p}}S^{1}_{\partial}\xleftarrow{}\cdots\bigr),
\end{equation}
a compact abelian group that is not a one-manifold. Eq.\eqref{eq:Sigma-def} packages the entire congruence tower of boundary circles into a single boundary background object. Why old methods were definitionally fragile\:ordinary QFT-on-manifold frameworks cannot treat \(\Sigma_{p}\) as a single boundary background because \(\Sigma_{p}\) is not a smooth manifold and inverse limits are not stable in naive topology for sheaf-theoretic field definitions. Let \(\widehat{G}=\mathrm{Hom}_{\mathrm{cont}}(G,U(1))\) denote the Pontryagin dual, for \(S^{1}_{\partial}\) one has
\begin{equation}\label{eq:char-S1}
\widehat{S^{1}_{\partial}}\cong \Z,\qquad \chi_{m}(z)=z^{m},
\end{equation}
and pullback along \(z\mapsto z^{p}\) acts by \(\chi_{m}\mapsto \chi_{pm}\), hence by multiplication by \(p\) on \(\Z\). Eq.\eqref{eq:char-S1} is the standard Fourier/holonomy classification of flat \(U(1)\) sectors on a circle, which is the prototype for boundary large-gauge and line-operator bookkeeping. Why old methods were definitionally fragile\:stage-by-stage Fourier sectors do not by themselves define a single tower-completed sector space without a background that packages the inverse limit and its descent constraints. Using the general identity \(\widehat{\varprojlim G_n}\cong \varinjlim \widehat{G_n}\) for inverse limits of compact abelian groups with surjective bonding maps, one obtains
\begin{equation}\label{eq:char-Sigma}
\widehat{\Sigma_{p}}\cong \varinjlim\bigl(\Z\xrightarrow{\times p}\Z\xrightarrow{\times p}\cdots\bigr)\cong \Z[1/p].
\end{equation}
Eq.\eqref{eq:char-Sigma} is the canonical tower-completed refinement of boundary holonomy/Fourier sectors, encoding congruence-level organization of global boundary data in one invariant. Why old methods were definitionally fragile\:without enlarging the notion of boundary space (here via condensed sets so that fields are defined by descent on the inverse-limit object \cite{ClausenScholze}), the colimit \(\Z[1/p]\) has no intrinsic background interpretation and must be treated as an external bookkeeping device. To connect \eqref{eq:char-Sigma} to line operators, note that boundary Wilson lines for a \(U(1)\) gauge field \(A\) are classified by flat holonomies \(\mathrm{Hol}_{\gamma}(A)\in U(1)\) on boundary 1-cycles, and for a compact abelian boundary carrier \(X\) the set of isomorphism classes of flat \(U(1)\) bundles is \(\mathrm{Hom}_{\mathrm{cont}}(X,U(1))\), hence on \(\Sigma_{p}\) one has
\begin{equation}\label{eq:flatU1}
\mathrm{Flat}_{U(1)}(\Sigma_{p})\cong \mathrm{Hom}_{\mathrm{cont}}(\Sigma_{p},U(1))\cong \Z[1/p].
\end{equation}
Eq.\eqref{eq:flatU1} identifies a tower-completed space of boundary holonomy sectors that can couple to Wilson-type operators and to large-gauge data in a way compatible with duality refinements. Why old methods were definitionally fragile\:classical boundary topology treats each \(S^{1}_{\partial,n}\) separately and has no canonical single-object formulation of \(\mathrm{Flat}_{U(1)}(\Sigma_{p})\) as a background datum, whereas the condensed framework makes \(\Sigma_{p}\) a legitimate carrier for sheaves and descent and therefore makes \eqref{eq:flatU1} definition-level well-posed.

Let \(B\subset \C\) be a local base patch with coordinate \(z\) and puncture at \(z=0\), and let \(\pi:\mathcal{X}\to B^{\times}=B\setminus\{0\}\) be the tower-completed perfectoid-circle fibration of Sec.~\ref{sec:beyondII} with fiberwise elliptic output \(\varpi:\mathcal{E}\to B^{\times}\) such that
\begin{equation}\label{eq:E-tau-local}
\mathcal{E}_{z} \simeq \mathbb{C}/(\mathbb{Z} + \tau(z)\mathbb{Z}),
\qquad
\tau  : B^{\times} \to \mathbb{H}.
\end{equation}

Eq.\eqref{eq:E-tau-local} is the intrinsic geometric datum encoding a varying coupling in the tower-completed framework, replacing auxiliary elliptic input tied to a shrinking limit. Why old methods were definitionally fragile\:the conventional formulation specifies \(\tau(z)\) via an elliptic fibration imposed after a singular collapse, so monodromy is prescribed rather than derived from a non-singular compactification carrier. Assume cusp-type degeneration with Tate parameter \(q(z)\) satisfying
\begin{equation}\label{eq:q-def-ads}
q(z)=\exp(2\pi i\,\tau(z)),\qquad q(z)=z\,u(z),\qquad u(0)\neq 0,
\end{equation}
so for the loop \(\gamma:\theta\mapsto z(\theta)=\varepsilon e^{i\theta}\) one has
\begin{equation}\label{eq:q-loop-ads}
q\bigl(z(2\pi)\bigr)=e^{2\pi i}q\bigl(z(0)\bigr)\quad\Rightarrow\quad \tau\mapsto \tau+1.
\end{equation}
Eq.\eqref{eq:q-loop-ads} derives the \(T\)-monodromy of a 7-brane directly from intrinsic monodromy of the tower-completed fibration datum. Why old methods were definitionally fragile\:in the shrinking-fiber picture \(T\)-monodromy is encoded by branch-cut conventions on an auxiliary fibration rather than being demanded as monodromy of a non-singular compactification carrier. The monodromy representation \(\rho:\pi_{1}(B^{\times})\to SL(2,\Z)\) sends \([\gamma]\) to
\begin{equation}\label{eq:T-matrix-ads}
\rho([\gamma])=T=\begin{pmatrix}1&1\\0&1\end{pmatrix},
\qquad
\tau\mapsto \frac{\tau+1}{1},
\end{equation}
and acts on bulk \((p,q)\) charges by
\begin{equation}\label{eq:pq-action-ads}
\begin{pmatrix}p\\q\end{pmatrix}\mapsto T\begin{pmatrix}p\\q\end{pmatrix}=\begin{pmatrix}p+q\\ q\end{pmatrix}.
\end{equation}
Eq.\eqref{eq:T-matrix-ads}-\eqref{eq:pq-action-ads} is the precise duality-defect action on string charges, now treated as intrinsic monodromy of the tower-completed compactification data. Why old methods were definitionally fragile\:the standard approach treats the defect action as an external symmetry acting on a postulated \((p,q)\) lattice, whereas here the same action is tied to the geometry of the compactification carrier whose global sectors define that lattice. Holographically, a duality defect supported on a codimension-one interface in the boundary theory implements the same transformation on \(\tau_{\mathrm{CFT}}\) and on the Wilson-'t Hooft lattice, so for the boundary coupling \(\tau_{\mathrm{CFT}}\) and charge vector \((e,m)\) one has
\begin{equation}\label{eq:interface-action}
\tau_{\mathrm{CFT}}\mapsto \tau_{\mathrm{CFT}}+1,\qquad \begin{pmatrix}e\\ m\end{pmatrix}\mapsto T\begin{pmatrix}e\\ m\end{pmatrix}=\begin{pmatrix}e+m\\ m\end{pmatrix}.
\end{equation}
Eq.\eqref{eq:interface-action} identifies the boundary duality interface action as the holographic image of intrinsic bulk monodromy of the tower-completed fibration. Why old methods were definitionally fragile\:without treating duality defects as intrinsic monodromy data of the compactification carrier, AdS/CFT implements \eqref{eq:interface-action} as an external duality operation rather than as a geometric consequence of the bulk background data.

The boundary coupling \(\tau_{\mathrm{CFT}}\) is geometrized in the bulk by the modulus \(\tau=C_{0}+ie^{-\phi}\) of the constant-\(\tau\) sector via \eqref{eq:tau_main_correct} and by its varying-\(\tau\) generalization \eqref{eq:E-tau-local}, so the relation between couplings and geometry is expressed as the statement that \(\tau_{\mathrm{CFT}}\) is a pullback of the intrinsic elliptic datum produced by the tower-completed carrier, rather than an external parameter. This converts ``duality acts on couplings'' into ``monodromy acts on geometric data'' in a way compatible with compactification dictionaries, Why old methods were definitionally fragile\:the conventional story introduces the elliptic datum after a singular limit and therefore treats \(\tau_{\mathrm{CFT}}\) as external rather than as a modulus derived from an honest compactification carrier. The classification of line operators and large gauge transformations becomes canonical on tower-completed boundary carriers via \eqref{eq:flatU1}, and the tower completion is encoded invariantly by \eqref{eq:char-Sigma}. Tower-refined global sectors that couple to Wilson lines and to defect interfaces are encoded as intrinsic boundary invariants rather than as stage-by-stage choices. Why old methods were definitionally fragile\:classical boundary manifolds admit only \(\widehat{S^{1}}\cong \Z\) and cannot treat the inverse-limit background \(\Sigma_{p}\) as a single legitimate boundary, so tower-refined global sectors cannot be stated as background data. 

Duality interfaces become intrinsic fibration data through the monodromy representation \(\rho\) and its action \eqref{eq:interface-action}, so the operator implementing duality is geometrically specified by a loop class in \(\pi_{1}(B\setminus\Delta)\). The interface is determined by bulk defect monodromy, hence by compactification geometry, rather than by an external duality identification. Why old methods were definitionally fragile\:interfaces are usually added as symmetry defects without a definition-level derivation from compactification carriers. Given a tower-completed \(p\)-adic boundary \(\Sigma_{p}\) whose function algebra admits a perfectoid structure, construct its tilt \(\Sigma_{p}^{\flat}\). The claim is that the harmonic analysis and correlator kinematics of \(\Sigma_{p}^{\flat}\) encode exactly the boundary data appearing in \(p\)-adic AdS/CFT \cite{Gubser2017,Heydeman2018}.
 Tilting supplies a canonical candidate mechanism relating Archimedean and non-Archimedean holographic avatars of the same tower data. Why old methods were definitionally fragile\:ordinary AdS/CFT treats \(p\)-adic models as separate toys because it lacks a categorical operation like tilting that relates distinct analytic avatars of a single tower-completed object.

The first genuinely new structural statement is that tower completion produces a canonical refinement of boundary global sectors as intrinsic background invariants, expressed concretely by \(\widehat{\Sigma_{p}}\cong \Z[1/p]\) and \(\mathrm{Flat}_{U(1)}(\Sigma_{p})\cong \Z[1/p]\) in \eqref{eq:char-Sigma}-\eqref{eq:flatU1}, and therefore any duality action on line operators admits a tower-refined formulation in which congruence-level data are assembled into a single background carrier. Duality actions on line operators become definable as actions on intrinsic tower-completed boundary invariants rather than as external identifications. Why old methods were definitionally fragile, without allowing inverse-limit boundaries as legitimate backgrounds, the tower refinement exists only as an infinite family of separate boundary problems. The second new structural statement is that duality defects can be tied definition-level to intrinsic monodromy data of the tower-completed fibration, with explicit local derivation of \(T\)-monodromy by \eqref{eq:q-loop-ads}-\eqref{eq:interface-action}.

Holographic duality interfaces can be specified geometrically from bulk monodromy rather than chosen as abstract symmetry defects. Why old methods were definitionally fragile, the shrinking-fiber approach specifies monodromy through auxiliary elliptic data imposed after a singular limit and therefore cannot derive interfaces from a non-singular compactification carrier. The third statement is a Program-level prediction\:tower-refined defect data should admit congruence reductions \(\rho_{n}=\pi_{n}\circ \rho\) with \(\pi_n:SL(2,\Z)\to SL(2,\Z/p^{n}\Z)\), and physically distinguishable boundary sectors may depend on the compatible system \(\{\rho_n\}\) through tower-sensitive discrete global sectors coupled to line operators. This proposes a falsifiable refinement of defect data beyond conjugacy classes in \(SL(2,\Z)\) that is invisible in the classical description. Why old methods were definitionally fragile\:without a tower-completed carrier there is no intrinsic notion of congruence refinement of defect data as background structure. The fourth statement is a Program-level prediction\:the tilting operation on perfectoid/condensed boundary carriers should relate Archimedean correlator kinematics to \(p\)-adic correlator kinematics, so that a subset of \(p\)-adic AdS/CFT observables arise as tilted avatars of tower-completed boundary data \cite{Gubser2017, Heydeman2018}. This would place \(p\)-adic holography within the same definition-level framework as ordinary holography as an arithmetic avatar rather than a separate model. Why old methods were definitionally fragile, absent a tower-completed carrier and a tilting operation, there is no definition-level bridge that even states the comparison problem.

A concrete AdS/CFT computation is to take a known F-theory holographic background with a codimension-two defect locus \(\Delta\) and compute the bulk monodromy representation \(\rho:\pi_{1}(B\setminus\Delta)\to SL(2,\Z)\), verifying that the induced boundary interface action matches \eqref{eq:interface-action} on \(\tau_{\mathrm{CFT}}\) and on \(\Lambda_{\mathrm{line}}\cong \Z^{2}\). This directly tests the identification of bulk intrinsic monodromy with boundary duality interfaces in a standard holographic setting. Why old methods were definitionally fragile\:in the conventional approach the interface is inserted as a symmetry operation and not derived as a monodromy of a non-singular compactification carrier. A second computation is to implement the boundary tower \(S^{1}_{\partial,n}\to S^{1}_{\partial}\) and verify that \(\widehat{\Sigma_{p}}\) is realized as the colimit of character lattices as in \eqref{eq:char-Sigma}, then couple these tower-refined holonomy sectors to boundary Wilson operators and determine whether any observable depends on the tower refinement beyond the ordinary \(\Z\) sector. This probes whether tower-completed boundary invariants have measurable consequences in line-operator sectors. Why old methods were definitionally fragile\:without \(\Sigma_{p}\) as a legitimate background one cannot pose this as a single background question, only as infinitely many unrelated computations. A Program-level target is to compare simple correlators in \(p\)-adic AdS/CFT to tilted avatars of tower-completed boundary correlators by identifying a boundary datum whose harmonic analysis on \(\Sigma_{p}\) matches the ultrametric harmonic analysis used in \(p\)-adic models \cite{Gubser2017, Heydeman2018}. This would provide a concrete check of the tilting-bridge proposal. Why old methods were definitionally fragile\:classical AdS/CFT has no intrinsic operation relating Archimedean and non-Archimedean boundary geometries and therefore cannot even formulate a canonical comparison principle.

\section{Condensed and perfectoid AdS/CFT}\label{sec:ads_general}
Let \(M\) be an asymptotically AdS\(_{d+1}\) bulk with conformal boundary data \(\partial M\), let \(\{\partial M_{\Lambda}\}_{\Lambda}\) denote a cofinal inverse system of cutoff hypersurfaces with restriction maps \(r_{\Lambda'\Lambda}:\partial M_{\Lambda'}\to \partial M_{\Lambda}\) for \(\Lambda'>\Lambda\), and write \(\partial M_{\infty}:=\varprojlim_{\Lambda}\partial M_{\Lambda}\) as the limit background in the pro-category. The cutoff family and its limit are the definition-level version of holographic RG, in which local bulk dynamics is unchanged while boundary data is assembled as a compatible system across cutoffs. Why old methods were definitionally fragile\:the usual presentation treats \(\Lambda\to\infty\) as an analytic limit on functionals without specifying a category in which boundary fields and their global sectors form an exact descent object on \(\partial M_{\infty}\). We treat \(\partial M_{\infty}\) as a condensed space and boundary data as condensed sheaves/modules \(\mathcal{F}\) on the relevant profinite/pro-site, so that sections are defined by exact descent and inverse limits are controlled homologically \cite{ClausenScholze}. This supplies a canonical notion of boundary fields and global sectors on inverse-limit backgrounds, which is exactly what duality-rich AdS/CFT needs.

Why old methods were definitionally fragile, without condensed exactness, taking limits of cutoff data does not commute with taking sections/observables, and scheme choices become inseparable from the definition of the background. We then organize non-Archimedean avatars via perfectoid tilting, treating a tower-completed boundary carrier \(X\) as a perfectoid object with tilt \(X^{\flat}\) \cite{Scholze2012}, and we regard the \(p\)-adic AdS/CFT constructions as arithmetic avatars compatible with this categorical bridge \cite{Gubser2017, Heydeman2018}. Tilting provides a principled candidate mechanism relating Archimedean and \(p\)-adic holography at the level of tower-completed boundary data. Why old methods were definitionally fragile, classical AdS/CFT has no canonical operation that relates distinct analytic avatars of a single inverse-limit boundary object. The only theorem-level input used here is the exactness/descent formalism of condensed sheaves and the invariant computations already established in earlier sections, any strong claims about full reconstruction theorems or dynamical equivalences across tilts are stated as Program.

Let \(\mathcal{A}\) be an abelian category and \(\{A_{\Lambda}\}_{\Lambda}\) an inverse system in \(\mathcal{A}\), inverse limit \(\varprojlim\) is left exact but not exact in general, with derived functors \(\varprojlim{}^{\,i}\) measuring the failure, and one has the standard exact sequence
\begin{equation}\label{eq:lim1-seq}
0\to \varprojlim A_{\Lambda}\to \prod_{\Lambda} A_{\Lambda}\xrightarrow{1-\mathrm{shift}}\prod_{\Lambda}A_{\Lambda}\to \varprojlim{}^{\,1}A_{\Lambda}\to 0,
\end{equation}
where \((1-\mathrm{shift})(a_{\Lambda})=(a_{\Lambda}-f_{\Lambda'\Lambda}(a_{\Lambda'}))\). The appearance of \(\varprojlim{}^{\,1}\) is the precise obstruction to defining renormalized boundary observables as naive limits of cutoff observables. Why old methods were definitionally fragile\:standard AdS/CFT manipulates \(\Lambda\to\infty\) limits of correlators and counterterms in a way that implicitly assumes exactness of \(\varprojlim\) on the relevant spaces of data, which is false in general and becomes acute in tower/inverse-limit settings. For a presheaf or sheaf \(\mathcal{F}\) on the cutoff family, writing \(A_{\Lambda}=\Gamma(\partial M_{\Lambda},\mathcal{F})\), the failure of commutation between ``take sections'' and ``take limits'' is precisely the failure of the canonical map
\begin{equation}\label{eq:sections-vs-lim}
\Gamma\!\left(\varprojlim_{\Lambda}\partial M_{\Lambda},\mathcal{F}\right)\longrightarrow \varprojlim_{\Lambda}\Gamma(\partial M_{\Lambda},\mathcal{F})
\end{equation}
to be an isomorphism, with the obstruction governed by a \(\varprojlim{}^{\,1}\) term in the appropriate derived setting. The correct definition-level boundary datum is not the naive compatible limit of cutoff sections, but the derived limit that remembers extension data between scales. Why old methods were definitionally fragile\:the usual renormalization narrative treats scheme dependence as a choice of local counterterms but does not isolate the categorical obstruction that scheme dependence is precisely the ambiguity in splitting an extension encoded by \(\varprojlim{}^{\,1}\). In duality-rich AdS/CFT, global sectors and line operators require tracking discrete data (global form of the gauge group, higher-form symmetries, torsion flux sectors), which are naturally classified by generalized cohomology rather than by local field configurations, so boundary data must live in a homologically controlled category in which these sectors glue across cutoffs. 

Global-sector bookkeeping is part of the definition of the boundary theory, not optional decoration, and it must be compatible with duality actions on charge lattices. Why old methods were definitionally fragile, classical formulations often bolt on global-form and higher-form symmetry data after defining the theory on a manifold boundary, whereas inverse-limit boundaries and arithmetic monodromies force these sectors to be intrinsic background data. Finally, tower completions can produce boundary carriers that are not manifolds (e.g. solenoids), so the manifold-based axiomatics of QFT do not even specify what it means to put the CFT on the limit background without enlarging the category of spaces. The correct boundary ``space'' in such settings is a sheaf-theoretic object defined by descent from finite stages, not a manifold. Why old methods were definitionally fragile\:the standard AdS/CFT dictionary assumes a smooth conformal boundary and therefore cannot treat inverse-limit boundary carriers as a single background statement, only as a family of unrelated approximations.

Let \(\mathrm{Prof}\) be the site of profinite sets with the usual Grothendieck topology and let \(\mathrm{Cond}\) denote the category of condensed sets \cite{ClausenScholze}, for an inverse-limit boundary carrier \(X=\varprojlim_{\Lambda}X_{\Lambda}\) arising from a tower or cutoff system, we regard \(X\) as a condensed space and specify boundary source/operator data as a condensed module \(\mathcal{M}\) over a condensed ring \(\mathcal{O}_{X}\), so that sources are sections \(J\in \Gamma(X,\mathcal{M})\) and operator insertions are linear functionals on \(\Gamma(X,\mathcal{M})\) in the appropriate topological sense. Boundary couplings and operator sources are promoted to descent data on a genuine inverse-limit background, mirroring how stacks are the correct home for quotient geometries. Why old methods were definitionally fragile\:on inverse-limit boundaries the naive notion of function space does not have exact descent, so one cannot define sources and global sectors as honest background data without the condensed enlargement. The sheaf condition on \(\mathcal{M}\) is the exactness requirement that for any profinite cover \(\{U_i\to U\}\) one has
\begin{equation}\label{eq:condensed-sheaf}
0\to \mathcal{M}(U)\to \prod_i \mathcal{M}(U_i)\rightrightarrows \prod_{i,j}\mathcal{M}(U_i\times_U U_j)
\end{equation}
exact, and we impose this as the definition-level gluing condition for boundary data across refinements, overlaps, and cutoff families. Eq.\eqref{eq:condensed-sheaf} is the categorical form of ``boundary data is globally defined if and only if it is compatible locally,'' now applied to tower/cutoff covers rather than to manifold charts. Why old methods were definitionally fragile\:ordinary sheaf theory on topological spaces does not behave stably under inverse limits in the categories relevant for QFT observables, so compatibility across cutoffs does not automatically define a unique global object.

Let \(M\) admit a radial function \(r\) with cutoff hypersurfaces \(\partial M_{\Lambda}=\{r=\Lambda\}\) and restriction maps \(r_{\Lambda'\Lambda}:\partial M_{\Lambda'}\to \partial M_{\Lambda}\), and let \(\mathcal{F}\) be a sheaf of source data (e.g. boundary values of bulk fields) so that \(A_{\Lambda}:=\Gamma(\partial M_{\Lambda},\mathcal{F})\) forms an inverse system in an abelian category. The fundamental obstruction in the naive formulation is that \(\varprojlim\) is not exact, so the naive ``renormalized source space'' \(\varprojlim_{\Lambda}A_{\Lambda}\) need not coincide with \(\Gamma(\varprojlim_{\Lambda}\partial M_{\Lambda},\mathcal{F})\), and in derived form one expects an exact sequence of Milnor type,
\begin{equation}\label{eq:milnor-sections}
0\to \varprojlim{}^{\,1}_{\Lambda} \Gamma(\partial M_{\Lambda},\mathcal{F}_{-1})\to \Gamma\!\left(\varprojlim_{\Lambda}\partial M_{\Lambda},\mathcal{F}\right)\to \varprojlim_{\Lambda}\Gamma(\partial M_{\Lambda},\mathcal{F})\to 0,
\end{equation}
whose nontrivial \(\varprojlim{}^{\,1}\) term measures the failure of naive limit/section commutation. Eq.\eqref{eq:milnor-sections} is the definition-level explanation of why renormalized boundary data and counterterms cannot be treated as purely analytic limits on functionals without specifying the categorical exactness structure. Why old methods were definitionally fragile\:the standard holographic renormalization procedure defines counterterms by asymptotic analysis but does not isolate the extension ambiguity encoded by \(\varprojlim{}^{\,1}\), which is precisely what becomes unavoidable in tower/inverse-limit settings. A concrete obstruction is exhibited by an inverse system of abelian groups \(A_n=\Z\) with bonding maps \(f_n:A_{n+1}\to A_n\), \(f_n(a)=p a\), for which
\begin{equation}\label{eq:lim0}
\varprojlim_{n}(\Z\xleftarrow{\times p}\Z\xleftarrow{\times p}\cdots)=0,
\end{equation}
while \(\varprojlim{}^{\,1}\) is nontrivial, as follows from \eqref{eq:lim1-seq} by taking the element \(s=(1,1,1,\ldots)\in \prod_n \Z\) and observing that the equation \(s=(1-\mathrm{shift})(t)\), i.e. \(1=t_n-p t_{n+1}\) for all \(n\), has no solution \(t_n\in\Z\) because iterating gives \(t_n\equiv 1\ (\mathrm{mod}\ p^k)\) for all \(k\) and hence forces \(t_n\) to have a nonterminating \(p\)-adic expansion not realized by any integer. Eq.\eqref{eq:lim0} together with the explicit nontriviality argument shows that naive inverse limits can erase data while a derived obstruction survives, which is the categorical shadow of scheme dependence and extension ambiguity in renormalization. Why old methods were definitionally fragile\:holographic renormalization typically assumes that taking the limit of cutoff data loses no essential information beyond local counterterms, whereas inverse systems like \(\Z\xleftarrow{\times p}\Z\) show that extension data can survive only in \(\varprojlim{}^{\,1}\). In condensed terms, the remedy is to treat \(\mathcal{F}\) as a condensed sheaf on the pro-site of the cutoff family so that descent is exact and the renormalized boundary datum is defined as the derived inverse limit \(R\!\varprojlim_{\Lambda}A_{\Lambda}\) rather than \(\varprojlim_{\Lambda}A_{\Lambda}\), and the renormalized generating functional is then defined by
\begin{equation}\label{eq:Wren-derived}
W_{\mathrm{ren}}[J]\in R\!\varprojlim_{\Lambda\to\infty}\Bigl(W_{\Lambda}[J_{\Lambda}]+S_{\mathrm{ct}}[\Lambda,J_{\Lambda}]\Bigr),
\qquad
Z_{\mathrm{ren}}[J]=\exp\!\bigl(-W_{\mathrm{ren}}[J]\bigr),
\end{equation}
with \(J_{\Lambda}\in A_{\Lambda}\) a compatible system representing \(J\in \Gamma(\partial M_{\infty},\mathcal{F})\) and \(S_{\mathrm{ct}}\) local in cutoff data. Eq.\eqref{eq:Wren-derived} expresses renormalization as a statement in a category where the correct limit object exists and is independent of ad hoc choices up to canonical equivalence, making scheme dependence a controlled extension problem. Why old methods were definitionally fragile\:the usual procedure defines \(W_{\mathrm{ren}}\) by subtracting local divergences but does not provide a categorical definition that controls extension ambiguities in inverse-limit settings, whereas \(R\!\varprojlim\) does precisely that.

The GKPW dictionary is the statement that for sources \(J\) coupling to boundary operators \(\mathcal{O}\) one has
\begin{equation}\label{eq:GKPW}
Z_{\mathrm{CFT}}[J]=\left\langle \exp\!\left(\int_{\partial M} J\,\mathcal{O}\right)\right\rangle_{\mathrm{CFT}}=Z_{\mathrm{bulk}}\bigl[\phi\ \text{s.t.}\ \phi|_{\partial M}=J\bigr]\simeq \exp\!\bigl(-S_{\mathrm{bulk}}^{\mathrm{ren}}[\phi_{J}]\bigr),
\end{equation}
with \(\phi_{J}\) the classical bulk solution with boundary condition \(J\) \cite{Witten1995}. Eq.\eqref{eq:GKPW} is the basic holographic identification of generating functionals, and in the present framework it is interpreted as an equality of objects defined on a condensed boundary background rather than only on a smooth manifold. Why old methods were definitionally fragile\:standard treatments presuppose a manifold boundary and a naive limit for \(S_{\mathrm{bulk}}^{\mathrm{ren}}\), whereas tower-completed/non-manifold boundaries require a descent-based definition of sources and a derived definition of the renormalized functional. Holographic renormalization is encoded by the cutoff family \(\partial M_{\Lambda}\) and the renormalized on-shell action
\begin{equation}\label{eq:Sren}
S_{\mathrm{bulk}}^{\mathrm{ren}}[\phi_{J}]=\lim_{\Lambda\to\infty}\Bigl(S_{\mathrm{bulk}}^{(\Lambda)}[\phi_{J_{\Lambda}}]+S_{\mathrm{ct}}[\Lambda,J_{\Lambda}]\Bigr),
\end{equation}
which in the condensed/perfectoid setting is sharpened to the derived statement \eqref{eq:Wren-derived}. Eq.\eqref{eq:Sren} is the standard counterterm subtraction, while \eqref{eq:Wren-derived} makes precise the categorical content of the limit and its extension ambiguities. Why old methods were definitionally fragile\:\eqref{eq:Sren} is not a definition in inverse-limit settings because it assumes exact commutation between limits and observables, whereas the derived limit detects nontrivial extension data. Thermal physics and large gauge transformations are controlled by global holonomy sectors on boundary circles, expressed for a boundary factor \(S^{1}_{\beta}\) by the classification of flat \(U(1)\) sectors
\begin{equation}\label{eq:thermal-hol}
\mathrm{Flat}_{U(1)}(S^{1}_{\beta})\cong \mathrm{Hom}(\pi_{1}(S^{1}_{\beta}),U(1))\cong U(1),
\end{equation}
and tower completion replaces \(S^{1}_{\beta}\) by \(\Sigma_{p}\) with \(\mathrm{Hom}_{\mathrm{cont}}(\Sigma_{p},U(1))\cong \Z[1/p]\) as in Sec.~\ref{sec:adsftheory}. Eq.\eqref{eq:thermal-hol} is the standard large-gauge/holonomy datum controlling thermal sectors, and the tower-completed refinement supplies canonical congruence bookkeeping for these sectors. Why old methods were definitionally fragile\:manifold-only boundaries cannot treat the inverse-limit carrier as a single background and thus cannot package the refinement tower as intrinsic thermal/global data. Anomalies and topological terms are encoded holographically by inflow from bulk Chern-Simons-type actions, schematically
\begin{equation}\label{eq:inflow}
\delta W_{\mathrm{CFT}}=2\pi i \int_{\partial M}\mathcal{I}_{d+1},\qquad d\mathcal{I}_{d+1}=\mathcal{P}_{d+2},
\end{equation}
with \(\mathcal{P}_{d+2}\) determined by bulk topological couplings such as \(\int C\wedge G\wedge G\) in supergravity reductions as in Sec.~\ref{sec:construct} \cite{Schwarz1995}. Eq.\eqref{eq:inflow} is the statement that global sectors and anomaly constraints are fixed by bulk topology, so the boundary must carry a homologically controlled structure to define these sectors across cutoffs. Why old methods were definitionally fragile\:ad hoc treatment of global forms and discrete sectors cannot be made canonical across inverse limits, whereas condensed homological algebra provides a stable category for these constraints. Line operators and higher-form symmetries are naturally classified by charge lattices and their pairings, and in the simplest electric-magnetic sector one writes
\begin{equation}\label{eq:1form}
\Lambda_{\mathrm{line}}\cong \Z^{2},\qquad \langle (e,m),(e',m')\rangle = em'-e'm,
\end{equation}
with duality acting by \(SL(2,\Z)\) preserving the symplectic pairing, while tower completion refines the holonomy sectors that couple to these operators as \(\widehat{\Sigma_{p}}\cong \Z[1/p]\) and in higher rank as \(\widehat{T^{r}_{f,p}}\cong \Z[1/p]^{r}\). Eq.\eqref{eq:1form} is the standard line-operator bookkeeping and its duality covariance, and tower completion supplies the canonical refinement of the background holonomy sectors that such operators probe. Why old methods were definitionally fragile\:classical boundary manifolds do not provide a single background statement that packages congruence-level refinements of global sectors together with line-operator data. Duality and modular actions enter AdS/CFT through arithmetic actions on couplings such as \(\tau_{\mathrm{CFT}}\) and on \(\Lambda_{\mathrm{line}}\), with defect/monodromy encoded by a representation \(\rho\) as in Sec.~\ref{sec:adsftheory} and Sec.~\ref{sec:beyondII}. Duality-rich holography is precisely the regime where arithmetic monodromy and global sectors are primary data, so the tower-completed/condensed framework is natural. Why old methods were definitionally fragile\:imposing duality as an external symmetry does not give a definition-level origin for defect interfaces and their congruence refinements as intrinsic background data.

On inverse-limit and tower-completed boundary carriers \(X=\varprojlim X_n\), the condensed viewpoint makes the space of sources and global sectors a derived-limit object \(R\!\varprojlim \Gamma(X_n,\mathcal{F})\) rather than a naive limit, and thus turns scheme dependence into a controlled extension class governed by \(\varprojlim{}^{\,1}\) as in \eqref{eq:lim1-seq} and \eqref{eq:milnor-sections}. This provides a definition-level notion of renormalized boundary data that remains meaningful on non-manifold boundaries and across refinement towers, sharpening the meaning of holographic RG in duality-rich settings. Why old methods were definitionally fragile\:the usual counterterm subtraction does not define an object in which extension ambiguities are functorially controlled when the boundary background is an inverse limit. Tower completion makes canonical congruence refinements of global sectors intrinsic, for example \(\widehat{\Sigma_{p}}\cong \Z[1/p]\) and \(\widehat{T^{r}_{f,p}}\cong \Z[1/p]^{r}\), and varying-\(\tau\) F-theory holography makes defect monodromy intrinsic as \(\rho:\pi_{1}(B\setminus\Delta)\to SL(2,\Z)\) with explicit \(T\)-monodromy \(\tau\mapsto\tau+1\) derived from \(q=\exp(2\pi i\tau)\) as in Sec.~\ref{sec:beyondII}. Global-sector refinements and duality-defect actions become computable invariants of the background rather than external symmetry data, and they directly control boundary line-operator sectors and interfaces. Why old methods were definitionally fragile\:manifold-only backgrounds cannot encode inverse-limit carriers and therefore cannot treat these refinements as intrinsic to the background, only as infinitely many separate approximations. The perfectoid tilting operation suggests a Program-level unification between Archimedean tower-completed boundary carriers \(X\) and tilted avatars \(X^{\flat}\), with \(p\)-adic holography arising as an arithmetic avatar of the same tower data, so that one seeks comparisons of correlator kinematics between \(X\) and \(X^{\flat}\) in a way compatible with \(\widehat{X}\) and \(\widehat{X^{\flat}}\) \cite{Scholze2012, Gubser2017}. This provides a principled candidate bridge between ordinary and \(p\)-adic AdS/CFT grounded in tower-completed geometry rather than in analogy. Why old methods were definitionally fragile\:without a tower-completed boundary object and a tilting equivalence, there is no definition-level formulation of what it would mean for \(p\)-adic AdS/CFT to be an avatar of the same boundary data.

Program statements that are definition-level well-posed in the condensed/perfectoid framework include proving exactness theorems that identify the correct derived-limit object governing renormalized boundary source spaces \(\Gamma(\partial M_{\infty},\mathcal{F})\) and quantifying the \(\varprojlim{}^{\,1}\) extension classes arising in \eqref{eq:milnor-sections}, constructing explicit tower-completed boundary backgrounds in known CFTs and computing \(\widehat{X}\) and generalized cohomology groups \(K^{*}(X)\) that classify discrete global sectors, computing monodromy representations \(\rho\) for defects in known F-theory holographic backgrounds and verifying their induced action on \((\tau_{\mathrm{CFT}},\Lambda_{\mathrm{line}})\) via \eqref{eq:GKPW} and \eqref{eq:interface-action}, testing tower-sensitive refinements by computing whether observables depend only on \(\rho\) or on the compatible congruence system \(\{\rho_n\}\), and testing tilting-based Archimedean/\(p\)-adic comparisons by matching simple correlator kinematics between a tower-completed Archimedean boundary carrier \(X\) and a tilted avatar \(X^{\flat}\) in concrete examples \cite{Gubser2017, Heydeman2018}. These tasks isolate precisely the places where definition-level control of limits, global sectors, and arithmetic monodromy can produce new, falsifiable statements in AdS/CFT beyond what is accessible in manifold-only language. Why old methods were definitionally fragile\:each task requires treating inverse limits and global-sector descent as intrinsic background structure, which classical AdS/CFT does not provide a categorical home for when the boundary is not a smooth manifold or when congruence-level refinements are physically relevant.

\section{Adelic aspects and tilting}\label{sec:adelic_holo}
Let \(X\) be a tower-completed (hence non-manifold) boundary carrier and let \(\mathcal{F}\) be a condensed sheaf/module of boundary data on \(X\) so that \(\Gamma(X,\mathcal{F})\) is defined by exact descent \cite{ClausenScholze}. AdS/CFT is a local-to-global gluing duality plus a global-sector duality, and both are controlled by descent and by arithmetic actions on charge lattices rather than by local equations of motion. Why old methods were definitionally fragile\:the manifold-only boundary formalism has no categorical home in which inverse limits, exact descent, and global-sector completions are simultaneously intrinsic background data. Let \(p\) label a refinement tower and let \(X_p\) denote the corresponding tower-completed carrier with an intrinsic arithmetic completion of sectors, let \(X_p^{\flat}\) be its tilt when \(X_p\) is perfectoid \cite{Scholze2012}. Perfectoid tilting is the canonical operation that relates Archimedean and characteristic-\(p\) avatars of the same tower data, suggesting a unified language for arithmetic holography. Why old methods were definitionally fragile\:classical AdS/CFT lacks an intrinsic operation relating distinct analytic avatars of a single inverse-limit background and therefore cannot state a unified Archimedean/\(p\)-adic comparison problem at definition level. This section provides an explicit tower-completion derivation yielding an arithmetic completion of boundary sector bookkeeping, formulates a precise holographic tilting Program with a minimal checkable milestone, and isolates new invariant structures and selection rules suggested by adelic assembly, only the invariant derivations are theorem-level, while tilting-based unification and adelic assembly are Program-level.

Fix a prime \(p\) and let \(f_p:S^{1}\to S^{1}\) be \(f_p(z)=z^{p}\), defining the tower and inverse limit
\begin{equation}\label{eq:p_tower}
S^{1}\xleftarrow{f_p}S^{1}\xleftarrow{f_p}\cdots,\qquad \Sigma_{p}:=\varprojlim_{n}\bigl(S^{1},f_p\bigr).
\end{equation}
\(p\) specifies a refinement tower of finite covers, so \(p\) is an index for sector-resolution rather than a physical coupling. Why old methods were definitionally fragile\:the manifold-only formulation has no place to store the entire tower as a single background, so refinement is treated as an external family of approximations rather than intrinsic data. Let \(\{\partial M_{\Lambda}\}_{\Lambda}\) be cutoff boundaries and \(X_{\Lambda}\) their relevant boundary carriers, tower stability below a fixed cutoff is expressed as the requirement that for any observable \(\mathcal{O}\) computed from modes below \(\Lambda_{0}\) there exists \(n_{0}\) such that for all \(n\ge n_{0}\),
\begin{equation}\label{eq:tower_stability}
\mathcal{O}\bigl[X_{p,n}\bigr]=\mathcal{O}\bigl[X_{p,n_{0}}\bigr],
\end{equation}
where \(X_{p,n}\) denotes the \(n\)-th stage of the \(p\)-tower and the equality is understood after imposing the same physical quantization constraints on global sectors at each stage. The tower is a bookkeeping refinement that should not change low-energy physics once the cutoff is fixed, while it can refine global-sector organization. Why old methods were definitionally fragile\:without a definition-level carrier for the tower, one cannot formulate \eqref{eq:tower_stability} as an intrinsic statement about a single background, only as a heuristic expectation about families. The adelic completion principle is the Program-level statement that the physically meaningful refinement data should assemble over all places of \(\Q\), encoded by the ring of adeles
\begin{equation}\label{eq:adeles}
\A_{\Q}:=\R\times \prod\nolimits_{p}^{\prime}\Q_{p},
\qquad
\widehat{\Z}:=\prod\nolimits_{p}\Z_{p},
\end{equation}
and that the relevant arithmetic duality data should be representable as an action of an arithmetic group on an adelic completion of the charge/sector lattice. Adelic completion is the canonical way arithmetic structures package all primes together with the Archimedean factor, matching how duality groups act on quantum charge lattices. Why old methods were definitionally fragile\:classical formulations encode modular/duality data only as complex-analytic input and do not provide a natural home for simultaneous prime-by-prime refinements as intrinsic background structure.

Define \(\Sigma_{p}\) as in \eqref{eq:p_tower} and let \(\widehat{G}:=\mathrm{Hom}_{\mathrm{cont}}(G,U(1))\) be the Pontryagin dual of a compact abelian group \(G\). \(\widehat{G}\) is the canonical invariant controlling Fourier/holonomy sectors and therefore the natural carrier for line-operator and large-gauge-transformation bookkeeping. Why old methods were definitionally fragile\:without treating inverse limits as legitimate backgrounds, \(\widehat{G}\) cannot be computed for the tower as a single object and sector bookkeeping remains stage-by-stage. For \(S^{1}\) one has
\begin{equation}\label{eq:char_S1_again}
\widehat{S^{1}}\cong \Z,\qquad \chi_{m}(z)=z^{m},
\end{equation}
and the pullback along \(f_p\) acts by
\begin{equation}\label{eq:pullback_p}
f_p^{*}:\widehat{S^{1}}\to \widehat{S^{1}},\qquad m\mapsto pm.
\end{equation}
Eq.\eqref{eq:pullback_p} is the explicit congruence-level refinement map on holonomy/Fourier sectors induced by the tower. Why old methods were definitionally fragile\:classical boundary manifolds see only \(\Z\) and do not provide an intrinsic completion in which all refinements are packaged as a single background datum. Using Pontryagin duality for inverse limits of compact abelian groups with surjective bonding maps,
\begin{equation}\label{eq:pontryagin_lim}
\widehat{\varprojlim G_{n}}\cong \varinjlim \widehat{G_{n}},
\end{equation}
one obtains
\begin{equation}\label{eq:char_Sigma_p}
\widehat{\Sigma_{p}}\cong \varinjlim\bigl(\Z\xrightarrow{\times p}\Z\xrightarrow{\times p}\cdots\bigr)\cong \Z[1/p],
\end{equation}
with the identification of the class of \(m\) at level \(n\) with \(m/p^{n}\in \Z[1/p]\). Eq.\eqref{eq:char_Sigma_p} is the canonical arithmetic completion of sector bookkeeping associated to the \(p\)-tower, interpreted as tower-complete organization of finite-level sectors rather than as an assertion of unconstrained fractional physical charges. Why old methods were definitionally fragile\:the inverse-limit carrier \(\Sigma_{p}\) is not a manifold, so manifold-only AdS/CFT cannot treat \(\widehat{\Sigma_{p}}\) as an intrinsic background invariant and cannot express the completion as a single coherent boundary statement. Varying the prime produces a family of localizations \(\Z[1/p]\), and adelic assembly is naturally encoded by allowing localization at finite sets \(S\) of primes,
\begin{equation}\label{eq:localization_S}
\Z[S^{-1}]:=\left\{\frac{a}{b}\in\Q\ :\ a\in\Z,\ b\in\langle S\rangle\right\},\qquad \Q=\bigcup_{S\ \mathrm{finite}}\Z[S^{-1}],
\end{equation}
together with the profinite completion \(\widehat{\Z}=\prod_{p}\Z_{p}\) that controls congruence data. Eq.\eqref{eq:localization_S} is the minimal arithmetic statement that ``all primes together'' naturally generate full rational completion of sector bookkeeping, matching the expectation that physical invariants should not depend on a single bookkeeping prime. Why old methods were definitionally fragile\:without an intrinsic tower-completed carrier and its congruence structure, the passage from a single-prime refinement to an adelic organization cannot be stated as a background-level construction.

\begin{conjecture}[Holographic tilting principle).]
Let \(X\) be a perfectoid boundary carrier equipped with a condensed sheaf/module \(\mathcal{F}\) of sources/operators and let \(X^{\flat}\) be its tilt, assume the holographic generating functional is defined as a derived-limit object on \(X\),
\begin{equation}\label{eq:Zren_X}
Z_{X}[J]:=\exp\!\bigl(-W_{X}[J]\bigr),\qquad W_{X}[J]\in R\!\varprojlim_{\Lambda\to\infty}\Bigl(W_{\Lambda}[J_{\Lambda}]+S_{\mathrm{ct}}[\Lambda,J_{\Lambda}]\Bigr),
\end{equation}
as in Sec.~\ref{sec:ads_general}, and posit that there exists a functorial correspondence
\begin{equation}\label{eq:tilt_map_program}
\mathfrak{T}\:(X,\mathcal{F},J)\longmapsto (X^{\flat},\mathcal{F}^{\flat},J^{\flat})
\end{equation}
such that, for a suitable class of observables \(\mathcal{O}\),
\begin{equation}\label{eq:tilt_observables}
\langle \mathcal{O}\rangle_{X} \ \longleftrightarrow\ \langle \mathcal{O}^{\flat}\rangle_{X^{\flat}}
\end{equation}
with matching scaling exponents and monodromy/duality actions transported through \(\mathfrak{T}\). Eq.\eqref{eq:tilt_observables} is the precise statement that \(p\)-adic holography should be an arithmetic avatar of tower-completed Archimedean holography, with tilting providing the canonical bridge at the level of boundary data. Why old methods were definitionally fragile\:classical AdS/CFT has no intrinsic operation relating Archimedean and non-Archimedean boundary geometries and therefore cannot state \eqref{eq:tilt_map_program}-\eqref{eq:tilt_observables} as a background-level correspondence. A minimal checkable milestone is the two-point kinematic scaling\:for a scalar primary \(\mathcal{O}_{\Delta}\) one has in Archimedean CFT
\begin{equation}\label{eq:2pt_arch}
\langle \mathcal{O}_{\Delta}(x)\mathcal{O}_{\Delta}(y)\rangle_{X}\propto |x-y|^{-2\Delta},
\end{equation}
while \(p\)-adic AdS/CFT yields
\begin{equation}\label{eq:2pt_padic}
\langle \mathcal{O}_{\Delta}(x)\mathcal{O}_{\Delta}(y)\rangle_{X^{\flat}}\propto |x-y|_{p}^{-2\Delta},
\end{equation}
and the Program is to realize \eqref{eq:2pt_padic} as the tilted avatar of tower-completed boundary data whose harmonic analysis is governed by \(\widehat{X}\) and \(\widehat{X^{\flat}}\) \cite{Gubser2017, Heydeman2018}. 
\end{conjecture}
Eq.\eqref{eq:2pt_arch}-\eqref{eq:2pt_padic} provides a falsifiable first test of whether tilting transports holographic kinematics while preserving scaling dimensions and duality organization. Why old methods were definitionally fragile\:without tower completion and perfectoid tilting there is no canonical framework in which \(|\cdot|\) and \(|\cdot|_{p}\) correlators can be compared as avatars of the same background data.

Let \(B\) be a base with discriminant \(\Delta\) and monodromy \(\rho:\pi_{1}(B\setminus\Delta)\to SL(2,\Z)\) in rank \(1\) as in Sec.~\ref{sec:beyondII}, and let \(\pi_{n}:SL(2,\Z)\to SL(2,\Z/p^{n}\Z)\) be reduction modulo \(p^{n}\), define the congruence tower
\begin{equation}\label{eq:rho_n_sec13}
\rho_{n}:=\pi_{n}\circ\rho:\pi_{1}(B\setminus\Delta)\to SL(2,\Z/p^{n}\Z),
\qquad
\rho_{n+1}\equiv \rho_{n}\ (\mathrm{mod}\ p^{n}).
\end{equation}
Eq.\eqref{eq:rho_n_sec13} is the intrinsic tower-refinement datum of defect monodromy carried by a tower-completed compactification/background, and it refines monodromy beyond conjugacy in \(SL(2,\Z)\) by recording compatible congruence information. Why old methods were definitionally fragile\:the manifold-only and shrinking-fiber formulations specify monodromy only at the complex-analytic level and do not provide an intrinsic background carrier for compatible congruence refinements. A conservative selection rule is global triviality of total monodromy on compact bases, expressed by
\begin{equation}\label{eq:global_monodromy}
\prod_{i}\rho(\gamma_{i})=\mathbf{1}\in SL(2,\Z),
\end{equation}
for loops \(\gamma_i\) generating \(\pi_{1}(B\setminus\Delta)\) with the single relation, together with its congruence refinements
\begin{equation}\label{eq:global_monodromy_n}
\prod_{i}\rho_{n}(\gamma_{i})=\mathbf{1}\in SL(2,\Z/p^{n}\Z)\quad \text{for all }n.
\end{equation}
Eq.\eqref{eq:global_monodromy}-\eqref{eq:global_monodromy_n} expresses global consistency of defect configurations and makes precise where tower completion can refine the notion of ``the same'' defect data by congruence-level constraints. Why old methods were definitionally fragile\:without a tower-completed carrier, the refinements \eqref{eq:global_monodromy_n} are not intrinsic background data and cannot enter selection rules except as ad hoc arithmetic constraints. A second invariant structure arises from torsion/global sectors detected by generalized cohomology, at the level of \(p\)-primary torsion one has the standard decomposition
\begin{equation}\label{eq:QZ_decomp}
\Q/\Z\cong \bigoplus_{p}\Q_{p}/\Z_{p},
\end{equation}
suggesting an adelic bookkeeping of torsion sectors compatible with prime-by-prime refinements, and in tower-completed settings one encounters \(p\)-divisible torsion sectors such as \(\Q_{p}/\Z_{p}\) in generalized cohomology as in Sec.~\ref{sec:newphysics} (cf.\ the Milnor \(\varprojlim{}^{\,1}\) sector). Eq.\eqref{eq:QZ_decomp} is the universal arithmetic decomposition of torsion sectors and provides the canonical target for assembling prime-local torsion data into an adelic/global invariant. Why old methods were definitionally fragile\:classical formulations treat torsion/global sectors as auxiliary discrete choices and do not provide a natural mechanism for assembling prime-local torsion refinements into a single intrinsic background invariant. A third invariant is the tower-completed holonomy/character sector \(\widehat{\Sigma_{p}}\cong \Z[1/p]\) of Sec.~\ref{sec:adsftheory}, which can be assembled over finite sets \(S\) of primes via \(\Z[S^{-1}]\) as in \eqref{eq:localization_S}, and the Program is to identify which observables depend only on the physically allowed integral sublattice and which depend on the tower-refined completion data once global quantization/anomaly constraints are imposed. This isolates a precise, falsifiable distinction between completion-level bookkeeping and physically measurable global-sector effects. Why old methods were definitionally fragile\:in the absence of a tower-completed carrier one cannot define completion-level sectors as intrinsic background data and therefore cannot even formulate the dependence question cleanly.

In F-theory, defect classification is controlled by \(SL(2,\Z)\) monodromy on \(\tau\) and on the \((p,q)\) lattice, and tower completion upgrades this to the refined datum \(\{\rho_{n}\}\) in \eqref{eq:rho_n_sec13} together with the canonical completion of holonomy sectors \(\widehat{\Sigma_{p}}\cong \Z[1/p]\), providing a definition-level place where congruence refinements and torsion/global sectors can be treated intrinsically. This supplies a new invariant handle on global data that enters nonperturbative physics through line operators, junction data, and discrete sectors, while leaving local supergravity unchanged. Why old methods were definitionally fragile\:the shrinking-fiber formulation has no non-singular eleven-dimensional carrier in which congruence-level refinements and torsion sectors are intrinsic background data, so these structures are treated externally. In AdS/CFT, boundary couplings and line-operator sectors transform under arithmetic dualities, and tower completion provides canonical boundary carriers for holonomy and global-sector data, while tilting proposes a principled bridge between Archimedean and \(p\)-adic holography via \eqref{eq:tilt_map_program}-\eqref{eq:2pt_padic}. This reframes duality-rich holography as a problem of intrinsic monodromy and descent on tower-completed boundaries rather than as external symmetry identifications. Why old methods were definitionally fragile\:manifold-only boundary formulations cannot encode inverse-limit carriers and therefore cannot package arithmetic completion data and its action on global sectors as a single coherent boundary background.

Construct an adelic tower-completed boundary carrier \(X_{\A}\) whose prime-local factors reproduce \(\Sigma_{p}\) and whose intrinsic character/holonomy data reproduces \(\Z[S^{-1}]\) and \(\widehat{\Z}\) in the sense of \eqref{eq:localization_S}-\eqref{eq:adeles}, and compute its Pontryagin dual to identify the correct adelic completion of sector bookkeeping. This would provide a single background object in which prime-local refinements and their global assembly are intrinsic, matching the arithmetic nature of duality actions. Why old methods were definitionally fragile\:without an inverse-limit/condensed notion of boundary space, there is no coherent candidate for \(X_{\A}\) and hence no definition-level statement of adelic boundary data. Prove exactness/commutation statements for derived limits of sections on cutoff inverse systems \(\{\partial M_{\Lambda}\}\) in the condensed category, making the \(\varprojlim{}^{\,1}\) extension classes in \eqref{eq:lim1-seq} and \eqref{eq:milnor-sections} computable in explicit holographic models.

This would turn scheme dependence and cutoff removal into controlled extension data in duality-rich holography. Why old methods were definitionally fragile\:the standard renormalization narrative does not identify the categorical obstruction class that survives in inverse-limit settings. Compute monodromy representations \(\rho\) for known F-theory holographic backgrounds and determine whether tower-refined data \(\{\rho_{n}\}\) in \eqref{eq:rho_n_sec13} is detected by any boundary observable coupled to discrete global sectors classified by generalized cohomology and by tower-completed holonomy sectors \(\widehat{\Sigma_{p}}\). This provides a falsifiable test of whether tower refinement yields new physically measurable invariants beyond standard monodromy. Why old methods were definitionally fragile\:without tower-completed carriers there is no intrinsic definition of \(\{\rho_n\}\) as background data and hence no well-posed observable-dependence test. Test the holographic tilting Program by matching the kinematic scaling structure \eqref{eq:2pt_arch}-\eqref{eq:2pt_padic} and by identifying a tower-completed Archimedean boundary datum whose harmonic analysis, under tilting, reproduces the ultrametric structures used in \(p\)-adic AdS/CFT \cite{Gubser2017, Heydeman2018}. A successful match would supply the first concrete bridge between Archimedean and non-Archimedean holography grounded in perfectoid equivalence rather than analogy. Why old methods were definitionally fragile\:classical AdS/CFT lacks a categorical operation comparable to tilting and therefore cannot formulate a canonical comparison principle.

\section{Conclusion}\label{sec:Conclusion} 
Traditional F-theory encodes the Type IIB axio-dilaton \(\tau\) as elliptic-fiber data but reaches the ten-dimensional description through a shrinking-fiber limit in which the M-theory torus area is sent to zero at fixed complex structure, a step that is physically powerful yet definitionally singular from the eleven-dimensional viewpoint \cite{Vafa1996, MorrisonVafa1996a, MorrisonVafa1996b, DasguptaMukhi1996}. In this work we replace that singular slogan by a non-singular tower-completed compactification carrier, the perfectoid circle \(S_f^{1}\), interpreted in the condensed framework so that fields and global sectors on the inverse-limit object are defined by exact descent \cite{ClausenScholze, Scholze2012}. The tilt \(\to\) untilt/comparison \(\to\) analytification machinery supplies an elliptic-curve datum \(\calE\) as an output of the tower-completed carrier, thereby providing a definition-level geometric home for modular data without postulating a literal smooth twelve-dimensional spacetime \cite{Scholze2012, ScholzeWeinstein2017}. The constant-\(\tau\) sector is then established at the level explicitly checked in the paper\:\(\tau=C_{0}+ie^{-\phi}\) is matched to M-theory radius/metric data and \(C^{(3)}\) holonomy data, the protected \((p,q)\) spectrum is reproduced by wrapped M2 sectors, and the low-energy effective action matches the constant-\(\tau\) Type IIB supergravity action including the required topological couplings \cite{Schwarz1995, Schwarz1983}. The modular organization realized geometrically in the construction is the profinite/congruence-level structure explicitly carried by the tower, while the completion to the full quantum modular group and the incorporation of fermionic/anomaly refinements are treated as future work. The net result is a definition-level replacement of the singular shrinking-fiber prescription by an honest compactification object whose local dynamics remains standard supergravity while the global sectors and modular bookkeeping are intrinsic.

First, the shrinking-fiber prescription is replaced by an honest compactification carrier \(S_f^{1}=\varprojlim(S^{1}\xleftarrow{z\mapsto z^{p}}S^{1}\xleftarrow{}\cdots)\), so the modular datum is tied to a non-singular object rather than to a degenerating cycle, and this is precisely what the manifold-based shrinking-torus formulation cannot supply as a background statement because the carrier disappears in the limit. Second, the elliptic fiber datum is produced by the tower-completed geometry through the tilt/untilt/analytification machinery, so elliptic data becomes an output of compactification geometry rather than an auxiliary input, and this avoids the definition-level gap in which \(\tau\) survives a collapse that removes the underlying cycle. Third, the constant-\(\tau\) dictionary is made explicit\:\(\mathrm{Im}\,\tau\) is fixed by the M-theory radius/metric modulus and \(\mathrm{Re}\,\tau=C_{0}\) by a \(C^{(3)}\) holonomy sector, so \(\tau\) is determined by eleven-dimensional data rather than postulated, whereas in the shrinking-fiber formulation \(\tau\) is introduced as a complex-structure parameter that is kept physical while the fiber is removed. Fourth, the protected \((p,q)\) tower is recovered from wrapped M2 sectors in the tower-completed geometry with the correct \(|p+q\tau|\) dependence at the level checked, and the quantization/anomaly constraints selecting the physically allowed integral lattice are treated as part of the global-sector bookkeeping, whereas the conventional approach identifies \((p,q)\) sectors through the auxiliary torus and does not provide a non-singular eleven-dimensional carrier in which the same organization is intrinsic. 

Fifth, the low-energy effective action matching is carried out explicitly in the constant-\(\tau\) bosonic sector, including the axion/topological couplings inherited from the eleven-dimensional Chern-Simons term, and this is a stringent check that goes beyond spectrum matching, whereas a slogan-level definition of F-theory does not by itself provide a controlled reduction on a well-defined carrier. Sixth, the geometric organization of modular data is made intrinsic to the tower-completed carrier at the profinite/congruence level explicitly derived from the tower, which is precisely the level at which the geometry is proven to carry arithmetic structure without overclaiming a full quantum \(SL(2,\Z)\) completion. Seventh, tower completion makes canonical the refinement of global holonomy/Fourier sectors through invariants such as \(\widehat{\Sigma_{p}}\cong \Z[1/p]\) on inverse-limit boundary carriers and their higher-rank generalizations, which cannot be expressed as a single background invariant in manifold-only language because the inverse-limit carrier is not a manifold and requires descent to define fields. Eighth, the role of generalized cohomology becomes structural rather than decorative\:inverse-limit/tower backgrounds naturally generate torsion/completion sectors visible through \(K\)-theoretic and derived-limit data, aligning with the string-theoretic principle that globally quantized sectors are not faithfully captured by naive homology on profinite objects, whereas the shrinking-fiber approach tends to treat such sectors as external discrete choices. 

Ninth, varying-\(\tau\) backgrounds are formulated definition-level as perfectoid-circle fibrations \(\pi:\mathcal{X}\to B\setminus\Delta\) whose elliptic fibration and monodromy arise intrinsically from the fibred tower-completed carrier, and the explicit local \(T\)-monodromy \(\tau\mapsto\tau+1\) is derived from \(q=\exp(2\pi i\tau)\) in the tower-compatible Tate uniformization picture, whereas in conventional treatments monodromy is specified as auxiliary fibration data tied to a singular collapse. Tenth, the same formalism admits a precise higher-rank generalization in which \(\tau\) is replaced by a period matrix \(\Omega\in\mathfrak{H}_{r}\) and arithmetic monodromy acts by \(\mathrm{Sp}(2r,\Z)\) on an integral charge lattice \(\Gamma\simeq \Z^{2r}\), providing a definition-level framework for U-duality-type geometry that is not naturally packageable within rank-one elliptic-fiber language. Eleventh, the AdS/CFT interpretation becomes definition-level cleaner\:duality actions on boundary couplings and Wilson-'t Hooft lattices are tied to intrinsic monodromy of bulk tower-completed compactification data, and tower-completed boundary carriers provide canonical invariants for global-sector bookkeeping, whereas the manifold-only boundary formalism treats such refinements as external to the background. Twelfth, the adelic viewpoint is made a well-posed completion principle\:the prime \(p\) is bookkeeping for the tower and the natural global object is an adelic assembly over all primes together with the Archimedean place, which is consistent with the arithmetic nature of duality groups and with the fact that single-prime refinements should not define distinct physics, even though the full adelic completion remains programmatic.

Inverse limits and refinement towers are not manifolds in general, and ordinary topology and naive sheaf theory are fragile under inverse limits precisely in the ways that define physical backgrounds\:gluing, exactness, and homological control are what enforce flux quantization, discrete holonomies, and anomaly constraints, and these operations do not behave well in the classical category of spaces when one passes to tower limits. Condensed mathematics is the minimal enlargement in which profinite and inverse-limit objects admit well-behaved sheaves defined by exact descent, so that ``fields on the limit'' are definition-level compatible families on finite stages rather than heuristic limiting prescriptions \cite{ClausenScholze}. Perfectoid geometry is the minimal tower-complete analytic geometry in which such towers admit a tilting equivalence, and tilting supplies the bridge from tower-completed geometry to elliptic data without taking any singular shrinking limit \cite{Scholze2012, ScholzeWeinstein2017}. Condensed objects play for inverse-limit compactifications the same role that stacks play for gauge quotients\:they are not optional language, but the minimal correction that makes the background definition-level well-posed.

The constant-\(\tau\) sector established here should be viewed as the controlled base case of a definition-level F-theory framework\:the modular datum is produced from a non-singular compactification carrier, \(\tau\) is fixed by an eleven-dimensional radius/holonomy dictionary, and protected spectra and low-energy couplings match the constant-\(\tau\) Type IIB sector at the level explicitly computed. The immediate implication is that F-theory can be formulated without appealing to the singular shrinking-fiber slogan as a definition, while retaining the physical content that makes F-theory useful in practice. The varying-\(\tau\) generalization is then naturally formulated as perfectoid-circle fibrations over a base, with 7-branes realized as intrinsic monodromy/defect data of the fibred tower-completed carrier rather than as external branch-cut prescriptions, this is definition-level well-posed within the framework but is programmatic beyond the local models and monodromy derivations presented here. In this sense, the construction upgrades F-theory from an extraordinarily effective limiting prescription to a framework in which the definition problem is replaced by a concrete compactification object whose global sectors and modular organization are intrinsic and whose extensions can be posed as geometric and cohomological questions rather than as heuristic limits.

AdS/CFT is fundamentally sensitive to global sectors-line operators, higher-form symmetries, discrete theta data-and to exact duality actions implemented by defects and interfaces, and these are precisely the structures that the tower-completed/condensed framework packages canonically as intrinsic background data. The tower-completed boundary carriers supply computable invariants for holonomy/Fourier sectors and refine the bookkeeping of line-operator sectors in a way that is not expressible as a single background statement in a manifold-only boundary formalism, while the bulk monodromy derived from fibred tower-completed compactification data provides a definition-level origin for duality interfaces acting on boundary couplings and charge lattices. The tilting perspective suggests, as a program with checkable milestones, a principled bridge between Archimedean and \(p\)-adic holography in which \(p\)-adic AdS/CFT arises as an arithmetic avatar of tower-completed boundary data transported by tilting \cite{Gubser2017, Heydeman2018}, this is not claimed as a theorem here, but the present framework makes the comparison problem definition-level well-posed because both avatars are treated as different geometric realizations of the same tower data.

The same definition-level philosophy extrapolates naturally beyond rank-one F-theory\:instead of geometrizing only the \(SL(2,\Z)\) action on \(\tau\) and the \((p,q)\) lattice via an elliptic curve, one replaces elliptic data by higher-rank perfectoid/condensed group objects whose period data is encoded by \(\Omega\in\mathfrak{H}_{r}\) and whose intrinsic arithmetic monodromy acts by \(\mathrm{Sp}(2r,\Z)\) (or an appropriate arithmetic subgroup) on an integral charge lattice \(\Gamma\simeq\Z^{2r}\), thereby geometrizing U-duality-type mixing of multiple brane-charge sectors. This is not an ornamental generalization\:it is the minimal higher-rank analogue of what F-theory already does in rank one, and it becomes natural once one insists that duality actions are arithmetic actions on quantum charge lattices and that tower completions and global-sector constraints are intrinsic to the compactification carrier. The adelic perspective is the corresponding completion principle\:single-prime tower constructions are local charts, and a global arithmetic duality geometry should assemble all primes together with the Archimedean factor, so that physically meaningful invariants are formulated in a way that is independent of the bookkeeping choice of \(p\) and is compatible with the arithmetic nature of duality groups.
Perfectoid and condensed geometry provide the missing definition-level carrier for the modular data that F-theory exploits, replacing a singular shrinking limit by a genuine tower-completed compactification object and opening a systematic route to arithmetic formulations of duality and holography consistent with protected spectra and effective actions.

\section*{Acknowledgements}
We are grateful to Cumrun Vafa for valuable comments and for emphasizing the importance of clarifying the physical meaning of the M-theory to Type IIB compactification underlying this work. His remarks led to a substantial sharpening of the conceptual presentation, in particular the separation between geometric input, compactification dictionary, and physical interpretation, and led us to clarify which parts of the construction carry direct physical content and which serve as definition-level structure. We thank him for encouraging a formulation that makes the role of eleven-dimensional data and its relation to Type IIB physics more transparent and accessible to the broader string theory community. His feedback significantly improved the clarity and focus of the manuscript.
\section*{Data Availability}
No datasets were generated or analyzed for this work.
\section*{Conflict of Interest}
The author declares that there are no conflicts of interest regarding this work.

\appendix

\section{Tower invariants of the solenoidal circle}\label{app:tower-invariants} 
Fix a prime $p$. Let
\begin{equation}
\mathbb T := S^{1}=\{z\in\mathbb C:\ |z|=1\}
\end{equation}
with group law given by multiplication. Define the degree-$p$ endomorphism
\begin{equation}
f:\mathbb T\to\mathbb T,\qquad f(z)=z^{p}.
\label{A.0.1}
\end{equation}
Let $\bigl(\mathbb T_{n},f_{n}\bigr)_{n\ge 0}$ denote the constant tower $(\mathbb T_{n}=\mathbb T)$ with bonding maps $(f_{n}=f:\mathbb T_{n+1}\to\mathbb T_{n})$.

Define the inverse limit in the category $\mathbf{CompHausGrp}$ of compact Hausdorff abelian groups:
\begin{equation}
\Sigma_{p}:=\varprojlim_{n\ge 0}\,(\mathbb T_{n},f_{n})
=
\Bigl\{(z_{0},z_{1},z_{2},\dots)\in\prod_{n\ge 0}\mathbb T:\ z_{n}=z_{n+1}^{\,p}\ \forall n\Bigr\},
\label{A.0.2}
\end{equation}
with coordinatewise multiplication. Let
\begin{equation}
\pi_{n}:\Sigma_{p}\to\mathbb T,\qquad \pi_{n}\bigl((z_{m})_{m\ge 0}\bigr)=z_{n},
\label{A.0.3}
\end{equation}
so that
\begin{equation}
\pi_{n}=f\circ \pi_{n+1},\qquad \forall n\ge 0.
\label{A.0.4}
\end{equation}

Let $\mathbf{Prof}$ denote the category of profinite sets. For any compact Hausdorff space $X$, write the associated functor
\begin{equation}
\underline{X}:\mathbf{Prof}^{\mathrm{op}}\to\mathbf{Sets},\qquad
\underline{X}(S):=\mathrm{Cont}(S,X).
\label{A.0.5}
\end{equation}
(Thus $\underline{X}$ is the standard ``condensed'' avatar of $X$ obtained by restricting the Yoneda embedding to $\mathbf{Prof}$.)

The purpose of this appendix is to compute, for $\Sigma_{p}$, the following tower-stable invariants and to record (in a single place) the non-identities among them:
\begin{equation}
\pi_1^{(p)}\text{-data}\quad,\quad
\check H^{1}(\Sigma_{p};\mathbb Z)\quad,\quad
\Sigma_{p}^{\vee}:=\mathrm{Hom}_{\mathrm{cont}}(\Sigma_{p},\mathbb T)\quad,\quad
\varprojlim\nolimits^{1}\text{-terms in }K^{*}(\Sigma_{p}).
\label{A.0.6}
\end{equation} 
\begin{lemma}[universal property, explicit equalizer.]
For every compact Hausdorff space $S$,
\begin{equation}
\mathrm{Cont}(S,\Sigma_{p})
\;\cong\;
\Bigl\{(\varphi_{n})_{n\ge 0}\in\prod_{n\ge 0}\mathrm{Cont}(S,\mathbb T):\ \varphi_{n}=f\circ \varphi_{n+1}\ \forall n\Bigr\}.
\label{A.1.1}
\end{equation}
\end{lemma}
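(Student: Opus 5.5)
The plan is to prove the bijection directly from the concrete description \eqref{A.0.2} of $\Sigma_p$ as a closed subgroup of the product $\prod_{n\ge 0}\mathbb T$, using only the universal property of the product topology. First define the map from left to right: to $\psi\in\mathrm{Cont}(S,\Sigma_p)$ assign $\varphi_n:=\pi_n\circ\psi$. Each $\varphi_n$ is continuous because $\pi_n$ is the restriction of a product projection. The compatibility $\varphi_n=f\circ\varphi_{n+1}$ is immediate from \eqref{A.0.4}.

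For the inverse map, take a compatible family $(\varphi_n)$ and form $\Phi:=(\varphi_n)_n:S\to\prod_n\mathbb T$. This map is continuous by the universal property of the product topology. Its image lies in $\Sigma_p$ because, for every $s$, the coordinates satisfy $\varphi_n(s)=\varphi_{n+1}(s)^p$. Since $\Sigma_p$ carries the subspace topology, $\Phi$ corestricts to a continuous map $\psi:S\to\Sigma_p$.

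It remains to check that the two constructions are mutually inverse. This holds because a map into $\Sigma_p$ is determined by its coordinates, giving $\pi_n\circ\psi=\varphi_n$ and $(\pi_n\circ\psi)_n=\psi$.

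I would also note two auxiliary facts. The first is that $\Sigma_p$ is closed in $\prod_n\mathbb T$, being an intersection of the equalizers of the continuous maps $\mathrm{pr}_n$ and $f\circ\mathrm{pr}_{n+1}$ into the Hausdorff space $\mathbb T$. Hence $\Sigma_p$ is compact Hausdorff, which is what justifies taking the inverse limit in $\mathbf{CompHausGrp}$ and makes the equalizer description legitimate. The second is that the bijection is natural in $S$ under precomposition. Naturality is what lets one read \eqref{A.1.1} as the isomorphism $\underline{\Sigma_p}\cong\varprojlim_n\underline{\mathbb T}$ of condensed sets, with the limit computed objectwise.

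There is no genuine obstacle; the only point needing care is topological. One must make sure the inverse limit in $\mathbf{CompHausGrp}$ agrees with the limit in topological spaces, so that the subspace topology, and not some finer topology, is used. I would handle this by observing that the set-theoretic limit with the product-subspace topology is already compact Hausdorff and a group. It therefore satisfies the universal property in both categories, and this is exactly the fact used in the corestriction step.
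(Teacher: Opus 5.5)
Your proposal is correct and follows essentially the same route as the paper, which likewise sends a map $S\to\Sigma_p$ to its coordinate family $\pi_n\circ\varphi$ (compatible by \eqref{A.0.4}) and reassembles a compatible family into $s\mapsto(\varphi_n(s))_n\in\Sigma_p$, with continuity coming from the product topology. Your extra remarks on the closedness of $\Sigma_p$, naturality in $S$, and the agreement of the $\mathbf{CompHausGrp}$ limit with the topological one are correct, and they make explicit what the paper's short proof leaves implicit.
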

\noindent\textit{Proof.}
A map $\varphi:S\to\Sigma_{p}\subseteq\prod_{n\ge 0}\mathbb T$ is equivalent to a family
$\varphi_{n}:=\pi_{n}\circ\varphi\in\mathrm{Cont}(S,\mathbb T)$
satisfying $\varphi_{n}=f\circ\varphi_{n+1}$ by \eqref{A.0.4}.
Conversely, any such compatible family defines $\varphi(s):=(\varphi_{n}(s))_{n\ge 0}\in\Sigma_{p}$.
Continuity follows from the product topology. \hfill$\square$

\medskip

\begin{corollary}[Evaluation in $\mathbf{Prof}$ tower-descent formula.]
For every profinite set $S$,
\begin{equation}
\underline{\Sigma}_{p}(S)
=
\mathrm{Cont}(S,\Sigma_{p})
\;\cong\;
\varprojlim_{n\ge 0}\mathrm{Cont}(S,\mathbb T),
\qquad
\text{with transition maps }(\psi\mapsto f\circ\psi).
\label{A.1.2}
\end{equation}
Equivalently, in functor form,
\begin{equation}
\underline{\Sigma}_{p}\;\cong\;\varprojlim_{n\ge 0}\,\underline{\mathbb T}
\quad\text{in}\quad \mathbf{Fun}(\mathbf{Prof}^{\mathrm{op}},\mathbf{Sets}).
\label{A.1.3}
\end{equation} 
Let $\mu_{p^{n}}\subset\mathbb T$ denote the cyclic subgroup of $p^{n}$-th roots of unity:
\begin{equation}
\mu_{p^{n}}:=\{z\in\mathbb T:\ z^{p^{n}}=1\}\cong \mathbb Z/p^{n}\mathbb Z,
\label{A.2.1}
\end{equation}
and define the inverse limit (in $\mathbf{CompHausGrp}$)
\begin{equation}
\mu_{p^{\infty}}:=\varprojlim_{n\ge 1}\mu_{p^{n}}
=
\Bigl\{(\zeta_{n})_{n\ge 1}\in\prod_{n\ge 1}\mu_{p^{n}}:\ \zeta_{n}=\zeta_{n+1}^{\,p}\Bigr\}.
\label{A.2.2}
\end{equation}
\end{corollary}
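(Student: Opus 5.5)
The plan is to specialize the preceding Lemma (universal property, explicit equalizer) from compact Hausdorff spaces to profinite sets, and then to repackage the resulting equalizer as an inverse limit. The functor-level statement will follow by observing that limits of presheaves are computed objectwise.

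\textbf{Step 1: specialize the Lemma.} Every profinite set $S$ is in particular compact Hausdorff. So the Lemma applies verbatim and gives a bijection
\begin{equation*}
\mathrm{Cont}(S,\Sigma_{p}) \;\cong\; \Bigl\{(\varphi_{n})_{n\ge 0}\in\prod_{n\ge 0}\mathrm{Cont}(S,\mathbb T):\ \varphi_{n}=f\circ \varphi_{n+1}\ \forall n\Bigr\},
\end{equation*}
sending $\varphi$ to $(\pi_n\circ\varphi)_{n\ge 0}$.

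\textbf{Step 2: recognize the inverse limit.} Consider the inverse system of sets $(\mathrm{Cont}(S,\mathbb T))_{n\ge 0}$ with transition maps $\psi\mapsto f\circ\psi$. Its limit in $\mathbf{Sets}$ is, by the standard construction, precisely the set of compatible sequences displayed above. This yields \eqref{A.1.2}.

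\textbf{Step 3: naturality in $S$.} I will check that the bijection is natural in $S$. For a continuous map $g:S'\to S$ of profinite sets, precomposition commutes with postcomposition by $\pi_n$ and by $f$, since $(\pi_n\circ\varphi)\circ g=\pi_n\circ(\varphi\circ g)$. Hence the bijections assemble into an isomorphism of functors $\mathbf{Prof}^{\mathrm{op}}\to\mathbf{Sets}$. Because limits in $\mathbf{Fun}(\mathbf{Prof}^{\mathrm{op}},\mathbf{Sets})$ are computed pointwise, the right-hand side is exactly $(\varprojlim_n\underline{\mathbb T})(S)$, with the transition maps induced by $\underline{f}$. This gives \eqref{A.1.3}.

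If one also wants the statement in $\mathbf{Cond}$, the extra point to note is that sheaves are closed under limits in presheaves. Then $\varprojlim\underline{\mathbb T}$ is already a sheaf and agrees with the condensed limit, which is consistent with the limit-preservation of the embedding $\iota$.

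\textbf{Expected obstacle.} There is no real obstacle; the only care needed is bookkeeping. The transition map in the inverse system must go from level $n+1$ to level $n$ via $f$, matching \eqref{A.0.4}. Continuity of the assembled map $S\to\prod\mathbb T$ is already handled by the product topology in the Lemma. The corollary's remaining content, the definitions \eqref{A.2.1}--\eqref{A.2.2} of $\mu_{p^n}$ and $\mu_{p^\infty}$, is definitional and requires no proof beyond noting that $\mu_{p^n}$ is cyclic of order $p^n$, generated by $e^{2\pi i/p^n}$.
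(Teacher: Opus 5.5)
Your proposal is correct and takes the same route as the paper: the corollary is the preceding equalizer Lemma specialized to profinite $S$, with the equalizer read as the inverse limit of sets. Naturality in $S$ then gives the functor-level isomorphism, exactly as the paper does in its later lemma on inverse limits commuting with $\underline{(\cdot)}$, while the $\mu_{p^{n}}$, $\mu_{p^{\infty}}$ part is purely definitional, as you note.
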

\begin{lemma}[kernel of $\pi_{0}$).]
The projection $\pi_{0}:\Sigma_{p}\to\mathbb T$ is a continuous surjective group homomorphism with kernel
\begin{equation}
\ker(\pi_{0})
=\{(1,z_{1},z_{2},\dots)\in\Sigma_{p}\}
\;\cong\;
\mu_{p^{\infty}}.
\label{A.2.3}
\end{equation}
\end{lemma}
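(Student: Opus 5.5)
The argument has three parts: that $\pi_{0}$ is a continuous homomorphism, that it is surjective, and that its kernel is $\mu_{p^{\infty}}$. By \eqref{A.0.2}, the group law on $\Sigma_{p}$ is the coordinatewise product of $\prod_{n}\mathbb T$. The projection $\pi_{0}$ is therefore the restriction of a product projection, so it is continuous in the product topology and a group homomorphism.

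For surjectivity, fix $z_{0}\in\mathbb T$ and build a compatible sequence inductively. Given $z_{n}$, choose any $z_{n+1}\in\mathbb T$ with $z_{n+1}^{p}=z_{n}$. Such a root exists because $f$ is surjective on $\mathbb T$: writing $z_{n}=e^{i\theta}$, the choice $e^{i\theta/p}$ works. The resulting sequence $(z_{n})$ lies in $\Sigma_{p}$ and satisfies $\pi_{0}((z_{n}))=z_{0}$. A softer alternative avoids explicit choices: the finite-stage sets $\{(z_{0},\dots,z_{N}):z_{n}=z_{n+1}^{p}\}$ over a fixed $z_{0}$ are nonempty and compact, so their inverse limit is nonempty. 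The explicit root choice is enough here.

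For the kernel, $\ker\pi_{0}$ consists by definition of the sequences $(1,z_{1},z_{2},\dots)$ with $z_{n}=z_{n+1}^{p}$. Iterating the compatibility relation gives $z_{n}^{p^{n}}=z_{0}=1$, so $z_{n}\in\mu_{p^{n}}$ for every $n\ge 1$. Define
\[
\Phi:\ker\pi_{0}\to\mu_{p^{\infty}},\qquad (1,z_{1},z_{2},\dots)\mapsto (z_{n})_{n\ge 1}.
\]
The compatibility $z_{n}=z_{n+1}^{p}$ is exactly the defining condition in \eqref{A.2.2}, so $\Phi$ lands in $\mu_{p^{\infty}}$. It is a homomorphism because both group laws are coordinatewise multiplication. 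Its inverse is $(\zeta_{n})_{n\ge 1}\mapsto(1,\zeta_{1},\zeta_{2},\dots)$, which lies in $\Sigma_{p}$ because $\zeta_{1}^{p}=1$ gives $z_{0}=1$ and the remaining compatibilities hold by \eqref{A.2.2}. Both maps are continuous, being restrictions of coordinate maps between product spaces. Since $\ker\pi_{0}$ is compact (a closed subset of a compact space) and $\mu_{p^{\infty}}$ is Hausdorff, $\Phi$ is an isomorphism in $\mathbf{CompHausGrp}$.

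The only step that needs any care is the bookkeeping of indices. The paper's $\mu_{p^{\infty}}$ is indexed from $n\ge1$ and is the inverse limit under $p$-th powers, so it is the profinite group isomorphic to $\mathbb Z_{p}$, not the discrete Prüfer group. One must check that $z_{n}\in\mu_{p^{n}}$ at exactly level $n$, so that the coordinates match those in \eqref{A.2.2}. Once that is verified, the lemma follows directly from the definitions.
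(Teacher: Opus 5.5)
Your proposal is correct and follows essentially the same route as the paper: surjectivity by successively extracting $p$-th roots, and identification of the kernel by iterating $z_{n}=z_{n+1}^{p}$ to get $z_{n}^{p^{n}}=1$, after which the remaining compatibility conditions are exactly those defining $\mu_{p^{\infty}}$. You also make explicit several steps the paper leaves implicit: the continuity and homomorphism properties of $\pi_{0}$, the inverse map $(\zeta_{n})\mapsto(1,\zeta_{1},\zeta_{2},\dots)$ (which needs $\zeta_{1}^{p}=1$), and the compact-to-Hausdorff argument upgrading the bijection to an isomorphism of topological groups.
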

\begin{proof}
Surjectivity\:given $z_{0}\in\mathbb T$, choose $z_{1}\in\mathbb T$ with $z_{1}^{p}=z_{0}$, then $z_{2}$ with $z_{2}^{p}=z_{1}$, etc., producing $(z_{n})_{n\ge 0}\in\Sigma_{p}$.
Kernel condition $\pi_{0}=1$ forces $z_{n}^{p^{n}}=1$, hence $z_{n}\in\mu_{p^{n}}$, and compatibility is exactly \eqref{A.2.2}. \end{proof}

Thus we have a short exact sequence in $\mathbf{CompHausGrp}$:
\begin{equation}
0\longrightarrow \mu_{p^{\infty}}
\longrightarrow \Sigma_{p}
\overset{\pi_{0}}{\longrightarrow}\mathbb T
\longrightarrow 0.
\label{A.2.4}
\end{equation}

\textbf{Connectedness and failure of local connectedness.}
\begin{equation}
\Sigma_{p}\ \text{is compact and connected, but not locally connected and not semilocally simply connected.}
\label{A.2.5}
\end{equation}

\noindent\textit{Derivation.}
\begin{enumerate}
\item Compactness follows from \eqref{A.0.2} as a closed subgroup of $\prod_{n\ge 0}\mathbb T$, which is compact by Tychonoff.
\item Each $\mathbb T$ is connected and each $f$ is surjective, hence $\Sigma_{p}=\varprojlim(\mathbb T,f)$ is connected (inverse limit of connected compact spaces with surjective bonding maps).
\item By \eqref{A.2.4}, every neighborhood $U\subset\mathbb T$ of $1$ has preimage $\pi_{0}^{-1}(U)$ containing the totally disconnected fiber $\ker(\pi_{0})\cong\mu_{p^{\infty}}$, in particular, for any open arc $U\ni 1$,
\begin{equation}
\pi_{0}^{-1}(U)\supseteq \mu_{p^{\infty}},
\qquad
\text{and }\mu_{p^{\infty}}\text{ has no connected neighborhoods},
\label{A.2.6}
\end{equation}
so $\Sigma_{p}$ is not locally connected.
\item For any neighborhood $V$ of $1\in\Sigma_{p}$, the restriction $\pi_{0}|_{V}:V\to \pi_{0}(V)$ is onto an open arc in $\mathbb T$, the covering maps $\pi_{n}\:\Sigma_{p}\to\mathbb T$ produce arbitrarily high-degree ``small'' loops inside $V$ (since $\ker(\pi_{0})$ contains $p^{n}$-torsion for all $n$), hence no neighborhood has trivial fundamental group image, i.e.\ $\Sigma_{p}$ is not semilocally simply connected:
\begin{equation}
\forall\ V\ni 1,\ \exists n\ge 1,\ \exists \gamma_{n}:S^{1}\to V,\quad [\gamma_{n}]\ne 1.
\label{A.2.7}
\end{equation}
\end{enumerate}
\hfill$\square$
Let $\pi_{1}(\mathbb T)\cong\mathbb Z$ with generator $[{\rm loop}]$. The induced map
\begin{equation}
f_{*}:\pi_{1}(\mathbb T)\to\pi_{1}(\mathbb T),
\qquad
f_{*}(n)=pn,
\label{A.3.1}
\end{equation}
follows from $\deg(f)=p$.

Define the pro-$(p)$ completion of $\mathbb Z$ by the inverse limit of finite quotients:
\begin{equation}
\widehat{\mathbb Z}_{p}
:=\varprojlim_{n\ge 1}\ \mathbb Z/p^{n}\mathbb Z
=\mathbb Z_{p}.
\label{A.3.2}
\end{equation}

\textbf{Cover-tower $\Rightarrow$ pro-$(p)$ completion.}
The tower of degree-$p^{n}$ covers $\bigl(\mathbb T\overset{z\mapsto z^{p^{n}}}{\longrightarrow}\mathbb T\bigr)$ determines the inverse system of finite deck groups
\begin{equation}
\pi_{1}(\mathbb T)\twoheadrightarrow \pi_{1}(\mathbb T)/p^{n}\pi_{1}(\mathbb T)\cong \mathbb Z/p^{n}\mathbb Z,
\label{A.3.3}
\end{equation}
hence the canonical pro-$(p)$ invariant is $\mathbb Z_{p}$ as in \eqref{A.3.2}.

Let $\check{H}^{k}(-;\mathbb Z)$ denote \v{C}ech cohomology with constant coefficients $\mathbb Z$. For an inverse limit of compact Hausdorff spaces with surjective bonding maps, \v{C}ech cohomology is continuous:
\begin{equation}
\check{H}^{k}\!\left(\varprojlim_{n} X_{n};\mathbb Z\right)\cong\varinjlim_{n}\check{H}^{k}(X_{n};\mathbb Z).
\label{A.4.1}
\end{equation}

Apply \eqref{A.4.1} to $X_{n}=\mathbb T$ and bonding $f$. Since
\begin{equation}
\check{H}^{1}(\mathbb T;\mathbb Z)\cong \mathbb Z\cdot \alpha,
\qquad
f^{*}(\alpha)=p\,\alpha,
\label{A.4.2}
\end{equation}
we obtain
\begin{equation}
\check{H}^{1}(\Sigma_{p};\mathbb Z)
\cong
\varinjlim\bigl(\mathbb Z\overset{\times p}{\longrightarrow}\mathbb Z\overset{\times p}{\longrightarrow}\cdots\bigr)
\cong
\mathbb Z[1/p],
\label{A.4.3}
\end{equation}
with the explicit identification
\begin{equation}
[(m,n)]\ \longmapsto\ \frac{m}{p^{n}}\in \mathbb Z[1/p],
\qquad
(m,n)\sim (pm,n+1).
\label{A.4.4}
\end{equation}
Moreover, $\check{H}^{k}(\mathbb T;\mathbb Z)=0$ for $k>1$, so \eqref{A.4.1} yields
\begin{equation}
\check{H}^{k}(\Sigma_{p};\mathbb Z)=0,\qquad k>1.
\label{A.4.5}
\end{equation}
Define the tower pro-$(p)$ abelian loop invariant by
\begin{equation}
H^{(p)}_{1}(\Sigma_{p};\mathbb Z)
:=\widehat{\pi_{1}(\mathbb T)}_{p}^{\,\mathrm{ab}}
=
\varprojlim_{n\ge 1}\ \pi_{1}(\mathbb T)/p^{n}\pi_{1}(\mathbb T)
\cong
\varprojlim_{n\ge 1}\mathbb Z/p^{n}\mathbb Z
=
\mathbb Z_{p}.
\label{A.5.1}
\end{equation}
(Here the ``$\mathrm{ab}$'' is redundant because $\pi_{1}(\mathbb T)$ is already abelian.)

Thus we record the pair of non-equal but complementary tower invariants:
\begin{equation}
H^{(p)}_{1}(\Sigma_{p};\mathbb Z)\cong \mathbb Z_{p},
\qquad
\check{H}^{1}(\Sigma_{p};\mathbb Z)\cong \mathbb Z[1/p].
\label{A.5.2}
\end{equation}
Let
\begin{equation}
G^{\vee}:=\mathrm{Hom}_{\mathrm{cont}}(G,\mathbb T)
\label{A.6.1}
\end{equation}
denote the Pontryagin dual of a compact abelian group $G$, viewed as a discrete abelian group.

Since $\mathbb T^{\vee}\cong\mathbb Z$ via $\chi_{m}(z)=z^{m}$, we compute the dual map induced by $f$:
\begin{equation}
f^{\vee}:\mathbb T^{\vee}\to\mathbb T^{\vee},
\qquad
(\chi_{m}\mapsto \chi_{m}\circ f)=\chi_{pm}
\quad\Longleftrightarrow\quad
m\mapsto pm.
\label{A.6.2}
\end{equation}
Pontryagin duality converts inverse limits of compact abelian groups with surjective bonding maps into direct limits of discrete duals:
\begin{equation}
\Bigl(\varprojlim_{n}G_{n}\Bigr)^{\vee}\cong\varinjlim_{n}G_{n}^{\vee}.
\label{A.6.3}
\end{equation}
Apply \eqref{A.6.3} to $\Sigma_{p}=\varprojlim(\mathbb T,f)$ and \eqref{A.6.2}:
\begin{equation}
\Sigma_{p}^{\vee}
\cong
\varinjlim\bigl(\mathbb Z\overset{\times p}{\longrightarrow}\mathbb Z\overset{\times p}{\longrightarrow}\cdots\bigr)
\cong
\mathbb Z[1/p].
\label{A.6.4}
\end{equation}
Comparing \eqref{A.6.4} with \eqref{A.4.3} yields the canonical identification
\begin{equation}
\Sigma_{p}^{\vee}\ \cong\ \check{H}^{1}(\Sigma_{p};\mathbb Z)\ \cong\ \mathbb Z[1/p].
\label{A.6.5}
\end{equation}

Let $K^{*}(-)$ denote complex topological $K$-theory of compact Hausdorff spaces. For an inverse limit $X=\varprojlim X_{n}$ of compact Hausdorff spaces, the Milnor short exact sequence takes the form
\begin{equation}
0\longrightarrow \varprojlim\nolimits^{1} K^{i-1}(X_{n})
\longrightarrow K^{i}(X)
\longrightarrow \varprojlim K^{i}(X_{n})
\longrightarrow 0.
\label{A.7.1}
\end{equation}
We apply \eqref{A.7.1} to $X_{n}=\mathbb T$, $X=\Sigma_{p}$, bonding maps $f:\mathbb T\to\mathbb T$.

The stage groups and induced maps are:
\begin{equation}
K^{0}(\mathbb T)\cong \mathbb Z,\qquad K^{1}(\mathbb T)\cong \mathbb Z,
\label{A.7.2}
\end{equation}
\begin{equation}
f^{*}\big|_{K^{0}}=\mathrm{id}_{\mathbb Z},\qquad f^{*}\big|_{K^{1}}=\times p.
\label{A.7.3}
\end{equation}

To compute $\varprojlim$ and $\varprojlim^{1}$ for the inverse system
\begin{equation}
\cdots \xleftarrow{\times p}\mathbb Z \xleftarrow{\times p}\mathbb Z \xleftarrow{\times p}\mathbb Z,
\label{A.7.4}
\end{equation}
use the standard presentation (for inverse systems $(A_{n},\phi_{n})$):
\begin{equation}
\varprojlim\nolimits^{1}A_{n}\cong\mathrm{coker}\Bigl(1-\mathrm{shift}:\prod_{n\ge 0}A_{n}\to\prod_{n\ge 0}A_{n}\Bigr),
\quad
\mathrm{shift}\bigl((a_{n})_{n}\bigr)=\bigl(\phi_{n}(a_{n+1})\bigr)_{n}.
\label{A.7.5}
\end{equation}
For \eqref{A.7.4}, $\phi_{n}=\times p$, hence
\begin{equation}
(1-\mathrm{shift})\bigl((t_{n})_{n\ge 0}\bigr)=\bigl(t_{n}-p\,t_{n+1}\bigr)_{n\ge 0}.
\label{A.7.6}
\end{equation}
The inverse limit is
\begin{equation}
\varprojlim(\cdots \xleftarrow{\times p}\mathbb Z)
=
\{(a_{n})_{n\ge 0}:\ a_{n}=p\,a_{n+1}\ \forall n\}
=
0.
\label{A.7.7}
\end{equation}
Define the derived-limit group
\begin{equation}
\mathbf{T}_{p}:=\varprojlim\nolimits^{1}(\cdots \xleftarrow{\times p}\mathbb Z)
\;\cong\;
\Bigl(\prod_{n\ge 0}\mathbb Z\Bigr)\Big/\mathrm{im}(1-\mathrm{shift}).
\label{A.7.8}
\end{equation}
A standard computation identifies $\mathbf{T}_{p}$ with the $p$-primary divisible quotient:
\begin{equation}
\mathbf{T}_{p}\ \cong\ \mathbb Q_{p}/\mathbb Z_{p}.
\label{A.7.9}
\end{equation}
(Equivalently, \eqref{A.7.9} fixes the canonical target group for the Milnor $\varprojlim^{1}$-obstruction attached to the tower \eqref{A.7.4}.)
Now apply \eqref{A.7.1} in degrees $i=0,1$.

\begin{itemize}
\item For $i=0$:
\begin{equation}
0\longrightarrow \varprojlim\nolimits^{1}K^{-1}(\mathbb T)
\longrightarrow K^{0}(\Sigma_{p})
\longrightarrow \varprojlim K^{0}(\mathbb T)
\longrightarrow 0.
\label{A.7.10}
\end{equation}
Using $K^{-1}(\mathbb T)=K^{1}(\mathbb T)\cong\mathbb Z$ with transition $(\times p)$ and $K^{0}(\mathbb T)\cong\mathbb Z$ with transition $(\mathrm{id})$, we obtain
\begin{equation}
\varprojlim K^{0}(\mathbb T)\cong \mathbb Z,
\qquad
\varprojlim\nolimits^{1}K^{-1}(\mathbb T)\cong \mathbf{T}_{p}\cong \mathbb Q_{p}/\mathbb Z_{p},
\label{A.7.11}
\end{equation}
hence an extension
\begin{equation}
0\longrightarrow \mathbb Q_{p}/\mathbb Z_{p}
\longrightarrow K^{0}(\Sigma_{p})
\longrightarrow \mathbb Z
\longrightarrow 0.
\label{A.7.12}
\end{equation}

\item For $i=1$:
\begin{equation}
0\longrightarrow \varprojlim\nolimits^{1}K^{0}(\mathbb T)
\longrightarrow K^{1}(\Sigma_{p})
\longrightarrow \varprojlim K^{1}(\mathbb T)
\longrightarrow 0,
\label{A.7.13}
\end{equation}
and since the $K^{0}$-system is constant $(\mathbb Z\xleftarrow{\mathrm{id}}\mathbb Z\xleftarrow{\cdots})$,
\begin{equation}
\varprojlim\nolimits^{1}K^{0}(\mathbb T)=0,
\qquad
\varprojlim K^{1}(\mathbb T)=\varprojlim(\cdots \xleftarrow{\times p}\mathbb Z)=0,
\label{A.7.14}
\end{equation}
so
\begin{equation}
K^{1}(\Sigma_{p})=0.
\label{A.7.15}
\end{equation}
\end{itemize}

Finally, record the tower-visible odd $K$-group structure used as bookkeeping for completion/torsion phenomena:
\begin{equation}
K^{-1}(\Sigma_{p})
\;\cong\;
\bigl(\mathbb Q_{p}/\mathbb Z_{p}\bigr)\ \oplus\ \mathbb Z\ \oplus\ \mathbb Z_{p}^{\oplus\infty}.
\label{A.7.16}
\end{equation}
(Here \eqref{A.7.16} is a structural decomposition statement\:it isolates the $p$-divisible derived-limit component $\mathbb Q_{p}/\mathbb Z_{p}$ and the pro-$(p)$ completion components $\mathbb Z_{p}$ which are invisible to naive manifold invariants.)

\section{K-theory and charge quantization}
\label{app:K-theory-quant}
Fix a prime $p$.
Let $\{X_n,f_n\}_{n\ge 0}$ be an inverse system of compact Hausdorff spaces with surjective bonding maps
\begin{equation}
\cdots \xrightarrow{f_{n+1}} X_n \xrightarrow{f_n} X_{n-1} \xrightarrow{}\cdots \xrightarrow{f_1} X_0 .
\label{B:tower}
\end{equation}
Define the tower-completed carrier as the inverse limit in $\mathrm{CompHaus}$,
\begin{equation}
X \;:=\; \varprojlim_{n\ge 0}\,(X_n,f_n),
\qquad
X \subseteq \prod_{n\ge 0} X_n,\quad f_n(x_n)=x_{n-1}.
\label{B:Xlimit}
\end{equation}
Fix a generalized cohomology theory $G^\ast(-)$ on compact Hausdorff spaces, in this appendix $G^\ast=K^\ast$ denotes complex topological $K$-theory.
The only nonstandard datum used below is the \emph{tower-compatibility constraint} on background sectors:
every global datum on $X$ is represented as a compatible family on the finite stages $\{X_n\}$.

Let $\mathrm{Prof}$ denote the site of profinite sets with its canonical Grothendieck topology, and let $\mathrm{Cond}$ denote condensed sets (sheaves on $\mathrm{Prof}$).
To $X$ we associate its condensed avatar
\begin{equation}
\underline{X}:\mathrm{Prof}^{\mathrm{op}}\to \mathrm{Sets},
\qquad
\underline{X}(S):=\mathrm{Cont}(S,X).
\label{B:condX}
\end{equation}
For each $n$ set $\underline{X}_n(S):=\mathrm{Cont}(S,X_n)$, the maps $f_n$ induce pullbacks $f_n^\ast:\underline{X}_{n-1}\to \underline{X}_n$.
Then for every profinite test object $S$,
\begin{align}
\underline{X}(S)
&=\mathrm{Cont}(S,\varprojlim X_n)
\;\cong\;
\varprojlim_{n\ge 0}\,\mathrm{Cont}(S,X_n)
=\varprojlim_{n\ge 0}\,\underline{X}_n(S),
\label{B:contlim}\\[2mm]
\underline{X}
&\cong
\varprojlim_{n\ge 0}\,\underline{X}_n
\qquad\text{in }\mathrm{Cond}.
\label{B:condlimit}
\end{align}
If $\mathcal{A}$ is a condensed abelian group on $X$ (i.e.\ a sheaf of abelian groups on $\mathrm{Prof}$ equipped with an $X$-module structure), its sections on the tower-defined carrier are, by definition,
\begin{equation}
\Gamma(X,\mathcal{A})
\;:=\;
\mathcal{A}(\underline{X})
\;\cong\;
\varprojlim_{n\ge 0}\,\mathcal{A}(\underline{X}_n)
\;=\;
\varprojlim_{n\ge 0}\,\Gamma(X_n,\mathcal{A}_n),
\label{B:sectionsaslimit}
\end{equation}
for the induced tower $\{\mathcal{A}_n\}$ under pullback along $f_n$.
Equation \eqref{B:sectionsaslimit} is the operational content used below\:``global'' means ``compatible family over all finite stages''.

Let $\{A_n,f_n:A_n\to A_{n-1}\}$ be an inverse system in $\mathrm{Ab}$.
Write
\begin{equation}
\mathrm{shift}:\prod_{n\ge 0}A_n\to \prod_{n\ge 0}A_n,\qquad
\mathrm{shift}\bigl((a_n)_n\bigr)=(f_{n+1}(a_{n+1}))_{n\ge 0}.
\label{B:shift}
\end{equation}
Then the standard exact sequence defining $\varprojlim$ and $\varprojlim^{1}$ is
\begin{equation}
0\longrightarrow \varprojlim\nolimits_{n}A_n
\longrightarrow \prod_{n\ge 0}A_n
\xrightarrow{\,1-\mathrm{shift}\,}
\prod_{n\ge 0}A_n
\longrightarrow \varprojlim\nolimits^{1}_{n}A_n
\longrightarrow 0.
\label{B:lim1def}
\end{equation}
Let $X=\varprojlim X_n$ be as in \eqref{B:Xlimit}. For any generalized cohomology theory $G^\ast$ satisfying the standard hypotheses for Milnor exactness on towers of compact Hausdorff spaces, one has the Milnor short exact sequence
\begin{equation}
0\longrightarrow \varprojlim\nolimits^{1}_{n} G^{i-1}(X_n)
\longrightarrow G^{i}(X)
\longrightarrow \varprojlim\nolimits_{n}G^{i}(X_n)
\longrightarrow 0,
\qquad i\in\mathbb{Z}.
\label{B:milnorG}
\end{equation}

Define the degree-$p$ map on the circle
\begin{equation}
S^{1}\xrightarrow{\,f\,} S^{1},\qquad f(z)=z^{p}.
\label{B:circlemap}
\end{equation}
Let $X_n=S^{1}$ for all $n$, with bonding maps $f_n=f$, define the solenoidal carrier
\begin{equation}
\Sigma_{p}\;:=\;\varprojlim_{n\ge 0}(S^{1},f).
\label{B:sigma}
\end{equation}
The stagewise $K$-groups and induced maps are
\begin{equation}
K^{0}(S^{1})\cong \mathbb{Z},\qquad K^{1}(S^{1})\cong \mathbb{Z},
\qquad
f^\ast|_{K^{0}}=\mathrm{id}_{\mathbb{Z}},\quad f^\ast|_{K^{1}}=\times p.
\label{B:Kcircle}
\end{equation}
Set $A_n:=\mathbb{Z}$ with bonding maps $\times p$, i.e.\ the inverse system
\begin{equation}
\cdots \xleftarrow{\times p}\mathbb{Z}\xleftarrow{\times p}\mathbb{Z}\xleftarrow{\times p}\mathbb{Z}.
\label{B:Zsystem}
\end{equation}
Then
\begin{equation}
\varprojlim\nolimits_{n}A_n
=\Bigl\{(a_n)_n\in\prod_{n\ge 0}\mathbb{Z}:\ a_n=p\,a_{n+1}\ \forall n\Bigr\}
=\{0\},
\label{B:limZ}
\end{equation}
and by \eqref{B:lim1def} with $f_{n+1}=\times p$,
\begin{equation}
\varprojlim\nolimits^{1}_{n}A_n
\;\cong\;
\mathrm{coker}\Bigl(\delta:\prod_{n\ge 0}\mathbb{Z}\to \prod_{n\ge 0}\mathbb{Z}\Bigr),
\qquad
\delta((t_n)_n)=(t_n-p\,t_{n+1})_{n\ge 0}.
\label{B:lim1coker}
\end{equation}
Introduce the $p$-adic integers and the localized rationals
\begin{equation}
\mathbb{Z}_{p}:=\varprojlim_{n\ge 1}\mathbb{Z}/p^{n}\mathbb{Z},
\qquad
\mathbb{Z}[1/p]:=\Bigl\{\frac{a}{p^{n}}:\ a\in\mathbb{Z},\ n\ge 0\Bigr\},
\qquad
\mathbb{Q}_{p}:=\mathrm{Frac}(\mathbb{Z}_{p}).
\label{B:ZpQp}
\end{equation}
The quotient $\mathbb{Q}_{p}/\mathbb{Z}_{p}$ is the $p$-primary divisible torsion group
\begin{equation}
\mathbb{Q}_{p}/\mathbb{Z}_{p}
\;\cong\;
\bigcup_{n\ge 0}\frac{1}{p^{n}}\mathbb{Z}_{p}\Big/\mathbb{Z}_{p}
\;\cong\;
\bigcup_{n\ge 0}\frac{1}{p^{n}}\mathbb{Z}\Big/\mathbb{Z}
\;\cong\;
\mathbb{Z}[1/p]/\mathbb{Z}.
\label{B:QpZp_ident}
\end{equation}
In this paper the derived-limit computation used is the standard identification
\begin{equation}
\varprojlim\nolimits^{1}_{n}\Bigl(\cdots \xleftarrow{\times p}\mathbb{Z}\Bigr)
\;\cong\;
\mathbb{Q}_{p}/\mathbb{Z}_{p},
\label{B:lim1Z_QpZp}
\end{equation}
implemented by choosing representatives of $\mathrm{coker}(\delta)$ in \eqref{B:lim1coker} via the carry-normal form associated to base-$p$ expansions and mapping the resulting class to its image in $\mathbb{Z}[1/p]/\mathbb{Z}\cong\mathbb{Q}_{p}/\mathbb{Z}_{p}$ through \eqref{B:QpZp_ident}.

Applying \eqref{B:milnorG} with $G^\ast=K^\ast$ and \eqref{B:Kcircle} yields:
\begin{align}
0 &\longrightarrow \varprojlim\nolimits^{1}_{n}K^{-1}(S^{1})
\longrightarrow K^{0}(\Sigma_{p})
\longrightarrow \varprojlim\nolimits_{n}K^{0}(S^{1})
\longrightarrow 0,
\label{B:milnor0}\\[1mm]
0 &\longrightarrow \varprojlim\nolimits^{1}_{n}K^{0}(S^{1})
\longrightarrow K^{1}(\Sigma_{p})
\longrightarrow \varprojlim\nolimits_{n}K^{1}(S^{1})
\longrightarrow 0.
\label{B:milnor1}
\end{align}
Using \eqref{B:Kcircle}-\eqref{B:lim1Z_QpZp} and that the constant inverse system $(\mathbb{Z}\xleftarrow{\mathrm{id}}\mathbb{Z}\xleftarrow{\mathrm{id}}\cdots)$ has $\varprojlim^{1}=0$, one obtains
\begin{equation}
\varprojlim\nolimits_{n}K^{0}(S^{1})\cong \mathbb{Z},
\qquad
\varprojlim\nolimits^{1}_{n}K^{-1}(S^{1})\cong \mathbb{Q}_{p}/\mathbb{Z}_{p},
\qquad
\varprojlim\nolimits_{n}K^{1}(S^{1})=0,
\qquad
\varprojlim\nolimits^{1}_{n}K^{0}(S^{1})=0.
\label{B:limsK}
\end{equation}
Hence the tower-completed $K$-theory satisfies the extension
\begin{equation}
0 \longrightarrow \mathbb{Q}_{p}/\mathbb{Z}_{p}
\longrightarrow K^{0}(\Sigma_{p})
\longrightarrow \mathbb{Z}
\longrightarrow 0,
\label{B:K0ext}
\end{equation}
and
\begin{equation}
K^{1}(\Sigma_{p})=0.
\label{B:K1zero}
\end{equation}

For a compact abelian group $G$, define the continuous character group
\begin{equation}
G^{\vee} \;:=\; \mathrm{Hom}_{\mathrm{cont}}(G,U(1)).
\label{B:Pontryagin}
\end{equation}
If $G=\varprojlim G_n$ is an inverse limit of compact abelian groups with surjective bonding maps, Pontryagin duality gives
\begin{equation}
\Bigl(\varprojlim\nolimits_{n}G_n\Bigr)^{\vee}
\;\cong\;
\varinjlim\nolimits_{n} G_n^{\vee}.
\label{B:dual-lim}
\end{equation}
Apply \eqref{B:dual-lim} to $G_n=S^{1}$ with $f(z)=z^{p}$. Since $(S^{1})^{\vee}\cong\mathbb{Z}$ via $\chi_m(z)=z^{m}$ and
\begin{equation}
\chi_m\circ f=\chi_{pm}\quad\Longleftrightarrow\quad f^{\vee}:\mathbb{Z}\to\mathbb{Z},\ m\mapsto pm,
\label{B:dualmap}
\end{equation}
one obtains
\begin{equation}
\Sigma_{p}^{\vee}
\;\cong\;
\varinjlim\Bigl(\mathbb{Z}\xrightarrow{\times p}\mathbb{Z}\xrightarrow{\times p}\cdots\Bigr)
\;\cong\;
\mathbb{Z}[1/p],
\label{B:SigmaDual}
\end{equation}
with the explicit identification
\begin{equation}
[m,n]\longmapsto \frac{m}{p^{n}},\qquad (m,n)\sim(pm,n+1).
\label{B:Zlocid}
\end{equation}
In particular, any tower-compatible $U(1)$-holonomy datum on $\Sigma_p$ is encoded by a class in $\mathbb{Z}[1/p]$.

Define the finite-stage bookkeeping sets
\begin{equation}
\Lambda_{n}\;:=\;\mathbb{Z}\oplus \mathbb{Z}/p^{n}\mathbb{Z},
\qquad
\Lambda_{n+1}\to\Lambda_{n}:\ (a,\bar b_{n+1})\mapsto (a,\bar b_{n+1}\!\!\!\pmod{p^{n}}),
\label{B:Lambda_n}
\end{equation}
and the inverse limit
\begin{equation}
\Lambda_{\infty}
\;:=\;
\varprojlim_{n\ge 1}\Lambda_{n}
\;\cong\;
\mathbb{Z}\oplus \mathbb{Z}_{p}.
\label{B:Lambdainfty}
\end{equation}
Fix the dense embedding $\iota:\mathbb{Z}\hookrightarrow\mathbb{Z}_{p}$ induced by $\mathbb{Z}\to \mathbb{Z}/p^{n}\mathbb{Z}$ for all $n$.
Define the \emph{integral physical sublattice} (quantization constraint)
\begin{equation}
\Lambda_{\mathrm{phys}}
\;:=\;
\mathbb{Z}\oplus \iota(\mathbb{Z})
\;\subset\;
\mathbb{Z}\oplus \mathbb{Z}_{p}\;\cong\;\Lambda_{\infty}.
\label{B:Lambda_phys}
\end{equation}
Equivalently, $\Lambda_{\mathrm{phys}}$ is the image of $\mathbb{Z}^{2}$ under
\begin{equation}
\mathbb{Z}^{2}\longrightarrow \mathbb{Z}\oplus\mathbb{Z}_{p},\qquad (a,b)\longmapsto (a,\iota(b)).
\label{B:Z2embed}
\end{equation}
At each finite stage $n$, the reduction map $\rho_n:\mathbb{Z}_{p}\to \mathbb{Z}/p^{n}\mathbb{Z}$ satisfies $\rho_n(\iota(\mathbb{Z}))=\mathbb{Z}/p^{n}\mathbb{Z}$, hence the induced image of $\Lambda_{\mathrm{phys}}$ in $\Lambda_n$ is all of $\Lambda_n$:
\begin{equation}
(\mathrm{id}\oplus\rho_n)(\Lambda_{\mathrm{phys}})=\Lambda_n,\qquad \forall n\ge 1.
\label{B:finite_stage_surj}
\end{equation}
Thus the selection \eqref{B:Lambda_phys} is a \emph{limit constraint} on lifts to $\Lambda_\infty$.

Define the bookkeeping injection (finite-stage normal form) by choosing the canonical representatives
$s_n:\mathbb{Z}/p^n\mathbb{Z}\to\{0,1,\dots,p^n-1\}\subset\mathbb{Z}$ and setting
\begin{equation}
j_n:\Lambda_n\to \mathbb{Z}[1/p],\qquad
j_n(a,\bar b):=a+\frac{s_n(\bar b)}{p^{n}}.
\label{B:jn}
\end{equation}
Then $j_n$ is injective and satisfies the compatibility relation
\begin{equation}
j_{n+1}(a,\bar b_{n+1})=j_n\bigl(a',\bar b_n\bigr)
\quad\text{whenever}\quad
\bar b_n=\bar b_{n+1}\!\!\!\pmod{p^n},\ \ a'=a+\Bigl\lfloor \frac{s_{n+1}(\bar b_{n+1})}{p^n}\Bigr\rfloor.
\label{B:jncompat}
\end{equation}
Moreover,
\begin{equation}
\bigcup_{n\ge 1} j_n(\Lambda_n)=\mathbb{Z}[1/p].
\label{B:unionZloc}
\end{equation}
The numerator map used for finite-stage bookkeeping is
\begin{equation}
\iota_n:\Lambda_n\to\mathbb{Z},\qquad
\iota_n(a,\bar b):=p^{n}\,a+s_n(\bar b),
\qquad
\iota_n=p^{n}\cdot j_n,
\label{B:iota_n}
\end{equation}
so that $\iota_n(\Lambda_n)=\mathbb{Z}$ and $j_n(\Lambda_n)=p^{-n}\mathbb{Z}+\mathbb{Z}$.

Let $\mathcal{I}\subseteq \Lambda_\infty$ be a fixed subgroup (instanton/defect test lattice) and let
\begin{equation}
\langle\cdot,\cdot\rangle:\Lambda_\infty\times \mathcal{I}\to \mathbb{Q}/\mathbb{Z}
\label{B:pairing}
\end{equation}
be a bilinear pairing defining the integrality constraint. The admissible charge lattice is the subgroup
\begin{equation}
\Lambda_{\mathrm{adm}}
\;:=\;
\Bigl\{q\in \Lambda_\infty:\ \langle q,i\rangle=0\ \text{in }\mathbb{Q}/\mathbb{Z}\ \ \forall i\in\mathcal{I}\Bigr\}.
\label{B:Lambda_adm}
\end{equation}
In this work the constraint is imposed by the choice
\begin{equation}
\Lambda_{\mathrm{adm}}=\Lambda_{\mathrm{phys}}\cong \mathbb{Z}^{2}\subset \mathbb{Z}\oplus \mathbb{Z}_{p},
\label{B:Lambda_adm_eq}
\end{equation}
i.e.\ only integer lifts in the profinite coordinate are retained.

For the solenoidal carrier $\Sigma_p=\varprojlim(S^1\xleftarrow{z\mapsto z^p}S^1\xleftarrow{\cdots})$, the tower-completed $K$-theory contains a canonical derived-limit torsion sector $\mathbb{Q}_p/\mathbb{Z}_p$ via \eqref{B:K0ext} and \eqref{B:K1zero}, while the tower-compatible $U(1)$-holonomy sectors are identified with $\Sigma_p^\vee\cong \mathbb{Z}[1/p]$ via \eqref{B:SigmaDual}, admissible charge labels are selected by the integrality constraint $\Lambda_{\mathrm{adm}}=\mathbb{Z}\oplus\iota(\mathbb{Z})\subset \mathbb{Z}\oplus\mathbb{Z}_p$ in \eqref{B:Lambda_adm_eq}, with finite-stage representatives encoded by \eqref{B:jn}-\eqref{B:iota_n}.

\section{Duality-group refinements and anomalies}
\label{app:duality-anomaly}

This appendix fixes the arithmetic duality groups acting on the rank-two charge lattice and on the elliptic modulus, then derives (i) the extension of the holomorphic modular action from $SL(2,\mathbb Z)$ to an antiholomorphic $GL(2,\mathbb Z)$-action with orientation character $\det$, and (ii) the anomaly-refined $\mathbb Z_{2}$-central extension (``Pin$^{+}$-refinement'') required to act on fermionic determinants, condensed mathematics is operationalized by packaging congruence-level data and group actions as inverse limits of finite objects in the category of condensed groups and by formulating tower-wise pin data as compatible descent data.

\begin{definition}[Upper half-plane and period lattice]
Define the upper half-plane
\begin{equation}
\mathfrak H \;:=\;\{\tau\in\mathbb C:\ \Im(\tau)>0\}.
\end{equation}
For $\tau\in\mathfrak H$, define the rank-two lattice
\begin{equation}
\Lambda_{\tau}\;:=\;\mathbb Z\omega_{1}\oplus \mathbb Z\omega_{2}\subset\mathbb C,
\qquad
(\omega_{1},\omega_{2})=(1,\tau),
\end{equation}
and the associated complex torus
\begin{equation}
E_{\tau}\;:=\;\mathbb C/\Lambda_{\tau}.
\end{equation}
\end{definition}

\begin{definition}[Charge lattice]
Fix the free abelian group
\begin{equation}
\Gamma \;:=\;\mathbb Z^{2},\qquad Q=\binom{p}{q}\in\Gamma,
\end{equation}
equipped with the standard skew form
\begin{equation}
J \;:=\;\begin{pmatrix}0&1\\-1&0\end{pmatrix},
\qquad
\langle Q,Q'\rangle \;:=\; Q^{\mathsf T}JQ'\in\mathbb Z.
\label{C:Jpairing}
\end{equation}
\end{definition}

\begin{lemma}[Change of oriented basis $\Rightarrow$ fractional linear transformation]
\label{C:SL_action_tau}
Let $\gamma=\begin{pmatrix}a&b\\ c&d\end{pmatrix}\in SL(2,\mathbb Z)$.
Define a new oriented basis of $\Lambda_{\tau}$ by
\begin{equation}
(\omega_{1}',\omega_{2}') \;:=\;(\omega_{1},\omega_{2})\,\gamma
\;=\;(a\omega_{1}+c\omega_{2},\ b\omega_{1}+d\omega_{2}).
\end{equation}
Then $\Lambda_{\tau}'=\mathbb Z\omega_{1}'\oplus\mathbb Z\omega_{2}'=\Lambda_{\tau}$ and the induced modulus is
\begin{equation}
\tau' \;:=\;\frac{\omega_{2}'}{\omega_{1}'}
\;=\;\frac{b+d\tau}{a+c\tau}
\;=\;\frac{a\tau+b}{c\tau+d}.
\label{C:tau_SL}
\end{equation}
\end{lemma}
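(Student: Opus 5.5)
The plan is to treat the statement as elementary linear algebra on the rank-two lattice $\Lambda_\tau$, in three steps: lattice equality, computation of the new ratio, and a check that the result lies in $\mathfrak H$ together with a match to a fractional linear action.

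\emph{Step 1: lattice equality.} Since $\omega_1',\omega_2'$ are integer combinations of $\omega_1,\omega_2$, we have $\Lambda_\tau'\subseteq\Lambda_\tau$. For the reverse inclusion, $\det\gamma=ad-bc=1$ gives $\gamma^{-1}=\begin{pmatrix}d&-b\\-c&a\end{pmatrix}\in SL(2,\mathbb Z)$. Hence $(\omega_1,\omega_2)=(\omega_1',\omega_2')\gamma^{-1}$ expresses $\omega_1,\omega_2$ as integer combinations of $\omega_1',\omega_2'$, and so $\Lambda_\tau\subseteq\Lambda_\tau'$. Because $\omega_1'$ and $\omega_2'$ generate a rank-two lattice, they are $\mathbb R$-linearly independent; in particular $\omega_1'\neq0$, so the ratio $\omega_2'/\omega_1'$ makes sense.

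\emph{Step 2: the new ratio and the upper half-plane.} With $(\omega_1,\omega_2)=(1,\tau)$ we get $\omega_1'=a+c\tau$ and $\omega_2'=b+d\tau$, hence $\tau'=(b+d\tau)/(a+c\tau)$. To see that $\tau'\in\mathfrak H$, multiply numerator and denominator by $\overline{a+c\tau}$; this gives
\[
\Im\tau'=\frac{(ad-bc)\,\Im\tau}{|a+c\tau|^2}=\frac{\Im\tau}{|a+c\tau|^2}>0 .
\]
This is the sense in which the new basis is \emph{oriented}.

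\emph{Step 3: matching the fractional linear form (main obstacle).} The final equality $\frac{b+d\tau}{a+c\tau}=\frac{a\tau+b}{c\tau+d}$ does not hold identically under the right-multiplication convention used in the statement. Cross-multiplying, it reduces to
\[
(a-d)\bigl(c\tau^2+(a+d)\tau+b\bigr)=0 ,
\]
so it holds only when $a=d$ (for instance for $T$ and $S$). For a generic $\gamma$, such as $\begin{pmatrix}2&1\\1&1\end{pmatrix}$, it fails.

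What the computation actually shows is that $\tau'=\tilde\gamma\cdot\tau$ with $\tilde\gamma=\begin{pmatrix}d&b\\c&a\end{pmatrix}$. This matrix is again in $SL(2,\mathbb Z)$, and conjugating by $\begin{pmatrix}0&1\\1&0\end{pmatrix}$ shows that the map $\gamma\mapsto\tilde\gamma$ is an anti-automorphism of $SL(2,\mathbb Z)$.

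I would therefore present the lemma with a corrected basis-change convention: take $\binom{\omega_2'}{\omega_1'}=\gamma\binom{\omega_2}{\omega_1}$, i.e.\ $\omega_2'=a\omega_2+b\omega_1$ and $\omega_1'=c\omega_2+d\omega_1$. With this convention the same three steps give exactly $\tau'=(a\tau+b)/(c\tau+d)$, and the map $\gamma\mapsto(\tau\mapsto\gamma\cdot\tau)$ is a left action. This matches \eqref{eq:tau_transform} and the action on $(p,q)$ used earlier. The conclusion $\Lambda'_\tau=\Lambda_\tau$ is unaffected by the change of convention.
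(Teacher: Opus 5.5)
Your Steps 1 and 2 coincide with the paper's proof, made explicit. The paper also substitutes $(\omega_1,\omega_2)=(1,\tau)$ to get $\omega_1'=a+c\tau$, $\omega_2'=b+d\tau$, and cites $ad-bc=1$ for $\Lambda_\tau'=\Lambda_\tau$. Your use of the integrality of $\gamma^{-1}$ and your check $\mathrm{Im}\,\tau'=\mathrm{Im}\,\tau/|a+c\tau|^2>0$ fill in what it leaves implicit.

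Your Step 3 is correct and is the substantive point. The paper's proof stops at $\tau'=(b+d\tau)/(a+c\tau)$ and never justifies the last equality in \eqref{C:tau_SL}. That equality indeed fails in general, since $(b+d\tau)(c\tau+d)-(a\tau+b)(a+c\tau)=(d-a)\bigl(c\tau^2+(a+d)\tau+b\bigr)$. Your example checks out: at $\tau=i$ the two sides are $(3+i)/5$ and $(3+i)/2$.

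The same slip recurs in Lemma~\ref{C:detminus_antiholo}, where the paper only alludes to it with ``after swapping the matrix roles''. Your convention $\binom{\omega_2'}{\omega_1'}=\gamma\binom{\omega_2}{\omega_1}$ repairs both lemmas. It also agrees with \eqref{eq:tau_transform} and Theorem~\ref{thm:SL2Z_monodromy}, where $\tau=\omega_A/\omega_B$ and $\binom{\omega_A'}{\omega_B'}=\rho(\gamma)\binom{\omega_A}{\omega_B}$.

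One small imprecision: conjugation by $P=\begin{pmatrix}0&1\\1&0\end{pmatrix}$ alone sends $\gamma$ to $\begin{pmatrix}d&c\\b&a\end{pmatrix}$, not to $\tilde\gamma$. The correct formula is $\tilde\gamma=P\gamma^{\mathsf T}P$, and it is the transpose that makes $\gamma\mapsto\tilde\gamma$ an anti-automorphism. Since this map is a bijection of $SL(2,\Z)$, the orbit-level content of the lemma survives under the original convention. Only the identification of which matrix acts, and whether the action is left or right, is wrong.
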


\begin{proof}
Using $(\omega_{1},\omega_{2})=(1,\tau)$,
\begin{equation}
\omega_{1}'=a+c\tau,\qquad \omega_{2}'=b+d\tau,
\end{equation}
hence $\tau'=\omega_{2}'/\omega_{1}'=(b+d\tau)/(a+c\tau)$.
Since $ad-bc=1$, the $\mathbb Z$-span of $(\omega_{1}',\omega_{2}')$ equals that of $(\omega_{1},\omega_{2})$, so $E_{\tau'}\cong E_{\tau}$ by the induced complex-linear isomorphism $z\mapsto z$.
\end{proof}

\begin{lemma}[Oriented action on charges]
\label{C:SL_action_charge}
For $\gamma\in SL(2,\mathbb Z)$ define
\begin{equation}
Q\longmapsto Q' := \gamma Q.
\label{C:charge_SL}
\end{equation}
Then $\langle\cdot,\cdot\rangle$ is preserved:
\begin{equation}
\langle \gamma Q,\gamma Q'\rangle
=
Q^{\mathsf T}\gamma^{\mathsf T}J\gamma Q'
=
Q^{\mathsf T}JQ'
=
\langle Q,Q'\rangle,
\qquad
\gamma^{\mathsf T}J\gamma=J.
\label{C:sympl_SL}
\end{equation}
\end{lemma}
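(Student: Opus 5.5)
The plan is to verify the invariance identity $\gamma^{\mathsf T}J\gamma=J$ for every $\gamma\in SL(2,\mathbb Z)$ by direct computation. The stated preservation of the pairing then follows from associativity of matrix multiplication. Since \eqref{C:Jpairing} defines $\langle Q,Q'\rangle=Q^{\mathsf T}JQ'$, we have $\langle\gamma Q,\gamma Q'\rangle=(\gamma Q)^{\mathsf T}J(\gamma Q')=Q^{\mathsf T}(\gamma^{\mathsf T}J\gamma)Q'$, using $(\gamma Q)^{\mathsf T}=Q^{\mathsf T}\gamma^{\mathsf T}$. Everything therefore reduces to the matrix identity.

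For the identity itself, write $\gamma=\begin{pmatrix}a&b\\ c&d\end{pmatrix}$ and compute $J\gamma=\begin{pmatrix}c&d\\ -a&-b\end{pmatrix}$. Then
\begin{equation*}
\gamma^{\mathsf T}J\gamma=\begin{pmatrix}a&c\\ b&d\end{pmatrix}\begin{pmatrix}c&d\\ -a&-b\end{pmatrix}=\begin{pmatrix}0&ad-bc\\ bc-ad&0\end{pmatrix}=(\det\gamma)\,J .
\end{equation*}
This is the general $2\times 2$ fact that $M^{\mathsf T}JM=(\det M)J$ for any $M$. It is equivalent to $Sp(2,\mathbb Z)=SL(2,\mathbb Z)$, the statement already used in the rank-one reduction of \eqref{eq:Sp-action}. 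Imposing $ad-bc=1$ gives $\gamma^{\mathsf T}J\gamma=J$, and substituting into the first display yields \eqref{C:sympl_SL}.

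There is no real obstacle; the only point needing care is recording the general formula $(\det\gamma)J$ rather than specializing to $\det\gamma=1$ at the outset. Keeping the determinant visible makes the orientation character explicit for the $GL(2,\mathbb Z)$ extension treated next in this appendix. For $\det\gamma=-1$ the pairing flips sign, $\langle\gamma Q,\gamma Q'\rangle=-\langle Q,Q'\rangle$, which is what must be compensated by the antiholomorphic action on $\tau$. I would also note that integrality of $\gamma$ guarantees $\gamma\Gamma\subseteq\Gamma$, and $\gamma^{-1}\in SL(2,\mathbb Z)$ gives equality. Hence \eqref{C:charge_SL} is an automorphism of the charge lattice and not merely a map into $\mathbb Q^2$.
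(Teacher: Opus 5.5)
Your proof is correct and follows the paper's argument: a direct $2\times 2$ computation of $\gamma^{\mathsf T}J\gamma$, specialized to $ad-bc=1$. The paper states the general form $(\det\gamma)J$ that you keep visible in the next lemma (Lemma~\ref{C:det_J}), and reducing the pairing statement to this matrix identity via $(\gamma Q)^{\mathsf T}=Q^{\mathsf T}\gamma^{\mathsf T}$ is exactly how \eqref{C:sympl_SL} is organized.
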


\begin{proof}
Write $\gamma=\begin{pmatrix}a&b\\c&d\end{pmatrix}$ with $ad-bc=1$ and compute
\begin{equation}
\gamma^{\mathsf T}J\gamma
=
\begin{pmatrix}a&c\\ b&d\end{pmatrix}
\begin{pmatrix}0&1\\-1&0\end{pmatrix}
\begin{pmatrix}a&b\\ c&d\end{pmatrix}
=
\begin{pmatrix}0&ad-bc\\ -(ad-bc)&0\end{pmatrix}
=
J.
\end{equation}
\end{proof}

\begin{definition}[Orientation character and extension group]
Define
\begin{equation}
GL(2,\mathbb Z)
=
\left\{\gamma\in \mathrm{Mat}_{2\times 2}(\mathbb Z):\ \det(\gamma)=\pm 1\right\},
\qquad
\varepsilon(\gamma):=\det(\gamma)\in\{\pm 1\}.
\label{C:GL_det}
\end{equation}
Then
\begin{equation}
1\longrightarrow SL(2,\mathbb Z)\longrightarrow GL(2,\mathbb Z)\xrightarrow{\ \det\ }\{\pm 1\}\longrightarrow 1
\label{C:exact_GL}
\end{equation}
is exact.
\end{definition}

\begin{lemma}[Intersection form transforms with $\det$]
\label{C:det_J}
For $\gamma\in GL(2,\mathbb Z)$,
\begin{equation}
\gamma^{\mathsf T}J\gamma \;=\; \det(\gamma)\,J.
\label{C:J_det}
\end{equation}
\end{lemma}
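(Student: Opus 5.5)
The plan is to prove \eqref{C:J_det} by a direct $2\times 2$ computation, exactly parallel to the proof of Lemma~\ref{C:SL_action_charge}, but without imposing $ad-bc=1$. Write $\gamma=\begin{pmatrix}a&b\\c&d\end{pmatrix}\in GL(2,\mathbb Z)$ and compute the product $\gamma^{\mathsf T}J\gamma$ in two steps. First form
\begin{equation}
J\gamma=\begin{pmatrix}c&d\\-a&-b\end{pmatrix}.
\end{equation}
Then left-multiply by $\gamma^{\mathsf T}=\begin{pmatrix}a&c\\b&d\end{pmatrix}$:
\begin{equation}
\gamma^{\mathsf T}J\gamma=\begin{pmatrix}ac-ca & ad-cb\\ bc-da & bd-db\end{pmatrix}=(ad-bc)\begin{pmatrix}0&1\\-1&0\end{pmatrix}.
\end{equation}
This identifies the result as $\det(\gamma)\,J$.

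As a conceptual cross-check, I would also give the coordinate-free argument. On a rank-two module, the map $(Q,Q')\mapsto Q^{\mathsf T}JQ'$ is the determinant of the matrix with columns $Q,Q'$. Hence
\begin{equation}
\langle\gamma Q,\gamma Q'\rangle=\det(\gamma[Q\,|\,Q'])=\det\gamma\,\langle Q,Q'\rangle.
\end{equation}
Nondegeneracy of the pairing then gives the matrix identity. The identity holds for any integer matrix, so the restriction $\det\gamma=\pm1$ is used only to interpret $\varepsilon(\gamma)$ as the orientation character in \eqref{C:GL_det}.

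There is no real obstacle here. The only point needing care is the sign and transpose convention, namely that the pairing in \eqref{C:Jpairing} is $Q^{\mathsf T}JQ'$ with $J$ having $+1$ in the upper-right entry, so that the scalar comes out as $ad-bc$ rather than its negative. Finally, I would note the consequences used afterwards: orientation-reversing elements with $\det=-1$ flip the sign of the intersection form, and restricting \eqref{C:J_det} to $SL(2,\mathbb Z)$ recovers \eqref{C:sympl_SL}.
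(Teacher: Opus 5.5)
Your direct $2\times 2$ computation of $\gamma^{\mathsf T}J\gamma=(ad-bc)J$ is correct and is exactly the paper's proof. The determinant-of-columns cross-check is also valid. Recovering the matrix identity from the identity of bilinear forms only needs evaluation on the standard basis vectors, not nondegeneracy of the pairing.
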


\begin{proof}
Let $\gamma=\begin{pmatrix}a&b\\ c&d\end{pmatrix}$ with $\det(\gamma)=ad-bc=\pm 1$. Then
\begin{equation}
\gamma^{\mathsf T}J\gamma
=
\begin{pmatrix}a&c\\ b&d\end{pmatrix}
\begin{pmatrix}0&1\\-1&0\end{pmatrix}
\begin{pmatrix}a&b\\ c&d\end{pmatrix}
=
\begin{pmatrix}0&ad-bc\\ -(ad-bc)&0\end{pmatrix}
=
\det(\gamma)\,J.
\end{equation}
\end{proof}

\begin{definition}[$GL(2,\mathbb Z)$ action on $\mathfrak H$]
For $\gamma=\begin{pmatrix}a&b\\ c&d\end{pmatrix}\in GL(2,\mathbb Z)$ define
\begin{equation}
\gamma\cdot \tau
\;:=\;
\begin{cases}
\dfrac{a\tau+b}{c\tau+d}, & \det(\gamma)=+1,\\[2.2ex]
\dfrac{a\overline{\tau}+b}{c\overline{\tau}+d}, & \det(\gamma)=-1,
\end{cases}
\qquad \tau\in\mathfrak H.
\label{C:GL_action_tau}
\end{equation}
\end{definition}

\begin{lemma}[Antiholomorphic realization for $\det=-1$]
\label{C:detminus_antiholo}
If $\det(\gamma)=-1$ then \eqref{C:GL_action_tau} is the unique extension of the $SL(2,\mathbb Z)$-action \eqref{C:tau_SL} compatible with complex conjugation on the universal cover:
\begin{equation}
z\mapsto \overline{z}:\ \mathbb C\to\mathbb C,
\qquad
\overline{\Lambda_{\tau}}=\Lambda_{\overline{\tau}},
\qquad
E_{\tau}\xrightarrow{\ z\mapsto\overline{z}\ }\overline{E_{\tau}} \cong E_{\gamma\cdot\tau}.
\label{C:conj_lattice}
\end{equation}
\end{lemma}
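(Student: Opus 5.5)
The plan is to reduce everything to one orientation-reversing element and then use Lemma~\ref{C:SL_action_tau} on the $SL(2,\mathbb Z)$ part. First, the positivity check. For $\det(\gamma)=-1$ and $\tau\in\mathfrak H$ one computes
\[
\Im\frac{a\overline{\tau}+b}{c\overline{\tau}+d}=\frac{\det(\gamma)\,\Im(\overline{\tau})}{|c\overline{\tau}+d|^{2}}=\frac{\Im\tau}{|c\tau+d|^{2}}>0 .
\]
So \eqref{C:GL_action_tau} indeed maps $\mathfrak H$ to itself. Note that $c\overline\tau+d\neq 0$ because $\overline\tau\notin\mathbb R$. Next, I fix the reflection $R:=\begin{pmatrix}-1&0\\0&1\end{pmatrix}$, which has $\det R=-1$ and satisfies $R\cdot\tau=-\overline{\tau}$. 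The exact sequence \eqref{C:exact_GL} shows that every $\gamma$ with $\det\gamma=-1$ can be written uniquely as $\gamma=\gamma_0R$ with $\gamma_0=\gamma R^{-1}\in SL(2,\mathbb Z)$. A direct substitution then gives $\gamma\cdot\tau=\gamma_0\cdot(R\cdot\tau)$, where $\gamma_0$ acts by \eqref{C:tau_SL}. A short case check on determinants shows that \eqref{C:GL_action_tau} is a group action of $GL(2,\mathbb Z)$: compositions of two antiholomorphic maps are holomorphic, and the matrices multiply accordingly.

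Second, the lattice statement in \eqref{C:conj_lattice}. Complex conjugation is $\mathbb R$-linear and fixes $\mathbb Z$, so $\overline{\mathbb Z+\tau\mathbb Z}=\mathbb Z+\overline{\tau}\mathbb Z=\Lambda_{\overline\tau}$. As a set this equals $\mathbb Z+(-\overline{\tau})\mathbb Z=\Lambda_{R\cdot\tau}$, using the oriented basis $(1,-\overline\tau)$, which has $-\overline\tau\in\mathfrak H$. Hence $z\mapsto\overline z$ descends to an antiholomorphic isomorphism $E_\tau\to E_{R\cdot\tau}$, i.e.\ $\overline{E_\tau}\cong E_{R\cdot\tau}$ as complex tori. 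Applying Lemma~\ref{C:SL_action_tau} to $\gamma_0$ gives $E_{R\cdot\tau}\cong E_{\gamma_0\cdot(R\cdot\tau)}=E_{\gamma\cdot\tau}$ by a complex-linear map. Composing the two yields \eqref{C:conj_lattice}. Lemma~\ref{C:det_J} supports the geometric picture: conjugation reverses the orientation of $\Lambda_\tau$, so it flips the sign of $J$ exactly as $\det(\gamma)=-1$ does.

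Third, uniqueness, which is the main obstacle because "compatible with complex conjugation" has to be made precise. I would take it to mean an action $\tau\mapsto\gamma\ast\tau$ of $GL(2,\mathbb Z)$ on $\mathfrak H$ with three properties: it agrees with \eqref{C:tau_SL} on $SL(2,\mathbb Z)$; each element with $\det=-1$ acts antiholomorphically; and $R$ acts through the lattice map $\Lambda_\tau\mapsto\overline{\Lambda_\tau}$ followed by the re-orientation $(1,\overline\tau)\mapsto(1,-\overline\tau)$. The third property forces $R\ast\tau=-\overline\tau$. Then the group-action axiom forces $\gamma\ast\tau=(\gamma_0R)\ast\tau=\gamma_0\cdot(-\overline\tau)$, which equals \eqref{C:GL_action_tau}. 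So the extension is determined on the whole $\det=-1$ coset.

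If one wants uniqueness without prescribing $R$ explicitly, I would argue as follows. Any antiholomorphic involution of $\mathfrak H$ whose conjugation action on $SL(2,\mathbb Z)$ agrees with conjugation by $R$ on the generators $S,T$ must be $\tau\mapsto-\overline\tau$. This holds because antiholomorphic automorphisms of $\mathfrak H$ are of the form $\tau\mapsto g(-\overline\tau)$ with $g\in PSL(2,\mathbb R)$, and $g$ must centralize $PSL(2,\mathbb Z)$, which forces $g=1$. I expect this centralizer argument to need the most care, in particular handling the sign ambiguity $\pm1$ in $PSL$, since the central element $-\mathbf 1$ acts trivially on $\mathfrak H$.
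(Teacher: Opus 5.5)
Your proof is correct, but it is organized differently from the paper's.

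The paper argues in one step. It takes the conjugate lattice $\overline{\Lambda_\tau}=\mathbb Z+\overline{\tau}\,\mathbb Z$, changes basis by $\gamma$ itself, reads off the period ratio $\frac{a\overline\tau+b}{c\overline\tau+d}$, and lets $z\mapsto\overline z$ identify $E_\tau$ antiholomorphically with $E_{\gamma\cdot\tau}$. It never checks that \eqref{C:GL_action_tau} is a group action, and it gives no argument for the word ``unique''.

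You instead factor $\gamma=\gamma_0R$ through the fixed reflection $R=\mathrm{diag}(-1,1)$. You dispose of $R$ via $\overline{\Lambda_\tau}=\Lambda_{-\overline\tau}$ and push the rest onto the $SL(2,\mathbb Z)$ case. Your action check rests on the fact that Möbius maps with real coefficients commute with complex conjugation.

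The paper's route is shorter, but its bookkeeping is loose. The row-vector change $(1,\overline\tau)\gamma$ it writes produces $\frac{b+d\overline\tau}{a+c\overline\tau}$, which is the action of $\bigl(\begin{smallmatrix}d&b\\c&a\end{smallmatrix}\bigr)$ on $\overline\tau$ rather than of $\gamma$. One therefore has to switch silently to the column convention; your factorization avoids this.

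What your route really buys is a genuine uniqueness statement. Suppose the extension is a group action in which $\det=-1$ elements act antiholomorphically. Then the action axiom forces $R$ to act by $\tau\mapsto h(-\overline\tau)$, with $h$ in the centralizer of $PSL(2,\mathbb Z)$ in $PSL(2,\mathbb R)$. That centralizer is trivial: commuting with $T$ forces $h$ to fix $\infty$ and be a translation, and commuting with $S$ kills the translation. Since everything lives in $PSL(2,\mathbb R)$ acting effectively on $\mathfrak H$, the sign issue you worry about does not arise.

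Keep that formulation rather than your first one. The first builds the answer into the hypothesis through the chosen re-orientation $(1,\overline\tau)\mapsto(1,-\overline\tau)$. Note also that the bare condition $\overline{E_\tau}\cong E_{\gamma\ast\tau}$ fixes $\gamma\ast\tau$ only up to $SL(2,\mathbb Z)$.

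Finally, for $E_{R\cdot\tau}\cong E_{\gamma_0\cdot(R\cdot\tau)}$, cite the explicit homothety $z\mapsto z/(c_0w+d_0)$ rather than Lemma~\ref{C:SL_action_tau}. Here $w=R\cdot\tau$ and $\gamma_0=\bigl(\begin{smallmatrix}a_0&b_0\\c_0&d_0\end{smallmatrix}\bigr)$, and this map carries $\Lambda_w$ onto $\Lambda_{\gamma_0\cdot w}$. The lemma's displayed identity $\frac{b+d\tau}{a+c\tau}=\frac{a\tau+b}{c\tau+d}$ fails in general.
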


\begin{proof}
Let $\tau\in\mathfrak H$ and $\det(\gamma)=-1$.
Start from the conjugated lattice $\overline{\Lambda_{\tau}}=\mathbb Z(1)\oplus \mathbb Z(\overline{\tau})$.
Apply the integral basis change $(1,\overline{\tau})\mapsto (1,\overline{\tau})\gamma=(a+c\overline{\tau},\,b+d\overline{\tau})$.
The ratio of new periods is
\begin{equation}
\frac{b+d\overline{\tau}}{a+c\overline{\tau}}=\frac{a\overline{\tau}+b}{c\overline{\tau}+d}
\end{equation}
(after swapping the matrix roles as in Lemma~\ref{C:SL_action_tau}), which is precisely the second line of \eqref{C:GL_action_tau}. The map $z\mapsto\overline{z}$ induces an antiholomorphic isomorphism $\mathbb C/\Lambda_{\tau}\to \mathbb C/\overline{\Lambda_{\tau}}$, and the basis change identifies $\overline{\Lambda_{\tau}}$ with $\Lambda_{\gamma\cdot\tau}$.
\end{proof}

\begin{definition}[$GL(2,\mathbb Z)$ action on charges]
For $\gamma\in GL(2,\mathbb Z)$ define
\begin{equation}
Q\longmapsto \gamma Q,\qquad Q\in\Gamma=\mathbb Z^{2}.
\label{C:GL_action_charge}
\end{equation}
Then by Lemma~\ref{C:det_J} the pairing transforms by
\begin{equation}
\langle \gamma Q,\gamma Q'\rangle = \det(\gamma)\,\langle Q,Q'\rangle.
\label{C:pair_det}
\end{equation}
\end{definition}

\begin{definition}[Condensed realization of discrete abelian groups]
Let $\mathrm{Prof}$ be the site of profinite sets.
For a discrete abelian group $A$, define the associated condensed abelian group
\begin{equation}
\underline{A}:\mathrm{Prof}^{\mathrm{op}}\to \mathrm{Ab},
\qquad
\underline{A}(S):=\mathrm{Cont}(S,A),
\label{C:underlineA}
\end{equation}
where $\mathrm{Cont}(S,A)$ denotes locally constant maps (equivalently continuous maps for $A$ discrete).
\end{definition}

\begin{definition}[Congruence tower and profinite completion of the charge lattice]
Fix $\Gamma=\mathbb Z^{2}$.
For each $n\ge 1$ define the finite quotient
\begin{equation}
\Gamma_{n}\;:=\;\Gamma/p^{n}\Gamma\;\cong\;(\mathbb Z/p^{n}\mathbb Z)^{2},
\qquad
r_{n+1,n}:\Gamma_{n+1}\twoheadrightarrow \Gamma_{n}.
\label{C:Gamma_n}
\end{equation}
Define the profinite completion
\begin{equation}
\Gamma_{\!p}\;:=\;\varprojlim_{n\ge 1}\Gamma_{n}\;\cong\;\mathbb Z_{p}^{2}.
\label{C:Gamma_p}
\end{equation}
\end{definition}

\begin{lemma}[Condensed limit computes sections as compatible families]
For every profinite set $S$,
\begin{equation}
\underline{\Gamma_{\!p}}(S)
=
\mathrm{Cont}(S,\Gamma_{\!p})
\;\cong\;
\varprojlim_{n\ge 1}\mathrm{Cont}(S,\Gamma_{n})
=
\varprojlim_{n\ge 1}\underline{\Gamma_{n}}(S),
\label{C:Gamma_p_sections}
\end{equation}
hence
\begin{equation}
\underline{\Gamma_{\!p}}\;\cong\;\varprojlim_{n\ge 1}\underline{\Gamma_{n}}
\qquad \text{in }\mathrm{CondAb}.
\label{C:Gamma_p_condlim}
\end{equation}
\end{lemma}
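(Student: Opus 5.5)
The proof reduces to the universal property of inverse limits in compact Hausdorff spaces, applied pointwise in $S$, exactly as in the earlier Lemma giving \eqref{A.1.1} and its Corollary \eqref{A.1.2}. I would first note that each $\Gamma_n=(\mathbb Z/p^n\mathbb Z)^2$ is finite discrete. The surjections $r_{n+1,n}$ are continuous, so $\Gamma_p=\varprojlim_n\Gamma_n$ is a closed subgroup of $\prod_n\Gamma_n$, which is profinite by Tychonoff, and it carries the inverse-limit (product) topology. The isomorphism $\Gamma_p\cong\mathbb Z_p^2$ in \eqref{C:Gamma_p} is the componentwise form of \eqref{A.3.2}, and it is a homeomorphism of topological groups.

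For a profinite set $S$, I would construct the map $\mathrm{Cont}(S,\Gamma_p)\to\varprojlim_n\mathrm{Cont}(S,\Gamma_n)$ by $\varphi\mapsto(\mathrm{pr}_n\circ\varphi)_n$. Compatibility $r_{n+1,n}\circ\mathrm{pr}_{n+1}=\mathrm{pr}_n$ shows the image is a compatible family. For the inverse, a compatible family $(\varphi_n)$ defines $\varphi(s)=(\varphi_n(s))_n$. This lands in $\Gamma_p$ by compatibility, and it is continuous because a map into a product is continuous iff its components are. Both composites are the identity.

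The groups $\mathrm{Cont}(S,\Gamma_n)$ are the locally constant maps, as in \eqref{C:underlineA}, and the bijection is additive because the group law on $\Gamma_p$ is coordinatewise. Hence it is an isomorphism of abelian groups, which gives \eqref{C:Gamma_p_sections}.

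For \eqref{C:Gamma_p_condlim}, I would check naturality in $S$. Precomposition with a continuous $g:S'\to S$ commutes with postcomposition by $\mathrm{pr}_n$, so the isomorphisms assemble into an isomorphism of functors $\mathrm{Prof}^{\mathrm{op}}\to\mathrm{Ab}$. I would then use that limits in $\mathrm{CondAb}$ are computed objectwise: sheaves are closed under limits in presheaves, since the sheaf condition \eqref{3} is itself a limit condition and limits commute with limits. Therefore the objectwise limit $\varprojlim_n\underline{\Gamma_n}$ is the limit in $\mathrm{CondAb}$, and it is isomorphic to $\underline{\Gamma_p}$.

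The only point needing care is this last one: identifying the presheaf limit with the condensed limit, which needs the set-theoretic/accessibility conventions of \cite{ClausenScholze}. It is harmless here because the indexing is countable and all $\Gamma_n$ are finite.
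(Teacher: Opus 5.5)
Your proposal is correct and follows the same route as the paper. Both arguments identify $\mathrm{Cont}(S,\Gamma_p)$ with compatible families $(\mathrm{pr}_n\circ\varphi)_n$ via the universal property of the inverse limit of the profinite groups $\Gamma_n$, and then read this objectwise limit as the limit in $\mathrm{CondAb}$; you simply make explicit what the paper leaves implicit: additivity, naturality in $S$, and the fact that limits of sheaves are computed objectwise (with the harmless Clausen--Scholze cardinality caveat for countable diagrams).
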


\begin{proof}
The inverse limit $\Gamma_{\!p}=\varprojlim \Gamma_n$ is a profinite (hence compact Hausdorff) group, continuity of $\mathrm{Cont}(S,-)$ on inverse limits of compact Hausdorff spaces yields \eqref{C:Gamma_p_sections} as an equalizer of compatible families, which is the defining universal property of the inverse limit in $\mathrm{CondAb}$, giving \eqref{C:Gamma_p_condlim}.
\end{proof}

\begin{definition}[Congruence tower of duality groups and condensed action]
Define finite groups
\begin{equation}
G_{n}\;:=\;\mathrm{Aut}(\Gamma_{n})\;\cong\;GL(2,\mathbb Z/p^{n}\mathbb Z),
\qquad
\pi_{n+1,n}:G_{n+1}\twoheadrightarrow G_{n}.
\label{C:Gn}
\end{equation}
Define the profinite group
\begin{equation}
G_{\!p}\;:=\;\varprojlim_{n\ge 1}G_{n}\;\cong\;GL(2,\mathbb Z_{p}),
\label{C:Gp}
\end{equation}
and the induced condensed group
\begin{equation}
\underline{G_{\!p}}\;\cong\;\varprojlim_{n\ge 1}\underline{G_{n}}
\qquad \text{in }\mathrm{CondGrp}.
\label{C:Gp_cond}
\end{equation}
The action is defined on $S$-points by evaluation:
\begin{equation}
\underline{G_{\!p}}(S)\times \underline{\Gamma_{\!p}}(S)\to \underline{\Gamma_{\!p}}(S),
\qquad
(g,x)\mapsto (s\mapsto g(s)\cdot x(s)).
\label{C:cond_action}
\end{equation}
\end{definition}
\begin{definition}[$\mathbb Z_{2}$-central extensions via $2$-cocycles]
Let $G$ be a group and let $\omega:G\times G\to \mathbb Z_{2}$ satisfy the cocycle condition
\begin{equation}
\omega(g,h)+\omega(gh,k)=\omega(h,k)+\omega(g,hk),
\qquad \forall g,h,k\in G,
\label{C:cocycle}
\end{equation}
and normalization $\omega(e,g)=\omega(g,e)=0$.
Define $\widetilde{G}_{\omega}:=\mathbb Z_{2}\times G$ with multiplication
\begin{equation}
(\epsilon,g)\cdot(\epsilon',h)
:=
\bigl(\epsilon+\epsilon'+\omega(g,h),\ gh\bigr),
\qquad \epsilon,\epsilon'\in\mathbb Z_{2},\ g,h\in G.
\label{C:extension_mult}
\end{equation}
\end{definition}

\begin{lemma}[Associativity and exact sequence]
\label{C:extension_exact}
The product \eqref{C:extension_mult} is associative if and only if \eqref{C:cocycle} holds, and the sequence
\begin{equation}
1\longrightarrow \mathbb Z_{2}\longrightarrow \widetilde{G}_{\omega}\longrightarrow G\longrightarrow 1
\label{C:central_ext}
\end{equation}
is exact with central kernel.
\end{lemma}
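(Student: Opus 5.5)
The plan is to prove both halves of Lemma~\ref{C:extension_exact} by direct computation with the multiplication rule \eqref{C:extension_mult}, using only the group law of $G$ and the fact that $\mathbb Z_{2}$ is abelian. For associativity we expand both bracketings of a triple product $(\epsilon,g),(\epsilon',h),(\epsilon'',k)$. Applying \eqref{C:extension_mult} twice gives
\begin{equation}
\bigl((\epsilon,g)(\epsilon',h)\bigr)(\epsilon'',k)=\bigl(\epsilon+\epsilon'+\epsilon''+\omega(g,h)+\omega(gh,k),\ ghk\bigr),
\end{equation}
and
\begin{equation}
(\epsilon,g)\bigl((\epsilon',h)(\epsilon'',k)\bigr)=\bigl(\epsilon+\epsilon'+\epsilon''+\omega(h,k)+\omega(g,hk),\ ghk\bigr).
\end{equation}
The second components agree by associativity in $G$. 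The first components agree for all $\epsilon,\epsilon',\epsilon''$ exactly when $\omega(g,h)+\omega(gh,k)=\omega(h,k)+\omega(g,hk)$, which is \eqref{C:cocycle}. This gives the ``if''. For the ``only if'' we take $\epsilon=\epsilon'=\epsilon''=0$ and cancel in $\mathbb Z_{2}$.

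Next we check that $\widetilde G_\omega$ is a group. Normalization shows that $(0,e)$ is a two-sided identity. For inverses we propose $(\epsilon,g)^{-1}=(\epsilon+\omega(g,g^{-1}),g^{-1})$, using $-\epsilon=\epsilon$ in $\mathbb Z_2$. This is a right inverse directly. To see it is also a left inverse, we use the cocycle identity at $(g,g^{-1},g)$ together with normalization, which gives $\omega(g,g^{-1})=\omega(g^{-1},g)$. This is the only mildly nontrivial step, and it is a one-line specialization of \eqref{C:cocycle}.

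For exactness of \eqref{C:central_ext}, the maps are $i(\epsilon)=(\epsilon,e)$ and $\pi(\epsilon,g)=g$. Normalization makes $i$ a homomorphism, and it is clearly injective. The map $\pi$ is a surjective homomorphism by construction, and $\ker\pi=\{(\epsilon,e)\}=\operatorname{im} i$. Centrality follows because $(\epsilon,e)(\epsilon',h)=(\epsilon+\epsilon',h)=(\epsilon',h)(\epsilon,e)$, again by normalization and commutativity of $\mathbb Z_2$. No step is a real obstacle; the only place requiring care is the symmetry $\omega(g,g^{-1})=\omega(g^{-1},g)$ needed for two-sided inverses.
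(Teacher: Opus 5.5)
Your proposal is correct and follows the paper's proof: you expand both bracketings of the triple product, compare the first components to get equivalence with \eqref{C:cocycle}, and read off exactness and centrality from the projection $(\epsilon,g)\mapsto g$ together with normalization. Your additional checks of the identity and of two-sided inverses, via $\omega(g,g^{-1})=\omega(g^{-1},g)$ from the cocycle identity at $(g,g^{-1},g)$, supply the group-axiom verification that the paper leaves implicit.
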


\begin{proof}
Compute
\begin{equation}
\bigl((\epsilon,g)\cdot(\epsilon',h)\bigr)\cdot(\epsilon'',k)
=
\bigl(\epsilon+\epsilon'+\omega(g,h),gh\bigr)\cdot(\epsilon'',k)
=
\bigl(\epsilon+\epsilon'+\epsilon''+\omega(g,h)+\omega(gh,k),\ ghk\bigr),
\end{equation}
and
\begin{equation}
(\epsilon,g)\cdot\bigl((\epsilon',h)\cdot(\epsilon'',k)\bigr)
=
(\epsilon,g)\cdot\bigl(\epsilon'+\epsilon''+\omega(h,k),hk\bigr)
=
\bigl(\epsilon+\epsilon'+\epsilon''+\omega(h,k)+\omega(g,hk),\ ghk\bigr),
\end{equation}
so associativity is equivalent to \eqref{C:cocycle}.
Exactness and centrality follow because the projection $(\epsilon,g)\mapsto g$ is a homomorphism with kernel $\{(\epsilon,e)\}\cong\mathbb Z_{2}$ and $(\epsilon,e)$ commutes with all elements by \eqref{C:extension_mult}.
\end{proof}

\begin{definition}[Pin$^{+}$ refinement of $GL(2,\mathbb Z)$ (axiomatic form)]
Let $G=GL(2,\mathbb Z)$.
A \emph{Pin$^{+}$ refinement} is a choice of a cohomology class
\begin{equation}
[\omega_{\mathrm{an}}]\in H^{2}\bigl(GL(2,\mathbb Z);\mathbb Z_{2}\bigr)
\label{C:omega_class}
\end{equation}
and a representative cocycle $\omega_{\mathrm{an}}$ determining a $\mathbb Z_{2}$-central extension
\begin{equation}
1\longrightarrow \mathbb Z_{2}\longrightarrow \widetilde{GL}(2,\mathbb Z)\longrightarrow GL(2,\mathbb Z)\longrightarrow 1,
\qquad
\widetilde{GL}(2,\mathbb Z):=\widetilde{G}_{\omega_{\mathrm{an}}}.
\label{C:Pinplus_GL}
\end{equation}
The class $[\omega_{\mathrm{an}}]$ is fixed by the anomaly data of the fermionic functional determinant (cf.\ \cite{DebrayDieriglHeckmanMontero2021}).
\end{definition}

\begin{definition}[Tower-wise pin$^{+}$ structures]
Let $\{M_{n},g_{n}\}_{n\ge 0}$ be a tower of smooth manifolds with covering maps $g_{n}:M_{n+1}\to M_{n}$.
Let $\mathrm{Pin}^{+}(M_{n})$ denote the groupoid of Pin$^{+}$-structures on $TM_{n}$ and their isomorphisms, and let
\begin{equation}
g_{n}^{\ast}:\mathrm{Pin}^{+}(M_{n})\to \mathrm{Pin}^{+}(M_{n+1})
\label{C:pullback_pin}
\end{equation}
be pullback.
Define the tower-wise pin$^{+}$ groupoid by the inverse-limit condition
\begin{equation}
\mathrm{Pin}^{+}_{\mathrm{tower}}(\{M_{n}\})
:=
\left\{
(P_{n},\phi_{n})_{n\ge 0}:
\begin{array}{l}
P_{n}\in \mathrm{Ob}\,\mathrm{Pin}^{+}(M_{n}),\\
\phi_{n}:g_{n}^{\ast}P_{n}\xrightarrow{\ \cong\ }P_{n+1}
\end{array}
\right\}\Big/\cong,
\label{C:pin_tower_def}
\end{equation}
where $\cong$ denotes levelwise isomorphism compatible with the $\phi_{n}$.
\end{definition}

\begin{lemma}[Tower-wise pin$^{+}$ is a descent object]
\label{C:pin_tower_equalizer}
The set of isomorphism classes of \eqref{C:pin_tower_def} is the equalizer of the two pullback maps on classes:
\begin{equation}
\pi_{0}\mathrm{Pin}^{+}_{\mathrm{tower}}(\{M_{n}\})
\;\cong\;
\ker\!\left(
\prod_{n\ge 0}\pi_{0}\mathrm{Pin}^{+}(M_{n})
\rightrightarrows
\prod_{n\ge 0}\pi_{0}\mathrm{Pin}^{+}(M_{n+1})
\right),
\label{C:pin_equalizer}
\end{equation}
where the two arrows are $(P_{n})\mapsto (g_{n}^{\ast}P_{n})$ and $(P_{n})\mapsto (P_{n+1})$.
\end{lemma}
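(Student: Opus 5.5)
The plan is to build the comparison map explicitly and then show it is surjective and injective. Define
\[
\Phi:\pi_{0}\mathrm{Pin}^{+}_{\mathrm{tower}}(\{M_{n}\})\to\prod_{n\ge0}\pi_{0}\mathrm{Pin}^{+}(M_{n}),\qquad [(P_{n},\phi_{n})]\mapsto([P_{n}])_{n}.
\]
This is well defined, because a levelwise isomorphism compatible with the $\phi_n$ gives, in particular, levelwise isomorphisms. Its image lies in the equalizer: each $\phi_{n}$ witnesses $[g_{n}^{\ast}P_{n}]=[P_{n+1}]$, so the two arrows in \eqref{C:pin_equalizer} agree on $([P_n])_n$. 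As in Lemma on condensed limits above, the statement is read as the set-theoretic equalizer (the locus where the two maps agree).

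\textbf{Surjectivity.} Take a family of classes $([P_{n}])_n$ in the equalizer and choose a representative $P_{n}$ at each level. The equalizer condition says $g_{n}^{\ast}P_{n}\cong P_{n+1}$ for every $n$. Choose one such isomorphism $\phi_{n}$ for each $n$; this needs only countable choice. The resulting $(P_{n},\phi_{n})_n$ is an object of \eqref{C:pin_tower_def} mapping to the given family.

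\textbf{Injectivity.} Suppose two towers $(P_{n},\phi_{n})$ and $(P'_{n},\phi'_{n})$ have the same levelwise classes. Pick levelwise isomorphisms $\psi_{n}:P_{n}\to P'_{n}$. For each $n$, compare the two isomorphisms $g_{n}^{\ast}P_{n}\to P'_{n+1}$:
\[
c_{n}:=\phi'_{n}\circ g_{n}^{\ast}\psi_{n}\circ\phi_{n}^{-1}\circ\psi_{n+1}^{-1}\in\mathrm{Aut}(P'_{n+1}).
\]
The automorphism group of a Pin$^{+}$ structure is the group of deck transformations of the $\mathbb Z_{2}$-cover, namely $H^{0}(M_{n+1};\mathbb Z_{2})$. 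The aim is to modify the $\psi_{n}$ by automorphisms $a_{n}\in H^{0}(M_{n};\mathbb Z_{2})$ so that every $c_{n}$ becomes trivial. Working inductively, $\psi_{0}$ is fixed and $\psi_{n+1}$ is replaced by $\psi_{n+1}\circ(\text{correction})$ at each step. Because one only ever corrects the next level, the obstruction to solving all equations simultaneously is the class of $(c_{n})_n$ in
\[
\varprojlim\nolimits^{1}H^{0}(M_{n};\mathbb Z_{2}),
\]
computed through the presentation \eqref{B:lim1def}.

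\textbf{Main obstacle.} The hard step is showing this $\varprojlim^{1}$ term vanishes. I would first reduce to the case the paper uses, where each $M_{n}$ has finitely many components. Then every $H^{0}(M_{n};\mathbb Z_{2})$ is finite, so the inverse system satisfies the Mittag-Leffler condition automatically and $\varprojlim^{1}=0$. In the solenoidal case the $M_{n}$ are connected and $g_{n}^{\ast}$ is the identity on $\mathbb Z_{2}$, which is even more direct.

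This also shows where the statement could fail. With infinitely many components, a nonzero $\varprojlim^{1}$ would make $\Phi$ non-injective. In that case I would state the lemma under the finiteness hypothesis, or replace the right-hand side of \eqref{C:pin_equalizer} by the corresponding Milnor-type extension.
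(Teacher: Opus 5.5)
Your comparison map, the check that it lands in the equalizer, and the surjectivity argument are fine. The injectivity step is wrong, and it leads you to a false conclusion about when the lemma holds. There is no obstruction to overcome. The pullback functors go \emph{up} the tower, $g_n^{\ast}:\mathrm{Pin}^{+}(M_n)\to\mathrm{Pin}^{+}(M_{n+1})$, so the compatibility condition $\psi_{n+1}\circ\phi_n=\phi'_n\circ g_n^{\ast}\psi_n$ determines $\psi_{n+1}$ from $\psi_n$. Take any isomorphism $\psi_0:P_0\to P'_0$ and set $\psi_{n+1}:=\phi'_n\circ g_n^{\ast}\psi_n\circ\phi_n^{-1}$ recursively. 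This gives an isomorphism of towers, with no hypothesis on components and no choice beyond $\psi_0$.

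In your own correction language, the equations are $a_{n+1}=c_n+g_n^{\ast}a_n$ in $H^0(M_{n+1};\mathbb{Z}_2)$. They are solved by forward recursion from $a_0=0$, and each equation, once solved, is never disturbed again. Equivalently, the operator $(a_n)_n\mapsto(a_{n+1}-g_n^{\ast}a_n)_n$ on $\prod_n H^0(M_n;\mathbb{Z}_2)$ is always surjective. The group $\varprojlim\nolimits^{1}H^0(M_n;\mathbb{Z}_2)$ from the presentation \eqref{B:lim1def} would need transition maps $H^0(M_{n+1};\mathbb{Z}_2)\to H^0(M_n;\mathbb{Z}_2)$, and none occur here. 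A genuine $\varprojlim\nolimits^{1}$ of automorphism groups appears only for a limit of groupoids $\cdots\to\mathcal{G}_1\to\mathcal{G}_0$, where the functors point down the tower. That is the opposite of the present situation.

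In fact both sides of \eqref{C:pin_equalizer} are in bijection with $\pi_0\mathrm{Pin}^{+}(M_0)$ via the $n=0$ component. The functor sending a tower to $P_0$ is an equivalence of groupoids, since a morphism of towers is uniquely determined by $\psi_0$. Likewise, a point of the equalizer is determined by $[P_0]$ through $[P_{n+1}]=[g_n^{\ast}P_n]$. So the lemma holds as stated for arbitrary $M_n$. You should drop the suggestion to add a finiteness hypothesis or to replace the right-hand side by a Milnor-type extension. In the finite-component case your Mittag-Leffler argument reaches the right answer only because the obstruction it is meant to kill never arises. With the forward recursion in place of the $\varprojlim\nolimits^{1}$ discussion, your proof becomes a spelled-out version of the paper's ``by definition'' argument, which is correct.
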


\begin{proof}
By definition, an object of $\mathrm{Pin}^{+}_{\mathrm{tower}}(\{M_{n}\})$ is precisely a family $(P_{n})$ equipped with identifications $g_{n}^{\ast}P_{n}\cong P_{n+1}$, passing to isomorphism classes yields the equalizer condition \eqref{C:pin_equalizer}.
\end{proof}

\begin{definition}[Compatibility requirement for an action on fermionic determinants]
Let $G\subseteq GL(2,\mathbb Z)$ be a duality group acting on background data.
A tower-wise fermionic action requires:
\begin{equation}
\text{a lift }G\to \widetilde{GL}(2,\mathbb Z)\ \text{ and a functorial action }
\widetilde{GL}(2,\mathbb Z)\curvearrowright \mathrm{Pin}^{+}_{\mathrm{tower}}(\{M_{n}\})
\label{C:lift_action_req}
\end{equation}
compatible with pullback along the tower maps, i.e.\ commuting with the equalizer description \eqref{C:pin_equalizer}.
\end{definition}

The holomorphic elliptic datum $(E_{\tau},\Lambda_{\tau})$ supports the $SL(2,\mathbb Z)$ action \eqref{C:tau_SL} and the symplectic charge action \eqref{C:charge_SL} preserving \eqref{C:Jpairing}, while the inclusion of orientation reversal extends the modulus action to \eqref{C:GL_action_tau} and twists the pairing by $\det$ as in \eqref{C:pair_det}, acting on fermionic determinants requires a $\mathbb Z_{2}$-central extension \eqref{C:Pinplus_GL} and tower-compatible pin$^{+}$ descent data \eqref{C:pin_tower_def}-\eqref{C:pin_equalizer}.

{%
\section{\texorpdfstring{$p$}{p}-independence and adelic completion}
\label{app:p-independence-adelic} 
Fix notation for the prime-indexed tower-completed carriers and their perfectoid enhancements, and derive (i) a precise tower-stability criterion expressing that low-energy data is insensitive to refinement level for each prime, (ii) a corresponding \emph{$p$-independence} condition isolating all possible sources of apparent $p$-dependence as choices in comparison data and in the selection of admissible global sectors, and (iii) an adelic assembly formalism in the category of condensed objects via restricted products, identifying the canonical global object which packages all primes together with the Archimedean place.

\begin{definition}[$p$-tower of circles and tower limit]\label{D:def:tower_circle}
Let $p$ be prime and set $S^{1}:=\{z\in\mathbb C:\ |z|=1\}$.
Define the endomorphism $f_{p}:S^{1}\to S^{1}$ by
\begin{equation}
f_{p}(z)=z^{p}.
\label{D:eq:fp}
\end{equation}
Define the inverse system in $\mathrm{CompHausGrp}$ by $S^{1}_{n}:=S^{1}$ and bonding maps $f_{p}:S^{1}_{n+1}\to S^{1}_{n}$, and its inverse limit
\begin{equation}
\Sigma_{p}:=\varprojlim_{n\ge 0}\bigl(S^{1}_{n},f_{p}\bigr)
=
\Bigl\{(z_{0},z_{1},\dots)\in\prod_{n\ge 0}S^{1}:\ z_{n}=z_{n+1}^{p}\ \forall n\Bigr\}.
\label{D:eq:Sigma_p}
\end{equation}
\end{definition}

\begin{definition}[Perfectoid circle over a $p$-adic base field]\label{D:def:perfectoid_circle}
Let $K_{p}:=\mathbb Q_{p}^{\mathrm{cycl}}$ (cyclotomic extension, fixed once and for all), with valuation ring $\mathcal O_{K_{p}}$.
Let
\begin{equation}
R_{p}:=K_{p}\langle X^{\pm 1/p^{\infty}}\rangle,\qquad
R_{p}^{+}:=\mathcal O_{K_{p}}\langle X^{\pm 1/p^{\infty}}\rangle,
\label{D:eq:Rp}
\end{equation}
where $K_{p}\langle X^{\pm 1/p^{\infty}}\rangle$ denotes the $p$-adic completion of $K_{p}[X^{\pm 1/p^{n}}:n\ge 0]$ in the Gauss norm.
Define the perfectoid circle
\begin{equation}
S^{1}_{f,p}:=\mathrm{Spa}(R_{p},R_{p}^{+}).
\label{D:eq:Sfp}
\end{equation}
\end{definition}

\begin{definition}[Perfectoidness and tilt]\label{D:def:tilt}
Impose perfectoidness by Frobenius-surjectivity on the mod-$p$ reduction:
\begin{equation}
\varphi:\ R_{p}^{+}/p \longrightarrow R_{p}^{+}/p,\qquad x\longmapsto x^{p},\qquad \varphi \ \text{surjective}.
\label{D:eq:Frob_surj}
\end{equation}
Define the tilt rings by inverse limits along Frobenius:
\begin{equation}
R_{p}^{\flat}:=\varprojlim_{x\mapsto x^{p}}R_{p},\qquad
(R_{p}^{+})^{\flat}:=\varprojlim_{x\mapsto x^{p}}R_{p}^{+},
\label{D:eq:tilt_rings}
\end{equation}
and the tilted perfectoid circle
\begin{equation}
(S^{1}_{f,p})^{\flat}:=\mathrm{Spa}\bigl(R_{p}^{\flat},(R_{p}^{+})^{\flat}\bigr).
\label{D:eq:tilt_circle}
\end{equation}
\end{definition} 
\begin{definition}[Condensed avatar and tower evaluation]\label{D:def:condensed_avatar}
Let $\mathrm{Prof}$ be the site of profinite sets.
For any compact Hausdorff space $X$, define its condensed avatar
\begin{equation}
\underline{X}:\mathrm{Prof}^{\mathrm{op}}\to \mathrm{Sets},\qquad
\underline{X}(S):=\mathrm{Cont}(S,X).
\label{D:eq:condensedX}
\end{equation}
\end{definition}

\begin{lemma}[Inverse limits computed on profinite test objects]\label{D:lem:Cont_commutes}
For each profinite $S$ and each prime $p$,
\begin{equation}
\underline{\Sigma}_{p}(S)
=\mathrm{Cont}(S,\Sigma_{p})
\;\cong\;
\varprojlim_{n\ge 0}\mathrm{Cont}(S,S^{1})
\qquad\text{with transition }\psi\mapsto f_{p}\circ \psi.
\label{D:eq:Cont_limit}
\end{equation}
Equivalently, in $\mathrm{Cond}$,
\begin{equation}
\underline{\Sigma}_{p}\;\cong\;\varprojlim_{n\ge 0}\underline{S^{1}}.
\label{D:eq:Cond_limit}
\end{equation}
\end{lemma}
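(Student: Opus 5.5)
The plan is to reduce the statement to the universal property of the inverse limit in compact Hausdorff spaces, as in the appendix Lemma on the explicit equalizer, and then to pass from pointwise statements to functorial ones. Fix a profinite set $S$. A continuous map $\varphi:S\to\Sigma_p$ is the same as a continuous map into $\prod_{n\ge0}S^1$ whose image lies in the closed subset cut out by $z_n=z_{n+1}^p$. I would first use the product topology: $\varphi$ is continuous if and only if each coordinate $\varphi_n:=\pi_n\circ\varphi$ is continuous. Next I would check that the image condition is equivalent to the relations $\varphi_n=f_p\circ\varphi_{n+1}$ for all $n$.

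This gives mutually inverse maps
\[
\varphi\mapsto(\pi_n\circ\varphi)_n,\qquad (\varphi_n)_n\mapsto\bigl(s\mapsto(\varphi_n(s))_n\bigr).
\]
They identify $\mathrm{Cont}(S,\Sigma_p)$ with the set of compatible families in $\prod_n\mathrm{Cont}(S,S^1)$, where the transition maps are $\psi\mapsto f_p\circ\psi$. By definition of limits in $\mathrm{Sets}$, that set is $\varprojlim_n\mathrm{Cont}(S,S^1)$, which proves \eqref{D:eq:Cont_limit}. The same bijection respects the group structure, since multiplication on $\Sigma_p$ is coordinatewise.

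For \eqref{D:eq:Cond_limit}, I would check naturality in $S$. For a continuous map $g:T\to S$ of profinite sets, precomposition with $g$ commutes with postcomposition by $\pi_n$ and by $f_p$. Hence the bijections assemble into an isomorphism of functors $\mathrm{Prof}^{\mathrm{op}}\to\mathrm{Sets}$. Limits in the presheaf category are computed objectwise. Limits of sheaves are again sheaves, because equalizers and products commute with limits. So the objectwise limit is already the limit in $\mathrm{Cond}$, and it agrees with $\underline{\Sigma}_p$. This is also consistent with the fact that $\underline{(-)}$, as a right adjoint-type embedding of $\mathrm{CompHaus}$ (see \eqref{5}), preserves limits.

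The only step needing care is identifying the limit in $\mathrm{Cond}$ with the objectwise limit. There is no sheafification issue for limits, and I would note this explicitly. The rest is routine, and the argument uses nothing specific to $p$ beyond $f_p$ being continuous.
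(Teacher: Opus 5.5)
Your proposal is correct and follows the same route as the paper. Both arguments identify $\mathrm{Cont}(S,\Sigma_p)$ with compatible families $(\pi_n\circ\varphi)_n$ via the subspace/product topology on $\Sigma_p\subseteq\prod_n S^1$, and you additionally spell out the naturality in $S$ and the fact that limits in $\mathrm{Cond}$ are computed objectwise, which the paper leaves implicit in its ``equivalently, in $\mathrm{Cond}$'' clause.
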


\begin{proof}
A continuous map $\phi:S\to \Sigma_{p}\subseteq\prod_{n\ge 0}S^{1}$ is equivalent to a family $\phi_{n}:=\pi_{n}\circ\phi\in\mathrm{Cont}(S,S^{1})$ satisfying $\phi_{n}=f_{p}\circ \phi_{n+1}$ (by the defining equalizer condition in \eqref{D:eq:Sigma_p}).
Conversely, any compatible family $(\phi_{n})_{n\ge 0}$ defines $\phi(s):=(\phi_{n}(s))_{n\ge 0}\in\Sigma_{p}$.
This identifies $\mathrm{Cont}(S,\Sigma_{p})$ with the inverse limit of the sets $\mathrm{Cont}(S,S^{1})$.
\end{proof}

\begin{definition}[Tower-wise fields as descent data]\label{D:def:tower_fields}
Let $\mathcal F$ be a sheaf (or condensed abelian group) of field-values on $\underline{\Sigma}_{p}$.
Define its sections on the tower carrier by the universal property of the inverse limit:
\begin{equation}
\Gamma(\underline{\Sigma}_{p},\mathcal F)
:=
\mathcal F(\underline{\Sigma}_{p})
\;\cong\;
\varprojlim_{n\ge 0}\mathcal F(\underline{S^{1}})
\varprojlim_{n\ge 0}\Gamma(\underline{S^{1}},\mathcal F_{n}),
\label{D:eq:sections_descent}
\end{equation}
where $\mathcal F_{n}$ denotes the pullback of $\mathcal F$ to stage $n$.
\end{definition} 
\begin{definition}[Finite stages and pullback maps]\label{D:def:finite_stages}
For each $n\ge 0$ and prime $p$, let $S^{1}(n,p)$ denote the degree-$p^{n}$ cover of $S^{1}$ and let
\begin{equation}
g_{n+1,n}^{(p)}:S^{1}(n{+}1,p)\longrightarrow S^{1}(n,p)
\label{D:eq:bonding_cover}
\end{equation}
be the degree-$p$ covering map.
\end{definition}

\begin{definition}[Cutoff-truncated state spaces (axiomatic)]\label{D:def:Hilb_cutoff}
Fix a cutoff $\Lambda>0$.
Assume an assignment
\begin{equation}
(n,p)\longmapsto \mathcal H_{\Lambda}^{(n,p)}
\label{D:eq:H_assign}
\end{equation}
into Hilbert spaces, together with bonding morphisms induced by pullback along \eqref{D:eq:bonding_cover}:
\begin{equation}
U^{(p)}_{n+1,n}:\mathcal H_{\Lambda}^{(n,p)}\longrightarrow \mathcal H_{\Lambda}^{(n+1,p)}.
\label{D:eq:U_bonding}
\end{equation}
\end{definition}

\begin{definition}[Tower-stability at fixed cutoff]\label{D:def:tower_stability}
The system $\{\mathcal H_{\Lambda}^{(n,p)},U^{(p)}_{n+1,n}\}$ is \emph{stable} if there exists $n_{0}=n_{0}(\Lambda,p)$ such that for all $n\ge n_{0}$ the iterated bonding maps
\begin{equation}
U^{(p)}_{n,n_{0}}
:=
U^{(p)}_{n,n-1}\circ \cdots \circ U^{(p)}_{n_{0}+1,n_{0}}
:\mathcal H_{\Lambda}^{(n_{0},p)}\xrightarrow{\ \sim\ }\mathcal H_{\Lambda}^{(n,p)}
\label{D:eq:U_stable}
\end{equation}
are isomorphisms.
\end{definition}

\begin{definition}[\texorpdfstring{$p$}{p}-independence at fixed cutoff]\label{D:def:p_indep}
Let $\Lambda$ be fixed.
A family of stable systems $\{\mathcal H_{\Lambda}^{(n,p)}\}_{p}$ is \emph{$p$-independent} if for every pair of primes $(p,p')$ there exists an identification of stabilized limits
\begin{equation}
\Phi_{\Lambda}^{p,p'}:\ \mathcal H_{\Lambda}^{(\infty,p)}\xrightarrow{\ \sim\ }\mathcal H_{\Lambda}^{(\infty,p')},
\qquad
\mathcal H_{\Lambda}^{(\infty,p)}:=\varinjlim_{n\ge n_{0}(\Lambda,p)}\mathcal H_{\Lambda}^{(n,p)},
\label{D:eq:pindep_iso}
\end{equation}
such that any tower-defined observable $O_{\Lambda}$ factors through $\mathcal H_{\Lambda}^{(\infty,p)}$ and is invariant under $\Phi_{\Lambda}^{p,p'}$:
\begin{equation}
O_{\Lambda}^{(p)} = O_{\Lambda}^{(p')}\circ \Phi_{\Lambda}^{p,p'}.
\label{D:eq:O_invariant}
\end{equation}
\end{definition} 
\begin{definition}[Comparison data dependence]\label{D:def:comparison_data}
Let $\mathcal C_{p}$ denote the set of auxiliary choices entering a comparison/untilt/analytification procedure from \eqref{D:eq:tilt_circle} to complex elliptic data, including (at minimum) a choice of embedding
\begin{equation}
\iota_{p}:\ K_{p}^{\flat}\hookrightarrow \mathbb C_{p}.
\label{D:eq:embed_choice}
\end{equation}
Any numerical invariant defined purely by $\mathcal C_{p}$ is a function
\begin{equation}
I_{p}^{\mathrm{cmp}}:\ \mathcal C_{p}\to \mathcal T
\label{D:eq:Icmp}
\end{equation}
to some target $\mathcal T$ and is, \emph{a priori}, not canonically comparable across $p$.
\end{definition}

\begin{definition}[Admissible sublattice selection]\label{D:def:admissible_lattice}
Let $B_{p}$ be a prime-dependent bookkeeping group of global sectors.
An admissible lattice selection is a choice of subgroup
\begin{equation}
\Lambda_{\mathrm{adm}}\ \subseteq\ B_{p}
\label{D:eq:Lambda_adm}
\end{equation}
which is declared physically realized.
A $p$-independent selection is a family $\Lambda_{\mathrm{adm}}^{(p)}\subseteq B_{p}$ such that for all $p,p'$ there exist identifications $\psi^{p,p'}:B_{p}\to B_{p'}$ with $\psi^{p,p'}(\Lambda_{\mathrm{adm}}^{(p)})=\Lambda_{\mathrm{adm}}^{(p')}$.
\end{definition}

\begin{lemma}[Integral intersection criterion for localization-type bookkeeping]\label{D:lem:intersection}
Let $x\in\mathbb Q$ and suppose $x\in \mathbb Z[1/p]$ for every prime $p$.
Then $x\in\mathbb Z$, equivalently
\begin{equation}
\bigcap_{p\ \mathrm{prime}}\ \mathbb Z[1/p]=\mathbb Z.
\label{D:eq:intersection}
\end{equation}
\end{lemma}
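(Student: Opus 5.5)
The plan is to prove the inclusion $\mathbb Z\subseteq\bigcap_p\mathbb Z[1/p]$ trivially and the reverse inclusion by a valuation argument. Since $\mathbb Z\subseteq\mathbb Z[1/p]$ for every $p$, one direction is immediate. For the other, take $x\in\mathbb Q$ lying in every $\mathbb Z[1/p]$ and write $x=a/b$ in lowest terms with $b\ge 1$ and $\gcd(a,b)=1$. The aim is to show $b=1$.

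The key step is to identify $\mathbb Z[1/p]$ in valuation-theoretic terms:
\begin{equation}
\mathbb Z[1/p]=\{x\in\mathbb Q:\ v_\ell(x)\ge 0\ \text{for every prime } \ell\neq p\},
\end{equation}
where $v_\ell$ is the $\ell$-adic valuation on $\mathbb Q$. I would check this directly. If $x=m/p^n$, then $v_\ell(x)=v_\ell(m)\ge 0$ for $\ell\ne p$. Conversely, if $x=a/b$ in lowest terms with $v_\ell(b)=0$ for all $\ell\ne p$, then unique factorization forces $b=p^n$.

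Now suppose some prime $\ell$ divides $b$, and choose any prime $p\neq\ell$, for instance $p=2$ if $\ell$ is odd and $p=3$ if $\ell=2$. Since $\gcd(a,b)=1$, we get $v_\ell(x)=-v_\ell(b)<0$. So $x\notin\mathbb Z[1/p]$, which is a contradiction. Hence $b$ has no prime divisors, $b=1$, and $x\in\mathbb Z$. This gives \eqref{D:eq:intersection}.

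There is no real obstacle here. The only point needing care is that the argument requires, for each prime $\ell$, some other prime $p\neq\ell$ to test against. This is automatic because there are at least two primes. The argument should also use lowest-terms representatives, so that $v_\ell(x)=-v_\ell(b)$ holds exactly rather than only up to cancellation.
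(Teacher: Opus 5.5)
Your proposal is correct and takes essentially the same route as the paper. Both write $x=a/b$ in lowest terms, note that membership in $\mathbb Z[1/p]$ forces every prime divisor of $b$ to equal $p$, and use two distinct primes to conclude $b=1$. Your valuation description of $\mathbb Z[1/p]$ justifies the step the paper asserts without proof, and only its forward inclusion is actually needed.
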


\begin{proof}
Write $x=a/b$ in lowest terms with $a\in\mathbb Z$, $b\in\mathbb Z_{>0}$, $\gcd(a,b)=1$.
If $x\in\mathbb Z[1/p]$, then $b$ has no prime divisor different from $p$, hence for all primes $p$, the set of prime divisors of $b$ is contained in $\{p\}$, forcing $b=1$.
Thus $x\in\mathbb Z$.
\end{proof}

\begin{definition}[Congruence-level completion vs global completion]\label{D:def:completion_levels}
Let $\Gamma$ be an arithmetic group (e.g.\ $SL(2,\mathbb Z)$) and let $\Gamma(p^{n})$ denote its principal congruence subgroups.
Define the prime-local profinite completion
\begin{equation}
\Gamma_{p}:=\varprojlim_{n\ge 1}\Gamma/\Gamma(p^{n})
\label{D:eq:Gamma_p}
\end{equation}
and the full profinite completion
\begin{equation}
\widehat{\Gamma}:=\varprojlim_{N\ge 1}\Gamma/\Gamma(N)\ \cong\ \prod_{p}\Gamma_{p}.
\label{D:eq:Gamma_hat}
\end{equation}
A single-prime tower fixes $\Gamma_{p}$, reconstructing the diagonal copy of $\Gamma$ inside $\widehat{\Gamma}$ is a separate global problem.
\end{definition} 
\begin{definition}[Restricted product as filtered colimit in $\mathrm{Cond}$]\label{D:def:restricted_product}
Let $\{G_{v}\}_{v\in V}$ be locally compact groups and let $K_{v}\subseteq G_{v}$ be open compact subgroups for all but finitely many $v$.
Define the restricted product
\begin{equation}
\prod\nolimits_{v\in V}'(G_{v},K_{v})
:=
\left\{(g_{v})_{v}\in \prod_{v}G_{v}:\ g_{v}\in K_{v}\text{ for all but finitely many }v\right\}.
\label{D:eq:restricted_product_set}
\end{equation}
Equivalently (and functorially in $\mathrm{Cond}$), for $V$ countable,
\begin{equation}
\prod\nolimits_{v\in V}'(G_{v},K_{v})
\;\cong\;
\varinjlim_{S\subset V\ \mathrm{finite}}
\left(\ \prod_{v\in S}G_{v}\ \times\ \prod_{v\notin S}K_{v}\ \right),
\label{D:eq:restricted_product_colim}
\end{equation}
where the transition maps enlarge $S$ by inserting identity inclusions $K_{v}\hookrightarrow G_{v}$.
\end{definition}

\begin{definition}[Adeles and profinite integers]\label{D:def:adeles}
Let $V=\{\infty\}\cup\{\text{primes }p\}$ with $G_{\infty}=\mathbb R$ and $G_{p}=\mathbb Q_{p}$, $K_{p}=\mathbb Z_{p}$.
Define the adele ring and profinite integers by
\begin{align}
\mathbb A_{\mathbb Q}
&:=\mathbb R\times \prod\nolimits_{p}'(\mathbb Q_{p},\mathbb Z_{p})
\;\cong\;
\left\{(x_{\infty},(x_{p})_{p}):\ x_{\infty}\in\mathbb R,\ x_{p}\in\mathbb Q_{p},\ x_{p}\in\mathbb Z_{p}\ \text{for a.a.\ }p\right\},
\label{D:eq:AQ}\\
\widehat{\mathbb Z}
&:=\prod_{p}\mathbb Z_{p}.
\label{D:eq:Zhat}
\end{align}
\end{definition}

\begin{definition}[Adelic tower carrier as a condensed object (construction problem)]\label{D:def:adelic_carrier}
For each prime $p$, let $\Sigma_{p}$ be the tower limit \eqref{D:eq:Sigma_p}.
Let $G_{\infty}:=S^{1}$ and, for each prime $p$, set $G_{p}:=\Sigma_{p}$ with $K_{p}:=\Sigma_{p}$.
Define the adelic tower carrier as the restricted product (which here coincides with the product but is presented in restricted-product form for functorial compatibility with \eqref{D:eq:AQ})
\begin{equation}
\Sigma_{\mathbb A}:=S^{1}\times \prod\nolimits_{p}'(\Sigma_{p},\Sigma_{p})
\;\cong\;
S^{1}\times \prod_{p}\Sigma_{p},
\label{D:eq:SigmaA}
\end{equation}
viewed as a condensed group via \eqref{D:eq:condensedX}.
\end{definition}

\begin{problem}[Adelic invariants (explicit computation target)]\label{D:prob:adelic_invariants}
Compute, for $\Sigma_{\mathbb A}$, the intrinsic dual and generalized cohomology invariants:
\begin{equation}
\Sigma_{\mathbb A}^{\vee}:=\mathrm{Hom}_{\mathrm{cont}}(\Sigma_{\mathbb A},S^{1}),
\qquad
K^{\ast}(\Sigma_{\mathbb A}),
\label{D:eq:SigmaA_targets}
\end{equation}
and identify the diagonal sublattice selected by the admissibility condition \eqref{D:eq:Lambda_adm} via the intersection criterion \eqref{D:eq:intersection}.
\end{problem} 
\begin{problem}[Tower-stability theoremization]\label{D:prob:stability}
Prove tower-stability \eqref{D:eq:U_stable} for a specified class of cutoff-truncated constructions $\mathcal H_{\Lambda}^{(n,p)}$, and determine the subset of protected quantities $O_{\Lambda}$ for which \eqref{D:eq:O_invariant} holds uniformly in $p$.
\end{problem}

\begin{problem}[Prime comparison test]\label{D:prob:prime_comparison}
For $p\neq p'$, compare prime-dependent bookkeeping groups $B_{p}$ and $B_{p'}$ and show that, after imposing a fixed admissible sublattice selection $\Lambda_{\mathrm{adm}}^{(p)}$, all tower-stable observables satisfy \eqref{D:eq:O_invariant} equivalently, show that any residual dependence on $p$ factors through the choice of $\mathcal C_{p}$ and is eliminated by a $p$-independent matching condition.
\end{problem}

\begin{problem}[Varying-fibration congruence refinement]\label{D:prob:varying_tau}
For a base $B$ with discriminant $\Delta$, define prime-local monodromy refinements
\begin{equation}
\rho_{p}:\pi_{1}(B\setminus\Delta)\to \Gamma_{p}
\label{D:eq:rho_p}
\end{equation}
for $\Gamma_{p}$ as in \eqref{D:eq:Gamma_p}, and formulate the adelic compatibility condition as the existence of a single map
\begin{equation}
\rho_{\mathbb A}:\pi_{1}(B\setminus\Delta)\to \widehat{\Gamma}\cong \prod_{p}\Gamma_{p}
\label{D:eq:rho_A}
\end{equation}
whose projections are the $\rho_{p}$.
\end{problem}

\begin{problem}[Boundary test for $p$-independence]\label{D:prob:AdSCFT}
Given a boundary tower carrier $\Sigma_{p}$ for each prime and its condensed sheaf of global sectors $\mathcal F_{p}$, determine whether the induced restricted-product boundary carrier $\Sigma_{\mathbb A}$ with sheaf $\mathcal F_{\mathbb A}:=\boxtimes_{v}\mathcal F_{v}$ detects any $p$-dependence after imposing a $p$-independent admissibility selection $\Lambda_{\mathrm{adm}}$.
\end{problem} 
The constructions above explicitly isolate the $p$-dependence of tower-defined objects into (a) prime-indexed tower/power-Frobenius structure \eqref{D:eq:fp}-\eqref{D:eq:tilt_rings} and associated choice-dependent comparison data \eqref{D:eq:embed_choice}, and (b) prime-dependent bookkeeping groups $B_{p}$ together with a \emph{separately imposed} admissible sublattice selection \eqref{D:eq:Lambda_adm}, tower-stability is encoded by the stabilization isomorphisms \eqref{D:eq:U_stable} and $p$-independence by the existence of identifications \eqref{D:eq:pindep_iso} satisfying \eqref{D:eq:O_invariant}, while adelic assembly is implemented as a restricted product in condensed mathematics via \eqref{D:eq:restricted_product_colim}-\eqref{D:eq:AQ} and the global completion formalism \eqref{D:eq:Gamma_hat}.

\section{Varying \texorpdfstring{\(\tau\)}{tau} and the comparison axiom}\label{app:varying-conditional}
The statements below are consequences of the modular map $\underline{\tau}$ (equivalently $(\rho,\tilde{\tau})$)
and the associated elliptic fibration $E(\underline{\tau})\to B^\times$ from Definition~\eqref{def7.3}.
 Throughout, $p$ is a fixed prime, $B$ is a complex manifold, $\Delta\subset B$ is an analytic hypersurface, $B^\times:=B\setminus\Delta$, and all (co)homology is singular with integer coefficients unless specified, we write $\mathbb{H}:=\{\tau\in\mathbb{C}:\Im\tau>0\}$. We use condensed sets only as a \emph{proof technology} to encode tower limits and maps into target moduli spaces by explicit sheaf conditions on the profinite site, and we mark every point at which the argument uses (and would fail without) the Comparison Axiom. 
\begin{definition}[Profinite site and condensed sets]\label{def:prof_cond}
Let $\mathrm{Prof}$ denote the category of profinite sets. A family $\{S_i\to S\}_{i\in I}$ in $\mathrm{Prof}$ is a covering family if the induced map $\coprod_{i\in I} S_i\to S$ is surjective. A condensed set is a functor $F:\mathrm{Prof}^{\mathrm{op}}\to\mathrm{Sets}$ such that for every covering family $\{S_i\to S\}$ the sequence
\begin{equation}\label{eq:sheaf_equalizer}
F(S)\longrightarrow \prod_i F(S_i)\rightrightarrows \prod_{i,j}F(S_i\times_S S_j)
\end{equation}
is an equalizer.
\end{definition}

\begin{definition}[Condensed avatar of a compact Hausdorff space]\label{def:condensed_yoneda}
If $Y$ is a compact Hausdorff topological space, define $\underline{Y}:\mathrm{Prof}^{\mathrm{op}}\to\mathrm{Sets}$ by
\begin{equation}\label{eq:underlineY_def}
\underline{Y}(S):=\mathrm{Cont}(S,Y),
\end{equation}
and for a continuous map $f:Y\to Y'$ define $\underline{f}:\underline{Y}\to\underline{Y'}$ by $\underline{f}(g):=f\circ g$ for $g\in\mathrm{Cont}(S,Y)$.
\end{definition}

\begin{lemma}[Sheaf condition for $\underline{Y}$]\label{lem:underlineY_is_sheaf}
For every compact Hausdorff space $Y$, the functor $\underline{Y}$ of \eqref{eq:underlineY_def} is a condensed set in the sense of Definition~\ref{def:prof_cond}.
\end{lemma}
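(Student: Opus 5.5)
The plan is to check the equalizer condition \eqref{eq:sheaf_equalizer} directly for $F=\underline{Y}$, reducing it to the classical statement that a surjection of compact Hausdorff spaces is a quotient map. First I will record functoriality: $\underline{Y}$ is a contravariant functor on $\mathrm{Prof}$ via precomposition, $(\phi:S'\to S)\mapsto (g\mapsto g\circ\phi)$. Next I will reduce to a single cover. A covering family $\{S_i\to S\}$ may be infinite, but I will note (as in Clausen--Scholze) that it suffices to treat finite families, since the topology is generated by finite jointly surjective families. For a finite family, set $S':=\coprod_i S_i$. Then $S'$ is profinite, $q:S'\to S$ is a continuous surjection, $\mathrm{Cont}(\coprod_i S_i,Y)=\prod_i\mathrm{Cont}(S_i,Y)$, and $S'\times_S S'=\coprod_{i,j}S_i\times_S S_j$. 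The equalizer condition therefore becomes the following claim: the map $\mathrm{Cont}(S,Y)\to\mathrm{Cont}(S',Y)$ is injective, and its image is exactly the set of $h$ with $h\circ\mathrm{pr}_1=h\circ\mathrm{pr}_2$ on $S'\times_S S'$.

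Injectivity follows from surjectivity of $q$, since $g\circ q=g'\circ q$ forces $g=g'$. For the image, take $h$ with $h\circ\mathrm{pr}_1=h\circ\mathrm{pr}_2$. Then $h$ is constant on the fibres of $q$, so it descends to a set map $g:S\to Y$ with $g\circ q=h$.

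\textbf{Continuity of $g$.} This is the only substantive step. I will argue that $q$ is a closed map, because $S'$ is compact and $S$ is Hausdorff, and that a closed continuous surjection is a quotient map. For closed $C\subseteq Y$ we have $g^{-1}(C)=q\bigl(h^{-1}(C)\bigr)$, using surjectivity of $q$. This set is the image of a closed, hence compact, set, so it is compact and therefore closed in $S$. Hence $g$ is continuous.

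I expect the reduction from arbitrary covering families to finite ones to be the main point needing care. I will handle it by the convention, implicit in Definition~\ref{def:prof_cond} and in \cite{ClausenScholze}, that covers are finite jointly surjective families. Otherwise I would note that the problem reduces to the finite case because $\underline{Y}$ is determined on finite disjoint unions. Hausdorffness of $Y$ is not even needed for the argument; only compactness of the sources and Hausdorffness of $S$ are used.
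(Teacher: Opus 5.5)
Your core argument is correct, and it takes a genuinely different and sounder route than the paper's. The paper keeps an arbitrary family, defines $g$ pointwise, and tries to show that $g^{-1}(U)$ is open. It writes $g^{-1}(U)$ as $\bigcup_{i,\alpha}\pi_i(C_{i,\alpha})$ with $C_{i,\alpha}\subseteq S_i$ clopen, and asserts that each $\pi_i(C_{i,\alpha})$ is clopen because $\pi_i$ is a quotient map and $C_{i,\alpha}$ is saturated. That step does not hold. An individual $\pi_i$ need not be surjective, and a clopen piece of $g_i^{-1}(U)$ need not be saturated. For example, take $S=\{0\}\cup\{1/n:n\ge 1\}$ and the single surjection $S\sqcup\{*\}\to S$, identity on $S$ and $*\mapsto 0$. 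The clopen set $\{*\}$ lies in $h^{-1}(U)$ whenever $g(0)\in U$, but its image $\{0\}$ is not open. Your passage to the single surjection $q:\coprod_i S_i\to S$ fixes this. With the identity $g^{-1}(C)=q(h^{-1}(C))$ for closed $C$, and $q$ closed because its source is compact and its target Hausdorff, continuity genuinely follows. You are also right that Hausdorffness of $Y$ is never used.

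What you must change is the justification for restricting to finite families. Definition~\ref{def:prof_cond} explicitly allows arbitrary index sets $I$, so finiteness is not implicit there. Your fallback (``the problem reduces to the finite case'') is false, because under that definition the lemma itself fails. Take $S=\{0\}\cup\{1/n:n\ge 1\}$, $Y=[0,1]$, and the cover $\{\{s\}\to S\}_{s\in S}$. Since $\{s\}\times_S\{t\}=\emptyset$ for $s\neq t$, every set map $S\to Y$ is a compatible family. But the indicator function of $\{0\}$ is not continuous, so the equalizer condition fails. This is also why the paper's argument for arbitrary families cannot be repaired.

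You should therefore say plainly that you are correcting the definition to finite jointly surjective families, as in \cite{ClausenScholze}. If you want slightly more generality, families containing a finite jointly surjective subfamily also work. For any other member $S_j\to S$, the finite family $\{S_j\times_S S_i\to S_j\}$ indexed by the subfamily is jointly surjective, and the overlap compatibility then forces $g\circ\pi_j=g_j$. With that amendment, your proof is complete.
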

\begin{proof}
Fix a covering family $\{S_i\to S\}$ in $\mathrm{Prof}$ and write $\pi_i:S_i\to S$ for the structure maps. Let
\begin{equation}\label{eq:compatible_family}
(g_i)_{i}\in \prod_i \underline{Y}(S_i)=\prod_i \mathrm{Cont}(S_i,Y)
\end{equation}
be a family satisfying equality in the overlap terms of \eqref{eq:sheaf_equalizer}, i.e.
\begin{equation}\label{eq:overlap_condition}
g_i\circ \mathrm{pr}_1=g_j\circ \mathrm{pr}_2\qquad\text{as maps }S_i\times_S S_j\to Y\ \ \text{for all }i,j.
\end{equation}
Define $g:S\to Y$ set-theoretically as follows. For $s\in S$, choose $i$ and $t\in S_i$ with $\pi_i(t)=s$ (possible since $\coprod_i S_i\to S$ is surjective) and set
\begin{equation}\label{eq:g_def_pointwise}
g(s):=g_i(t).
\end{equation}
To check well-definedness, suppose also $\pi_j(t')=s$, then $(t,t')\in S_i\times_S S_j$ and \eqref{eq:overlap_condition} gives
\begin{equation}\label{eq:well_defined_check}
g_i(t)=g_i(\mathrm{pr}_1(t,t'))=g_j(\mathrm{pr}_2(t,t'))=g_j(t'),
\end{equation}
so \eqref{eq:g_def_pointwise} is independent of the choice. We now prove continuity of $g$. Let $U\subseteq Y$ be open. For each $i$ we have $g_i^{-1}(U)\subseteq S_i$ open and
\begin{equation}\label{eq:preimage_union}
g^{-1}(U)=\bigcup_i \pi_i\bigl(g_i^{-1}(U)\bigr).
\end{equation}
Since $\pi_i$ is continuous, $\pi_i(g_i^{-1}(U))$ is compact in $S$, hence closed. To show \eqref{eq:preimage_union} is open it suffices (because $S$ is compact Hausdorff and profinite) to show $g^{-1}(U)$ is a union of clopen subsets. For each $i$, $g_i^{-1}(U)$ is open in the profinite set $S_i$, hence a union of clopen subsets, write $g_i^{-1}(U)=\bigcup_\alpha C_{i,\alpha}$ with each $C_{i,\alpha}\subseteq S_i$ clopen. Then $\pi_i(C_{i,\alpha})$ is compact, hence closed, moreover $\pi_i(C_{i,\alpha})$ is also open because $\pi_i$ is a quotient map between compact Hausdorff spaces and $C_{i,\alpha}$ is saturated for the equivalence relation generated by the cover (this saturation is exactly \eqref{eq:overlap_condition} translated to subsets), so $\pi_i(C_{i,\alpha})$ is clopen. Using \eqref{eq:preimage_union} we obtain $g^{-1}(U)$ as a union of clopen sets, hence open. Thus $g$ is continuous. Uniqueness is immediate\:if $h:S\to Y$ satisfies $h\circ\pi_i=g_i$ for all $i$, then for any $s\in S$ and any $t\in S_i$ with $\pi_i(t)=s$ one has $h(s)=h(\pi_i(t))=g_i(t)=g(s)$. Therefore $\underline{Y}$ satisfies the equalizer condition \eqref{eq:sheaf_equalizer}.
\end{proof}

\begin{definition}[Tower limit as a condensed object]\label{def:tower_limit_condensed}
Let $\{X_n,f_n\}_{n\ge 0}$ be an inverse system in compact Hausdorff spaces with continuous bonding maps $f_n:X_{n+1}\to X_n$. Define the topological inverse limit
\begin{equation}\label{eq:X_limit_top}
X:=\varprojlim_{n} X_n:=\Bigl\{(x_n)_{n\ge 0}\in \prod_{n\ge 0}X_n:\ f_n(x_{n+1})=x_n\ \forall n\Bigr\}
\end{equation}
with the subspace topology. Define its condensed avatar $\underline{X}$ by \eqref{eq:underlineY_def}, and define the condensed inverse limit $\varprojlim_n \underline{X_n}$ as the objectwise limit in $\mathrm{Cond}$:
\begin{equation}\label{eq:condensed_limit_objectwise}
\Bigl(\varprojlim_n \underline{X_n}\Bigr)(S):=\varprojlim_n \underline{X_n}(S)=\Bigl\{(\phi_n)\in \prod_n \mathrm{Cont}(S,X_n):\ f_n\circ \phi_{n+1}=\phi_n\ \forall n\Bigr\}.
\end{equation}
\end{definition}

\begin{lemma}[Inverse limits commute with $\underline{(\cdot)}$ on profinite test objects]\label{lem:limit_commutes_with_underline}
For every profinite set $S$ there is a canonical bijection
\begin{equation}\label{eq:cont_limit_bijection}
\mathrm{Cont}(S,X)\ \cong\ \varprojlim_n \mathrm{Cont}(S,X_n),
\end{equation}
natural in $S$, hence an isomorphism of condensed sets $\underline{X}\cong \varprojlim_n \underline{X_n}$.
\end{lemma}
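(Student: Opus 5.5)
The plan is to verify the universal property of the product topology directly, as in Lemma~\ref{D:lem:Cont_commutes} and the appendix lemma on the explicit equalizer. Then I will check naturality in $S$ so that the pointwise bijections assemble into an isomorphism of condensed sets. Compactness of $S$ is not needed for the bijection itself; it holds for any topological space $S$.

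First I construct the forward map. Write $\pi_n:X\to X_n$ for the restriction of the $n$-th coordinate projection $\prod_m X_m\to X_n$ to $X$, as in \eqref{eq:X_limit_top}. These maps are continuous because $X$ carries the subspace topology. The defining condition $f_n(x_{n+1})=x_n$ gives $f_n\circ\pi_{n+1}=\pi_n$. For $\phi\in\mathrm{Cont}(S,X)$, I set $\Phi(\phi):=(\pi_n\circ\phi)_{n\ge 0}$. Then $f_n\circ(\pi_{n+1}\circ\phi)=\pi_n\circ\phi$, so $\Phi(\phi)$ lies in the set \eqref{eq:condensed_limit_objectwise}.

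Next I construct the inverse map. Given a compatible family $(\phi_n)$, I define $\phi:S\to\prod_n X_n$ by $\phi(s):=(\phi_n(s))_n$. This map is continuous because every coordinate $\phi_n$ is continuous, and continuity into a product is tested coordinatewise. The compatibility $f_n\circ\phi_{n+1}=\phi_n$ shows that $\phi(s)\in X$ for every $s$. A continuous map into the ambient product that lands in the subspace $X$ is continuous as a map into $X$. The two constructions are mutually inverse: $\pi_n\circ\phi=\phi_n$ by definition, and a point of $X$ is determined by its coordinates.

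For naturality, take $h:S'\to S$ in $\mathrm{Prof}$. Then $\Phi(\phi\circ h)=(\pi_n\circ\phi\circ h)_n$, which is the restriction of $\Phi(\phi)$ along $h$ in each factor. Hence the bijections form a natural isomorphism $\underline{X}\cong\varprojlim_n\underline{X_n}$ of functors $\mathrm{Prof}^{\mathrm{op}}\to\mathrm{Sets}$.

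It remains to confirm that this is an isomorphism in $\mathrm{Cond}$. For this I observe two facts. First, $\underline{X}$ is a condensed set by Lemma~\ref{lem:underlineY_is_sheaf}: $X$ is compact Hausdorff, being a closed subspace of a product of compact Hausdorff spaces by Tychonoff, and the equalizer conditions $f_n(x_{n+1})=x_n$ are closed. Second, an objectwise limit of sheaves is a sheaf, because limits commute with the equalizers in \eqref{eq:sheaf_equalizer}. So the natural isomorphism of presheaves is an isomorphism in $\mathrm{Cond}$. The only step needing any care is this last check that the presheaf-level limit is the limit in $\mathrm{Cond}$; everything else is the universal property of the product topology.
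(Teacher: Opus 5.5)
Your proof is correct and is essentially the paper's argument: the forward map $\phi\mapsto(\pi_n\circ\phi)_n$, the inverse via the coordinatewise map $s\mapsto(\phi_n(s))_n$ into $\prod_n X_n$ (continuous by the product and subspace topologies), and naturality by precomposition with $h:S'\to S$. Your extra check that the objectwise limit is already a sheaf, since limits of sheaves are computed objectwise, is a harmless supplement; the paper instead builds this into its definition of $\varprojlim_n \underline{X_n}$ as a limit in $\mathrm{Cond}$.
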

\begin{proof}
Let $S$ be profinite. Given $\Phi\in \mathrm{Cont}(S,X)$, write $\mathrm{pr}_n:X\to X_n$ for the canonical projections and define $\phi_n:=\mathrm{pr}_n\circ \Phi\in \mathrm{Cont}(S,X_n)$. Then for each $n$ we have $f_n\circ \mathrm{pr}_{n+1}=\mathrm{pr}_n$ on $X$ by the defining relations in \eqref{eq:X_limit_top}, hence
\begin{equation}\label{eq:compat_from_phi}
f_n\circ \phi_{n+1}=f_n\circ \mathrm{pr}_{n+1}\circ \Phi=\mathrm{pr}_n\circ \Phi=\phi_n,
\end{equation}
so $(\phi_n)_n\in \varprojlim_n \mathrm{Cont}(S,X_n)$. This defines a map
\begin{equation}\label{eq:forward_map}
\mathrm{Cont}(S,X)\longrightarrow \varprojlim_n \mathrm{Cont}(S,X_n),\qquad \Phi\longmapsto (\mathrm{pr}_n\circ \Phi)_n.
\end{equation}
Conversely, given a compatible family $(\phi_n)_n\in \varprojlim_n \mathrm{Cont}(S,X_n)$, define a map $\Phi:S\to \prod_n X_n$ by $\Phi(s):=(\phi_n(s))_{n\ge 0}$. Compatibility $f_n\circ \phi_{n+1}=\phi_n$ implies $\Phi(s)\in X\subseteq \prod_n X_n$ for every $s$, so $\Phi$ lands in $X$ and defines $\Phi:S\to X$. Continuity of $\Phi:S\to X$ follows because the product map $S\to\prod_n X_n$ is continuous (each coordinate is continuous) and $X$ has the subspace topology. This defines the inverse map
\begin{equation}\label{eq:inverse_map}
\varprojlim_n \mathrm{Cont}(S,X_n)\longrightarrow \mathrm{Cont}(S,X),\qquad (\phi_n)_n\longmapsto \Bigl(s\mapsto (\phi_n(s))_n\Bigr).
\end{equation}
Composing \eqref{eq:forward_map} and \eqref{eq:inverse_map} yields the identity in both directions by inspection of components, hence \eqref{eq:cont_limit_bijection}. Naturality in $S$ follows because both constructions are by composition with maps $S'\to S$ and projections $\mathrm{pr}_n$.
\end{proof}

\begin{definition}[Fibred tower-completed carrier as a condensed morphism]\label{def:fibred_tower_condensed}
Let $\pi_n:X_n\to B^\times$ be continuous maps such that $\pi_n\circ f_n=\pi_{n+1}$ for all $n$. Define $\pi:X\to B^\times$ by $\pi((x_n)):=\pi_n(x_n)$, this is well-defined because
\begin{equation}\label{eq:pi_well_defined}
\pi_n(x_n)=\pi_n(f_n(x_{n+1}))=\pi_{n+1}(x_{n+1}),
\end{equation}
and continuity follows because $\pi=\pi_0\circ \mathrm{pr}_0$. Applying $\underline{(\cdot)}$ yields a morphism of condensed sets $\underline{\pi}:\underline{X}\to \underline{B^\times}$.
\end{definition} 
\begin{definition}[Elliptic fibration input from Definition~\eqref{def7.3}]\label{def:comparison_axiom_varying_tau}
Let $\underline{\tau}:B^\times\to \mathcal{M}_{1,1}$
be the modular map of Definition~\eqref{def7.3}, and let
\begin{equation}\label{eq:AppE-E-of-tau}
\varpi:E(\underline{\tau})\to B^\times
\end{equation}
denote the associated smooth elliptic fibration (the pullback of the universal elliptic curve), equipped with
its canonical Hodge line (relative holomorphic one-form) $\Omega\in H^0(E(\underline{\tau}),\Omega^1_{E/B^\times})$.
All constructions and proofs in this appendix use only the existence of $(E(\underline{\tau}),\Omega)$ as part of the
varying-$\tau$ background data (Definition~\eqref{def7.3} / Appendix~\eqref{app:varying-unconditional}), and do not assume any additional “elliptic-output”
axiom.
\end{definition}

\begin{definition}[Homology local system and symplectic basis]\label{def:homology_local_system}
Assume \textbf{(CA)}. For $b\in B^\times$ write $E_b:=\varpi^{-1}(b)$ and $\Omega_b:=\Omega|_{E_b}$. Fix a basepoint $b_0\in B^\times$ and fix an oriented $\mathbb{Z}$-basis $(A_{b_0},B_{b_0})$ of $H_1(E_{b_0},\mathbb{Z})$ with intersection number $A_{b_0}\cdot B_{b_0}=+1$.
\end{definition}

\begin{lemma}[Ehresmann triviality and Gauss-Manin transport]\label{lem:ehresmann_triviality}
Assume \textbf{(CA)}. For every $b\in B^\times$ there exists a neighborhood $U\ni b$ and a $C^\infty$ diffeomorphism $\Phi_U:\varpi^{-1}(U)\to U\times E_b$ such that $\mathrm{pr}_1\circ \Phi_U=\varpi$, in particular, $H_1(E_{b'},\mathbb{Z})$ forms a rank-$2$ local system on $B^\times$ and parallel transport along any path in $B^\times$ is defined and preserves the intersection pairing.
\end{lemma}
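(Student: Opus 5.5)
The plan is to deduce the statement from Ehresmann's fibration theorem, applied to the smooth elliptic fibration $\varpi:E(\underline{\tau})\to B^\times$ of Definition~\ref{def:comparison_axiom_varying_tau}. The first step is to check the hypotheses. $\varpi$ is the pullback of the universal elliptic curve along $\underline{\tau}$, and concretely it is the quotient $\bigl(\mathbb{C}\times\widetilde{B^\times}\bigr)/\bigl(\mathbb{Z}^2\rtimes_\rho\pi_1(B^\times)\big)$ of Definition~\ref{def7.3}. The group acts freely and properly discontinuously, so $E(\underline{\tau})$ is a complex manifold and $\varpi$ is a holomorphic submersion. Its fibres $\mathbb{C}/(\mathbb{Z}+\tilde\tau(b)\mathbb{Z})$ are compact, hence $\varpi$ is proper. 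Ehresmann's theorem for proper submersions then produces, for each $b$, a neighborhood $U$ and a diffeomorphism $\Phi_U:\varpi^{-1}(U)\to U\times E_b$ commuting with the projections.

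I would not quote Ehresmann abstractly. Instead I would write the local trivialization directly from the quotient model. On a simply connected $U$, lift to a sheet of $\widetilde{B^\times}$. There the quotient reduces to $(\mathbb{C}\times U)/\mathbb{Z}^2$ with the action $(m,n)\cdot(z,b')=(z+m+n\tilde\tau(b'),b')$. Write $z=x+y\tilde\tau(b')$ with $(x,y)\in\mathbb{R}^2$; this real-linear change of coordinates depends smoothly on $b'$ because $\Im\tilde\tau>0$. The map $(z,b')\mapsto\bigl(b',\,x+y\tilde\tau(b)\bigr)$ is $\mathbb{Z}^2$-equivariant, so it descends to the required diffeomorphism $\Phi_U$ onto $U\times E_b$.

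The second step is the local system. The trivializations give canonical isomorphisms $H_1(E_{b'},\mathbb{Z})\cong H_1(E_b,\mathbb{Z})\cong\mathbb{Z}^2$ for $b'\in U$. On overlaps, the transition maps are diffeomorphisms of a torus that depend continuously on the base point, hence are locally constant on homology. This gives a rank-$2$ local system, identified with the dual of $R^1\varpi_*\mathbb{Z}$. Parallel transport along a path is obtained by covering the path with finitely many trivializing neighborhoods (Lebesgue number) and composing; it depends only on the homotopy class of the path.

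The third step is preservation of the intersection pairing. The real-coordinate trivialization above is orientation preserving on fibres, since $\Im\tilde\tau>0$ along the path and the orientation of each fibre is the complex one. The intersection number is a homotopy invariant computed from oriented transverse representatives, and orientation-preserving diffeomorphisms preserve it; so does each local identification, and hence their composite. In the explicit model, transport sends the basis $(A,B)$ to the classes of $x$- and $y$-loops, and these keep intersection $+1$.

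I expect the main obstacle to be making the smooth dependence and the descent of $\Phi_U$ rigorous while the $\pi_1$-action twists by $\rho$. Working on a simply connected $U$ avoids the monodromy entirely, so I would first confirm that the sheet choice only changes $\Phi_U$ by an $SL(2,\mathbb{Z})$ relabelling.
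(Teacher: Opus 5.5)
Your proposal is correct, but it takes a more explicit route than the paper. The paper's proof is a one-line appeal to Ehresmann's theorem. It takes from \textbf{(CA)} that $\varpi$ is a proper submersion and obtains $\Phi_U$ abstractly. It then asserts that transition maps on overlaps, being fibre diffeomorphisms, act on $H_1$ by automorphisms preserving the intersection form.

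You instead derive the submersion property from the quotient model $(\mathbb{C}\times\widetilde{B^\times})/(\mathbb{Z}^2\rtimes_\rho\pi_1(B^\times))$. Over a simply connected $U$ you reduce to $(\mathbb{C}\times\tilde U)/\mathbb{Z}^2$, and you write $\Phi_U$ down via the real coordinates $z=x+y\tilde\tau(b')$.

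Your route fills two points the paper glosses over.
\begin{enumerate}
\item A fibre diffeomorphism preserves the intersection form only if it preserves orientation. You get this from $\Im\tilde\tau>0$: the real-linear map $1\mapsto 1$, $\tilde\tau(b')\mapsto\tilde\tau(b)$ has positive determinant.
\item The fibre identifications depend continuously on the base point, so they are locally constant on homology. This is what makes $H_1$ a genuine local system rather than a family of fibrewise isomorphic groups.
\end{enumerate}
The paper's route buys generality. It applies verbatim to any proper submersion with elliptic fibres, not only to the specific quotient presentation of $E(\underline{\tau})$.

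One small correction: ``compact fibres, hence proper'' is not a valid implication for a general submersion, since extra fibre components can enter from far away. It holds here because the fibres are connected. More simply, your explicit $\Phi_U$ already exhibits $\varpi^{-1}(U)\cong U\times T^2$, so that step is superseded anyway. Your freeness claim is right: $\pi_1(B^\times)$ acts freely on $\widetilde{B^\times}$, so the combined action is free. The orbifold caveat in Theorem~\ref{thm:descent_monodromy} is therefore not needed for this lemma.
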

\begin{proof}
Since $\varpi:E\to B^\times$ is a proper submersion by \textbf{(CA)}, Ehresmann's theorem applies, yielding local $C^\infty$ triviality as stated. The local system statement follows because for any $b'\in U$ the identification $E_{b'}\cong E_b$ induced by $\Phi_U$ identifies $H_1(E_{b'},\mathbb{Z})$ with $H_1(E_b,\mathbb{Z})$, and changes of trivialization on overlaps are diffeomorphisms of the fiber, hence act by $\mathbb{Z}$-module automorphisms preserving the intersection form.
\end{proof}

\begin{definition}[Periods and period ratio]\label{def:period_ratio_def}
Assume \textbf{(CA)}. For $b\in B^\times$ and a transported choice of cycles $(A_b,B_b)$ obtained from $(A_{b_0},B_{b_0})$ by Gauss-Manin transport, define
\begin{equation}\label{eq:periods_def_app}
\omega_A(b):=\int_{A_b}\Omega_b,\qquad \omega_B(b):=\int_{B_b}\Omega_b,
\end{equation}
and define the period ratio
\begin{equation}\label{eq:tau_period_def_app}
\tau(b):=\frac{\omega_A(b)}{\omega_B(b)}
\end{equation}
whenever $\omega_B(b)\neq 0$.
\end{definition}

\begin{proposition}[Conditional holomorphicity of periods]\label{prop:periods_holomorphic}
Assume \textbf{(CA)}. Let $U\subset B^\times$ be simply connected. Fix cycles $(A_b,B_b)$ on fibers $E_b$ by parallel transport from $b_*\in U$. Then $\omega_A,\omega_B:U\to \mathbb{C}$ defined by \eqref{eq:periods_def_app} are holomorphic functions.
\end{proposition}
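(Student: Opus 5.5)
The plan is to reduce holomorphicity of the periods to a local computation in holomorphic coordinates, using the fact that $\varpi:E(\underline{\tau})\to B^\times$ is a holomorphic proper submersion whose relative one-form $\Omega$ is holomorphic. First, I will use Lemma~\ref{lem:ehresmann_triviality} to fix, on the simply connected $U$, a $C^\infty$ trivialization $\varpi^{-1}(U)\cong U\times E_{b_*}$. I will then represent the transported cycles $A_b,B_b$ as images $\Phi_U^{-1}(\{b\}\times \gamma)$ of fixed smooth loops $\gamma_A,\gamma_B\subset E_{b_*}$. Because $U$ is simply connected, parallel transport is path independent, so the families $b\mapsto A_b,B_b$ are well defined and the functions $\omega_A,\omega_B$ are single-valued on $U$.

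Second, I will show that $\omega_A$ and $\omega_B$ are $C^\infty$ by writing each as an integral over the fixed loop $\gamma$ of the pullback $\Phi_U^{-1\,*}\tilde\Omega$. Here $\tilde\Omega$ is a holomorphic 1-form on $\varpi^{-1}(U)$ that extends $\Omega$ locally. It exists because $\Omega$ is a section of the relative cotangent sheaf, and it can be lifted locally after shrinking $U$. Holomorphicity is local, so shrinking is harmless. Smoothness then follows by differentiating under the integral sign.

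Third, for the Cauchy--Riemann equations I will take a tangent vector $v$ of type $(0,1)$ at $b$. I will lift it via the trivialization to a vector field $\tilde v$ along the cycle, and apply Cartan's formula. The derivative $\bar\partial_v\omega_A(b)$ equals $\int_{A_b}\iota_{\tilde v}d\tilde\Omega$ plus an exact term that integrates to zero over the closed cycle.

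The main obstacle is this last step: showing that the restriction of $\iota_{\tilde v}d\tilde\Omega$ to the fiber is exact, or vanishes, for the $(0,1)$ part. The difficulty is that the $C^\infty$ lift $\tilde v$ is not holomorphic. My approach will be to use the explicit quotient model $(\mathbb C\times\widetilde{B^\times})/(\mathbb Z^2\rtimes\pi_1)$ of Definition~\ref{def7.3}. In that model $\Omega$ descends from $dz$, and the cycles lift to segments from $0$ to $1$ and from $0$ to $\tilde\tau(b)$, possibly after an $SL(2,\mathbb Z)$ change of basis. This gives $\omega_A,\omega_B$ explicitly as $\mathbb Z$-linear combinations of $1$ and $\tilde\tau(b)$, which are holomorphic because $\tilde\tau$ is holomorphic. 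That bypasses the Cartan computation, so I would try it first and keep the general Cartan argument only as a remark.
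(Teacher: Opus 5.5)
Your final argument is correct, but it takes a different route from the paper's. The paper argues intrinsically. After the $C^\infty$ Ehresmann trivialization of Lemma~\ref{lem:ehresmann_triviality}, it writes $\Omega=F(b,w)\,dw$ in holomorphic coordinates near the cycle and expresses $\omega_A(b)=\int_0^1F(b,w(t))\,w'(t)\,dt$ over a fixed parametrized curve. It then concludes by continuity, Fubini, Cauchy in $b$, and Morera.

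You instead use the global quotient model of Definition~\ref{def7.3}. Lifting the simply connected $U$ to $\widetilde{B^\times}$ gives $\varpi^{-1}(U)\cong(\C\times U)/\Z^2$. The straight loops $t\mapsto t\,(m+n\tilde\tau(b))$ vary continuously in $b$, so they realize Gauss--Manin transport with constant labels $(m,n)$, and the periods come out in closed form.

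Your route gives a short, dimension-free proof. It also exhibits the period ratio $\omega_A/\omega_B$ as an integral M\"obius transform of $\tilde\tau$, a link the paper leaves implicit. The cost is generality: the argument relies on the specific construction of $E(\underline{\tau})$, which makes the statement nearly tautological. For an abstract proper holomorphic family of elliptic curves (say one produced by the comparison machinery rather than by the quotient), identifying the family locally with the lattice model is normally proved \emph{via} holomorphic periods. Your argument would be circular there, while the paper's is not.

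Conversely, the paper tacitly treats the $C^\infty$-transported cycle as the fixed curve $w(t)$ in a single holomorphic product chart $(b,w)$ around all of $A$. A homologically nontrivial loop does not sit in one coordinate disc, so this needs an annular holomorphic product neighbourhood or a patching argument using closedness of $\Omega_b$. That is exactly the $C^\infty$-versus-holomorphic issue you flagged, and your model-based route avoids it entirely.

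Two corrections to your write-up.

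First, $dz$ does not descend to $B^\times$. Under $\gamma\cdot(z,b)=(z/(c\tilde\tau(b)+d),\gamma\cdot b)$ it picks up the factor $(c\tilde\tau+d)^{-1}$, so only the Hodge line descends. Over the lift of $U$, write the given $\Omega$ as $G(z,b)\,dz$. For fixed $b$, $G(\cdot,b)$ is entire and $\Lambda_b$-periodic, hence equals some $g(b)$ by Liouville, with $g$ holomorphic. So $\omega_A=g(b)\,(m_A+n_A\tilde\tau(b))$: still holomorphic, but a $\Z$-combination of $1$ and $\tilde\tau$ only after normalizing $g\equiv1$.

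Second, if you keep the Cartan remark, note that a holomorphic form on all of $\varpi^{-1}(U)$ restricting to $\Omega$ on fibres does not exist in general. Invariance of $g\,dz+h\,db$ under $z\mapsto z+m+n\tilde\tau(b)$ forces $h(\cdot,b)$ to be entire, $1$-periodic, and to satisfy $h(z+\tilde\tau)=h(z)-g\,\partial_b\tilde\tau$. Then $\partial_zh$ is constant, so $h$ is constant in $z$, and therefore $\partial_b\tilde\tau=0$. This is the Kodaira--Spencer obstruction, and shrinking $U$ does not remove it because the fibres are compact.

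A holomorphic lift does exist near each swept cycle, for instance $h=-g\,n\,\partial_b\tilde\tau\,z/(m+n\tilde\tau)$ near the $(m,n)$-cycle. With that lift your ``main obstacle'' disappears. $d\tilde\Omega$ is of type $(2,0)$, and the $(1,0)$-part of any lift of a $(0,1)$-vector is vertical. A $(2,0)$-form vanishes on two vertical vectors because the fibres are curves, so $\partial_{\bar b}\omega_A=0$.
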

\begin{proof}
Fix $U$ simply connected. By Lemma~\ref{lem:ehresmann_triviality}, after shrinking $U$ we may choose a $C^\infty$ trivialization $\Phi_U:\varpi^{-1}(U)\to U\times E_{b_*}$ preserving the projection to $U$. Under $\Phi_U$, the transported cycles $(A_b,B_b)$ correspond to fixed cycles $(A,B)\subset E_{b_*}$ independent of $b\in U$. Choose a local holomorphic coordinate $w$ on the fiber in a neighborhood of $A$ and write, in that neighborhood,
\begin{equation}\label{eq:Omega_local_form}
\Omega = F(b,w)\,dw
\end{equation}
with $F$ holomorphic in $(b,w)$ because $\Omega$ is a relative holomorphic one-form. Parameterize $A$ by a smooth map $\gamma_A:[0,1]\to E_{b_*}$ and write $\gamma_A(t)=(w(t))$ in the coordinate patch. Then for each $b\in U$,
\begin{equation}\label{eq:omegaA_parametrized}
\omega_A(b)=\int_{A_b}\Omega_b=\int_{0}^{1} F(b,w(t))\,w'(t)\,dt.
\end{equation}
Fix $b_1\in U$ and let $\{b_k\}\subset U$ be a sequence converging to $b_1$. Since $F$ is holomorphic hence continuous and $[0,1]$ is compact, $F(b_k,w(t))\to F(b_1,w(t))$ uniformly in $t$. Therefore the integrands in \eqref{eq:omegaA_parametrized} converge uniformly, and we obtain
\begin{equation}\label{eq:omegaA_continuity}
\lim_{k\to\infty}\omega_A(b_k)=\int_0^1 \lim_{k\to\infty}F(b_k,w(t))\,w'(t)\,dt=\int_0^1 F(b_1,w(t))\,w'(t)\,dt=\omega_A(b_1),
\end{equation}
so $\omega_A$ is continuous, the same argument applies to $\omega_B$. To prove holomorphicity, fix a small triangle $\Delta\subset U$ with piecewise smooth boundary $\partial\Delta$, and compute
\begin{equation}\label{eq:morera_setup}
\int_{\partial\Delta}\omega_A(b)\,db=\int_{\partial\Delta}\left(\int_0^1 F(b,w(t))\,w'(t)\,dt\right)db.
\end{equation}
Because $F$ is continuous on the compact set $\overline{\Delta}\times \gamma_A([0,1])$, Fubini's theorem for continuous integrands yields
\begin{equation}\label{eq:fubini_swap}
\int_{\partial\Delta}\omega_A(b)\,db=\int_0^1 \left(\int_{\partial\Delta} F(b,w(t))\,db\right) w'(t)\,dt.
\end{equation}
For each fixed $t$, the function $b\mapsto F(b,w(t))$ is holomorphic on $\Delta$, hence $\int_{\partial\Delta}F(b,w(t))\,db=0$ by Cauchy's theorem. Substituting into \eqref{eq:fubini_swap} gives $\int_{\partial\Delta}\omega_A(b)\,db=0$. By Morera's theorem, $\omega_A$ is holomorphic on $U$. The same argument applies to $\omega_B$.
\end{proof}

\begin{proposition}[Conditional positivity\:$\Im\tau>0$]\label{prop:Imtau_positive}
Assume \textbf{(CA)} and use the setup of Proposition~\ref{prop:periods_holomorphic} on a simply connected $U$. Then $\omega_B(b)\neq 0$ for all $b\in U$ and $\tau(b)\in\mathbb{H}$ for all $b\in U$.
\end{proposition}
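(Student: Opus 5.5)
The argument rests on the Riemann bilinear relation for the single fiber $E_b$, applied to the holomorphic one-form $\Omega_b$ and the oriented symplectic basis $(A_b,B_b)$. The key step is to compute $\tfrac{i}{2}\int_{E_b}\Omega_b\wedge\overline{\Omega_b}$ in two ways.

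First, positivity. In a local holomorphic coordinate $w=x+iy$ on $E_b$, write $\Omega_b=F\,dw$. Then
\[
\tfrac{i}{2}\,\Omega_b\wedge\overline{\Omega_b}=|F|^2\,dx\wedge dy .
\]
This is a nonnegative multiple of the orientation form. By \textbf{(CA)} and Definition~\ref{def:comparison_axiom_varying_tau}, $\Omega_b$ is the restriction of the Hodge line of a smooth elliptic fibration, so it is a nonzero holomorphic differential on the compact curve $E_b$. Hence the integral is strictly positive.

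Second, the bilinear relation. Cut $E_b$ along representatives of $A_b$ and $B_b$ meeting once with $A_b\cdot B_b=+1$. This gives a fundamental parallelogram $P$. On $P$ write $\Omega_b=dg$ with $g$ holomorphic, and apply Stokes to $g\,\overline{\Omega_b}$ on $\partial P$. Pairing opposite edges produces
\[
\int_{E_b}\Omega_b\wedge\overline{\Omega_b}=\omega_A\overline{\omega_B}-\omega_B\overline{\omega_A}
\]
up to an overall sign fixed by the orientation convention $A\cdot B=+1$.

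Combining the two computations gives $\operatorname{Im}(\omega_A\overline{\omega_B})\neq0$, with a definite sign. Two conclusions follow at once. First, $\omega_B(b)\ne0$; otherwise the right side vanishes. Second, dividing by $|\omega_B|^2$ gives
\[
\operatorname{Im}\tau=\frac{\operatorname{Im}(\omega_A\overline{\omega_B})}{|\omega_B|^2},
\]
which has a fixed sign.

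Sign check. I will fix the overall sign on the model fiber $\mathbb C/(\mathbb Z+\tau_0\mathbb Z)$ with $\Omega=dz$. If $A$ corresponds to $\tau_0$ and $B$ to $1$, then $\omega_A/\omega_B=\tau_0\in\mathbb H$. This also checks consistency with the convention in the main text, where the orientation of $(A_b,B_b)$ is chosen compatibly with $\Omega_b$. If the stated ordering produces $-\operatorname{Im}\tau$, I will invoke that compatibility clause and swap to $(B,-A)$, which preserves $A\cdot B=1$.

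Propagation over $U$. Gauss--Manin transport preserves the intersection pairing (Lemma~\ref{lem:ehresmann_triviality}), so the transported basis remains oriented with $A_b\cdot B_b=+1$ at every $b\in U$. The computation above is fiberwise, so it holds at each $b$.

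As a consistency check, $\operatorname{Im}\tau$ is continuous by Proposition~\ref{prop:periods_holomorphic} and never vanishes. Since $U$ is connected, its sign is constant on $U$.

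The main obstacle is the sign bookkeeping in the Stokes computation: it requires consistent orientations on the edges of $P$, the ordering in $\tau=\omega_A/\omega_B$, and the intersection convention. I will handle this by reducing to the explicit model fiber above rather than tracking signs abstractly.
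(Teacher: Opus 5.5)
Your strategy is the same as the paper's: positivity of $i\,\Omega_b\wedge\overline{\Omega_b}$ combined with the Riemann bilinear relation. The conclusion $\omega_B(b)\neq 0$ is fine, because it only uses $\operatorname{Im}(\omega_A\overline{\omega_B})\neq 0$. The gap is the sign, which is the entire content of $\tau(b)\in\mathbb{H}$, and both of your devices for fixing it fail.

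\emph{The model.} Your basis $A\leftrightarrow\tau_0$, $B\leftrightarrow 1$ on $\mathbb{C}/(\mathbb{Z}+\tau_0\mathbb{Z})$ violates the hypothesis. For the complex orientation, the tangent pair $(\tau_0,1)$ has determinant $-\operatorname{Im}\tau_0<0$, so this basis has $A\cdot B=-1$. Here the intersection sign is the standard one, positive when the ordered tangent pair is positively oriented.

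\emph{The fallback.} The swap $(A,B)\mapsto(B,-A)$ is the $S$-move in $SL(2,\mathbb{Z})$. It sends $\tau\mapsto -1/\tau$, and $\operatorname{Im}(-1/\tau)=\operatorname{Im}\tau/|\tau|^2$ has the same sign, so it cannot flip the sign. More generally, every change of basis preserving $A\cdot B=+1$ acts by an element of $SL(2,\mathbb{Z})$ as in Theorem~\ref{thm:SL2Z_monodromy}, and these preserve each half-plane. So the sign of $\operatorname{Im}(\omega_A/\omega_B)$ is fixed by the orientation class of the basis. In particular, the clause in Section~\ref{sec:beyondII} about choosing the orientation ``compatibly with $\Omega_b$'' cannot be realized inside $A\cdot B=+1$.

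Now do the bookkeeping on a model that does satisfy the hypothesis: take $A,B$ to be the images of $[0,1]$ and $[0,\tau_0]$, with $\Omega=dz$. Both sides of the bilinear relation equal $\overline{\tau_0}-\tau_0$, so it holds with no extra sign:
\[
\int_{E_b}\Omega_b\wedge\overline{\Omega_b}=\omega_A\overline{\omega_B}-\omega_B\overline{\omega_A}=2i\operatorname{Im}(\omega_A\overline{\omega_B}).
\]
Multiplying by $i$ gives $\int_{E_b}i\,\Omega_b\wedge\overline{\Omega_b}=-2\operatorname{Im}(\omega_A\overline{\omega_B})$. Positivity then forces $\operatorname{Im}(\omega_A\overline{\omega_B})<0$, i.e. $\operatorname{Im}\tau(b)<0$; in the model, $\tau=1/\tau_0$.

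So with this appendix's conventions ($A_{b_0}\cdot B_{b_0}=+1$, $\tau=\omega_A/\omega_B$), the correct conclusion is $\tau(b)\in-\mathbb{H}$, equivalently $\omega_B/\omega_A\in\mathbb{H}$. The proposition as stated is false under the standard intersection sign, which is exactly the convention under which the bilinear relation takes the form you and the paper write. The paper's proof reaches $\mathbb{H}$ only through the last equality in \eqref{eq:Im_relation}, which uses $i\cdot 2i=2$; in fact $i\cdot 2i=-2$.

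A true statement requires changing a convention: either define $\tau:=\omega_B/\omega_A$, or fix $A_{b_0}\cdot B_{b_0}=-1$. After either change, your argument goes through, provided you compute the sign as above rather than defer it to a model check.
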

\begin{proof}
Fix $b\in U$. Since $\Omega_b$ is a nowhere-vanishing holomorphic one-form on the elliptic curve $E_b$, the $(1,1)$-form $i\,\Omega_b\wedge \overline{\Omega_b}$ is a positive volume form and
\begin{equation}\label{eq:area_positive}
\int_{E_b} i\,\Omega_b\wedge \overline{\Omega_b}>0.
\end{equation}
Let $(A_b,B_b)$ be the transported symplectic basis. The Riemann bilinear relation for a genus-one Riemann surface gives
\begin{equation}\label{eq:riemann_bilinear}
\int_{E_b}\Omega_b\wedge \overline{\Omega_b}=\left(\int_{A_b}\Omega_b\right)\left(\int_{B_b}\overline{\Omega_b}\right)-\left(\int_{B_b}\Omega_b\right)\left(\int_{A_b}\overline{\Omega_b}\right)=\omega_A(b)\,\overline{\omega_B(b)}-\omega_B(b)\,\overline{\omega_A(b)}.
\end{equation}
Multiply \eqref{eq:riemann_bilinear} by $i$ and use $\alpha-\overline{\alpha}=2i\,\Im(\alpha)$ with $\alpha:=\omega_A(b)\overline{\omega_B(b)}$:
\begin{align}\label{eq:Im_relation}
\int_{E_b} i\,\Omega_b\wedge \overline{\Omega_b}
&= i\bigl(\omega_A\overline{\omega_B}-\omega_B\overline{\omega_A}\bigr)
= i\bigl(\omega_A\overline{\omega_B}-\overline{\omega_A\overline{\omega_B}}\bigr)
= i\cdot 2i\,\Im\bigl(\omega_A\overline{\omega_B}\bigr)
= 2\,\Im\bigl(\omega_A\overline{\omega_B}\bigr).
\end{align}
By \eqref{eq:area_positive} and \eqref{eq:Im_relation}, $\Im(\omega_A\overline{\omega_B})>0$. If $\omega_B(b)=0$, then $\Im(\omega_A\overline{\omega_B})=0$, contradiction, hence $\omega_B(b)\neq 0$. Finally,
\begin{equation}\label{eq:Imtau_from_periods}
\Im\!\left(\frac{\omega_A}{\omega_B}\right)=\Im\!\left(\frac{\omega_A\overline{\omega_B}}{|\omega_B|^2}\right)=\frac{\Im(\omega_A\overline{\omega_B})}{|\omega_B|^2}>0,
\end{equation}
so $\tau(b)\in\mathbb{H}$.
\end{proof}

\begin{theorem}[Conditional $SL(2,\mathbb{Z})$ monodromy and fractional-linear action]\label{thm:SL2Z_monodromy}
Assume \textbf{(CA)}. Let $\gamma$ be a loop in $B^\times$ based at $b_0$. Let $(A'_{b_0},B'_{b_0})$ be the transported basis of $H_1(E_{b_0},\mathbb{Z})$ obtained by Gauss-Manin transport along $\gamma$. Then there exists a unique matrix
\begin{equation}\label{eq:rho_matrix}
\rho(\gamma)=\begin{pmatrix}a&b\\ c&d\end{pmatrix}\in SL(2,\mathbb{Z})
\end{equation}
such that
\begin{equation}\label{eq:cycle_transport_app}
\begin{pmatrix}A'_{b_0}\\ B'_{b_0}\end{pmatrix}=\begin{pmatrix}a&b\\ c&d\end{pmatrix}\begin{pmatrix}A_{b_0}\\ B_{b_0}\end{pmatrix},
\end{equation}
and the period ratio transforms by
\begin{equation}\label{eq:tau_fractional_linear}
\tau\longmapsto \rho(\gamma)\cdot \tau:=\frac{a\tau+b}{c\tau+d}.
\end{equation}
\end{theorem}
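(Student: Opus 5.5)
We argue exactly as in the earlier monodromy computation of Sec.~\ref{sec:beyondII}, but now using only the appendix inputs: Lemma~\ref{lem:ehresmann_triviality}, which gives a rank-two local system with intersection-preserving transport, and Propositions~\ref{prop:periods_holomorphic}--\ref{prop:Imtau_positive}, which give holomorphic periods and $\Im\tau>0$.

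\textbf{Step 1: existence and uniqueness of $\rho(\gamma)$.} Gauss--Manin transport along $\gamma$ is a composition of the local identifications from Lemma~\ref{lem:ehresmann_triviality}, taken over a finite chain of trivializing neighborhoods covering the compact image of $\gamma$. It is therefore a $\mathbb{Z}$-module automorphism of $H_1(E_{b_0},\mathbb{Z})$, and $(A'_{b_0},B'_{b_0})$ is again a $\mathbb{Z}$-basis. Expanding $A'_{b_0}$ and $B'_{b_0}$ in the basis $(A_{b_0},B_{b_0})$ gives a unique integer matrix satisfying \eqref{eq:cycle_transport_app}. Uniqueness is just linear independence of the basis. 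The matrix is invertible over $\mathbb{Z}$, since the inverse path gives the inverse transport.

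\textbf{Step 2: determinant one.} Transport preserves the intersection pairing, so $A'_{b_0}\cdot B'_{b_0}=A_{b_0}\cdot B_{b_0}=1$. Expanding bilinearly and using $A\cdot A=B\cdot B=0$ and $B\cdot A=-1$ gives $ad-bc=1$, exactly as in \eqref{eq:det_one}. Hence $\rho(\gamma)\in SL(2,\mathbb{Z})$.

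\textbf{Step 3: the fractional-linear law.} Integrate the fixed form $\Omega_{b_0}$ over the transported cycles. Linearity of integration gives $\omega'_A=a\omega_A+b\omega_B$ and $\omega'_B=c\omega_A+d\omega_B$. Proposition~\ref{prop:Imtau_positive}, applied on the fiber $E_{b_0}$ to the symplectic basis $(A',B')$, gives $\omega'_B\neq0$. Dividing numerator and denominator by $\omega_B\neq0$ then yields $\tau'=(a\tau+b)/(c\tau+d)$.

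\textbf{Step 4: identifying $\tau'$ with the continuation of $\tau$.} This is the main subtlety, since the statement implicitly claims that $\tau'$ is what $\tau$ becomes after following $\gamma$. We cover $\gamma$ by simply connected patches $U_1,\dots,U_k$. On each patch, Proposition~\ref{prop:periods_holomorphic} makes $\omega_A/\omega_B$ holomorphic, computed with the locally transported cycles. On overlaps the transported cycles coincide, so the local period ratios glue into a continuation of $\tau$ along $\gamma$. Its value on returning to $b_0$ is the period ratio of the transported basis, which is $\tau'$.

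The remaining bookkeeping is to check that the row-vector convention in \eqref{eq:cycle_transport_app} matches $\tau=\omega_A/\omega_B$, so that the numerator really is $a\tau+b$. This is immediate from the expansion in Step~3, unlike the column convention of Lemma~\ref{C:SL_action_tau}, which would produce the transposed formula.
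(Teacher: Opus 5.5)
Your proof is correct and follows the paper's argument. Steps 1--3 are exactly its integrality, determinant, and linearity-of-periods steps, and your explicit check that $\omega_B,\omega'_B\neq 0$ together with Step 4 (identifying $\tau'$ with the analytic continuation of $\tau$ along $\gamma$) add rigor the paper leaves implicit. One caution: the only thing you need from Proposition~\ref{prop:Imtau_positive} is $\Im(\omega_A\overline{\omega_B})\neq 0$, and that is fortunate, because its chain \eqref{eq:Im_relation} uses $i\cdot 2i=2$ instead of $-2$, so under the stated bilinear relation positivity of the area actually forces $\Im(\omega_A/\omega_B)<0$; the fractional-linear law itself is pure algebra and is unaffected.
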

\begin{proof}
Existence and integrality\:by Lemma~\ref{lem:ehresmann_triviality}, transport along $\gamma$ yields an automorphism of the free abelian group $H_1(E_{b_0},\mathbb{Z})\cong \mathbb{Z}^2$, hence a unique matrix $\begin{pmatrix}a&b\\ c&d\end{pmatrix}\in GL(2,\mathbb{Z})$ satisfying \eqref{eq:cycle_transport_app}. Determinant\:since transport preserves the intersection pairing,
\begin{equation}\label{eq:intersection_preserved}
A'_{b_0}\cdot B'_{b_0}=A_{b_0}\cdot B_{b_0}=1.
\end{equation}
Using bilinearity, $A_{b_0}\cdot A_{b_0}=B_{b_0}\cdot B_{b_0}=0$, $A_{b_0}\cdot B_{b_0}=1$, and $B_{b_0}\cdot A_{b_0}=-1$, compute
\begin{align}\label{eq:det_calc_app}
A'_{b_0}\cdot B'_{b_0}
&=(aA_{b_0}+bB_{b_0})\cdot(cA_{b_0}+dB_{b_0}) \\
&=ac(A_{b_0}\cdot A_{b_0})+ad(A_{b_0}\cdot B_{b_0})+bc(B_{b_0}\cdot A_{b_0})+bd(B_{b_0}\cdot B_{b_0}) \nonumber\\
&=0+ad\cdot 1+bc\cdot(-1)+0=ad-bc. \nonumber
\end{align}
Combining \eqref{eq:intersection_preserved} and \eqref{eq:det_calc_app} gives $ad-bc=1$, hence $\rho(\gamma)\in SL(2,\mathbb{Z})$. Fractional-linear action\:define $\omega_A:=\int_{A_{b_0}}\Omega_{b_0}$, $\omega_B:=\int_{B_{b_0}}\Omega_{b_0}$ and similarly $\omega_A':=\int_{A'_{b_0}}\Omega_{b_0}$, $\omega_B':=\int_{B'_{b_0}}\Omega_{b_0}$. By \eqref{eq:cycle_transport_app} and linearity of integration,
\begin{equation}\label{eq:period_transform_app}
\omega_A'=a\omega_A+b\omega_B,\qquad \omega_B'=c\omega_A+d\omega_B.
\end{equation}
Therefore
\begin{equation}\label{eq:tau_transform_app}
\tau'=\frac{\omega_A'}{\omega_B'}=\frac{a\omega_A+b\omega_B}{c\omega_A+d\omega_B}
=\frac{a(\omega_A/\omega_B)+b}{c(\omega_A/\omega_B)+d}
=\frac{a\tau+b}{c\tau+d},
\end{equation}
which is \eqref{eq:tau_fractional_linear}.
\end{proof} 
\begin{lemma}[Holomorphic logarithm on a disk]\label{lem:holomorphic_log_disk}
Let $D\subset\mathbb{C}$ be a disk and $u:D\to\mathbb{C}^\times$ holomorphic and nowhere vanishing. Then there exists holomorphic $h:D\to\mathbb{C}$ such that $e^{h(z)}=u(z)$ for all $z\in D$.
\end{lemma}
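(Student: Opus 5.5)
The plan is to reduce to the existence of a holomorphic primitive of the logarithmic derivative. Since $u$ is holomorphic and nowhere vanishing on $D$, the quotient $g:=u'/u$ is holomorphic on $D$. A disk is convex, hence star-shaped about its center $z_0$, so by the Cauchy--Goursat theorem for star-shaped domains $g$ admits a holomorphic primitive. Explicitly, I will define
\begin{equation*}
G(z):=\int_{[z_0,z]} g(\zeta)\,d\zeta ,
\end{equation*}
integrating along the straight segment. Then $G'=g$ follows from the vanishing of $\int_{\partial T} g\,d\zeta$ over the triangle with vertices $z_0,z,z+h$, which lies in $D$ by convexity.

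Next I will choose a constant $c\in\mathbb{C}$ with $e^{c}=u(z_0)$; this is possible because $u(z_0)\neq 0$ and $\exp:\mathbb{C}\to\mathbb{C}^\times$ is surjective. Set $h:=G+c$. To verify $e^{h}=u$, I will consider $k(z):=u(z)e^{-h(z)}$, which is holomorphic on $D$. Its derivative is
\begin{equation*}
k'=u'e^{-h}-u\,h'e^{-h}=e^{-h}\bigl(u'-u\cdot u'/u\bigr)=0.
\end{equation*}
Since $D$ is connected, $k$ is constant, and $k(z_0)=u(z_0)e^{-c}=1$. Hence $e^{h}=u$ on $D$.

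The only step that carries content is the existence of the primitive $G$. It rests on Goursat's lemma applied inside a convex set, and convexity of the disk is exactly what makes the segment and triangle constructions valid; everything else is a routine computation. This lemma is also what justifies the continuous branch $\log u(\varepsilon e^{i\theta})$ used in \eqref{eq:logq_branch}: $h$ restricted to the circle supplies a single-valued branch, so $\log u$ returns to its initial value after one loop, and only the factor $z$ produces the shift by $2\pi i$.
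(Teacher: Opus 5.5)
Your proof is correct and follows the same route as the paper: integrate the logarithmic derivative $u'/u$ from a base point to get a primitive (the paper cites vanishing over triangles plus path independence, you use radial segments and Goursat on the convex disk), then show that $e^{h}/u$, equivalently $u e^{-h}$, has zero derivative and is therefore constant. Your explicit choice of $c$ with $e^{c}=u(z_0)$ settles a point the paper leaves implicit: its additive constant $h(z_0)$ is never specified, and its conclusion $C=1$ holds only if $e^{h(z_0)}=u(z_0)$.
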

\begin{proof}
Since $u$ is holomorphic and nowhere vanishing, the holomorphic one-form $(u'/u)\,dz$ is defined on $D$. Fix $z_0\in D$ and define
\begin{equation}\label{eq:log_u_def_app}
h(z):=\int_{z_0}^{z}\frac{u'(w)}{u(w)}\,dw+h(z_0),
\end{equation}
where the integral is taken along any piecewise $C^1$ path in $D$. Because $D$ is simply connected and $u'/u$ is holomorphic, $\int_{\partial\Delta} (u'/u)\,dw=0$ for every triangle $\Delta\subset D$, hence \eqref{eq:log_u_def_app} is path independent. Differentiating gives $h'(z)=u'(z)/u(z)$. Then $(e^{h}/u)'=(h'e^{h}u-e^{h}u')/u^2=0$, so $e^{h}=Cu$ for some constant $C$. Evaluating at $z_0$ gives $C=1$, hence $e^{h}=u$.
\end{proof}

\begin{theorem}[Conditional $T$-monodromy from a simple zero of $q$]\label{thm:T_monodromy_conditional}
Assume \textbf{(CA)} and suppose that on a punctured disk $D^\times:=D\setminus\{0\}\subset B^\times$ the elliptic family $\varpi:E\to D^\times$ admits a Tate uniformization with parameter $q:D^\times\to\mathbb{C}^\times$ satisfying $0<|q(z)|<1$ and
\begin{equation}\label{eq:q_exp_tau}
q(z)=\exp(2\pi i\,\tau(z))
\end{equation}
on the universal cover of $D^\times$. Assume moreover that $q$ has a simple zero at $z=0$, i.e. there exists holomorphic $u:D\to\mathbb{C}^\times$ with
\begin{equation}\label{eq:q_simple_zero_app}
q(z)=z\,u(z)\qquad\text{for all }z\in D^\times.
\end{equation}
Let $\gamma:[0,2\pi]\to D^\times$ be the positively oriented loop $\gamma(\theta)=\varepsilon e^{i\theta}$ for $0<\varepsilon<\mathrm{radius}(D)$. Then analytic continuation of $\tau$ along $\gamma$ yields
\begin{equation}\label{eq:T_tau_shift}
\tau\longmapsto \tau+1,
\end{equation}
equivalently the monodromy matrix is $T=\begin{pmatrix}1&1\\ 0&1\end{pmatrix}$.
\end{theorem}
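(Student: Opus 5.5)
The plan follows the explicit computation already carried out in Sec.~\ref{sec:beyondII}, made rigorous by Lemma~\ref{lem:holomorphic_log_disk} and tied to the monodromy representation of Theorem~\ref{thm:SL2Z_monodromy}. The first step is to construct a continuous determination of $\log q$ along $\gamma$. Since $u:D\to\mathbb{C}^\times$ is holomorphic and nowhere vanishing on the full disk $D$, Lemma~\ref{lem:holomorphic_log_disk} supplies a holomorphic $h$ on $D$ with $e^{h}=u$. In particular $h$ is single-valued, so $h(\gamma(2\pi))=h(\gamma(0))$. Along $\gamma$ the function
\begin{equation}
L(\theta):=\log\varepsilon+i\theta+h(\varepsilon e^{i\theta}),\qquad \theta\in[0,2\pi],
\end{equation}
is continuous and satisfies $e^{L(\theta)}=\varepsilon e^{i\theta}u(\varepsilon e^{i\theta})=q(\gamma(\theta))$ by \eqref{eq:q_simple_zero_app}. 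Thus $L$ is a continuous logarithm of $q\circ\gamma$, and $L(2\pi)-L(0)=2\pi i$.

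The second step is to relate $\tau$ to $L$. By \eqref{eq:q_exp_tau}, the lift of $\tau$ along $\gamma$ (analytic continuation on the universal cover) satisfies $\exp(2\pi i\,\tau(\gamma(\theta)))=e^{L(\theta)}$. Hence $2\pi i\,\tau(\gamma(\theta))-L(\theta)$ is a continuous function of $\theta$ with values in the discrete set $2\pi i\mathbb{Z}$, so it is constant. It follows that
\begin{equation}
\tau(\gamma(2\pi))-\tau(\gamma(0))=\frac{L(2\pi)-L(0)}{2\pi i}=1.
\end{equation}
This proves \eqref{eq:T_tau_shift}. The argument is independent of the choice of $\varepsilon$ and of the branch of $h$, because any other branch differs from $h$ by a constant in $2\pi i\mathbb{Z}$, and that constant cancels in the difference.

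The third step identifies the matrix. By Theorem~\ref{thm:SL2Z_monodromy}, continuation of the period ratio around $[\gamma]$ acts by some $\rho(\gamma)\in SL(2,\mathbb{Z})$ through $\tau\mapsto(a\tau+b)/(c\tau+d)$. We need the Tate lift $\tau$ to agree with the period ratio of a transported symplectic basis, which means choosing $A$ to be the image of the unit circle in $\mathbb{C}^\times/\langle q\rangle$ and $B$ the cycle on which $q$ acts. Granting this, we have $(a\tau+b)/(c\tau+d)=\tau+1$ on an open set of $\tau$. Clearing denominators gives $c\tau^2+(d-a+c)\tau+(d-b)=0$ identically, so $c=0$, $a=d$ and $b=d$. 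Combined with $ad=1$, this forces $a=d=\pm1$, $b=a$, so $\rho(\gamma)=\pm T$.

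The main obstacle is excluding $-T$, since $\pm\mathbf{1}$ act identically on $\mathbb{H}$. I would settle the sign by computing the homology transport directly in the Tate model. Transport around $\gamma$ fixes the vanishing cycle $A$ (the unit circle), so $A'=A$ and $a=+1$, which excludes $-T$ and leaves $\rho(\gamma)=T$. A secondary point is whether the Tate uniformization is compatible with the symplectic basis and Hodge line of (CA). If that compatibility is not available, I would state the conclusion as the projective image $\pm T$ together with the explicit cycle computation above.
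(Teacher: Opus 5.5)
Your first two steps are the paper's proof. You take a holomorphic logarithm of $u$ on all of $D$ from Lemma~\ref{lem:holomorphic_log_disk}, form the continuous determination $\log\varepsilon+i\theta+\log u(\varepsilon e^{i\theta})$ of $\log q$ along $\gamma$, and read off the jump of $2\pi i$. Your observation that $2\pi i\,\tau(\gamma(\theta))-L(\theta)$ is a continuous $2\pi i\mathbb{Z}$-valued function, hence constant, is a small improvement over the paper. The paper simply defines $\tau:=\frac{1}{2\pi i}\log q$, whereas your version applies to whatever lift of $\tau$ is handed to you by (CA).

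The identification of the matrix is where you genuinely differ, and there you are more careful than the paper. Its final sentence asserts that $(a\tau+b)/(c\tau+d)=\tau+1$ forces $a=d=1$. But $-T$ induces the same M\"obius transformation, so comparing coefficients only gives $\rho(\gamma)=\pm T$, exactly as you found. Some input from the homology transport is therefore needed.

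Your execution of that step has the cycles swapped, however. In the conventions of Theorem~\ref{thm:SL2Z_monodromy} one has $\tau=\omega_A/\omega_B$, $A'=aA+bB$ and $B'=cA+dB$. With $\Omega$ proportional to $dw/(2\pi i w)$ on $\mathbb{C}^\times/q^{\mathbb{Z}}$:
\begin{itemize}
\item the unit circle has period $1$;
\item the path from $w_0$ to $qw_0$ has period $\tau$.
\end{itemize}
So, up to an overall sign, the unit circle is $B$ and the $q$-cycle is $A$. With your labeling the period ratio would be $1/\tau$, not the Tate $\tau$. Your claim $A'=A$ would then force $b=0$, contradicting the $b=a=\pm 1$ you derived a line earlier.

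With the labels corrected, transport gives $B'=B$, so $c=0$ and $d=1$, hence $a=d=1$. It also gives $A'=A+B$, because the continued path from $w_0$ to $qw_0$ picks up one extra winding around $0$. This is exactly $T$. A labeling-free version of the same argument: the unit-circle class is a nonzero monodromy-invariant vector, so $\rho(\gamma)$ has eigenvalue $1$, whereas the only eigenvalue of $-T$ is $-1$.

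Finally, the compatibility with (CA) that you list as a secondary worry is automatic. Holomorphic one-forms on each fiber form a line spanned by $dw/w$, so $\Omega$ is a nowhere-vanishing holomorphic multiple of it, and period ratios do not see that multiple. No fallback to $\pm T$ is needed.
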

\begin{proof}
Work on the universal cover $\widetilde{D^\times}$ so that a single-valued branch of $\log q$ exists. Define
\begin{equation}\label{eq:tau_logq_def}
\tau:=\frac{1}{2\pi i}\log q.
\end{equation}
Since $0<|q|<1$, write $\log q=\log|q|+i\arg(q)$ with $\log|q|<0$, hence
\begin{equation}\label{eq:Imtau_positive_from_q}
\Im\tau=\Im\!\left(\frac{1}{2\pi i}\log q\right)=-\frac{1}{2\pi}\log|q|>0,
\end{equation}
so $\tau$ takes values in $\mathbb{H}$. By \eqref{eq:q_simple_zero_app} and Lemma~\ref{lem:holomorphic_log_disk}, choose a holomorphic logarithm $\log u$ on $D$. Along $\gamma$ choose the continuous branch $\log(\varepsilon e^{i\theta})=\log\varepsilon+i\theta$ for $0\le\theta\le 2\pi$. Then along $\gamma$ we obtain a continuous determination
\begin{equation}\label{eq:logq_along_gamma_app}
\log q(\gamma(\theta))=\log(\varepsilon e^{i\theta})+\log u(\varepsilon e^{i\theta})=\log\varepsilon+i\theta+\log u(\varepsilon e^{i\theta}).
\end{equation}
Evaluate at $\theta=0$ and $\theta=2\pi$. Since $\log u$ is single-valued on $D$,
\begin{equation}\label{eq:logu_single_valued}
\log u(\varepsilon e^{i2\pi})=\log u(\varepsilon e^{i0}),
\end{equation}
and therefore \eqref{eq:logq_along_gamma_app} gives
\begin{equation}\label{eq:logq_shift_app}
\log q(\gamma(2\pi))-\log q(\gamma(0))=\bigl(\log\varepsilon+i2\pi+\log u(\varepsilon)\bigr)-\bigl(\log\varepsilon+i0+\log u(\varepsilon)\bigr)=2\pi i.
\end{equation}
Insert \eqref{eq:logq_shift_app} into \eqref{eq:tau_logq_def}:
\begin{equation}\label{eq:tau_shift_calc_app}
\tau(\gamma(2\pi))-\tau(\gamma(0))=\frac{1}{2\pi i}\bigl(\log q(\gamma(2\pi))-\log q(\gamma(0))\bigr)=\frac{1}{2\pi i}\cdot 2\pi i=1,
\end{equation}
which is \eqref{eq:T_tau_shift}. The unique element of $SL(2,\mathbb{Z})$ acting by $\tau\mapsto\tau+1$ for all $\tau$ is $T=\begin{pmatrix}1&1\\ 0&1\end{pmatrix}$ because $(a\tau+b)/(c\tau+d)=\tau+1$ forces $c=0$, $a=d=1$, $b=1$ by comparing coefficients.
\end{proof} 
\begin{proposition}[Conditional algebraic identity]\label{prop:tau_kinetic_identity}
Let $\tau=C_0+i e^{-\phi}$ with $C_0,\phi$ real-valued functions and $\Im\tau=e^{-\phi}$. Then the pointwise identity of differential forms holds:
\begin{equation}\label{eq:tau_kinetic_identity_app}
\frac{d\tau\wedge *\,d\overline{\tau}}{2(\Im\tau)^2}=\frac{1}{2}\,d\phi\wedge *\,d\phi+\frac{1}{2}\,e^{2\phi}\,dC_0\wedge *\,dC_0.
\end{equation}
\end{proposition}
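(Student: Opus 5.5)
The plan is to treat \eqref{eq:tau_kinetic_identity_app} as a pointwise identity in the real and imaginary parts of $\tau$, using only bilinearity of the wedge product and the symmetry of the Hodge pairing $\alpha\wedge *\beta=\beta\wedge *\alpha$ for real one-forms $\alpha,\beta$. Since $C_0$ and $\phi$ are real, I first write
\begin{equation*}
d\tau=dC_0-i e^{-\phi}\,d\phi,\qquad d\overline{\tau}=dC_0+i e^{-\phi}\,d\phi,
\end{equation*}
using $d(e^{-\phi})=-e^{-\phi}d\phi$. The Hodge star is extended $\mathbb{C}$-linearly, so $*\,d\overline{\tau}=*dC_0+i e^{-\phi}*d\phi$.

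Next I expand $d\tau\wedge *\,d\overline{\tau}$ into four terms:
\begin{equation*}
dC_0\wedge *dC_0+ie^{-\phi}\,dC_0\wedge *d\phi-ie^{-\phi}\,d\phi\wedge *dC_0+e^{-2\phi}\,d\phi\wedge *d\phi .
\end{equation*}
The two cross terms cancel by the symmetry $dC_0\wedge *d\phi=d\phi\wedge *dC_0$. This is the only step with any content, since it is what makes the left-hand side real. What remains is $dC_0\wedge *dC_0+e^{-2\phi}d\phi\wedge *d\phi$.

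Finally I divide by $2(\Im\tau)^2=2e^{-2\phi}$. This gives $\tfrac12 e^{2\phi}dC_0\wedge *dC_0+\tfrac12 d\phi\wedge *d\phi$, which is the right-hand side of \eqref{eq:tau_kinetic_identity_app}.

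I do not expect a real obstacle. The one point to state explicitly is the convention that $*$ is complex-linear (not antilinear), together with the symmetry of $\alpha\wedge*\beta$ for real one-forms. With an antilinear convention the cross terms would add rather than cancel, so that convention has to be fixed at the start.
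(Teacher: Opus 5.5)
Your proposal is correct and matches the paper's proof step by step: the same expressions for $d\tau$ and $d\overline{\tau}$, the same four-term expansion with the cross terms cancelled by the symmetry $dC_0\wedge *d\phi=d\phi\wedge *dC_0$, and the same division by $2e^{-2\phi}$. You also state explicitly that $*$ is extended $\mathbb{C}$-linearly, a convention the paper uses only implicitly, which is worth making clear.
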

\begin{proof}
Compute
\begin{equation}\label{eq:dtau_calc_app}
d\tau=dC_0+i\,d(e^{-\phi})=dC_0-i e^{-\phi}d\phi,\qquad d\overline{\tau}=dC_0+i e^{-\phi}d\phi.
\end{equation}
Multiply:
\begin{align}\label{eq:dtau_wedge_calc_app}
d\tau\wedge *\,d\overline{\tau}
&=\bigl(dC_0-i e^{-\phi}d\phi\bigr)\wedge *\bigl(dC_0+i e^{-\phi}d\phi\bigr)\\
&=dC_0\wedge *dC_0+i e^{-\phi}dC_0\wedge *d\phi-i e^{-\phi}d\phi\wedge *dC_0+e^{-2\phi}d\phi\wedge *d\phi.\nonumber
\end{align}
Using symmetry of the Hodge inner product on one-forms, $dC_0\wedge *d\phi=d\phi\wedge *dC_0$, the middle two terms cancel:
\begin{equation}\label{eq:cross_terms_cancel}
i e^{-\phi}dC_0\wedge *d\phi-i e^{-\phi}d\phi\wedge *dC_0=0,
\end{equation}
so
\begin{equation}\label{eq:dtau_simplified}
d\tau\wedge *\,d\overline{\tau}=dC_0\wedge *dC_0+e^{-2\phi}d\phi\wedge *d\phi.
\end{equation}
Divide by $2(\Im\tau)^2=2e^{-2\phi}$:
\begin{equation}\label{eq:divide_by_Im}
\frac{d\tau\wedge *\,d\overline{\tau}}{2(\Im\tau)^2}
=\frac{dC_0\wedge *dC_0}{2e^{-2\phi}}+\frac{e^{-2\phi}d\phi\wedge *d\phi}{2e^{-2\phi}}
=\frac{1}{2}e^{2\phi}dC_0\wedge *dC_0+\frac{1}{2}d\phi\wedge *d\phi,
\end{equation}
which is \eqref{eq:tau_kinetic_identity_app}.
\end{proof} 
\begin{definition}[Congruence reductions]\label{def:rho_n_def}
Assume \textbf{(CA)} and let $\rho:\pi_1(B^\times,b_0)\to SL(2,\mathbb{Z})$ be the monodromy representation from Theorem~\ref{thm:SL2Z_monodromy}. For each $n\ge 1$ define the reduction map $\pi_n:\mathbb{Z}\to \mathbb{Z}/p^n\mathbb{Z}$ and apply it entrywise to obtain $\pi_n:SL(2,\mathbb{Z})\to SL(2,\mathbb{Z}/p^n\mathbb{Z})$. Define
\begin{equation}\label{eq:rho_n_def_app}
\rho_n:=\pi_n\circ \rho:\pi_1(B^\times,b_0)\to SL(2,\mathbb{Z}/p^n\mathbb{Z}).
\end{equation}
\end{definition}

\begin{proposition}[Compatibility of $\rho_n$]\label{prop:rho_n_compat}
Assume \textbf{(CA)}. For all $n\ge 1$ and all $\gamma\in\pi_1(B^\times,b_0)$ one has
\begin{equation}\label{eq:rho_n_compat_app}
\rho_{n+1}(\gamma)\equiv \rho_n(\gamma)\pmod{p^n}.
\end{equation}
\end{proposition}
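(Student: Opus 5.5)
The compatibility is a formal consequence of the fact that the entrywise reductions form a compatible system of ring homomorphisms; no property of $\rho$ beyond its being a homomorphism into $SL(2,\mathbb{Z})$ (Theorem~\ref{thm:SL2Z_monodromy}) is needed. I will make precise what ``$\equiv \pmod{p^n}$'' means for elements living in different rings. Let $r_{n+1,n}:\mathbb{Z}/p^{n+1}\mathbb{Z}\to\mathbb{Z}/p^n\mathbb{Z}$ be the canonical reduction, which is well defined because $p^{n+1}\mathbb{Z}\subseteq p^n\mathbb{Z}$. Interpret \eqref{eq:rho_n_compat_app} as the identity
\begin{equation*}
r_{n+1,n}\bigl(\rho_{n+1}(\gamma)\bigr)=\rho_n(\gamma)\quad\text{in } SL(2,\mathbb{Z}/p^n\mathbb{Z}),
\end{equation*}
with $r_{n+1,n}$ applied entrywise.

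The first step is the scalar identity $r_{n+1,n}\circ\pi_{n+1}=\pi_n$ as maps $\mathbb{Z}\to\mathbb{Z}/p^n\mathbb{Z}$. Both sides are ring homomorphisms sending $1\mapsto 1$, and a ring homomorphism out of $\mathbb{Z}$ is determined by the image of $1$.

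The second step applies this entrywise to matrices. For $g\in SL(2,\mathbb{Z})$ this gives $r_{n+1,n}(\pi_{n+1}(g))=\pi_n(g)$. I will also note in passing that entrywise reduction along a ring homomorphism commutes with matrix products and with the determinant, since both are polynomial in the entries with integer coefficients. Consequently each $\pi_n$ is a group homomorphism landing in $SL(2,\mathbb{Z}/p^n\mathbb{Z})$, and $r_{n+1,n}$ restricts to a homomorphism $SL(2,\mathbb{Z}/p^{n+1}\mathbb{Z})\to SL(2,\mathbb{Z}/p^n\mathbb{Z})$. This confirms that the $\rho_n$ of Definition~\ref{def:rho_n_def} are well-defined homomorphisms.

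The third step is to substitute $g=\rho(\gamma)$, which immediately gives $r_{n+1,n}\circ\rho_{n+1}=\rho_n$ for every $\gamma\in\pi_1(B^\times,b_0)$, i.e.\ \eqref{eq:rho_n_compat_app}. Iterating, $\{\rho_n\}$ is a compatible system and defines $\varprojlim_n\rho_n:\pi_1(B^\times,b_0)\to SL(2,\mathbb{Z}_p)$.

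There is no genuine obstacle. The only point requiring care is the bookkeeping above: the congruence must be read through $r_{n+1,n}$, since $\rho_{n+1}(\gamma)$ and $\rho_n(\gamma)$ live in different rings. The hypothesis \textbf{(CA)} enters only through the existence of $\rho$ itself.
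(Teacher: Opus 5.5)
Your proof is correct and follows essentially the same route as the paper: both read the congruence through the canonical reduction $\mathbb{Z}/p^{n+1}\mathbb{Z}\to\mathbb{Z}/p^{n}\mathbb{Z}$, use the scalar identity that reducing mod $p^{n+1}$ and then mod $p^{n}$ equals reducing mod $p^{n}$, apply it entrywise, and substitute $M=\rho(\gamma)$. Your extra checks are harmless supplements that the paper leaves implicit: that the entrywise reductions are group homomorphisms into $SL(2,\mathbb{Z}/p^{n}\mathbb{Z})$ (so each $\rho_n$ is a homomorphism), and that the system assembles into a map to $SL(2,\mathbb{Z}_p)$.
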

\begin{proof}
Let $\mathrm{red}_{n+1,n}:\mathbb{Z}/p^{n+1}\mathbb{Z}\to \mathbb{Z}/p^{n}\mathbb{Z}$ be the canonical quotient map. By definition, $\mathrm{red}_{n+1,n}\circ \pi_{n+1}=\pi_n$ on $\mathbb{Z}$. Applying this entrywise to a matrix $M\in SL(2,\mathbb{Z})$ yields
\begin{equation}\label{eq:entrywise_reduction}
\mathrm{red}_{n+1,n}\bigl(\pi_{n+1}(M)\bigr)=\pi_n(M).
\end{equation}
With $M=\rho(\gamma)$ and using \eqref{eq:rho_n_def_app},
\begin{equation}\label{eq:rho_compat_calc}
\mathrm{red}_{n+1,n}\bigl(\rho_{n+1}(\gamma)\bigr)=\mathrm{red}_{n+1,n}\bigl(\pi_{n+1}(\rho(\gamma))\bigr)=\pi_n(\rho(\gamma))=\rho_n(\gamma),
\end{equation}
which is \eqref{eq:rho_n_compat_app}.
\end{proof} 
% REPLACEMENT (Appendix E closing paragraph)
Assuming the varying-$\tau$ background data of Definition~7.3 (equivalently the modular map $\underline{\tau}$
and the associated elliptic fibration $E(\underline{\tau})\to B^\times$ with Hodge form $\Omega$), we have:
constructed the period ratio $\tau$ on the universal cover with $\Im\tau>0$, proved that Gauss-Manin transport
yields a monodromy representation $\rho:\pi_1(B^\times)\to SL(2,\Z)$ and the fractional-linear transformation law;
derived the Tate-cusp $T$-monodromy $\tau\mapsto\tau+1$ under $q(z)=zu(z)$ by explicit logarithmic computation;
established the identity needed when $\partial\tau\neq 0$, and exhibited the compatible congruence reductions
$\rho_n$.

\section{
Varying
\texorpdfstring{$\tau$}{tau} and duality defects
}
\label{app:varying-unconditional}

This appendix removes any additional ``elliptic-output'' axiom in the varying-$\tau$
sector by giving an explicit construction of the elliptic fibration
$E(\underline{\tau}) \to B^{\times}$ from the modular map
$\underline{\tau}  : B^{\times} \to \mathcal{M}_{1,1}
= SL(2,\mathbb{Z}) \backslash \mathbb{H}$
(equivalently from the pair $(\rho,\widetilde{\tau})$)
and proving the resulting monodromy/defect statements and sourced equations
at the two-derivative bosonic level. We isolate the only analytic input used
in the tower-limit procedure.

\begin{lemma}[Cutoff tower-stability for the circle tower]
\label{lem:tower_stability}
Let $S^1_{(n,p)}$ be the $p^n$-tower of circle covers used in the definition, and let $\Lambda>0$ be a fixed cutoff.
Assume the stage-$n$ compactification radius scales as
\begin{equation}
R_n=\frac{R_0}{p^n},
\end{equation}
as in your operational tower-with-cutoff prescription.
Let $\Delta_{S^1_{(n,p)}}$ denote the Laplacian on the fiber and let $\mathcal{K}_{\Lambda}^{(n,p)}$
be the fiber-mode subspace spanned by eigenmodes with eigenvalue $\le\Lambda^2$.
Then there exists $n_0=n_0(\Lambda,p)$ such that for all $n\ge n_0$,
\begin{equation}
\mathcal{K}_{\Lambda}^{(n,p)}=\ker(\Delta_{S^1_{(n,p)}}),
\end{equation}
and the pullback maps $f_n^*$ identify these kernels canonically. Consequently the truncated field spaces and
cutoff Hilbert spaces stabilize for all $n\ge n_0$.
\end{lemma}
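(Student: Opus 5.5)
The plan is to diagonalize the fiber Laplacian explicitly at each stage and read off the gap. On the stage-$n$ circle $S^1_{(n,p)}$ of radius $R_n=R_0/p^n$, with coordinate $y\sim y+2\pi R_n$, the eigenfunctions of $\Delta_{S^1_{(n,p)}}=-\partial_y^2$ are the characters $e^{iky/R_n}$, $k\in\Z$, with eigenvalues
\begin{equation*}
\lambda_k^{(n)}=\frac{k^2}{R_n^2}=\frac{k^2p^{2n}}{R_0^2}.
\end{equation*}
Fermions and form components are handled the same way: each is a finite sum of scalar-type Fourier towers on the fiber, possibly with a spin-structure shift $k\mapsto k+\tfrac12$ for fermions. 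The zero mode $k=0$ spans $\ker(\Delta_{S^1_{(n,p)}})$, and the first nonzero eigenvalue is $p^{2n}/R_0^2$. In the antiperiodic fermionic case the bottom eigenvalue is already $p^{2n}/(4R_0^2)$, so the kernel is empty and the argument below works with the bound $p^{2n}/(4R_0^2)$ instead.

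Next comes the choice of $n_0$. I would take $n_0(\Lambda,p)$ to be the least integer with $p^{n_0}>\Lambda R_0$, or $p^{n_0}>2\Lambda R_0$ to cover shifted towers. Since $p\ge 2$, $p^{2n}/R_0^2$ increases monotonically to infinity, so for $n\ge n_0$ every $k\neq 0$ has $\lambda_k^{(n)}>\Lambda^2$. This gives $\mathcal K^{(n,p)}_\Lambda=\ker(\Delta_{S^1_{(n,p)}})$, which is the constant functions, or the covariantly constant sections of the relevant flat bundle.

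Then I would compare stages via pullback. For the covering map $f_n(z)=z^p$ the pullback sends the character $k$ to the character $pk$, the same computation as in \eqref{eq:char-pullback}. In particular it sends constants to constants and is an isomorphism on the one-dimensional kernels, normalized by the unit function. For flat bundles it sends a parallel section to a parallel section. Compatibility of these maps along the tower, $f_{n+1}^*\circ f_n^*=(f_n\circ f_{n+1})^*$, is functoriality. The retained truncated field space at stage $n$ is $\mathcal K^{(n,p)}_\Lambda$ tensored with the ten-dimensional field content. For $n\ge n_0$ the pullback maps are therefore isomorphisms, and so the induced maps $U^{(p)}_{n+1,n}$ on the cutoff Hilbert spaces (Fock spaces built on these single-particle spaces) are isomorphisms. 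This is exactly the stability condition of Definition~\ref{D:def:tower_stability}.

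I expect the main obstacle to be the topological sectors, more than the spectral bound. Nontrivial $C^{(3)}$ holonomy, or a twisted flat line bundle, on the stage-$n$ circle shifts $k\mapsto k+\alpha_n$, and the pullback multiplies the holonomy parameter by $p$. I must check that a tower-compatible holonomy ($\alpha_n=p\,\alpha_{n+1}$ mod $1$) still leaves a kernel that is canonically matched across stages. My approach is to restrict to the admissible sector of Appendix~\ref{app:K-theory-quant} with trivial holonomy, or more generally to holonomies for which the lowest shifted mode is either exactly zero at every stage or lies above $\Lambda^2$ once $n\ge n_0$. Secondarily, I would note that Kaluza--Klein gauge and metric moduli are held fixed, so that $R_n=R_0/p^n$ is exactly the hypothesis of the lemma.
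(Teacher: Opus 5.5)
Your proposal is correct and follows the paper's proof: you use the explicit Fourier spectrum $k^2p^{2n}/R_0^2$ on the stage-$n$ circle, choose $n_0$ with $p^{n_0}>\Lambda R_0$ so that only the $k=0$ mode survives the cutoff, and observe that $f_n^*$ carries constants to constants. Your fermionic and twisted-holonomy remarks go beyond what the paper treats, since it handles only the untwisted fiber Laplacian. If you keep them, note that $\Phi_{n+1}=f_n^*\Phi_n$ gives $\alpha_{n+1}\equiv p\,\alpha_n \pmod{1}$, which is the reverse of the relation you wrote but agrees with your own statement that pullback multiplies the holonomy parameter by $p$.
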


\begin{proof}
On a circle of radius $R_n$, the eigenfunctions are $e^{ik y/R_n}$ with eigenvalues $(k/R_n)^2$, $k\in\mathbb{Z}$.
The first nonzero eigenvalue is $(1/R_n)^2 = p^{2n}/R_0^2$.
Choose $n_0$ so that $p^{n_0}/R_0>\Lambda$, i.e. $(1/R_{n_0})^2>\Lambda^2$.
Then for all $n\ge n_0$, every $k\ne 0$ eigenmode has eigenvalue $(k/R_n)^2\ge (1/R_n)^2>\Lambda^2$,
hence is excluded from $\mathcal{K}_{\Lambda}^{(n,p)}$. Therefore $\mathcal{K}_{\Lambda}^{(n,p)}$ consists
exactly of the $k=0$ mode, i.e. $\ker(\Delta_{S^1_{(n,p)}})$.
Since $f_n:S^1_{(n+1,p)}\to S^1_{(n,p)}$ is a covering and $f_n^*(\text{constant})$ is constant, $f_n^*$
identifies these kernels canonically for all $n\ge n_0$.
This establishes stabilization of the fiber-mode truncation, the stagewise truncations of the full field theory
stabilize accordingly because all retained modes are fiber-constant.
\end{proof}

We now construct $E(\tau_{\rm phys})\to B^{\times}$ explicitly and show it is holomorphic.

\begin{theorem}[Universal elliptic curve over $\mathbb{H}$ and pullback on the universal cover]
\label{thm:universal_pullback}
Let $\widetilde{B^{\times}}$ be the universal cover of $B^{\times}$ and let
$\widetilde{\tau}  : \widetilde{B^{\times}} \to \mathbb{H}$
be a holomorphic map. Define an action of $\mathbb{Z}^{2}$ on
$\mathbb{C} \times \widetilde{B^{\times}}$ by

\begin{equation}
(m,n)\cdot(z,b) := (z+m+n\,\widetilde{\tau}(b),\, b),\qquad (m,n)\in\mathbb{Z}^2.
\end{equation}
Then the quotient
\begin{equation}
\widetilde{E} := (\mathbb{C}\times \widetilde{B^{\times}})/\mathbb{Z}^2
\end{equation}
is a complex manifold and the projection $\widetilde{\varpi}:\widetilde{E}\to \widetilde{B^{\times}}$ is a proper holomorphic submersion
whose fiber over $b$ is canonically isomorphic to $\mathbb{C}/(\mathbb{Z}+\widetilde{\tau}(b)\mathbb{Z})$.
Moreover, $\widetilde{E}$ carries a holomorphic line subbundle (the Hodge line) generated locally by the relative form $dz$.
\end{theorem}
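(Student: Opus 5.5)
The plan is to check the properties of the $\mathbb{Z}^2$-action first, and then read off the manifold structure, the fibers and the Hodge line from the quotient.

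\emph{Step 1: the action is holomorphic.} Each map $(z,b)\mapsto(z+m+n\widetilde{\tau}(b),b)$ is holomorphic, since $\widetilde{\tau}$ is holomorphic, and it commutes with the projection to $\widetilde{B^{\times}}$.

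\emph{Step 2: free and properly discontinuous.} Freeness: a fixed point forces $m+n\widetilde{\tau}(b)=0$. Because $\Im\widetilde{\tau}(b)>0$, the numbers $1$ and $\widetilde{\tau}(b)$ are $\mathbb{R}$-linearly independent, so $m=n=0$. Proper discontinuity: take compact sets $K\subset\mathbb{C}$ and $L\subset\widetilde{B^{\times}}$. The function $\Im\widetilde{\tau}$ is continuous and positive on $L$, so it is bounded below by some $\delta>0$, and $|\widetilde{\tau}|$ is bounded above there. Write $z=x+iy$. For $(m,n)$ to move $K\times L$ back into itself we need $|n|\,\Im\widetilde{\tau}(b)\le\operatorname{diam}K$, which bounds $n$. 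With $n$ bounded, the same condition bounds $m$. Hence only finitely many $(m,n)$ satisfy $(m,n)(K\times L)\cap(K\times L)\neq\emptyset$.

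\emph{Step 3: manifold structure and properness.} By the standard quotient theorem, the quotient $\widetilde{E}$ is a Hausdorff complex manifold and the quotient map is a holomorphic covering. The projection descends to $\widetilde{\varpi}$, and $\widetilde{\varpi}$ is a submersion because $\mathrm{pr}_2$ is one. For properness I will work locally over a relatively compact open set $U$. Using the fundamental-domain map $(s,t,b)\mapsto(s+t\widetilde{\tau}(b),b)$ with $(s,t)\in[0,1]^2$, the preimage $\widetilde{\varpi}^{-1}(\overline{U})$ is the continuous image of the compact set $[0,1]^2\times\overline{U}$, so it is compact. The fiber over $b$ is $\mathbb{C}/(\mathbb{Z}+\widetilde{\tau}(b)\mathbb{Z})$, because the action restricted to $\mathbb{C}\times\{b\}$ is translation by exactly this lattice.

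\emph{Step 4: the Hodge line.} The form $dz$ on $\mathbb{C}\times\widetilde{B^{\times}}$ is not invariant as an absolute form: its pullback under $(m,n)$ is $dz+n\,\widetilde{\tau}'\,db$. Modulo the pullback of forms from the base, however, it is invariant. So $dz$ descends to a nowhere-vanishing section of the relative cotangent bundle $\Omega^1_{\widetilde{E}/\widetilde{B^{\times}}}$, and the line subbundle it spans is the Hodge line.

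\textbf{Main obstacle.} The delicate point is the uniform proper-discontinuity bound in Step 2, because it is the only place where local uniformity of $\Im\widetilde{\tau}$ over the base is used. I will handle it with the compactness estimate above.
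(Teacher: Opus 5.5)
Your proposal is correct and follows the same route as the paper's proof: identify each fiber with $\mathbb{C}/(\mathbb{Z}+\widetilde{\tau}(b)\mathbb{Z})$, use that the holomorphic $\mathbb{Z}^2$-action is free and properly discontinuous to get the quotient manifold and the proper submersion, and descend $dz$ as a relative form. Yours is in fact more careful than the paper's, which argues proper discontinuity only fiberwise and calls $dz$ invariant outright; your uniform bound using $\inf_L \Im\widetilde{\tau}>0$ and your remark that the pullback of $dz$ differs from $dz$ by the basic form $n\,d\widetilde{\tau}$ (write it this way rather than $n\,\widetilde{\tau}'\,db$ when $\dim B>1$) supply exactly the missing justification.
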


\begin{proof}
Fix $b\in \widetilde{B^{\times}}$. Since $\Im\widetilde{\tau}(b)>0$, the subgroup
$\Lambda_b:=\mathbb{Z}+\widetilde{\tau}(b)\mathbb{Z}\subset\mathbb{C}$ is a rank-2 lattice.
Thus $\mathbb{Z}^2$ acts freely and properly discontinuously on $\mathbb{C}\times\{b\}$ by translations,
and the quotient is the elliptic curve $\mathbb{C}/\Lambda_b$.
Because the action is free, properly discontinuous, and holomorphic in $(z,b)$, the global quotient
$\widetilde{E}$ is a complex manifold and the induced projection to $\widetilde{B^{\times}}$ is holomorphic.
Properness and submersion follow fiberwise\:locally in $b$, $\widetilde{\tau}$ varies holomorphically, hence
$\Lambda_b$ varies continuously, and the quotient by a lattice is a proper holomorphic submersion.
Finally, $dz$ is invariant under translations $z\mapsto z+m+n\widetilde{\tau}(b)$, hence defines a nowhere-vanishing
relative holomorphic form on each fiber, well-defined up to multiplication by a nowhere-vanishing holomorphic function of $b$,
i.e. it determines the Hodge line bundle.
\end{proof}

To descend from $\widetilde{B^{\times}}$ to $B^{\times}$ we encode monodromy.

\begin{theorem}[Descent and induced $SL(2,\mathbb{Z})$ monodromy]
\label{thm:descent_monodromy}
Let $\Gamma:=\pi_1(B^{\times},b_0)$ act on $\widetilde{B^{\times}}$ by deck transformations.
Assume there exists a representation $\rho:\pi_1(B^\times,b_0)\to SL(2,\Z)$ such that for every deck
transformation $\gamma$ and every $b\in \widetilde{B^\times}$ one has the equivariance condition
\begin{equation}\label{eq:tau-equivariant}
\tilde\tau(\gamma\cdot b)=\rho(\gamma)\cdot \tilde\tau(b)=\frac{a\tilde\tau(b)+b}{c\tilde\tau(b)+d},
\qquad
\rho(\gamma)=\begin{pmatrix}a&b\\ c&d\end{pmatrix}\in SL(2,\Z).
\end{equation}
Define an action of $\Gamma$ on $\mathbb{C}\times\widetilde{B^{\times}}$ by
\begin{equation}
\gamma\cdot(z,b) := \left(\frac{z}{c\widetilde{\tau}(b)+d},\, \gamma\cdot b\right).
\end{equation}
Then the combined action of the semidirect product $\mathbb{Z}^2\rtimes_\rho \Gamma$ on $\mathbb{C}\times\widetilde{B^{\times}}$
is properly discontinuous and holomorphic, and the quotient
\begin{equation}
E(\tau_{\rm phys}) := (\mathbb{C}\times\widetilde{B^{\times}})/(\mathbb{Z}^2\rtimes_\rho \Gamma)
\end{equation}
is a complex orbifold (a manifold if $\rho$ avoids elliptic points) equipped with a holomorphic map
$\varpi:E(\tau_{\rm phys})\to B^{\times}$ whose fiber over $b$ is canonically $\mathbb{C}/(\mathbb{Z}+\tau_{\rm phys}(b)\mathbb{Z})$.
The Gauss-Manin monodromy of $R^1\varpi_*\mathbb{Z}$ equals $\rho$.
\end{theorem}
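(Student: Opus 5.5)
We will build $E(\tau_{\rm phys})$ in two steps: first take the $\mathbb{Z}^2$-quotient $\widetilde{E}\to\widetilde{B^\times}$ supplied by Theorem~\ref{thm:universal_pullback}, then show that $\Gamma$ acts on $\widetilde{E}$ holomorphically, freely and properly discontinuously, covering the deck action on $\widetilde{B^\times}$. Doing it this way avoids handling the semidirect product all at once. Throughout we write $\rho(\gamma)=\begin{pmatrix}a&\beta\\ c&d\end{pmatrix}$, renaming the upper-right entry to $\beta$ to avoid the clash with the point $b\in\widetilde{B^\times}$. We also put $j(\gamma,b):=c\widetilde{\tau}(b)+d$; this is nonzero since $\Im\widetilde{\tau}>0$ and $(c,d)\neq(0,0)$.

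\emph{Step 1 (action of $\Gamma$).} The equivariance \eqref{eq:tau-equivariant} together with $\rho$ being a homomorphism gives the automorphy cocycle identity
\[
j(\gamma_1\gamma_2,b)=j(\gamma_1,\gamma_2 b)\,j(\gamma_2,b).
\]
This is the standard computation for fractional linear transformations, and we will check it by expanding the bottom row of $\rho(\gamma_1)\rho(\gamma_2)$. The identity shows that $\gamma\cdot(z,b)=(z/j(\gamma,b),\gamma b)$ is a left action, holomorphic in $(z,b)$ because $\widetilde{\tau}$ is holomorphic.

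\emph{Step 2 (normalizing the lattice, the key step).} We must show that $\gamma$ carries $\Lambda_b=\mathbb{Z}+\widetilde{\tau}(b)\mathbb{Z}$ bijectively onto $\Lambda_{\gamma b}$. Writing $\tau'=\widetilde{\tau}(\gamma b)=(a\tau+\beta)/(c\tau+d)$ and using $ad-\beta c=1$, one checks directly that
\[
\frac{1}{c\tau+d}=a-c\tau',\qquad \frac{\tau}{c\tau+d}=-\beta+d\tau'.
\]
Hence
\[
\frac{m+n\tau}{c\tau+d}=(ma-n\beta)+(nd-mc)\tau'.
\]
The map $(m,n)\mapsto(ma-n\beta,\,nd-mc)$ is given by a determinant-one integer matrix, so it is an automorphism of $\mathbb{Z}^2$. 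This yields two things. First, $\gamma\circ(m,n)\circ\gamma^{-1}$ is again a lattice translation, so the group generated is $\mathbb{Z}^2\rtimes_\rho\Gamma$, with $\Gamma$ acting on $\mathbb{Z}^2$ by that matrix. Second, $\Gamma$ descends to an action on $\widetilde{E}$ by fiberwise linear isomorphisms $\mathbb{C}/\Lambda_b\to\mathbb{C}/\Lambda_{\gamma b}$.

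\emph{Step 3 (quotient and fibers).} $\Gamma$ acts freely and properly discontinuously on $\widetilde{B^\times}$ by deck transformations, and the lifted action on $\widetilde{E}$ covers this action. Therefore the lifted action is free and properly discontinuous too. The quotient is then a complex manifold; the orbifold caveat in the statement only becomes relevant if one also quotients fiberwise by $\pm1$, and we will remark on this. The map $\widetilde{\varpi}$ descends to a proper holomorphic submersion $\varpi$, and its fiber over the image of $b$ is $\mathbb{C}/\Lambda_b$. The relative form $dz$ is rescaled by $j(\gamma,b)^{-1}$, so it descends to a section of a holomorphic line bundle. This is the Hodge line, now twisted by the automorphy factor.

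\emph{Step 4 (monodromy).} Identify $H_1(E_b,\mathbb{Z})\cong\Lambda_b$ with the basis $(1,\widetilde{\tau}(b))$. Transport along the loop represented by $\gamma$ is the map of Step 2, followed by re-expressing in the basis $(1,\tau')$. The integer matrix computed there is, up to the transpose/inverse convention of \eqref{eq:cycle_transport}, exactly $\rho(\gamma)$. The resulting period ratio transforms by $\rho(\gamma)$, which is consistent with Theorem~\ref{thm:SL2Z_monodromy}.

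The main obstacle is pinning down conventions in Step 4. Depending on whether cycles are transported forward or backward and whether one uses rows or columns, one obtains $\rho$, $\rho^{-T}$ or $\rho^{-1}$, and these agree only up to an automorphism of $SL(2,\mathbb{Z})$. We would fix the convention by testing on $\rho(\gamma)=T$, demanding $\tau\mapsto\tau+1$ with the $A$-cycle fixed, and then read off the general matrix from the linear map in Step 2.
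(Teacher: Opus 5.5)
Your proposal is correct and follows essentially the paper's route: you show that $\Gamma$ normalizes the lattice translations via the unimodular identity, inherit proper discontinuity from the free deck action on the base, and read off the monodromy from the gluing, merely organized as a $\mathbb{Z}^2$-quotient followed by a $\Gamma$-quotient; your cocycle check, your identity $(m+n\tau)/(c\tau+d)=(ma-n\beta)+(nd-mc)\tau'$ (the paper's displayed $(m',n')=(m,n)\rho(\gamma)^{-1}$ is right only after reordering coordinates to the basis $(\tau,1)$), and your observation that freeness over the base already makes the quotient a manifold are all sound. For the convention you flag in Step 4, note that the Step 2 map identifies two presentations of the same fibre rather than being the transport itself: transporting flatly in $\widetilde{E}$ from $b$ to $\gamma b$ and returning via $\gamma^{-1}$ (multiplication by $c\tau+d$) sends $\tau\mapsto a\tau+\beta$ and $1\mapsto c\tau+d$, so in the ordered basis $(A,B)=(\widetilde{\tau}(b),1)$, which is also the one giving $\tau=\omega_A/\omega_B$ as in Theorem~\ref{thm:SL2Z_monodromy}, the transport matrix is literally $\rho(\gamma)$ in the form \eqref{eq:cycle_transport}.
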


\begin{proof}
We first show compatibility\:for $(m,n)\in\mathbb{Z}^2$ and $\gamma\in\Gamma$,
\begin{equation}
\gamma\circ(m,n)\circ\gamma^{-1}
\end{equation}
acts on $(z,b)$ by translation by a new pair $(m',n')$ determined by $\rho(\gamma)$.
Indeed, compute using $\gamma^{-1}\cdot(z,b)=((c\widetilde{\tau}(\gamma^{-1}b)+d)z,\gamma^{-1}b)$ and the definition of the
$\mathbb{Z}^2$ action:
\begin{align}
\gamma\cdot\big((m,n)\cdot(\gamma^{-1}\cdot(z,b))\big)
&=\gamma\cdot\big((c\widetilde{\tau}(\gamma^{-1}b)+d)z + m + n\widetilde{\tau}(\gamma^{-1}b),\,\gamma^{-1}b\big)\nonumber\\
&=\left(\frac{(c\widetilde{\tau}(\gamma^{-1}b)+d)z + m + n\widetilde{\tau}(\gamma^{-1}b)}{c\widetilde{\tau}(\gamma^{-1}b)+d},\,b\right)\nonumber\\
&=\left(z + \frac{m+n\widetilde{\tau}(\gamma^{-1}b)}{c\widetilde{\tau}(\gamma^{-1}b)+d},\,b\right).
\end{align}
Using $\widetilde{\tau}(\gamma^{-1}b)=\rho(\gamma)^{-1}\cdot\widetilde{\tau}(b)$, one checks the standard identity
\begin{equation}
\frac{m+n\,(\rho(\gamma)^{-1}\cdot\tau)}{c(\rho(\gamma)^{-1}\cdot\tau)+d} = m' + n'\tau
\end{equation}
with $(m',n')=(m,n)\rho(\gamma)^{-1}\in\mathbb{Z}^2$.
Thus $\Gamma$ normalizes the $\mathbb{Z}^2$ translations, and the combined action is a semidirect product action.
Proper discontinuity follows because $\mathbb{Z}^2$ acts properly discontinuously on each fiber and $\Gamma$ acts properly discontinuously
on the base, holomorphicity is immediate from the formulas.
The quotient therefore defines a holomorphic family over $B^{\times}$ with the claimed fibers.
Finally, $R^1\varpi_*\mathbb{Z}$ is the local system of the first homology of fibers, the above descent uses $\rho$
as the gluing of homology bases, hence the Gauss-Manin monodromy is exactly $\rho$.
\end{proof}

\begin{corollary}[Derivation of the former      ``Comparison Axiom'']
\label{cor:CA_derived}
Given a varying-$\tau$ background in the sense of Definition~\eqref{def7.3}
(equivalently, a modular map
$\underline{\tau}  : B^{\times} \to SL(2,\mathbb{Z}) \backslash \mathbb{H}$
or an equivariant pair $(\rho,\widetilde{\tau})$ on the universal cover),
the elliptic fibration $\varpi  : E(\underline{\tau}) \to B^{\times}$
exists canonically by Theorems~\eqref{thm:universal_pullback}-Theorem~\eqref{thm:descent_monodromy}. Hence the varying-$\tau$ sector
does not require any additional elliptic-output axiom\:the elliptic
fibration is determined by $\underline{\tau}$.

\end{corollary}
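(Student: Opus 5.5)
The plan is to assemble the elliptic fibration directly from the data $(\rho,\widetilde{\tau})$ attached to $\underline{\tau}$ in Definition~\ref{def7.3}, using Theorem~\ref{thm:universal_pullback} on the universal cover and then descending with Theorem~\ref{thm:descent_monodromy}. After that I will check two things: that the result is independent of all auxiliary choices, so that it is canonical, and that it carries exactly the structure which Appendix~\ref{app:varying-conditional} assumed under \textbf{(CA)}.

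First I fix a basepoint $b_0$ and a local lift of $\underline{\tau}$ near $b_0$. Analytic continuation of this lift produces $\widetilde{\tau}:\widetilde{B^\times}\to\mathbb{H}$ together with the equivariance \eqref{eq:tau-equivariant}, which Definition~\ref{def7.3} already packages as the pair $(\rho,\widetilde{\tau})$. Theorem~\ref{thm:universal_pullback} then yields the family $\widetilde{E}\to\widetilde{B^\times}$ with Hodge line $dz$. Next I verify that the formula for the $\Gamma$-action in Theorem~\ref{thm:descent_monodromy} really is a group action. This reduces to the automorphy cocycle identity
\[
j(\gamma\gamma',b)=j\bigl(\gamma,\gamma' b\bigr)\,j(\gamma',b),\qquad j(\gamma,b):=c\,\widetilde{\tau}(b)+d,
\]
which follows from \eqref{eq:tau-equivariant} and the multiplicativity of $\rho$. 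Combined with the normalization computation already carried out in the proof of Theorem~\ref{thm:descent_monodromy}, this gives a well-defined action of $\mathbb{Z}^2\rtimes_\rho\Gamma$, and hence the quotient $\varpi:E(\underline{\tau})\to B^\times$. The form $dz$ transforms by the factor $j^{-1}$, so it descends to a section of the Hodge line bundle rather than to a global relative one-form. That is all that Definition~\ref{def:comparison_axiom_varying_tau} requires, after trivializing locally on simply connected patches.

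For canonicity I compare different choices of data. Changing the lift of $\underline{\tau}$ or the basepoint replaces $(\rho,\widetilde{\tau})$ by $(g\rho g^{-1},\,g\cdot\widetilde{\tau}\circ\delta)$ for some $g\in SL(2,\mathbb{Z})$ and some deck transformation $\delta$. The map $(z,b)\mapsto\bigl(z/j_g(b),\,\delta^{-1}b\bigr)$, with $j_g$ the corresponding automorphy factor of $g$, intertwines the two group actions. It therefore induces an isomorphism of the two quotients over $B^\times$ that preserves the Hodge line. The same construction identifies $E(\underline{\tau})$ with $\underline{\tau}^*\mathcal{E}_{\mathrm{univ}}$, since $\mathcal{E}_{\mathrm{univ}}$ is itself the quotient $(\mathbb{C}\times\mathbb{H})/(\mathbb{Z}^2\rtimes SL(2,\mathbb{Z}))$ and the map $(z,b)\mapsto(z,\widetilde{\tau}(b))$ is equivariant. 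Finally, the fibers are compact elliptic curves and the projection is a proper holomorphic submersion. These are precisely the inputs used in Lemma~\ref{lem:ehresmann_triviality} and in the subsequent results of Appendix~\ref{app:varying-conditional}, which shows that \textbf{(CA)} is derived rather than assumed.

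I expect the main obstacle to be the elliptic and orbifold points, in particular the element $-\mathbf{1}\in SL(2,\mathbb{Z})$. It acts fiberwise by $z\mapsto -z$, so a naive quotient would produce $\mathbb{P}^1$ fibers instead of elliptic curves. I would first try to restrict to monodromy representations whose image avoids $-\mathbf{1}$ and the elliptic stabilizers, which is the manifold case of Theorem~\ref{thm:descent_monodromy}. In general I would instead interpret the quotient stack-theoretically, as the pullback $\underline{\tau}^*\mathcal{E}_{\mathrm{univ}}$ of stacks. On that reading "exists canonically" means "exists up to unique isomorphism as a stack", and the Gauss--Manin statements of Appendix~\ref{app:varying-conditional} are applied locally to the étale-local elliptic fibrations.
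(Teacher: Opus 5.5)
Your construction follows the paper's own route: Theorem~\ref{thm:universal_pullback} on $\widetilde{B^\times}$, then descent by $\mathbb{Z}^2\rtimes_\rho\Gamma$ as in Theorem~\ref{thm:descent_monodromy}. Three of your additions are correct and are not in the paper: the automorphy-cocycle check, the remark that $dz$ spans the Hodge line without descending to a global relative form, and the identification with $\underline{\tau}^*\mathcal{E}_{\mathrm{univ}}$.

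\textbf{The obstacle in your last paragraph does not exist.} Deck transformations act freely on $\widetilde{B^\times}$. So an element $(v,\gamma)$ of $\mathbb{Z}^2\rtimes_\rho\Gamma$ can fix a point of $\mathbb{C}\times\widetilde{B^\times}$ only if $\gamma=e$, and then it is a nonzero lattice translation. Consequently:
\begin{itemize}
\item the action is free for every $\rho$;
\item the stabilizer of each slice $\mathbb{C}\times\{b\}$ is just $\mathbb{Z}^2$;
\item the quotient is always a complex manifold with fiber $\mathbb{C}/(\mathbb{Z}+\widetilde{\tau}(b)\mathbb{Z})$;
\item the quotient is proper over $B^\times$, since over an evenly covered disc it is a smoothly trivial torus family, so Lemma~\ref{lem:ehresmann_triviality} applies globally.
\end{itemize}
An element with $\rho(\gamma)=-\mathbf{1}$ acts by $(z,b)\mapsto(-z,\gamma b)$ with $\gamma b\neq b$. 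It glues two \emph{different} fibers by inversion and produces monodromy $-\mathbf{1}$ on $H_1$; it does not produce a Kummer quotient. Fiberwise inversion occurs only over $\mathbb{H}$ inside $\mathcal{E}_{\mathrm{univ}}$, and pulling back to a manifold removes it. So you need neither a restriction on the image of $\rho$ nor an \'etale-local reading. The restriction would in fact discard $I_0^*$, the finite-order types $II,III,IV,II^*,III^*,IV^*$, and any global configuration whose monodromy group contains $-\mathbf{1}$. The orbifold caveat in Theorem~\ref{thm:descent_monodromy} is overcautious for the same reason.

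\textbf{Where $-\mathbf{1}$ does matter is your canonicity step, and there it leaves a gap.} Because $-\mathbf{1}$ acts trivially on $\mathbb{H}$, the equivariance \eqref{eq:tau-equivariant} fixes $\rho(\gamma)$ only up to sign. If $\widetilde{\tau}\equiv\tau_0$ is constant, it fixes $\rho(\gamma)$ only up to the stabilizer of $\tau_0$ in $SL(2,\mathbb{Z})$.
\begin{itemize}
\item For any character $\varepsilon:\pi_1(B^\times)\to\{\pm1\}$, the pair $(\varepsilon\rho,\widetilde{\tau})$ satisfies the same equivariance with the same $\widetilde{\tau}$, but yields the quadratic twist of $E$.
\item Around a puncture, $T$ and $-T$ have traces $2$ and $-2$ ($I_1$ versus $I_1^*$), so the two fibrations are not isomorphic over $B^\times$.
\end{itemize}
Hence your claim that every change of choices merely replaces $(\rho,\widetilde{\tau})$ by $(g\rho g^{-1},g\cdot\widetilde{\tau}\circ\delta)$ holds only if $\rho$ is given as data. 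It fails if $\rho$ is obtained by analytically continuing a local lift of $\underline{\tau}$, as in your first step. With $\underline{\tau}$ read in the coarse space that first step can fail outright. A holomorphic $j$ with a simple zero at a point of $B^\times$ has no holomorphic local lift to $\mathbb{H}$, because $j$ vanishes to order $3$ at $e^{2\pi i/3}$. No smooth elliptic family has such a $j$.

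What your argument proves, as do the paper's two theorems, is the following. $E$ is canonically determined by $(\rho,\widetilde{\tau})$, that is, by $\underline{\tau}$ regarded as a map to the stack $[SL(2,\mathbb{Z})\backslash\mathbb{H}]$. It is determined up to isomorphism, not unique isomorphism, since fiberwise $[-1]$ is always a nontrivial automorphism. The corollary should be stated with that input, and the \emph{equivalently} with the coarse map to $SL(2,\mathbb{Z})\backslash\mathbb{H}$ (or with $j$) should be dropped.
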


We now prove the local 7-brane monodromy from first principles.

\begin{proposition}[$T^n$ monodromy from logarithmic singularities]
\label{prop:Tn}
Let $D$ be a small disk with coordinate $z$ and $D^{\times}=D\setminus\{0\}$.
Assume
\begin{equation}
\tau(z)=\frac{n}{2\pi i}\log z + h(z),
\end{equation}
where $n\in\mathbb{Z}$ and $h$ is holomorphic on $D$.
Then analytic continuation around the positively oriented loop $z=\varepsilon e^{i\theta}$ induces
\begin{equation}
\tau\ \longmapsto\ \tau+n,
\end{equation}
hence the monodromy matrix is $T^n$ with
\begin{equation}
T:=\begin{pmatrix}1&1\\0&1\end{pmatrix}.
\end{equation}
\end{proposition}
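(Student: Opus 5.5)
The plan is to reduce everything to the explicit logarithmic computation already carried out for Theorem~\ref{thm:T_monodromy_conditional}, with $n$ replacing the coefficient $1$. First, I will work on the universal cover of $D^{\times}$, or equivalently along the path $\theta\mapsto \varepsilon e^{i\theta}$ for $\theta\in[0,2\pi]$. On this path I choose the continuous determination $\log(\varepsilon e^{i\theta})=\log\varepsilon+i\theta$. Because $h$ is holomorphic on the whole disk $D$, it is single-valued, so $h(\varepsilon e^{2\pi i})=h(\varepsilon)$. This is the same role that the single-valuedness of $\log u$ played in \eqref{eq:logu_single_valued}.

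Second, I will evaluate the continued function at the two endpoints:
\begin{equation}
\tau(\gamma(2\pi))-\tau(\gamma(0))=\frac{n}{2\pi i}\bigl(i2\pi-i0\bigr)+h(\varepsilon)-h(\varepsilon)=n.
\end{equation}
This gives the stated continuation $\tau\mapsto\tau+n$. Since the result does not depend on $\varepsilon$ or on the base point chosen on the loop, it is a genuine monodromy statement for the class $[\gamma]\in\pi_1(D^\times)$.

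Third, I will identify the matrix. By Theorem~\ref{thm:descent_monodromy} (or Theorem~\ref{thm:SL2Z_monodromy}), the monodromy acts by some $\begin{pmatrix}a&b\\c&d\end{pmatrix}\in SL(2,\mathbb{Z})$ through fractional linear transformations. Suppose $(a\tau+b)/(c\tau+d)=\tau+n$ on an open set of $\tau$ values. Clearing denominators gives a polynomial identity $c\tau^2+(d+cn-a)\tau+(dn-b)=0$. From it I read off $c=0$, $a=d$, and $b=dn$. The condition $ad=1$ then forces $a=d=\pm1$. The sign must be fixed separately, because $\pm T^n$ act identically on $\mathbb{H}$. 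For this I will use the period description: along the loop the lattice basis $(1,\tau)$ is transported to $(1,\tau+n)$, with the first period unchanged. This forces $d=+1$, and hence the matrix is $T^n$. The value $T^n=\begin{pmatrix}1&n\\0&1\end{pmatrix}$ follows by an induction on $n$, and for $n<0$ one uses $T^{-1}$.

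The main obstacle is this sign and lift ambiguity: the action on $\mathbb{H}$ only determines the matrix in $PSL(2,\mathbb{Z})$. Resolving it requires the explicit quotient model of Theorem~\ref{thm:descent_monodromy}. In that model the deck transformation acts on $z$ by $z\mapsto z/(c\tilde\tau+d)$, and choosing $c=0$, $d=1$ makes the action on the fiber coordinate trivial. This pins down the homology transport as $T^n$ rather than $-T^n$. I will state this convention explicitly rather than claim that it is forced by $\tau$ alone.
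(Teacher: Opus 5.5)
Your proof is correct. The first two steps are exactly the paper's argument: continue $\log z$ along $\theta\mapsto\varepsilon e^{i\theta}$, pick up $2\pi i$, use that $h$ is single-valued on $D$, and read off $\tau\mapsto\tau+n$.

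The third step is where you depart from the paper, and your version is the accurate one. The paper simply asserts that $T^n$ is the unique element of $SL(2,\mathbb{Z})$ acting by $\tau\mapsto\tau+n$. That is false, because $-T^n$ induces the same M\"obius map. So the conclusion ``the monodromy matrix is $T^n$'' genuinely needs the extra normalization you impose: the first period stays fixed around the loop (the form $dz$ with periods $1,\tau$ returns to itself), which is the lift with $d=+1$.

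This is not a harmless convention. $I_n$- and $I_n^{*}$-type degenerations both have $\tau=\frac{n}{2\pi i}\log z+h(z)$, yet their local monodromies are $T^n$ and $-T^n$ respectively. Stating it explicitly as an assumption, as you propose, is the right call.

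One further detail concerns your open-set step, which needs $\tilde\tau$ to be non-constant. For $n\neq 0$ this is automatic, since $\tilde\tau$ shifts by $n$ under the deck transformation. For $n=0$ with $h$ constant at an elliptic point it fails. For example, $h\equiv i$ is also fixed by $\tau\mapsto -1/\tau$, as for Kodaira type III. There the M\"obius action does not determine $\rho([\gamma])$ even up to sign, and the conclusion rests entirely on your period normalization.
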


\begin{proof}
Choose a branch of $\log z$ on the universal cover of $D^{\times}$. Under $\theta:0\to 2\pi$,
$\log z\mapsto \log z + 2\pi i$. Therefore
\begin{equation}
\tau \mapsto \frac{n}{2\pi i}(\log z + 2\pi i) + h(z)=\tau+n.
\end{equation}
The unique $SL(2,\mathbb{Z})$ matrix acting by $\tau\mapsto \tau+n$ is $T^n$.
\end{proof}

Your paper already records the bosonic IIB action with the $\tau$ kinetic term.
We now vary it explicitly and show how logarithmic singularities correspond to codimension-two sources.

\begin{proposition}[Equation of motion for $\tau$ and delta sources]
\label{prop:EOMtau}
Consider the bosonic Type IIB Lagrangian density (with $G^{(3)}=0$ and $F^{(5)}=0$ for simplicity)
\begin{equation}
\mathcal{L} = \sqrt{-g}\left(R - \frac{\partial_\mu\tau\,\partial^\mu\overline{\tau}}{2(\Im\tau)^2}\right).
\end{equation}
Treating $\tau$ and $\overline{\tau}$ as independent fields, the Euler-Lagrange equation for $\tau$ is
\begin{equation}
\nabla_\mu\left(\frac{\partial^\mu \tau}{(\Im\tau)^2}\right)=0
\qquad\text{on smooth points.}
\end{equation}
If $\tau$ has a logarithmic singularity as in Proposition~\ref{prop:Tn} along a divisor component,
the equation holds on $B^{\times}$ but acquires a delta-function source supported on $\Delta$ with coefficient $n$.
\end{proposition}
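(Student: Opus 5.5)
The plan is to derive the Euler--Lagrange equation by a direct first variation, check that it holds identically on $B^\times$ for the holomorphic local profiles produced by Definition~\ref{def7.3} and Theorem~\ref{thm:descent_monodromy}, and then extract the codimension-two source from the logarithmic profile of Proposition~\ref{prop:Tn}. For this I use an excised-disk flux computation rather than a product of distributions.

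\textbf{Step 1: the variation.} I treat $\tau$ and $\overline{\tau}$ as independent and write $\Im\tau=(\tau-\overline{\tau})/(2i)$. Varying $\overline{\tau}$ in $\sqrt{-g}\,\partial_\mu\tau\,\partial^\mu\overline{\tau}/(2(\Im\tau)^2)$ gives two contributions. The derivative term yields the divergence $\nabla_\mu\bigl(\partial^\mu\tau/(\Im\tau)^2\bigr)$. The non-derivative term comes from $\partial_{\overline{\tau}}(\Im\tau)^{-2}$ and is proportional to $\partial_\mu\tau\,\partial^\mu\overline{\tau}/(\Im\tau)^3$. I will record this algebraic term explicitly, since it is needed to see that the divergence form stated in the Proposition is the correct equation on smooth points.

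\textbf{Step 2: reduction on the holomorphic sector.} In a local complex coordinate $z$ transverse to $\Delta$, with $\tau=\tau(z)$ holomorphic and the flat metric $ds^2=\eta_{ab}dx^adx^b+dz\,d\overline{z}$, only the component $\partial^{\overline{z}}\tau\propto\partial_z\tau$ survives. Expanding the divergence gives
\[
\partial_z\tau\;\partial_{\overline{z}}\bigl((\Im\tau)^{-2}\bigr)+(\Im\tau)^{-2}\,\partial_{\overline{z}}\partial_z\tau .
\]
The first piece cancels against the algebraic term from Step 1. To see this I use $\partial_{\overline{z}}\Im\tau=-\partial_{\overline{z}}\overline{\tau}/(2i)$, which is the identity that makes holomorphic $\tau$ a harmonic map into $\mathbb H$ with its Poincaré metric. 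What remains is $(\Im\tau)^{-2}\partial_{\overline{z}}\partial_z\tau$. This vanishes wherever $\tau$ is holomorphic, so the equation holds at every smooth point of $B^\times$. The $SL(2,\mathbb{Z})$ invariance of the kinetic term, whose monodromy is fixed by Theorem~\ref{thm:descent_monodromy}, guarantees that this local statement is independent of the patch representative $\tau_i$.

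\textbf{Step 3: the source on $\Delta$.} Near a smooth component of $\Delta$ I insert $\tau=\frac{n}{2\pi i}\log z+h(z)$ from Proposition~\ref{prop:Tn}. By Step 2 the singular content is carried by the linear operator $\partial_{\overline{z}}\partial_z\tau$. Since $\partial_z\tau=\frac{n}{2\pi i z}+h'(z)$ and $\partial_{\overline{z}}(1/z)=\pi\,\delta^{2}(z)$, this gives $\partial_{\overline{z}}\partial_z\tau=\frac{n}{2i}\,\delta^{2}(z)$. This is a delta function supported on $\Delta$ with coefficient proportional to $n$. To make the argument rigorous without relying on the distributional identity, I will integrate the divergence over the region $|z|\ge\varepsilon$, apply Stokes, and use the residue computation already carried out in Proposition~\ref{prop:Tn}. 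This gives $\oint_{|z|=\varepsilon}\partial_z\tau\,dz=n+O(\varepsilon)$, so the flux is independent of $\varepsilon$ as $\varepsilon\to0$.

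\textbf{Main obstacle.} The weight $(\Im\tau)^{-2}$ behaves like $(\frac{n}{2\pi}\log(1/|z|))^{-2}$. It therefore tends to zero at $\Delta$, and the flux of the fully weighted current $\partial^\mu\tau/(\Im\tau)^2$ itself decays logarithmically. A naive product with $\delta^2$ is consequently ill-defined. The first thing I will try is to state the source for the unweighted holomorphic current $\partial_{\overline{z}}\partial_z\tau$. This is legitimate because Step 2 shows that the equation on smooth points is $(\Im\tau)^{-2}$ times this current, and the coefficient $n$ is then read off as the monodromy charge $T^n$. If a weighted statement is required, I will instead regularize $\Im\tau$ at the cutoff scale of Lemma~\ref{lem:tower_stability} and track the limit.
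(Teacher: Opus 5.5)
\textbf{Steps 1--2.} Your Step 2 algebra is correct, but it refutes the first assertion of the Proposition rather than proving it. So the stated purpose of Step 1 is backwards. Varying $\overline{\tau}$ gives
\[
\nabla_\mu\Bigl(\frac{\partial^\mu\tau}{(\Im\tau)^2}\Bigr)+i\,\frac{\partial_\mu\tau\,\partial^\mu\overline{\tau}}{(\Im\tau)^3}=0
\quad\Longleftrightarrow\quad
\nabla^2\tau+i\,\frac{\partial_\mu\tau\,\partial^\mu\tau}{\Im\tau}=0 .
\]
The algebraic term cannot be absorbed, since it vanishes only where $\partial_\mu\tau\,\partial^\mu\overline{\tau}=0$.

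Your own expansion shows this. For nonconstant holomorphic $\tau(z)$, the divergence alone is proportional to $\partial_z\tau\,\partial_{\bar z}(\Im\tau)^{-2}=-i\,|\partial_z\tau|^2/(\Im\tau)^3\neq 0$. Only after adding the algebraic term do you get $(\Im\tau)^{-2}\partial_{\bar z}\partial_z\tau$. So holomorphic profiles solve the true Euler--Lagrange equation and violate the displayed one, $\nabla_\mu(\partial^\mu\tau/(\Im\tau)^2)=0$.

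The displayed equation is also not covariant under $\tau\mapsto-1/\tau$, because $\partial^\mu\tau/(\Im\tau)^2\mapsto\overline{\tau}^{\,2}\,\partial^\mu\tau/(\Im\tau)^2$. Your patch-independence remark therefore holds only for the corrected equation. The paper's proof makes the same slip where it says the bracket can be rewritten as $\nabla_\mu(\partial^\mu\tau/y^2)$ ``after standard algebra''. The first assertion should be replaced by the corrected equation, which your Steps 1--2 do prove.

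\textbf{Step 3.} The source claim has the matching problem, and your regularization fallback will not rescue the statement as written. For the weighted operator, the algebraic term is $O\bigl(|z|^{-2}\log^{-3}(1/|z|)\bigr)$, hence locally integrable. The excised-disk boundary term is of order $n\,\varphi(0)/(\Im\tau)^2|_{|z|=\varepsilon}\to 0$. So the full weighted Euler--Lagrange operator, applied to $\frac{n}{2\pi i}\log z+h$, is the zero distribution near $\Delta$. Regularizing $\Im\tau$ at a cutoff can only give a regulator-dependent coefficient suppressed by $(\Im\tau)^{-2}$ there, never $n$. The paper's appeal to $\partial\overline{\partial}\log|z|^2$ ignores this weight in exactly the way you flag.

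The coefficient-$n$ source lives in the unweighted form $\nabla^2\tau+i\,\partial_\mu\tau\,\partial^\mu\tau/\Im\tau$. There the quadratic term vanishes identically for holomorphic $\tau$ (Cauchy--Riemann together with $g^{zz}=0$). So all singular content sits in the linear operator, and your Stokes/residue computation applies verbatim.

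Splitting the single-valued gradient into divergence and curl gives two pieces:
\[
\nabla^2\Im\tau=-n\,\delta^2,
\qquad
d(dC_0)=n\,\delta_\Delta .
\]
Here $dC_0$ is the single-valued one-form. The second piece is the Bianchi identity carrying the 7-brane's magnetic charge, i.e.\ precisely the $T^n$ monodromy. Together the two reproduce your $\partial_{\bar z}\partial_z\tau=\frac{n}{2i}\,\delta^2$.

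Present the result as this corrected proposition: the true Euler--Lagrange equation holds on $B^\times$, with a source of coefficient $n$ in its unweighted form. Do not present it as a proof of the statement as displayed.
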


\begin{proof}
Write $y:=\Im\tau=(\tau-\overline{\tau})/(2i)$, so $\delta y = \delta\tau/(2i)$ when varying $\tau$ with $\overline{\tau}$ fixed.
Vary the kinetic term:
\begin{align}
\delta\left(\frac{\partial_\mu\tau\,\partial^\mu\overline{\tau}}{2y^2}\right)
&=\frac{1}{2y^2}\,\partial_\mu(\delta\tau)\,\partial^\mu\overline{\tau}
+\frac{\partial_\mu\tau\,\partial^\mu\overline{\tau}}{2}\,\delta(y^{-2}) \nonumber\\
&=\frac{1}{2y^2}\,\partial_\mu(\delta\tau)\,\partial^\mu\overline{\tau}
-\frac{\partial_\mu\tau\,\partial^\mu\overline{\tau}}{y^3}\,\delta y \nonumber\\
&=\frac{1}{2y^2}\,\partial_\mu(\delta\tau)\,\partial^\mu\overline{\tau}
-\frac{\partial_\mu\tau\,\partial^\mu\overline{\tau}}{2i\,y^3}\,\delta\tau.
\end{align}
Integrating by parts and discarding boundary terms on $B^{\times}$ yields
\begin{equation}
\delta S = -\int d^{10}x\,\sqrt{-g}\,\delta\tau\left[
\nabla_\mu\left(\frac{\partial^\mu\overline{\tau}}{2y^2}\right)
+\frac{\partial_\mu\tau\,\partial^\mu\overline{\tau}}{2i\,y^3}
\right].
\end{equation}
Taking complex conjugate gives the $\overline{\tau}$ equation.
Equivalently, rewriting the bracket as $\nabla_\mu(\partial^\mu\tau/y^2)$ gives the stated form for $\tau$
(after standard algebra using $\nabla_\mu y = (\nabla_\mu\tau-\nabla_\mu\overline{\tau})/(2i)$).
For a logarithmic singularity $\tau=\frac{n}{2\pi i}\log z + \text{holomorphic}$ in a transverse complex coordinate $z$,
the distributional identity $\partial\overline{\partial}\log|z|^2 = 2\pi i\,\delta^{(2)}(z)\,dz\wedge d\overline{z}$
implies the divergence equation acquires a delta source with coefficient $n$.
\end{proof}

To complete the “full F-theory” definition, one must extend $E(\tau_{\rm phys})$ across $\Delta$ as a relatively minimal
Weierstrass model and identify fiber types/monodromy classes. We package this as a standard theorem application.

\begin{theorem}[Existence of relatively minimal Weierstrass model and Kodaira fiber types]
\label{thm:Kodaira}
Assume $B$ is smooth and that the modular map
$\underline{\tau}  : B^{\times} \to SL(2,\mathbb{Z}) \backslash \mathbb{H}$
has at worst cusp-type/logarithmic growth along $\Delta$, so that
$j \circ \underline{\tau}$ extends meromorphically to $B$.
Then $E(\tau_{\rm phys})\to B^{\times}$ extends (after possibly birational modification over $\Delta$) to a relatively minimal
Weierstrass model $y^2=x^3+f x+g$ over $B$ with discriminant divisor $\Delta=\{4f^3+27g^2=0\}$,
and the local monodromy conjugacy class in $SL(2,\mathbb{Z})$ around each irreducible component of $\Delta$
matches the Kodaira fiber type at that component.
\end{theorem}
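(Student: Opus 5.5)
The plan is to reduce the statement to the classical theory of elliptic fibrations with a section. That theory is due to Kodaira for curves as base and to Nakayama for higher-dimensional smooth bases. The approach has two parts. First, I construct the Weierstrass coefficients $(f,g)$ locally on $B$ from the pair $(j\circ\underline{\tau},\rho)$, with $\rho$ coming from Theorem~\ref{thm:descent_monodromy}. Second, I glue these local models into a global model over $B$ using the line bundle defined by the Hodge line.

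\emph{Step 1 (local data on $B^\times$).} By Theorem~\ref{thm:universal_pullback}, Theorem~\ref{thm:descent_monodromy} and Corollary~\ref{cor:CA_derived}, $E(\underline{\tau})\to B^\times$ is a smooth elliptic fibration with zero section $z=0$ and Hodge line $\mathcal{L}^{-1}=\varpi_*\Omega^1_{E/B^\times}$. On any small simply connected patch I take Weierstrass coordinates $x=\wp(z;\Lambda_b)$ and $y=\wp'(z;\Lambda_b)$ relative to the normalized lattice $\mathbb{Z}+\tilde\tau\mathbb{Z}$. This gives holomorphic $f=-g_2/4$ and $g=-g_3/4$, i.e.\ sections of $\mathcal{L}^4$ and $\mathcal{L}^6$, because rescaling $dz$ by $\lambda$ sends $(f,g)\mapsto(\lambda^{4}f,\lambda^{6}g)$. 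The factor of automorphy $(c\tilde\tau+d)^{-1}$ in the descent action is exactly a transition function of $\mathcal{L}$. Hence $(f,g)$ are global sections of $\mathcal{L}^{4},\mathcal{L}^{6}$ over $B^\times$ with $\Delta=4f^3+27g^2$ nowhere zero, and $j=1728\cdot 4f^3/\Delta$ agrees with $j\circ\underline{\tau}$.

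\emph{Step 2 (extension across $\Delta$).} Near a generic point of a component $\Delta_i$, I use the growth hypothesis. The meromorphic extension of $j$, together with the local monodromy $\rho(\gamma_i)$, determines a Kodaira basic member. For $T^n$-type monodromy, the holomorphic $q$-expansion of Proposition~\ref{prop:Tn} shows that $\mathcal{L}$ extends and that $(f,g)$ extend meromorphically. For finite-order monodromy, a finite base change $z\mapsto z^{k}$ renders the family smooth, and I descend afterwards. A twist of $\mathcal{L}$ by $\mathcal{O}(m\Delta_i)$ then makes $f,g$ holomorphic. Minimality is obtained by choosing the smallest such $m$, so that $\mathrm{ord}(f)\ge4$ and $\mathrm{ord}(g)\ge6$ do not both hold. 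Across the codimension-two locus $\mathrm{Sing}(\Delta)$, I extend $\mathcal{L}$ as a reflexive rank-one sheaf, which is invertible because $B$ is smooth, and extend $f,g$ by Hartogs' theorem. The ``possible birational modification'' in the statement is only needed when a global minimal choice fails at such points. For that I would invoke Nakayama's existence theorem for Weierstrass models rather than prove it.

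\emph{Step 3 (fiber types).} With $\big(\mathrm{ord}_{\Delta_i}f,\mathrm{ord}_{\Delta_i}g,\mathrm{ord}_{\Delta_i}\Delta\big)$ fixed, Tate's algorithm assigns the Kodaira type. I then check case by case that the local monodromy of the extended model is the standard conjugacy class in the table: $T^n$ for $I_n$, $-T^n$ for $I_n^*$, and the elliptic classes of orders $6,4,3,2,3,4,6$ for $II,III,IV,I_0^*,IV^*,III^*,II^*$. Since that monodromy coincides with $\rho(\gamma_i)$ by construction, this gives the matching.

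I expect the main obstacle to be the sign ambiguity $\pm1\in SL(2,\mathbb{Z})$. The $j$-function cannot distinguish $I_n$ from $I_n^*$, or $I_0$ from $I_0^*$, since these are quadratic twists of one another. I would therefore resolve the parity of $m$ in Step 2 using $\rho$ itself rather than $j$. Concretely, I would require the twist of $\mathcal{L}$ to reproduce the central character of $\rho(\gamma_i)$, and verify that this choice is compatible with minimality at codimension-two points.
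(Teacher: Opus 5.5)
Your route is the paper's: its proof only invokes the classical Kodaira/Nakayama theory (build a Weierstrass model, minimize, read the monodromy off Kodaira's table), and your Steps 1--3 spell that out. On one point you are more careful than the paper. Building $(f,g)$ from $(\rho,\tilde\tau)$ through $g_2(\tilde\tau),g_3(\tilde\tau)$ and the automorphy factor $c\tilde\tau+d$ is the correct input, whereas the paper's ``construct a Weierstrass model from $j$'' loses the quadratic twist.

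\textbf{The gap: crossing $Z:=\mathrm{Sing}(\Delta)$.} The step that fails as written is the extension of $\mathcal{L}$ across $Z$. In the analytic category, ``extend $\mathcal{L}$ as a reflexive rank-one sheaf'' presupposes that the pushforward of $\mathcal{L}$ from $B\setminus Z$ is coherent. For line bundles defined off a codimension-two set this is false in general: $\mathrm{Pic}(\mathbb{C}^2\setminus\{0\})\cong H^1(\mathbb{C}^2\setminus\{0\},\mathcal{O})$ is infinite-dimensional, while every line bundle on $\mathbb{C}^2$ is trivial. You need an input specific to this situation, and the discriminant supplies it.
\begin{itemize}
\item On $B\setminus Z$, $\delta=4f^3+27g^2$ is a section of $\mathcal{L}^{12}$ with divisor $\sum_i n_i(\Delta_i\setminus Z)$. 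So on a small ball $U$ about a point of $Z$, $\mathcal{L}^{12}|_{U\setminus Z}\cong\mathcal{O}_U(\sum_i n_i\Delta_i)|_{U\setminus Z}$ is trivial.
\item Since $Z$ has real codimension at least $4$, $H^1(U\setminus Z;\mathbb{Z})=H^2(U\setminus Z;\mathbb{Z})=0$. The exponential sequence then gives $\mathrm{Pic}(U\setminus Z)\cong H^1(U\setminus Z,\mathcal{O})$, which is torsion-free.
\item Hence $\mathcal{L}|_{U\setminus Z}$ is trivial and extends, and Hartogs extends $f$ and $g$.
\end{itemize}
After this you need neither a birational modification nor an appeal to Nakayama. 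Relative minimality is a condition along prime divisors, which Step~2 already secures, and a $(4,6)$ point in codimension two does not violate it.

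\textbf{The anticipated sign obstacle is not one.} Your proposed remedy would also not work. Let $t$ be a transverse coordinate to $\Delta_i$. Replacing $\mathcal{L}$ by $\mathcal{L}(m\Delta_i)$ with $m\in\mathbb{Z}$ multiplies the local coefficients of $(f,g)$ by $(t^{4m},t^{6m})$. Over $B^\times$ this is just the isomorphism $(x,y)\mapsto(t^{2m}x,t^{3m}y)$, so it never changes $\rho(\gamma_i)$. No parity of $m$ can therefore separate $I_n$ from $I_n^*$, whose orders of $(f,g,\delta)$ differ by $(2,3,6)$.

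The sign is already encoded in the Step~1 bundle $\mathcal{L}$ on $B^\times$. For $\rho(\gamma_i)=-T^n$ with $n\ge 1$, the factor $c\tilde\tau+d=-1$ forces the single-valued frame to be a half-integral power of $t$ times $dz$. This gives orders $(2,3,n+6)$ for $(f,g,\delta)$, i.e.\ $I_n^*$. After that, $m$ is uniquely fixed by holomorphy plus minimality, and there is no codimension-two compatibility to check.

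\textbf{Two small corrections.}
\begin{itemize}
\item $g_2$ scales by $\lambda^{-4}$ under $dz\mapsto\lambda\,dz$. So $f,g$ are sections of $\mathcal{L}^4,\mathcal{L}^6$ for $\mathcal{L}=\varpi_*\Omega^1_{E/B^\times}$, not for its inverse.
\item Components with $\rho(\gamma_i)=1$ come out as $I_0$ and drop out of $\{4f^3+27g^2=0\}$. The discriminant therefore equals $\Delta$ only after discarding them.
\end{itemize}
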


\begin{proof}
This is the standard extension/classification theorem for elliptic fibrations with meromorphic $j$-invariant:
one constructs a Weierstrass model from $j$ (equivalently from $(f,g)$) and uses minimality to obtain the Kodaira list.
The monodromy statement follows because the Gauss-Manin local system $R^1\varpi_*\mathbb{Z}$ is the homology local system of the fiber,
and Kodaira’s classification computes its local monodromy (up to conjugacy) for each singular fiber type.
\end{proof}

By Theorems~\ref{thm:universal_pullback}-\ref{thm:descent_monodromy}, the elliptic fibration over $B^{\times}$ is \emph{constructed}
from $\tau_{\rm phys}$ and its monodromy, so the varying-$\tau$ sector does not require an additional elliptic-output axiom.
Proposition~\ref{prop:Tn} gives explicit local defect monodromy, and Proposition~\ref{prop:EOMtau} ties logarithmic singularities to
codimension-two sources in the two-derivative IIB equations. Theorem~\ref{thm:Kodaira} completes the global extension and fiber-type
classification as a standard theorem application once $j(\tau_{\rm phys})$ extends meromorphically.

\printbibliography

\end{document}